\newif\iflong\longtrue   
\newcommand{\longonly}[1]{#1}
\newcommand{\shortonly}[1]{}
\newcommand{\longonly}[1]{}
\newcommand{\shortonly}[1]{#1}
\newif\ifcomments\commentsfalse   
\newcommand{\sz}[1]{\textcolor{olive!100!black!100}{{[SZ:~#1]}}}
\newcommand{\nr}[1]{\textcolor{blue!100!black!100}{{[NR:~#1]}}}
\newcommand{\sz}[1]{}
\newcommand{\nr}[1]{}
\newcommand{\todo}[1]{}
\newcommand{\sep}{\ensuremath{\ \ | \ \ }}
\newcommand{\eg}{e.g.\xspace}
\newcommand{\ie}{i.e.\xspace}
\newcommand{\code}{\lstinline[language=Haskell]}
\newcommand{\figsize}{\small}
\newenvironment{stackAux}[2]{%
  \setlength{\arraycolsep}{0pt}
  \begin{array}[#1]{#2}}{
  \end{array}\ignorespacesafterend}
\theoremstyle{remark}
\newtheorem*{remark}{Remark}
\newcommand{\Lref}[1]{Lemma~\ref{#1}}
\newcommand{\Cref}[1]{Corollary~\ref{#1}}
\newcommand{\Tref}[1]{Theorem~\ref{#1}}
\newcommand{\Pref}[1]{Proposition~\ref{#1}}
\newcommand{\Fref}[1]{Figure~\ref{#1}}
\newcommand{\Sref}[1]{\S\ref{#1}}
\newcommand{\Aref}[1]{Appendix~\ref{#1}}
\newcommand{\gcalc}{\texorpdfstring{$\lambda_{\vee}$\xspace}{λ∨}}
\newcommand{\kw}[1]{\ensuremath{\mathbf{#1}}}
\newcommand{\kwLet}{\kw{let}}
\newcommand{\symName}[1]{\ensuremath{\textsf{\textbf{#1}}}}
\newcommand{\tmTopC}{\ensuremath{\top}}
\newcommand{\tmBotC}{\ensuremath{\bot}}
\newcommand{\tmJoinSym}{\vee}
\newcommand{\tmJoin}[2]{\ensuremath{#1 \tmJoinSym #2}}
\newcommand{\tmJoinMany}[2]{\ensuremath{#1 \vee \cdots \vee #2}}
\newcommand{\tmJoinManySet}[2]{\bigvee_{#1} #2}
\newcommand{\tmBigJoin}[3]{\bigvee_{#1 \in #2} #3}
\newcommand{\tmPair}[2]{(#1, #2)}
\newcommand{\tmApp}[2]{#1\ #2}
\newcommand{\tmLexPair}[2]{\langle #1, #2 \rangle}
\newcommand{\tmMonBind}[3]{#1 \leftarrow #2 ; #3}
\newcommand{\tmEmptySet}{\{\}}
\newcommand{\tmMeta}{\ottnt{e}}
\newcommand{\tmMetaAlt}{\ottnt{t}}
\newcommand{\valMeta}{\ottnt{v}}
\newcommand{\resMeta}{\ottnt{r}}
\newcommand{\tmRecord}[1]{\ensuremath{\{#1\}}}
\newcommand{\tmFV}[2]{\ensuremath{\fieldName{#1} = #2}}
\newcommand{\tmPrj}[2]{\ensuremath{#1.\kw{#2}}}
\newcommand{\tmLetIn}[3]{\ensuremath{\kwLet\; #1 = #2\ \kw{in}\ #3}}
\newcommand{\tmId}[1]{\ensuremath{\mathit{#1}}}
\newcommand{\fieldName}[1]{\mathtt{#1}}
\newcommand{\litString}[1]{\textrm{``#1''}}
\newcommand{\botv}{{\bot_\mathrm{v}}}
\newcommand{\valBotV}{\botv}
\newcommand{\valLam}[2]{\lambda#1.\:#2}
\newcommand{\ctxHole}{[\cdot]}
\newcommand{\stepsym}{\mapsto}
\newcommand{\step}[2]{#1 \stepsym #2}
\newcommand{\returns}[2]{#1 \Downarrow #2}
\newcommand{\returnsany}[1]{#1\, {\Downarrow}}
\newcommand{\substmeta}{\gamma}
\newcommand{\ctxeq}[2]{#1 \approx_{\mathrm{ctx}} #2}
\newcommand{\ctxapx}[2]{#1 \preceq_{\mathrm{ctx}} #2}
\newcommand{\logapxsym}{\preceq_{\mathrm{log}}}
\newcommand{\logapx}[2]{#1 \logapxsym #2}
\newcommand{\subst}[3]{#1[#2 / #3]}
\newcommand{\extmap}[3]{#1[#2 \mapsto #3]}
\newcommand{\textand}{\mathbin{\text{and}}}
\newcommand{\textiff}{\mathbin{\text{iff}}}
\newcommand{\mathsetcomp}[2]{\{ #1 \mid #2 \}}
\newcommand{\mathforall}[2]{\forall #1 . #2}
\newcommand{\mathinv}[1]{#1^{-1}}
\newcommand{\mathexists}[2]{\exists #1 . #2}
\newcommand{\mathin}[2]{#1 \in #2}
\newcommand{\mathemdash}{\text{\textemdash}}
\newcommand{\mathbind}[3]{#1 \leftarrow #2 ; #3}
\newcommand{\monlift}[1]{#1^*}
\newcommand{\mathnat}{\mathbb{N}}
\newcommand{\comp}[2]{#1 \circ #2}
\newcommand{\compact}[1]{K(#1)}
\newcommand{\ideals}[1]{\mathcal{I}(#1)}
\newcommand{\prinideal}[1]{\,{\downarrow}{#1}\!}
\newcommand{\downclosk}[1]{\,{\downarrow^{K}}{#1}\!}
\newcommand{\powdom}[2]{\mathcal{P}_{#1}(#2)}
\newcommand{\powdomh}{\powdom{H}}
\newcommand{\eclossym}{\mathbf{c}}
\newcommand{\eclos}[1]{#1_\eclossym}
\newcommand{\relname}[1]{\mathsf{#1}}
\newcommand{\varset}{\relname{{Var}}}
\newcommand{\symset}{\relname{{Sym}}}
\newcommand{\synval}{\relname{{Val}}}
\newcommand{\synvtype}{\relname{{VForm}}}
\newcommand{\synctype}{\relname{{CForm}}}
\newcommand{\synvtypepair}{\relname{{VForm}_\times}}
\newcommand{\synvtypeset}{\relname{{VForm}_{\{\}}}}
\newcommand{\synvtypefun}{\relname{{VForm}_{\to}}}
\newcommand{\synenv}{\relname{Env}}
\newcommand{\domval}{D}
\newcommand{\domcomp}{\eclos{\domval}}
\newcommand{\syncomp}{\relname{Exp}}
\newcommand{\synres}{\relname{Res}}
\newcommand{\synectx}{\relname{ECtx}}
\newcommand{\powerset}[1]{\mathbb{P}(#1)}
\newcommand{\denot}[2]{\llbracket #1 \rrbracket_{#2}}
\newcommand{\semapxsym}{\sqsubseteq}
\newcommand{\semapx}[2]{#1 \semapxsym #2}
\newcommand{\synjoin}[2]{#1 \sqcup #2}
\newcommand{\pairop}[2]{\eclos{(#1, #2)}}
\newcommand{\lrrel}[2]{\mathcal{#1}\llbracket#2\rrbracket}
\newcommand{\lrexp}[1]{\lrrel{E}{#1}}
\newcommand{\lrres}[1]{\lrrel{R}{#1}}
\newcommand{\lrval}[1]{\lrrel{V}{#1}}
\newcommand{\lrenv}[1]{\lrrel{G}{#1}}
\newcommand{\judglr}[3]{#1 \vDash #2 : #3}
\newcommand{\tyAc}{\phi}
\newcommand{\tyBc}{\psi}
\newcommand{\tyAv}{\tau}
\newcommand{\tyBv}{\sigma}
\newcommand{\envEmpty}{\cdot}
\newcommand{\tyFunOne}[2]{#1 \to #2}
\newcommand{\tyJoinSynSym}{\vee}
\newcommand{\tyJoinSyn}[2]{#1 \tyJoinSynSym #2}
\newcommand{\tyFun}[3]{\bigvee_{#1}{(\tyFunOne{#2}{#3})}}
\newcommand{\tyBigJoin}[2]{\bigsqcup_{#1}{#2}}
\newcommand{\tySng}[1]{\eclos{\{#1\}}}
\newcommand{\tyPairOp}[2]{\eclos{(#1, #2)}}
\newcommand{\tyJoinSym}{\sqcup}
\newcommand{\tyJoin}[2]{#1 \tyJoinSym #2}
\newcommand{\envDom}[1]{\mathrm{dom}(#1)}
\newcommand{\envJoin}[2]{#1 \tyJoinSym #2}
\newcommand{\tyapxsym}{\semapxsym}
\newcommand{\tyapx}[2]{\semapx{#1}{#2}}
\newcommand{\envapx}[2]{\tyapx{#1}{#2}}
\newcommand{\judg}[3]{#1 \vdash #2 : #3}
\newcommand{\contfun}[2]{#1 \to_{\textit{cont}} #2}
\newcommand{\apxmap}[2]{#1 \to_{\textit{apx}} #2}
\newcommand{\R}{\mathrel{R}}
\newcommand{\disosym}{\delta}
\newcommand{\rulename}[1]{\textsc{#1}}
\newcommand{\fbasis}[1]{#1}
\newcommand{\ottdrule}[4][]{{\displaystyle\frac{\begin{array}{l}#2\end{array}}{#3}\quad\ottdrulename{#4}}}
\newcommand{\ottpremise}[1]{ #1 \\}
\newenvironment{ottdefnblock}[3][]{ \framebox{\mbox{#2}} \quad #3 \\[0pt]}{}
\newcommand{\ottnt}[1]{\mathit{#1}}
\newcommand{\ottmv}[1]{\mathit{#1}}
\newcommand{\ottkw}[1]{\mathbf{#1}}
\newcommand{\ottsym}[1]{#1}
\newcommand{\ottdrulename}[1]{\textsc{#1}}
\newcommand{\ottdruleTApxSet}[1]{\ottdrule[#1]{%
\ottpremise{ \mathforall{  \mathin{ \mathit{i} }{ \mathit{I} }  }{  \mathexists{  \mathin{ \mathit{j} }{ \mathit{J} }  }{  \tyapx{ \tyAv_{\ottmv{i}} }{ \tyAv'_{\ottmv{j}} }  }  } }%
}{
 \tyapx{ \ottsym{\{}  \tyAv_{\ottmv{i}}  \ottsym{\mbox{$\mid$}}   \mathin{ \mathit{i} }{ \mathit{I} }   \ottsym{\}} }{ \ottsym{\{}  \tyAv'_{\ottmv{j}}  \ottsym{\mbox{$\mid$}}   \mathin{ \mathit{j} }{ \mathit{J} }   \ottsym{\}} } }{%
{\ottdrulename{TApxSet}}{}%
}}
\begin{document}
\def\MathparLineskip{\lineskiplimit=0.8em\lineskip=0.8em plus 0.1em}

\shortonly{
\titlenote{Additional proofs are provided in this article's accompanying technical report. \todo{cite}}
}
\title{Functional Meaning for Parallel Streaming}
\longonly{\subtitle{Technical Report}}

\author{Nick Rioux}
\email{nrioux@cis.upenn.edu}
\orcid{0000-0001-5277-8920}
\author{Steve Zdancewic}
\email{stevez@cis.upenn.edu}
\orcid{0000-0002-3516-1512}
\affiliation{%
  \institution{University of Pennsylvania}
  \city{Philadelphia}
  \state{Pennsylvania}
  \country{USA}
}


\begin{abstract}
  Nondeterminism introduced by race conditions and message reorderings makes parallel and
distributed programming hard.
Nevertheless, promising approaches such as LVars and CRDTs address
this problem by introducing a partial order structure on shared state that
describes how the state evolves over time.
\emph{Monotone} programs that respect the order are deterministic.
Datalog-inspired languages incorporate this idea of monotonicity in a
first-class way but they are not general-purpose.
We would like parallel and distributed languages to be as natural to use as any
functional language, without sacrificing expressivity, and with a formal basis of
study as appealing as the lambda calculus.

This paper presents \gcalc{}, a core language for
deterministic parallelism that embodies the ideas above. In \gcalc{}, values
may increase over time according to a \emph{streaming order} and all computations
are monotone with respect to that order.
The streaming order coincides with the approximation order found in Scott semantics
and so unifies the foundations of functional programming
with the foundations of deterministic distributed computation.
The resulting lambda calculus has a computationally adequate model rooted in domain theory.
It integrates the compositionality and power of abstraction characteristic of
functional programming with the declarative nature of Datalog.

\longonly{This version of the paper includes extended exposition and appendices with proofs.}

\end{abstract}

\keywords{functional programming, logic programming, parallel programming, streaming computation}

\maketitle

\todo{Remember to disable TODOs and comments.}
\nr{Also remember to turn off comments.}

\section{Introduction}
It is no secret that parallel and distributed programming are hard.
To empower different threads of computation to work together, mechanisms such as
shared state and message passing are often adopted~\cite{lynch1996,hewitt73}.
Unfortunately, these common programming models are rife with nondeterminism thanks to races and message-reorderings.
This nondeterminism makes it harder to debug parallel programs and to replicate
them for fault tolerance~\cite{devietti12}.
Moreover, under such approaches, ensuring that a program computes at most one well-defined
answer is a manual task.
Programmers have to consider every possible interleaving of reads and writes or of
messages and add synchronization operations to guarantee that their programs are correct.

This state space explosion hinders efforts to verify the correctness of systems
through local equational reasoning.
On the other hand, compositional reasoning about software is a hallmark of
functional programming and one would hope that it would be possible even in the
parallel setting.
We believe that programmers will write more trustworthy code faster if they do
not need to account for all interleaving of reads and writes to shared
state or draw out Lamport diagrams~\cite{lamport78}.
Indeed, the ability to write parallel programs in a style in which determinism is the default
and nondeterminism, if present at all, is made explicit, has long been valued by
those studying parallel programming~\cite{lee06,bocchino09}.
To this end, we investigate a way for programmers to express parallel programs
that are, in a sense, \emph{deterministic by construction} in a functional
framework.

This idea is not new.  Over the years, there have been many approaches to dealing with nondeterminism.
In each, the structure of \textit{mathematical semilattices} plays a crucial role:
\begin{itemize}
  \item In functional programming, \emph{LVars}~\cite{kuper13} are an approach to
  deterministic-by-construction concurrency in which a memory cell
  contains a value that is an element of a semilattice.
  Writes to the cell combine the old and new values using the commutative least
  upper bound (or \emph{join}) operation, ensuring that writes are deterministic
  even if they race with each other.
  \item \emph{Conflict-free replicated data types}~(CRDTs)~\cite{shapiro11}
  replicate semilattice-structured data across nodes of a distributed system.
  Nodes share their state and use the join operation to merge their current state
  with other incoming states.
  Order-theoretic properties ensure that the system is guaranteed to be
  \emph{eventually consistent}.
  \item \emph{Datalog} programs specify inference rules (or \emph{Horn clauses})
  describing how to infer new facts from those that are already known.
  Application of these inference rules to sets of known facts may be performed in
  parallel;
  sets of learned facts may then be combined using the join operation of the
  powerset semilattice (\ie set union).
  This helps make Datalog appealing for highly parallel~\cite{gilray21}
  and distributed~\cite{loo09,alvaro11dedalus} computation.
\end{itemize}
Across all of these lines of work, shared data is endowed with semilattice
structure and, thus, forms a partial order.
The data may change over the course of a computation, but only so long as it
increases according to this order.

\paragraph{Streaming}

To capture the common essence behind all of these programming models and
integrate them into a functional setting, we introduce \gcalc, an untyped call-by-value
\emph{parallel streaming lambda calculus} in which semilattice structure is a
first-class language feature.
Every \gcalc value is an element of a partial order called the \emph{streaming order}.
Programs represent long-running computations that return (or ``stream'') a value
that may evolve over time according to this order.

This makes it convenient to program with possibly-infinite lists (often called
``streams''), but the use of partial orders means that in \gcalc,
\emph{streaming} is not just about streams.
Over time, trees may grow, sets may gain elements, and records may obtain new fields.
Because partial orders can be composed to form new ones, streaming data types
fit together just as well as functional programmers might expect: streams of streams and
even sets of higher-order functions are straightforward to work with in \gcalc.

The \gcalc function $\mathit{evens}$ below streams the infinite set
containing the even natural numbers.
{\small
\[
 \tmApp{ \mathit{evens} }{ \mathit{\_} }  =  \tmApp{  \tmJoin{ \ottsym{\{}  \ottsym{0}  \ottsym{\}} }{ \mathit{plus2all} }  }{ \ottsym{(}   \tmApp{ \mathit{evens} }{ \ottsym{()} }   \ottsym{)} }  \qquad\qquad
 \tmApp{ \mathit{plus2all} }{ \mathit{xs} }  =  \tmBigJoin{ \mathit{x} }{ \mathit{xs} }{ \ottsym{\{}  \mathit{x}  \ottsym{+}  \ottsym{2}  \ottsym{\}} } 
\]}

\noindent The join operator $\tmJoinSym$ in this definition runs its two arguments
simultaneously.
As this operator streams in two increasingly large sets, it streams out their set
union.
The function $\mathit{plus2all}$ produces a set containing $\mathit{x}  \ottsym{+}  \ottsym{2}$ for each
element $\mathit{x}$ of its input.
The expression $ \tmApp{ \mathit{evens} }{ \ottsym{()} } $ represents an iterative fixed point computation
that computes the least set containing $0$ and, for every element $\mathit{x}$ of
$ \tmApp{ \mathit{evens} }{ \ottsym{()} } $, the element $\mathit{x}  \ottsym{+}  \ottsym{2}$.
This exploits the parallel streaming nature of the join operator;
replacing it with a call to a function in a strict language like \gcalc
will result in a meaningless infinite loop.
Even in lazy Haskell, using the standard \code{Data.Set} type, this example diverges.

\paragraph{Monotonicity}
Not every function we can dream up behaves well in the streaming setting.
Suppose we could write a function $\mathit{f}$ in \gcalc according to the following specification.
{\small
\[
   \tmApp{ \mathit{f} }{ \ottsym{(}  \mathit{x}  \ottsym{)} }  = \begin{cases}
    \ottsym{\{}  \ottsym{1}  \ottsym{\}} & \text{if $x$ is a set containing the element 2 but \emph{not} 4} \\
      \tmEmptySet   & \text{otherwise}
  \end{cases}
  \]}\noindent The observers of a running \gcalc program might want to take some
action (such as sending a request to an external system) as soon as they observe
the element $\ottsym{1}$ in the output of $\mathit{f}$.
The table below shows the state of affairs we end up with when we stream the set
of even numbers, as defined above, to $\mathit{f}$.
{\small
\begin{center}
  \begin{tabular}{ r | c c c c c c}
    Time & $1$ & $2$ & $3$ & $\ldots$ \\
    \hline
    $ \tmApp{ \mathit{evens} }{ \ottsym{()} } $    & $\ottsym{\{}  \ottsym{0}  \ottsym{\}}$       & $\ottsym{\{}  \ottsym{0}  \ottsym{,}  \ottsym{2}  \ottsym{\}}$  & $\ottsym{\{}  \ottsym{0}  \ottsym{,}  \ottsym{2}  \ottsym{,}  \ottsym{4}  \ottsym{\}}$ & $\ldots$ \\
    $ \tmApp{ \mathit{f} }{ \ottsym{(}   \tmApp{ \mathit{evens} }{ \ottsym{()} }   \ottsym{)} } $ & $  \tmEmptySet  $   & $\ottsym{\{}  \ottsym{1}  \ottsym{\}}$ & $  \tmEmptySet  $ & $\ldots$ \\
    \hline\hline
    Action taken & none & request sent & ? & $\ldots$
  \end{tabular}
\end{center}
}
We have a problem: $\mathit{f}$ might \emph{retract} the element $\ottsym{1}$ from its
output, by which time the request may already be well on its way.
This is because $\mathit{f}$ is not \emph{monotone}.
Non-monotone functions break the \gcalc covenant that values
evolve over time according to the streaming order.
Consequently, an outside observer can never be sure it is safe to take an
action based on the output of such a function.

This problem can be seen as a form of nondeterminism. Modifying $\mathit{evens}$ to stream out the
element $\ottsym{4}$ before the element $\ottsym{2}$ should not change its meaning since
it still computes the same infinite set.
However, such a change does make a big difference in the example above:
if $\ottsym{4}$ is streamed in to $\mathit{f}$ before $\ottsym{2}$, the request is never
sent.

Compounding our predicament, in a distributed system, the \emph{Consistency as
Logical Monotonicity (CALM) Theorem}~\cite{hellerstein10,ameloot13,hellerstein20} suggests that breaking monotonicity would
require an implementation of \gcalc to use expensive coordination
mechanisms to preserve desirable consistency properties.
The design of \gcalc avoids these thorny issues.
By construction, as any function receives more input, it may only stream more
output.


\paragraph{Meaning \& Determinism}
In defining the meaning of \gcalc programs, we have two goals.
First, we want to capture how programs evolve over time.
Small-step reduction systems are useful for this.
However, a finite trace defined by such a system can only capture a finite amount
of output; such semantics cannot explicitly describe the infinite end
behavior of programs like $ \tmApp{ \mathit{evens} }{ \ottsym{()} } $.

Describing the behavior of such programs ``in the limit'' is our second goal.
Thus, we seek a denotational semantics for \gcalc capable of
describing infinite behaviors which also reflects that values in the language are
ordered and that all functions are monotone.
This style of semantics lets us define determinism as the property that every program has at most
one end behavior.


The reader might note that these desiderata are well-known
characteristics of Scott semantics~\cite{scott70}.
The Scott approach to the semantics of the lambda calculus describes
programs in terms of a class of partial order known as \emph{Scott domains}.
As opposed to the \gcalc streaming order, which (following \citet{kahn74})
describes how data evolves over
time, partial orders in Scott's models describe how ``well defined'' functions are.
Happily, these two notions are compatible: becoming ``more defined'' over time
is the streaming behavior for functions in \gcalc.


The point here is not only that domain theory, a classical tool in the study of
programming languages, may be useful in describing parallel and
distributed systems, but also that it naturally subsumes the mathematics
that designers of programming models for such systems are already using
to obtain desirable properties like determinism and eventual consistency.


\sz{Somewhere, maybe here? We should revisit the point about determinism--to tie
  it to some of our technical development as we discussed. (If at all possible.)}

The reader lacking familiarity with domain theory need not be dissuaded.
We prioritize operational explanations of the insights gleaned from
domain theory.
Furthermore, the meaning of \gcalc programs is described using a \emph{filter model}~\citep{barendregt83}.
This technique, long used in the literature on intersection types, is essentially a
way of giving a denotational semantics by defining a very fine-grained type system.
Thus, an understanding of lambda calculus, operational semantics, and type
inference rules is sufficient background to read most of this paper.

\paragraph{Contributions}
Our primary contribution is the design of the core language \gcalc.
A cousin of Datafun~\cite{arntzenius16}, it fuses the expressive functional
programming of the untyped lambda calculus with Datalog-style logic programming.
Accompanying this design are several technical contributions.
\begin{itemize}
  \item We demonstrate in \Sref{sec:examples} examples of the programming
  patterns enabled by the parallel streaming design of \gcalc not directly
  expressible in other languages.
  \item We describe the meaning of the language with a small-step reduction
  system~(\Sref{sec:opsem}) and a filter model~(\Sref{sec:logsem}).
  It is known that a Scott-style semantics (including a solution to a recursive
  domain equation) can be derived from a filter model in a straightforward way.
  \item We connect the two semantics via a computational adequacy result
  utilizing a novel logical relation in \Sref{sec:logsem:adequacy}.
\end{itemize}

Our focus is the parallel semantics for the present work; we leave
it to the future to formally address network nondeterminism and
fault tolerance.
Nonetheless, a key motivation of this work is to align the
use of partial orders and monotone functions found in
distributed computing with their use in programming language semantics \`a la
Scott.

\section{Language Design \& Main Ideas}
\label{sec:language}
The key ingredients of \gcalc are:
(1) data types endowed with a partial order---the \textit{streaming order}---that induces a semilattice structure on computations,
(2) primitive operations that respect the streaming order on data types (including pattern matching based on \textit{threshold queries}),
and (3) a general \textit{parallel join} operation.
This section introduces these ingredients and demonstrates their use via some
motivating examples.

\subsection{Syntax}
\label{sec:syntax}
\begin{figure}[t]
  \figsize
  \[
  \begin{array}{lrcrcl}
    \textit{expressions} &
           \syncomp & \ni &
           \tmMeta\longonly{, \tmMetaAlt} & ::= &  \tmBotC 
           \sep  \tmTopC 
           \sep  \valBotV 
           \sep \mathit{x}
           \sep  \valLam{ \mathit{x} }{ \tmMeta } 
           \sep  \tmPair{ \tmMeta_{{\mathrm{1}}} }{ \tmMeta_{{\mathrm{2}}} } 
           \sep \ottnt{s}
           \sep \{ \tmMeta_{{\mathrm{1}}}, \ldots, \tmMeta_{\ottmv{n}} \}
           \sep  \tmApp{ \tmMeta_{{\mathrm{1}}} }{ \tmMeta_{{\mathrm{2}}} } 
           \\
&&&&&       \sep \ottkw{let} \, \ottsym{(}  \mathit{x_{{\mathrm{1}}}}  \ottsym{,}  \mathit{x_{{\mathrm{2}}}}  \ottsym{)}  \ottsym{=}  \tmMeta \, \ottkw{in} \, \tmMeta'
           \sep \ottkw{let} \, \ottnt{s}  \ottsym{=}  \tmMeta \, \ottkw{in} \, \tmMeta'
           \sep  \tmBigJoin{ \mathit{x} }{ \tmMeta_{{\mathrm{1}}} }{ \tmMeta_{{\mathrm{2}}} } 
           \sep  \tmJoin{ \tmMeta_{{\mathrm{1}}} }{ \tmMeta_{{\mathrm{2}}} } 
           \\
           \textit{results} &
           \synres & \ni & \resMeta & ::= &  \tmBotC 
           \sep  \tmTopC 
           \sep \valMeta
                  \\
           \textit{values} &
           \synval & \ni & \valMeta & ::= & \mathit{x}
                  \sep  \valBotV 
                  \sep  \valLam{ \mathit{x} }{ \tmMeta } 
                  \sep  \tmPair{ \valMeta_{{\mathrm{1}}} }{ \valMeta_{{\mathrm{2}}} } 
                  \sep \ottnt{s}
                  \sep \{ \valMeta_{{\mathrm{1}}}, \ldots, \valMeta_{\ottmv{n}} \}
                  \\ \\
           \textit{eval.\ contexts} & \synectx & \ni &
                         \ottnt{E} & ::= &  \ctxHole 
                         \sep  \tmPair{ \ottnt{E} }{ \tmMeta } 
                         \sep  \tmPair{ \valMeta }{ \ottnt{E} } 
                         \sep \{\tmMeta_{{\mathrm{1}}}, \ldots, \tmMeta_{\ottmv{n}}, \ottnt{E}, \tmMeta'_{{\mathrm{1}}}, \ldots, \tmMeta'_{\ottmv{m}}\}
                         \sep  \tmApp{ \ottnt{E} }{ \tmMeta } 
                         \sep  \tmApp{ \valMeta }{ \ottnt{E} }  \\
&&&&&                      \sep \ottkw{let} \, \ottsym{(}  \mathit{x_{{\mathrm{1}}}}  \ottsym{,}  \mathit{x_{{\mathrm{2}}}}  \ottsym{)}  \ottsym{=}  \ottnt{E} \, \ottkw{in} \, \tmMeta
                         \sep \ottkw{let} \, \ottnt{s}  \ottsym{=}  \ottnt{E} \, \ottkw{in} \, \tmMeta
                         \sep  \tmBigJoin{ \mathit{x} }{ \ottnt{E} }{ \tmMeta } 
                         \sep  \tmJoin{ \ottnt{E} }{ \tmMeta } 
                         \sep  \tmJoin{ \tmMeta }{ \ottnt{E} } 
  \end{array}
  \]
  \caption{\gcalc Syntax}
  \label{fig:syntax}
\end{figure}

The syntax of \gcalc is given in \Fref{fig:syntax}.  All forms are finitary and
we consider them up to $\alpha$-equivalence.
The expression $ \tmBotC $ represents a ``meaningless'' computation that does not
produce any output.
During evaluation, it propagates throughout a program in a manner similar to an
error or diverging term.
Dually, $ \tmTopC $ is an error that represents an inconsistent result.
\nr{What does inconsistent mean? Should allude to this earlier}
In contrast to $ \tmBotC $, the value $ \valBotV $ represents the knowledge that a
computation has successfully produced \emph{something}---but nothing more about
what the result may be.
It can be passed around as any other value, but it will produce $ \tmBotC $ if inspected
in any way.
Functions are introduced with $\lambda$-abstractions and eliminated with
application.  Pairs are expressions $ \tmPair{ \tmMeta_{{\mathrm{1}}} }{ \tmMeta_{{\mathrm{2}}} } $ that can be destructured
with a single-case pattern-matching $\kw{let}$ expression.

The language is parameterized by a set $\varset$ of \emph{variables}
ranged over by the metavariable $\mathit{x}$ and a set $\symset$
of \emph{symbols} ranged over by the metavariable $\ottnt{s}$.
Variables are the usual notion from the study of the lambda calculus.
By convention, the variable $\mathit{\_}$ in a binding position indicates the binding is unused.
Symbols are base values (constants) that may have some order structure.
We assume a partial\longonly{\footnote{The partiality of this operation means that symbols
do not, in general, form a semi-lattice.}} computable
operation over symbols $ \ottnt{s_{{\mathrm{1}}}}  \sqcup  \ottnt{s_{{\mathrm{2}}}} $ which is associative, commutative, and
idempotent.
We also assume that equality of symbols is decidable.
The streaming order on symbols is defined as $\ottnt{s_{{\mathrm{1}}}}  \leq  \ottnt{s_{{\mathrm{2}}}}$ iff $ \ottnt{s_{{\mathrm{1}}}}  \sqcup  \ottnt{s_{{\mathrm{2}}}}   \ottsym{=}  \ottnt{s_{{\mathrm{2}}}}$.

The elimination form $\ottkw{let} \, \ottnt{s}  \ottsym{=}  \tmMeta_{{\mathrm{1}}} \, \ottkw{in} \, \tmMeta_{{\mathrm{2}}}$ is an example of a \emph{threshold
  query}~\cite{kuper13}.  It produces no output until the evaluation of
$\tmMeta_{{\mathrm{1}}}$ produces a symbol greater than or equal to the threshold $\ottnt{s}$.
If and when this happens, it produces the output of $\tmMeta_{{\mathrm{2}}}$.  We often assume
the existence of certain symbols; these are referred to as names (as
in $ \symName{true} $ and $ \symName{false} $), string literals, and the unit value $\ottsym{()}$.
Except where explicitly
stated otherwise, joins of distinct symbols (\eg
$  \symName{true}   \sqcup   \symName{false}  $) are assumed to be undefined.
It follows that such symbols are incomparable.

The language \gcalc includes a \textit{set} data type, constructed via
$\{ \tmMeta_{{\mathrm{1}}}, \ldots, \tmMeta_{\ottmv{n}} \}$.
Following Datafun~\cite{arntzenius16}, sets are eliminated with the ``big join'' form,
which maps an operation over the elements of a set and joins together all of the
results.
Any value may be an element of a set and equality of set elements is
\textit{not} required to be decidable.
As with all features in this language, the elimination form for sets is in a sense monotone.
Consequently, it is impossible to calculate the difference between two sets or
test for the absence of a particular value in a set.
This design ensures that actions taken based on the elements currently
in the set will remain valid in the future.
These caveats are familiar to users of LVars and, to a lesser extent, CRDTs; see \Sref{sec:disc:ext}.

The binary join operator is written $ \tmJoin{ \tmMeta_{{\mathrm{1}}} }{ \tmMeta_{{\mathrm{2}}} } $.
It is a parallel composition operator whose behavior is
overloaded depending on the type of data being joined.
By convention, the angled join symbol $\vee$ represents \emph{syntax} while the
square join symbol $\sqcup$ represents a \emph{metafunction}.

\Fref{fig:syntax} also defines \emph{evaluation contexts} which will be used by the semantics in \Sref{sec:op-sem:red}.
This definition ensures that evaluation proceeds sequentially left-to-right over
most forms in the language with the exception of set introduction and binary
join, whose subterms are evaluated in parallel.
The presence of these forms make it possible to decompose an expression into an evaluation
context and a redex in multiple different ways.

\subsection{Encodings}
We make use of a few derived syntactic forms. For example,
$\ottkw{let} \, \mathit{x}  \ottsym{=}  \tmMeta \, \ottkw{in} \, \tmMeta'$ can be encoded using abstraction and application in the
usual way as $ \tmApp{ \ottsym{(}   \valLam{ \mathit{x} }{ \tmMeta' }   \ottsym{)} }{ \tmMeta } $.
To avoid explicitly nested pattern matching constructs, we use compound patterns
like $\ottkw{let} \, \ottsym{(}  \ottnt{s}  \ottsym{,}  \mathit{x_{{\mathrm{2}}}}  \ottsym{)}  \ottsym{=}  \tmMeta \, \ottkw{in} \, \tmMeta'$ in place of the more verbose
$\ottkw{let} \, \ottsym{(}  \mathit{x_{{\mathrm{1}}}}  \ottsym{,}  \mathit{x_{{\mathrm{2}}}}  \ottsym{)}  \ottsym{=}  \tmMeta \, \ottkw{in} \, \ottkw{let} \, \ottnt{s}  \ottsym{=}  \mathit{x_{{\mathrm{1}}}} \, \ottkw{in} \, \tmMeta'$.
One may observe that patterns represent some minimum threshold that a
scrutinized value must reach in order to trigger some computation.
In other words, pattern matching is a form of threshold query.

The familiar expression $\ottkw{if} \, \tmMeta_{{\mathrm{1}}} \, \ottkw{then} \, \tmMeta_{{\mathrm{2}}} \, \ottkw{else} \, \tmMeta_{{\mathrm{3}}}$ is encoded in \gcalc as
\[  \tmJoin{ \ottkw{let} \, \mathit{x}  \ottsym{=}  \tmMeta_{{\mathrm{1}}} \, \ottkw{in} \, \ottsym{(}  \ottkw{let} \,  \symName{true}   \ottsym{=}  \mathit{x} \, \ottkw{in} \, \tmMeta_{{\mathrm{2}}}  \ottsym{)} }{ \ottsym{(}  \ottkw{let} \,  \symName{false}   \ottsym{=}  \mathit{x} \, \ottkw{in} \, \tmMeta_{{\mathrm{3}}}  \ottsym{)} }  \]
The idea here is to run two threads in parallel, one for each boolean value.
When the value of $\tmMeta_{{\mathrm{1}}}$ is observed to be $ \symName{true} $, the thread
containing the \kw{then} branch of the \kw{if} expression will execute.
On the other hand, the thread for the \kw{else} branch will always be observed as
$ \tmBotC $ since $\mathit{x}$ never meets its threshold of $ \symName{false} $.
This expression behaves as expected because, as previously noted, the
symbols $ \symName{true} $ and $ \symName{false} $ are incomparable with each other.
If we were to instead dictate (as Datafun does) that
$ \symName{true} $ is \emph{greater} than $ \symName{false} $, then preserving monotonicity
would require that the \kw{else} branch runs even when the condition evaluates to
$ \symName{true} $.

We can generalize this idea to support various forms of pattern matching.
Data constructors are represented as a pair of a symbol (a tag indicating which
data constructor is being applied) and an argument.
For example, we might represent the empty list $ \ottsym{[]} $ as the value $ \tmPair{  \symName{nil}  }{  \valBotV  } $
and a non-empty list $\valMeta_{{\mathrm{1}}}  \ottsym{::}  \valMeta_{{\mathrm{2}}}$ as $ \tmPair{  \symName{cons}  }{  \tmPair{ \valMeta_{{\mathrm{1}}} }{ \valMeta_{{\mathrm{2}}} }  } $ where $\valMeta_{{\mathrm{1}}}$ is the first
element of the list and $\valMeta_{{\mathrm{2}}}$ is the tail of the list.
We can then encode a pattern match
$\ottkw{case} \, \tmMeta_{{\mathrm{1}}} \, \ottkw{of} \, \ottsym{[]}  \to  \tmMeta_{{\mathrm{2}}}  \ottsym{\mbox{$\mid$}}  \mathit{y}  \ottsym{::}  \mathit{ys}  \to  \tmMeta_{{\mathrm{3}}}$ as:
\[  \tmJoin{ \ottkw{let} \, \mathit{x}  \ottsym{=}  \tmMeta_{{\mathrm{1}}} \, \ottkw{in} \, \ottsym{(}  \ottkw{let} \, \ottsym{(}   \symName{nil}   \ottsym{,}  \mathit{\_}  \ottsym{)}  \ottsym{=}  \mathit{x} \, \ottkw{in} \, \tmMeta_{{\mathrm{2}}}  \ottsym{)} }{ \ottsym{(}  \ottkw{let} \, \ottsym{(}   \symName{cons}   \ottsym{,}  \ottsym{(}  \mathit{y}  \ottsym{,}  \mathit{ys}  \ottsym{)}  \ottsym{)}  \ottsym{=}  \mathit{x} \, \ottkw{in} \, \tmMeta_{{\mathrm{3}}}  \ottsym{)} }  \]
We assume that natural numbers are encoded as
algebraic data types in a manner similar to lists.
A consequence is that the streaming order on these numbers is the discrete order
(\ie $\ottsym{1}$ is incomparable with $\ottsym{2}$), \emph{not} the standard order (in which $\ottsym{1}$ would be less than $\ottsym{2}$).
As with booleans, this choice is made because it reflects the behavior of the numeric data types
programmers are used to in Haskell and ML.
We will also assume common boolean, arithmetic, and comparison operations have been implemented using these encodings.

The parallel nature of the join operator has some
consequences of note for pattern matching.
First, commutativity of joins ensures that pattern matching is symmetric;
the order of the cases does not matter.
When there are multiple applicable branches, all of them run and are combined
with join.
Second, this parallelism is an increase in expressivity over sequential
languages: it allows one to write the parallel-or function
(see \Sref{sec:por}).


\longonly{
\begin{remark}
  As joins operate pointwise on functions, it is possible to define
  functions handling different cases of a data type and compose them together post hoc.
  This is essentially a means of encoding the overloading of functions.
\[
   \tmJoin{ \ottsym{(}   \valLam{ \mathit{x} }{ \ottkw{case} \, \mathit{x} \, \ottkw{of} \, \ottsym{[]}  \to  \tmMeta_{{\mathrm{1}}} }   \ottsym{)} }{ \ottsym{(}   \valLam{ \mathit{x} }{ \ottkw{case} \, \mathit{x} \, \ottkw{of} \, \mathit{y}  \ottsym{::}  \mathit{ys}  \to  \tmMeta_{{\mathrm{2}}} }   \ottsym{)} }   \ottsym{=}   \valLam{ \mathit{x} }{ \ottkw{case} \, \tmMeta \, \ottkw{of} \, \ottsym{[]}  \to  \tmMeta_{{\mathrm{1}}}  \ottsym{\mbox{$\mid$}}  \mathit{y}  \ottsym{::}  \mathit{ys}  \to  \tmMeta_{{\mathrm{2}}} } 
\]
  On one hand, this demonstrates the ability to stream higher-order data in \gcalc.
  A streamed function may gain the ability to handle more and more cases
  of a data type over time.
  On the other hand, even beyond its uses for streaming, this example shows that the
  join operator empowers the programmer to code in an especially modular style.
  This view of join is somewhat inspired by the \emph{merge operator} of
  \citet{dunfield14}. \citet{rioux23} study a related approach to overloading in a
  typed setting.
\end{remark}
}

A record \tmRecord{\tmFV{fld_1}{\valMeta_{{\mathrm{1}}}}, \tmFV{fld_2}{\valMeta_{{\mathrm{2}}}}} can be expressed as a
function from field identifiers, encoded as symbols, to values:
\(\lambda x.\; (\kw{let}\; \symName{fld}_1 = x\ \kw{in}\ \valMeta_{{\mathrm{1}}}) \vee
(\kw{let}\; \symName{fld}_2 = x\ \kw{in}\ \valMeta_{{\mathrm{2}}}) \).
Record projection $\tmPrj{e}{\fieldName{fld}}$ is
then just function application $(e\ \symName{fld})$.
In examples, it is convenient to introduce record field pattern matching, which
puns field identifiers with a variable of the same name and desugars to projection.
Under this encoding, the join of two records acts pointwise.


The call-by-value fixed point combinator
$Z =  \tmApp{  \valLam{ \mathit{f} }{ \ottsym{(}   \tmApp{  \valLam{ \mathit{x} }{ \mathit{f} }  }{ \ottsym{(}   \tmApp{  \tmApp{  \valLam{ \mathit{y} }{ \mathit{x} }  }{ \mathit{x} }  }{ \mathit{y} }   \ottsym{)} }   \ottsym{)} }  }{ \ottsym{(}   \tmApp{  \valLam{ \mathit{x} }{ \mathit{f} }  }{ \ottsym{(}   \tmApp{  \tmApp{  \valLam{ \mathit{y} }{ \mathit{x} }  }{ \mathit{x} }  }{ \mathit{y} }   \ottsym{)} }   \ottsym{)} } $ computes the
least fixed point of a \gcalc function.
In examples, we use recursive function notation and infix operators.

\subsection{Examples}
\label{sec:examples}
Let us now explore some examples of how the language's features can be used.

\subsubsection*{Streams}
\begin{figure}
  \[
    {
      \small
      \begin{array}{ll|l}
        & \text{Program} & \text{Observation} \\
        \hline
        &  \tmApp{ \mathit{fromN} }{ \ottsym{0} }  &  \tmBotC  \\
             \stepsym^{*}  &  \tmJoin{ \ottsym{(}   \tmApp{ \ottsym{0}  \ottsym{::}  \mathit{fromN} }{ \ottsym{1} }   \ottsym{)} }{  \valBotV  }  &  \valBotV  \\
             \stepsym^{*}  &  \tmJoin{ \ottsym{(}  \ottsym{0}  \ottsym{::}  \ottsym{(}   \tmJoin{ \ottsym{(}   \tmApp{ \ottsym{1}  \ottsym{::}  \mathit{fromN} }{ \ottsym{2} }   \ottsym{)} }{  \valBotV  }   \ottsym{)}  \ottsym{)} }{  \valBotV  }  & \ottsym{0}  \ottsym{::}   \valBotV  \\
             \stepsym^{*}  &  \tmJoin{ \ottsym{(}  \ottsym{0}  \ottsym{::}  \ottsym{(}   \tmJoin{ \ottsym{(}  \ottsym{1}  \ottsym{::}  \ottsym{(}   \tmJoin{ \ottsym{(}   \tmApp{ \ottsym{2}  \ottsym{::}  \mathit{fromN} }{ \ottsym{3} }   \ottsym{)} }{  \valBotV  }   \ottsym{)}  \ottsym{)} }{  \valBotV  }   \ottsym{)}  \ottsym{)} }{  \valBotV  }  &
            \ottsym{0}  \ottsym{::}  \ottsym{1}  \ottsym{::}   \valBotV  \\
            & \vdots & \vdots \\
      \end{array}
    }
    \]
    \caption{Behavior of the term $ \tmApp{ \mathit{fromN} }{ \ottsym{0} } $.}
    \label{fig:fromN}
\end{figure}
The function below computes the stream of natural numbers starting from $n$:
\[  \tmApp{ \mathit{fromN} }{ n }  =  \tmJoin{ \ottsym{(}   \tmApp{ n  \ottsym{::}  \mathit{fromN} }{ \ottsym{(}  n  \ottsym{+}  \ottsym{1}  \ottsym{)} }   \ottsym{)} }{  \valBotV  }  \]


\noindent
\Fref{fig:fromN} illustrates the runtime behavior of the program $ \tmApp{ \mathit{fromN} }{ \ottsym{0} } $.
The evaluation steps down the left-hand column correspond to an
``unrolling'' of the recursive definition as well as steps that simplify
arithmetic operations.
The sequence of observations down the right-hand column gives the finitary
partial results that can be generated by $ \tmApp{ \mathit{fromN} }{ \ottsym{0} } $.
Informally, an observation is the information that the computation has streamed
out so far.
As the system of reduction in \Sref{sec:opsem} will make clear,
observations are obtained by regarding running computations (\eg recursive calls
that have not yet been unrolled) as $ \tmBotC $ and simplifying the resulting
expression.
The intent of the streaming order is that from top to bottom, these values only
\textit{increase}.
\todo{We need to explain the streaming order here or earlier.}
From the sequence of observations, it is possible to see why the definition of $\mathit{fromN}$ includes a join with $ \valBotV $.
Since $ \tmBotC $ propagates like an error, programs like $\ottsym{0}  \ottsym{::}   \tmBotC $ are
equivalent to $ \tmBotC $.
Omitting the join would in essence replace each $ \valBotV $ with $ \tmBotC $ in our observations, so no nontrivial result would ever be produced by $\mathit{fromN}$ without it.
The denotational semantics in \Sref{sec:logsem} will give us the
meaning of a \gcalc{} program ``in the limit''.
For $ \tmApp{ \mathit{fromN} }{ \ottsym{0} } $, that would be the infinite stream of natural numbers.

\subsubsection*{Parallel Or}
\label{sec:por}
As in the work of \citet{boudol94} and \citet{dezani94}, the classic parallel-or
operator can be encoded.
The call-by-value parallel-or function is:
{\small
  \[ \tmApp{\tmApp{\tmId{por}}{\mathit{x}}}{\mathit{y}}
    \begin{array}[t]{cl}
    =&(\tmLetIn{ \symName{true} }{ \tmApp{ \mathit{x} }{ \ottsym{()} } }{ \symName{true} }) \ \tmJoinSym \ 
    (\tmLetIn{ \symName{true} }{ \tmApp{ \mathit{y} }{ \ottsym{()} } }{ \symName{true} }) \ \tmJoinSym
    \\ &
    (\tmLetIn{ \symName{false} }{ \tmApp{ \mathit{x} }{ \ottsym{()} } }{\tmLetIn{ \symName{false} }{ \tmApp{ \mathit{y} }{ \ottsym{()} } }{ \symName{false} }})
  \end{array}
\]
}

\noindent
The function \tmId{por} takes two thunks \tmId{x} and \tmId{y} as input, which,
if they run to completion, are expected to return booleans.
If forcing either \tmId{x} or \tmId{y} produces $ \symName{true} $, then so does
\tmId{por\; x\; y}, even if the other thunk loops and never produces an output.
When both thunks return $ \symName{false} $, so does \tmId{por}.



\subsubsection*{Datalog-style sets}
\label{sec:datalog}

In contrast to the past theoretical studies of join, which focus on its role as
a composition of parallel \emph{computations}, \gcalc emphasizes the
importance of joins of \emph{values}, including functions and sets.
For example, the expression $  \tyJoinSyn{ \ottsym{\{}  \ottsym{(}  \ottsym{1}  \ottsym{,}  \ottsym{2}  \ottsym{)}  \ottsym{\}} }{ \ottsym{\{}  \ottsym{(}  \ottsym{2}  \ottsym{,}  \ottsym{3}  \ottsym{)}  \ottsym{\}} }  $ computes the set
$\ottsym{\{}   \tmPair{ \ottsym{1} }{ \ottsym{2} }   \ottsym{,}   \tmPair{ \ottsym{2} }{ \ottsym{3} }   \ottsym{\}}$, which contains two tuples of integers.
Sets of tuples encode relations; together with the join operator, we can express
Datalog-style programs like the program below.\footnote{This lesson can
  already be seen in Datafun. In \gcalc, however, we do not need any special constructs for
  recursion since fixed point combinators are definable in the language.} %
\[  \tmApp{ \mathit{reaches} }{ \mathit{x} }  =  \tmApp{  \tmJoin{ \ottsym{\{}  \mathit{x}  \ottsym{\}} }{  \tmBigJoin{ n }{  \tmApp{ \mathit{neighbors} }{ \mathit{x} }  }{ \mathit{reaches} }  }  }{ n }  \]

\noindent
The function $\mathit{reaches}$ takes the name of a node in a graph and returns the set of node names that are reachable in any number of steps.
It uses a function $\mathit{neighbors}$, which encodes a graph by mapping the name of a
node to the set of the names of the nodes that can be reached in one step.
We see that, thanks to the presence of join, recursive
programs can be defined as fixed points of operators that could not otherwise
be expressed.  Under a call-by-value interpretation, replacing the join operator
in this code with a conventional function call would result in a meaningless
infinite loop whenever the graph encoded via $\mathit{neighbors}$ has a cycle. In
\gcalc, as in Datalog, a non-trivial fixed point exists.

\subsubsection*{Concurrent systems}
As a final example, we see how the features of \gcalc enable the
construction of systems of independent concurrent processes. The code in
Figure~\ref{fig:concurrent} implements a two-phase commit protocol with three
nodes: two peers and a coordinator. The coordinator proposes a value and
facilitates agreement between the peers.  Each of these three
parties is represented as a top-level function taking the current state of the
whole system (called the \emph{global state})
as input and producing the node's new \emph{local state} as output.
The global state at any given point in time can be thought of as the join of the
local states of all the nodes in the system.

\begin{figure}[t]
{\small
  \[
\begin{array}{l@{\qquad\qquad}l}
\tmId{peer_1}\ \tmRecord{\tmId{proposal}} =               &  \tmId{coordinator}\ state =  \\
\quad  \tmRecord{\tmFV{ok_1}{proposal > 4}}   &  \quad \tmRecord{ \tmFV{proposal}{5} } \ \vee \    \\
                                               &  \quad (\tmLetIn{\tmRecord{ \tmId{ok_1}, \tmId{ok_2} }}{state}{} \\
\tmId{peer_2}\ \tmRecord{ \tmId{proposal} } =             &  \quad \quad \tmRecord{ \tmFV{res}{displayResult\; (\tmId{ok_1}\; \mathtt{\&\&}\; \tmId{ok_2}) }}) \\
\quad  \tmRecord{\tmFV{ok_2}{\tmId{proposal} <= 6} } & \\
                              & \tmApp{\tmId{displayResult}}{\mathit{result}} = \ottkw{if} \, \mathit{result} \, \ottkw{then} \,  \litString{accepted}  \, \ottkw{else} \,  \litString{rejected}  \\
\\
\multicolumn{2}{l}{\tmId{system}\ () =
  \tmRecord{} \ \vee \
  \tmId{peer_1}\; ( \tmApp{ \mathit{system} }{ \ottsym{()} } ) \ \vee \
  \tmId{peer_2}\; ( \tmApp{ \mathit{system} }{ \ottsym{()} } ) \ \vee \
  \tmId{coordinator}\; ( \tmApp{ \mathit{system} }{ \ottsym{()} } )}
\end{array}
\]
}

\caption{Implementation of two-phase commit.}
\label{fig:concurrent}
\end{figure}
\begin{figure}
    \centering \scriptsize
\begin{tabular}{ c c c c }
  \tmId{peer_1} & \tmId{peer_2} & \tmId{coordinator} & \tmId{system} \\
  \hline
  $ \tmBotC $ & $ \tmBotC $ & $ \tmBotC $ & $ \tmBotC $ \\
  $ \tmBotC $ & $ \tmBotC $ & $ \tmBotC $ & $\tmEmptySet{}$ \\
  $ \tmBotC $ & $ \tmBotC $ & $\ottsym{\{}   \fieldName{proposal}   \ottsym{=}  \ottsym{5}  \ottsym{\}}$ & $\ottsym{\{}   \fieldName{proposal}   \ottsym{=}  \ottsym{5}  \ottsym{\}}$ \\
  $\ottsym{\{}   \fieldName{ok_1}   \ottsym{=}   \symName{true}   \ottsym{\}}$ & $\ottsym{\{}   \fieldName{ok_2}   \ottsym{=}   \symName{true}   \ottsym{\}}$ & $\ottsym{\{}   \fieldName{proposal}   \ottsym{=}  \ottsym{5}  \ottsym{\}}$
  & $\ottsym{\{}   \fieldName{ok_1}   \ottsym{=}   \symName{true}   \ottsym{,}   \fieldName{ok_2}   \ottsym{=}   \symName{true}   \ottsym{,}   \fieldName{proposal}   \ottsym{=}  \ottsym{5}  \ottsym{\}}$ \\
  $\ottsym{\{}   \fieldName{ok_1}   \ottsym{=}   \symName{true}   \ottsym{\}}$ & $\ottsym{\{}   \fieldName{ok_2}   \ottsym{=}   \symName{true}   \ottsym{\}}$ & $\ottsym{\{}   \fieldName{res}   \ottsym{=}   \litString{accepted}   \ottsym{,}   \fieldName{proposal}   \ottsym{=}  \ottsym{5}  \ottsym{\}}$
    & $\ottsym{\{}   \fieldName{res}   \ottsym{=}   \litString{accepted}   \ottsym{,}   \fieldName{ok_1}   \ottsym{=}   \symName{true}   \ottsym{,}   \fieldName{ok_2}   \ottsym{=}   \symName{true}   \ottsym{,}   \fieldName{proposal}   \ottsym{=}  \ottsym{5}  \ottsym{\}}$ \\
\end{tabular}
  \caption{Evolution of the two-phase commit protocol over time.}
  \label{fig:twopc}
\end{figure}

As shown in the figure, the system as a whole is defined as a recursive thunk.
It passes each process the previous global state (a recursive call)
and computes the next global state of the system by joining their results.
All states involved are records.

Illustrating the Datalog-style semantics, we see how the state of the system
evolves over time in Figure~\ref{fig:twopc}.
All states start at $ \tmBotC $ by fiat.
The system is defined so that its first non-trivial state is the empty record,
which kicks off computation.
At this point, $\tmId{peer_1}$ and $\tmId{peer_2}$ are not able to run because they
require as input a record containing a field $\fieldName{proposal}$.
Thus, their local states remain $ \tmBotC $.
The coordinator---at least, part of it---is able to run and proposes the value $\ottsym{5}$.
This allows the peers to execute.
Both agree with the proposed value of $\ottsym{5}$, setting their corresponding
record fields to $ \symName{true} $.
Once the peers agree, the coordinator produces a $\fieldName{res}$ field in its
local state indicating the proposed value was accepted.
At this time, the system has computed a fixed point.

\section{Approximate Operational Semantics}
\label{sec:opsem}
We now give an \emph{approximate operational semantics} to describe how \gcalc
programs may evolve over time.
This technique declaratively designates the valid partial runs of a program.

\subsection{Reduction Rules}
\label{sec:op-sem:red}
\begin{figure}
  {
  \renewcommand{\ottdrulename}[1]{}

  {\small
  \begin{drulepar}{$ \step{ \tmMeta }{ \tmMeta' } $}{reduction}
    \drule{stepECtx} \and

     \step{ \ottnt{E}  \ottsym{[}   \tmTopC   \ottsym{]} }{  \tmTopC  } 

    {\setlength{\fboxsep}{5pt}\colorbox{lightgray}{$ \step{ \tmMeta }{  \tmBotC  } $}} \\

     \step{  \tmApp{ \ottsym{(}   \valLam{ \mathit{x} }{ \tmMeta }   \ottsym{)} }{ \valMeta }  }{  \subst{ \tmMeta }{ \valMeta }{ \mathit{x} }  }  \and

     \step{ \ottkw{let} \, \ottsym{(}  \mathit{x_{{\mathrm{1}}}}  \ottsym{,}  \mathit{x_{{\mathrm{2}}}}  \ottsym{)}  \ottsym{=}   \tmPair{ \valMeta_{{\mathrm{1}}} }{ \valMeta_{{\mathrm{2}}} }  \, \ottkw{in} \, \tmMeta }{  \subst{  \subst{ \tmMeta }{ \valMeta_{{\mathrm{1}}} }{ \mathit{x_{{\mathrm{1}}}} }  }{ \valMeta_{{\mathrm{2}}} }{ \mathit{x_{{\mathrm{2}}}} }  } 
    \and

     \step{ \ottkw{let} \, \ottnt{s}  \ottsym{=}  \ottnt{s'} \, \ottkw{in} \, \tmMeta }{ \tmMeta } \ \text{ where }\ \ottnt{s}  \leq  \ottnt{s'}
    \and

     \tmBigJoin{ \mathit{x} }{ \ottsym{\{}  \valMeta_{{\mathrm{1}}}  \ottsym{,} \, ... \, \ottsym{,}  \valMeta_{\ottmv{n}}  \ottsym{\}} }{ \tmMeta }   \stepsym 
      \tmJoin{ \subst{ \tmMeta }{ \valMeta_{{\mathrm{1}}} }{ \mathit{x} } }{\tmJoin{\cdots}{ \subst{ \tmMeta }{ \valMeta_{\ottmv{n}} }{ \mathit{x} } }}
      \and

     \step{  \tmJoin{ \resMeta_{{\mathrm{1}}} }{ \resMeta_{{\mathrm{2}}} }  }{  \synjoin{ \resMeta_{{\mathrm{1}}} }{ \resMeta_{{\mathrm{2}}} }  } \and

     \step{ \ottsym{\{}  \tmMeta_{{\mathrm{1}}}  \ottsym{,} \, ... \, \ottsym{,}  \tmMeta_{\ottmv{n}}  \ottsym{,}   \tmBotC   \ottsym{,}  \tmMeta'_{{\mathrm{1}}}  \ottsym{,} \, ... \, \ottsym{,}  \tmMeta'_{\ottmv{m}}  \ottsym{\}} }{ \ottsym{\{}  \tmMeta_{{\mathrm{1}}}  \ottsym{,} \, ... \, \ottsym{,}  \tmMeta_{\ottmv{n}}  \ottsym{,}  \tmMeta'_{{\mathrm{1}}}  \ottsym{,} \, ... \, \ottsym{,}  \tmMeta'_{\ottmv{m}}  \ottsym{\}} } \and


  \end{drulepar}
}
  \begin{minipage}{1\textwidth}
    {\small
    \fbox{$ \synjoin{ \resMeta }{ \resMeta' } $}
    \[
       \synjoin{ \resMeta }{ \resMeta' }  = \left\{
        \begin{array}{ll@{\quad}|ll}
      \resMeta  & \text{if } \resMeta'  \ottsym{=}   \tmBotC    &  \ottnt{s_{{\mathrm{1}}}}  \sqcup  \ottnt{s_{{\mathrm{2}}}}  & \text{if } \resMeta_{{\mathrm{1}}}  \ottsym{=}  \ottnt{s_{{\mathrm{1}}}} \text{ and } \resMeta_{{\mathrm{2}}}  \ottsym{=}  \ottnt{s_{{\mathrm{2}}}}\\
      \resMeta' & \text{if } \resMeta  \ottsym{=}   \tmBotC     &  \pairop{  \synjoin{ \valMeta_{{\mathrm{1}}} }{ \valMeta'_{{\mathrm{1}}} }  }{  \synjoin{ \valMeta_{{\mathrm{2}}} }{ \valMeta'_{{\mathrm{2}}} }  }  & \text{if } \resMeta  \ottsym{=}   \tmPair{ \valMeta_{{\mathrm{1}}} }{ \valMeta_{{\mathrm{2}}} }  \text{ and }      \resMeta'  \ottsym{=}   \tmPair{ \valMeta'_{{\mathrm{1}}} }{ \valMeta'_{{\mathrm{2}}} }  \\
      \valMeta & \text{if } \resMeta  \ottsym{=}  \valMeta \text{ and } \resMeta'  \ottsym{=}   \valBotV  & \ottsym{\{}  \valMeta_{{\mathrm{1}}}  \ottsym{,} \, ... \, \ottsym{,}  \valMeta_{\ottmv{n}}  \ottsym{,}  \valMeta'_{{\mathrm{1}}}  \ottsym{,} \, ... \, \ottsym{,}  \valMeta'_{\ottmv{m}}  \ottsym{\}}
        & \text{if } \resMeta  \ottsym{=}  \ottsym{\{}  \valMeta_{{\mathrm{1}}}  \ottsym{,} \, ... \, \ottsym{,}  \valMeta_{\ottmv{n}}  \ottsym{\}} \textand
          \resMeta'  \ottsym{=}  \ottsym{\{}  \valMeta'_{{\mathrm{1}}}  \ottsym{,} \, ... \, \ottsym{,}  \valMeta'_{\ottmv{m}}  \ottsym{\}} \\
      \valMeta' & \text{if } \resMeta  \ottsym{=}   \valBotV  \text{ and } \resMeta'  \ottsym{=}  \valMeta' &    \tmJoin{  \valLam{ \mathit{x} }{ \tmMeta }  }{ \tmMeta' }  & \text{if } \resMeta  \ottsym{=}   \valLam{ \mathit{x} }{ \tmMeta }  \textand  \resMeta'  \ottsym{=}   \valLam{ \mathit{x} }{ \tmMeta' } \\
      & &  \tmTopC  & \text{otherwise}
        \end{array}\right.
    \]
    {\small
    \fbox{$ \pairop{ \resMeta }{ \resMeta' } $}
    \[
       \pairop{ \resMeta }{ \resMeta' }  =
      \left\{\begin{array}{ll}
         \tmBotC  & \text{if } \resMeta  \ottsym{=}   \tmBotC  \text{ or } (\resMeta  \ottsym{=}  \valMeta \textand \resMeta'  \ottsym{=}   \tmBotC )\\
         \tmTopC  & \text{if } \resMeta  \ottsym{=}   \tmTopC  \text{ or } (\resMeta  \ottsym{=}  \valMeta \textand \resMeta'  \ottsym{=}   \tmTopC )\\
         \tmPair{ \valMeta }{ \valMeta' }  & \text{if } \resMeta  \ottsym{=}  \valMeta \textand \resMeta'  \ottsym{=}  \valMeta'
      \end{array}\right.
    \]
    }
    }
  \end{minipage}
  }

  \caption{\gcalc Approximate Operational Semantics}
  \label{fig:opsem}
\end{figure}

\Fref{fig:opsem} defines a call-by-value semantics over the closed terms of
\gcalc.
To start, we will ignore the reduction rule highlighted in gray.
The reduction relation is defined to be closed over evaluation contexts.
The error $ \tmTopC $ propagates through these contexts.
Beta reduction for application makes use of the capture-avoiding substitution operation
written $ \subst{ \tmMeta }{ \valMeta }{ \mathit{x} } $ to mean $\tmMeta$ with all free occurrences of $\mathit{x}$
replaced by $\valMeta$.
Reduction of a pair elimination form $\ottkw{let} \, \ottsym{(}  \mathit{x_{{\mathrm{1}}}}  \ottsym{,}  \mathit{x_{{\mathrm{2}}}}  \ottsym{)}  \ottsym{=}   \tmPair{ \valMeta_{{\mathrm{1}}} }{ \valMeta_{{\mathrm{2}}} }  \, \ottkw{in} \, \tmMeta$
performs one substitution for each component of the pair.
The result is $ \subst{  \subst{ \tmMeta }{ \valMeta_{{\mathrm{1}}} }{ \mathit{x_{{\mathrm{1}}}} }  }{ \valMeta_{{\mathrm{2}}} }{ \mathit{x_{{\mathrm{2}}}} } $. Since reduction is defined over
closed terms, we need not worry that $\mathit{x_{{\mathrm{2}}}}$ might be free in $\valMeta_{{\mathrm{1}}}$.
Reduction for symbol elimination reduces only when given
a symbol meeting the threshold.

The $ \synjoin{ \resMeta }{ \resMeta' } $ metafunction defines how a join of two results evaluates.
Joins are distributed over abstractions.
They are also distributed over pairs but to obtain a well-formed result we must
a make use of a \emph{computational lifting} operation $ \pairop{ \resMeta_{{\mathrm{1}}} }{ \resMeta_{{\mathrm{2}}} } $.
This operation has an asymmetric definition which follows the left-to-right
sequential evaluation of pairs.
Symbols come with a primitive notion of join.
Joins of two unlike values, such as a pair with a function or two incomparable symbols with each other, result in $ \tmTopC $,
which we refer to as an \emph{ambiguity error}.
Joins involving $ \tmBotC $, $ \tmTopC $, and $ \valBotV $ are defined
according to the laws of bounded semilattices.

Because the decomposition of a term into an evaluation context and a redex is not unique,
reduction is nondeterministic.
Under the rules we are currently considering, the only source of nondeterminism
is the ability to reduce on either side of a join and in any position within a set.
It is therefore evident that the system so far is confluent.

\subsection{Dealing with Nontermination}
\label{sec:opsem:nondet}

As we have seen, nonterminating \gcalc programs like
$ \tmApp{ \mathit{fromN} }{ \ottsym{0} } $ and $ \tmApp{ \mathit{evens} }{ \ottsym{()} } $ can have non-trivial meaning.
It is not straightforward to capture this meaning in the semantics, however.
To see why, let the function $\mathit{head}$ be defined as
$ \valLam{ \mathit{x} }{ \ottkw{let} \, h  \ottsym{::}  \mathit{\_}  \ottsym{=}  \mathit{x} \, \ottkw{in} \, h } $.
When applied to a list, $\mathit{head}$ should return the first element of its argument.
Unfortunately, the reduction rules we have discussed so far do not reflect this
intended meaning.
Since the expression $ \tmApp{ \mathit{fromN} }{ \ottsym{0} } $ represents an
infinite stream, it never runs to a value.
Thus, in the program $ \tmApp{ \mathit{head} }{ \ottsym{(}   \tmApp{ \mathit{fromN} }{ \ottsym{0} }   \ottsym{)} } $, evaluation never reaches the body of
$\mathit{head}$.
It, like all call-by-value functions, requires a value as input.
Consequently, the program enters a meaningless infinite loop.

Our solution to this conundrum is to introduce another source of nondeterminism
in the form of \emph{approximation steps}.
The highlighted rule in \Fref{fig:opsem} states that a running program is able to
nondeterministically throw away its output by stepping to $ \tmBotC $.
In this way, programs reduce to their observations.
For example, we can rewrite an abbreviated version of the table from \Fref{fig:fromN} as:
\begin{center}
\begin{small}
\begin{tikzcd}
   \tmApp{ \mathit{fromN} }{ \ottsym{0} } 
  \dar[mapsto, shorten >=1.75pt][pos=1]{*}
  \rar[mapsto, shorten >=1.75pt][pos=1]{*} &
   \tmJoin{ \ottsym{(}   \tmApp{ \ottsym{0}  \ottsym{::}  \mathit{fromN} }{ \ottsym{1} }   \ottsym{)} }{  \valBotV  } 
  \dar[mapsto, shorten >=1.75pt][pos=1]{*}
  \rar[mapsto, shorten >=1.75pt][pos=1]{*} &
   \tmJoin{ \ottsym{(}  \ottsym{0}  \ottsym{::}  \ottsym{(}   \tmJoin{ \ottsym{(}   \tmApp{ \ottsym{1}  \ottsym{::}  \mathit{fromN} }{ \ottsym{2} }   \ottsym{)} }{  \valBotV  }   \ottsym{)}  \ottsym{)} }{  \valBotV  } 
  \dar[mapsto, shorten >=1.75pt][pos=1]{*}
  \rar[mapsto, shorten >=1.75pt][pos=1]{*} &
  \cdots
  \\
   \tmBotC  &  \valBotV  &  \ottsym{0}  \ottsym{::}   \valBotV 
\end{tikzcd}
\end{small}
\end{center}
\noindent
It follows that $ \tmApp{ \mathit{head} }{ \ottsym{(}   \tmApp{ \mathit{fromN} }{ \ottsym{0} }   \ottsym{)} }   \stepsym^{*}   \tmApp{ \mathit{head} }{ \ottsym{(}  \ottsym{0}  \ottsym{::}   \valBotV   \ottsym{)} }   \stepsym^{*}  \ottsym{0}$.

The presence of nondeterministic approximation steps means that a single trace does not
in general capture the entire meaning of a program.\footnote{Our use of nondeterminism is
loosely inspired by the semantics of the concurrent lambda calculus of
\citet{dezani94} and strongly resembles the \emph{clairvoyant} semantics of \citet{hackett19}.} %
Said differently, the existence of a reduction sequence $\tmMeta  \stepsym^{*}  \resMeta$ does
\emph{not} indicate $\resMeta$ is the unique or best result that $\tmMeta$ might produce.
Rather, $\resMeta$ is merely one possible approximation of the full meaning of $\tmMeta$.
Indeed, infinite computations like $ \tmApp{ \mathit{fromN} }{ \ottsym{0} } $ have no ``best result''
expressible in the syntax.
To fully understand the meaning of an expression, we need to take into account the
whole (often infinite) set of results it reduces to.

As another example, take the program $ \tmBigJoin{ \mathit{x} }{  \tmApp{ \mathit{evens} }{ \ottsym{()} }  }{ \ottkw{let} \, \ottsym{2}  \ottsym{=}  \mathit{x} \, \ottkw{in} \,  \litString{success}  } $.
As $\mathit{evens}$ generates the set of all even natural numbers, this program
is intended to search for the element $\ottsym{2}$ in the set and, if the search succeeds,
evaluate to the string $ \litString{success} $.
However, we face the same issue as before: without approximation steps, the infinite set $ \tmApp{ \mathit{evens} }{ \ottsym{()} } $ would never reduce to a value so the reduction rule for the big join operator would never fire.
Making use of approximation steps, we have:
\[{\small
  \begin{array}{rllll}
   \tmApp{ \mathit{evens} }{ \ottsym{()} }  &  \stepsym^{*}  &  \tmApp{  \tmJoin{ \ottsym{\{}  \ottsym{0}  \ottsym{\}} }{ \mathit{plus2all} }  }{ \ottsym{(}   \tmApp{ \mathit{evens} }{ \ottsym{()} }   \ottsym{)} }  &  \stepsym^{*}  &  \tmApp{  \tmJoin{ \ottsym{\{}  \ottsym{0}  \ottsym{\}} }{ \mathit{plus2all} }  }{ \ottsym{(}   \tmApp{  \tmJoin{ \ottsym{\{}  \ottsym{0}  \ottsym{\}} }{ \mathit{plus2all} }  }{ \ottsym{(}   \tmApp{ \mathit{evens} }{ \ottsym{()} }   \ottsym{)} }   \ottsym{)} }  \\
  &  \stepsym^{*}  &  \tmApp{  \tmJoin{ \ottsym{\{}  \ottsym{0}  \ottsym{\}} }{ \mathit{plus2all} }  }{ \ottsym{(}   \tmJoin{ \ottsym{\{}  \ottsym{0}  \ottsym{\}} }{  \tmBotC  }   \ottsym{)} }  &  \stepsym^{*}  &  \tyJoinSyn{ \ottsym{\{}  \ottsym{0}  \ottsym{\}} }{ \ottsym{\{}  \ottsym{2}  \ottsym{\}} }  \ \  \stepsym  \ \  \ottsym{\{}  \ottsym{0}  \ottsym{,}  \ottsym{2}  \ottsym{\}} \\
  \end{array}
  }
\]
This unblocks reduction in our example:
\[{\small
  \begin{array}{rll}
   \tmBigJoin{ \mathit{x} }{  \tmApp{ \mathit{evens} }{ \ottsym{()} }  }{ \ottkw{let} \, \ottsym{2}  \ottsym{=}  \mathit{x} \, \ottkw{in} \,  \litString{success}  }  &  \stepsym^{*}  &
   \tmBigJoin{ \mathit{x} }{ \ottsym{\{}  \ottsym{0}  \ottsym{,}  \ottsym{2}  \ottsym{\}} }{ \ottkw{let} \, \ottsym{2}  \ottsym{=}  \mathit{x} \, \ottkw{in} \,  \litString{success}  }  \\
  &  \stepsym^{*}  &  \tmJoin{ \ottsym{(}  \ottkw{let} \, \ottsym{2}  \ottsym{=}  \ottsym{0} \, \ottkw{in} \,  \litString{success}   \ottsym{)} }{ \ottsym{(}  \ottkw{let} \, \ottsym{2}  \ottsym{=}  \ottsym{2} \, \ottkw{in} \,  \litString{success}   \ottsym{)} }  \\
  &  \stepsym^{*}  &  \tmJoin{  \tmBotC  }{  \litString{success}  }  \ \  \stepsym  \ \  \litString{success}  \\

  \end{array}
}\]
We can see that approximation steps are useful for cutting off infinite
recursion as well as discarding otherwise stuck terms like
$\ottkw{let} \, \ottsym{2}  \ottsym{=}  \ottsym{0} \, \ottkw{in} \,  \litString{success} $.

Approximation steps enable a \gcalc function to compute (part of) its output
without having its entire input available; they model
\emph{pipeline parallelism}.
Their nondeterministic nature makes possible a declarative approach to describing
the operation of \gcalc programs in which the technical complexities of
scheduling are left implicit.
As we will see shortly, this relatively simple semantics is suitable for the
study of contextual equivalence.
On the other hand, its nondeterminism means that it does
not immediately give rise to an implementation.
We revisit this issue and discuss pipeline parallelism more explicitly in \Sref{sec:disc:impl}.


\paragraph{Convergence \& Approximation}
We define the \emph{convergence} of an expression as the existence of a
non-$ \tmBotC $ result that the expression can reduce to.
We write \(  \returns{ \tmMeta }{ \resMeta }  \) iff \( \tmMeta  \stepsym^{*}  \resMeta \textand \resMeta \neq  \tmBotC  \).
The result may be omitted for brevity; the notation \(  \returnsany{ \tmMeta }  \) means that some
such $\resMeta$ exists.
Given this, we define a notion of \emph{contextual approximation}.
Here, a program context $\ottnt{C}$ is an expression with one subexpression replaced with a hole $ \ctxHole $.
We write $\ottnt{C}  \ottsym{[}  \tmMeta  \ottsym{]}$ to mean $\ottnt{C}$ with $\tmMeta$ filled in for the hole.
Contextual approximation is defined as: $ \ctxapx{ \tmMeta_{{\mathrm{1}}} }{ \tmMeta_{{\mathrm{2}}} } $ iff $\mathforall{\ottnt{C}}{\, \returnsany{ \ottnt{C}  \ottsym{[}  \tmMeta_{{\mathrm{1}}}  \ottsym{]} }  \Rightarrow  \returnsany{ \ottnt{C}  \ottsym{[}  \tmMeta_{{\mathrm{2}}}  \ottsym{]} } }$.
\emph{Contextual equivalence}, written $ \ctxeq{ \tmMeta_{{\mathrm{1}}} }{ \tmMeta_{{\mathrm{2}}} } $, is contextual approximation in both directions.

\begin{remark}
Given that the goal of \gcalc is purportedly to enable
deterministic-by-construction parallel programming and the reduction system we
have studied for it is unapologetically nondeterministic, some reassurance is in
order.
Confluence, a conventional approach to arguing that nondeterministic rewriting
systems behave deterministically in a global sense is not appropriate in our setting.
The full system of reduction from \Fref{fig:opsem} is confluent, but in a
disappointingly trivial way: thanks to approximation steps, all terms reduce to $ \tmBotC $.

As noted in the introduction, our desired notion of determinism has to do with
the meaning of a program from the infinite limit perspective.
Thus, rather than concern ourselves with determinism here, it is better to
approach the property using the denotational semantics we will soon construct.
We return to the property in \Sref{sec:log-sem:domains}.
\end{remark}



\section{Logical Semantics: A Filter Model}
\label{sec:logsem}

\begin{figure}
  {\small
  \[
    \begin{array}{lrcrcl}
      \textit{computation formulae} &
       \synctype &\ni & \tyAc\longonly{, \tyBc} & ::= &
           \tmBotC  \sep  \tmTopC  \sep \tyAv \\
      \textit{value formulae} &
        \synvtype &\ni & \tyAv, \tyBv & ::= &
           \valBotV 
          \sep \ottnt{s}
          \sep \ottsym{(}  \tyAv_{{\mathrm{1}}}  \ottsym{,}  \tyAv_{{\mathrm{2}}}  \ottsym{)}
          \sep \ottsym{\{}  \tyAv_{\ottmv{i}}  \ottsym{\mbox{$\mid$}}   \mathin{ \mathit{i} }{ \mathit{I} }   \ottsym{\}}
          \sep  \tyFun{  \mathin{ \mathit{i} }{ \mathit{I} }  }{ \tyAv_{\ottmv{i}} }{ \tyAc_{\ottmv{i}} }  \\
      \textit{environments} &
        \synenv &\ni & \Gamma & ::= &
           \envEmpty  \sep \Gamma  \ottsym{,}  \mathit{x}  \ottsym{:}  \tyAv

    \end{array}
  \]}
{\small
  \drules[]{$ \tyapx{ \tyAc }{ \tyAc' } $}{streaming order}
         {TApxBot, TApxBotV, TApxTop,
           TApxSym, TApxPair}

\(\ottdruleTApxSet{}
\quad
\ottdrule[{}]{%
\ottpremise{ \mathforall{  \mathin{ \mathit{i} }{ \mathit{I} }  }{  \mathexists{  \mathit{J'}  \subseteq  \mathit{J}  }{  \tyapx{  \tyBigJoin{  \mathin{ \mathit{j} }{ \mathit{J'} }  }{ \tau'_{\ottmv{j}} }  }{ \tau_{\ottmv{i}} }  }  } 
\quad \tyapx{ \tyAc_{\ottmv{i}} }{  \tyBigJoin{  \mathin{ \mathit{j} }{ \mathit{J'} }  }{ \tyAc'_{\ottmv{j}} }  } }%
}{
 \tyapx{  \tyFun{  \mathin{ \mathit{i} }{ \mathit{I} }  }{ \tau_{\ottmv{i}} }{ \tyAc_{\ottmv{i}} }  }{  \tyFun{  \mathin{ \mathit{j} }{ \mathit{J} }  }{ \tau'_{\ottmv{j}} }{ \tyAc'_{\ottmv{j}} }  } }{%
{\ottdrulename{TApxFun}}{}%
}\)
}
  \caption{\gcalc Filter Model Formulae}
  \label{fig:formulae}
\end{figure}
\begin{figure}
  \begin{minipage}[t]{0.48\textwidth}
    {\small
    \drulesectionhead{$ \tyPairOp{ \tyAc_{{\mathrm{1}}} }{ \tyAc_{{\mathrm{2}}} } $}{pair lifting}
    \[
       \tyPairOp{ \tyAc_{{\mathrm{1}}} }{ \tyAc_{{\mathrm{2}}} }  = \left\{\begin{array}{ll}
         \tmTopC  & \text{if } \tyAc_{{\mathrm{1}}}  \ottsym{=}   \tmTopC  \text{ or } (\tyAc_{{\mathrm{1}}}  \ottsym{=}  \tyAv_{{\mathrm{1}}} \textand \tyAc_{{\mathrm{2}}}  \ottsym{=}   \tmTopC ) \\
         \tmBotC  & \text{if } \tyAc_{{\mathrm{1}}}  \ottsym{=}   \tmBotC  \text{ or } (\tyAc_{{\mathrm{1}}}  \ottsym{=}  \tyAv_{{\mathrm{1}}} \textand \tyAc_{{\mathrm{2}}}  \ottsym{=}   \tmBotC ) \\
        \ottsym{(}  \tyAv_{{\mathrm{1}}}  \ottsym{,}  \tyAv_{{\mathrm{2}}}  \ottsym{)} & \text{if } \tyAc_{{\mathrm{1}}}  \ottsym{=}  \tyAv_{{\mathrm{1}}} \textand \tyAc_{{\mathrm{2}}}  \ottsym{=}  \tyAv_{{\mathrm{2}}}
      \end{array}\right.
    \]
    }
  \end{minipage}\hfill
  \begin{minipage}[t]{0.48\textwidth}
    {\small
    \drulesectionhead{$ \tySng{ \tyAc } $}{singleton lifting}
    \[
       \tySng{ \tyAc }  = \left\{\begin{array}{ll}
         \tmTopC  & \text{if } \tyAc  \ottsym{=}   \tmTopC  \\
         \tmBotC  & \text{if } \tyAc  \ottsym{=}   \tmBotC  \\
        \ottsym{\{}  \tyAv  \ottsym{\}} & \text{if } \tyAc  \ottsym{=}  \tyAv \\
      \end{array}\right.
    \]
    }
  \end{minipage}

  {\small
  \drulesectionhead{$ \tyJoin{ \tyAc_{{\mathrm{1}}} }{ \tyAc_{{\mathrm{2}}} } $}{formula join}
  \[
       \tyJoin{ \tyAc_{{\mathrm{1}}} }{ \tyAc_{{\mathrm{2}}} }  = \left\{\begin{array}{ll@{\ }|ll}
        \tyAc_{{\mathrm{1}}} & \text{if } \tyAc_{{\mathrm{2}}}  \ottsym{=}   \tmBotC  &         \ottnt{s_{{\mathrm{1}}}}  \sqcup  \ottnt{s_{{\mathrm{2}}}}  & \text{if } \tyAc_{{\mathrm{1}}}  \ottsym{=}  \ottnt{s_{{\mathrm{1}}}} \text{ and } \tyAc_{{\mathrm{2}}}  \ottsym{=}  \ottnt{s_{{\mathrm{2}}}} \text{\todo{}}\\
        \tyAc_{{\mathrm{2}}} & \text{if } \tyAc_{{\mathrm{1}}}  \ottsym{=}   \tmBotC  &         \tyPairOp{  \tyJoin{ \tyAv'_{{\mathrm{1}}} }{ \tyAv'_{{\mathrm{2}}} }  }{  \tyJoin{ \tyAv''_{{\mathrm{1}}} }{ \tyAv''_{{\mathrm{2}}} }  }  & \text{if } \tyAc_{{\mathrm{1}}}  \ottsym{=}  \ottsym{(}  \tyAv'_{{\mathrm{1}}}  \ottsym{,}  \tyAv''_{{\mathrm{1}}}  \ottsym{)} \text{ and } \tyAc_{{\mathrm{2}}}  \ottsym{=}  \ottsym{(}  \tyAv'_{{\mathrm{2}}}  \ottsym{,}  \tyAv''_{{\mathrm{2}}}  \ottsym{)} \\
        \tyAv_{{\mathrm{1}}} & \text{if } \tyAc_{{\mathrm{1}}}  \ottsym{=}  \tyAv_{{\mathrm{1}}} \text{ and } \tyAc_{{\mathrm{2}}}  \ottsym{=}   \valBotV  &        \ottsym{\{}  \tyAv_{\ottmv{i}}  \ottsym{\mbox{$\mid$}}   \mathin{ \mathit{i} }{  \mathit{I_{{\mathrm{1}}}}  \cup  \mathit{I_{{\mathrm{2}}}}  }   \ottsym{\}} & \text{if } \tyAc_{{\mathrm{1}}}  \ottsym{=}  \ottsym{\{}  \tyAv_{\ottmv{i}}  \ottsym{\mbox{$\mid$}}   \mathin{ \mathit{i} }{ \mathit{I_{{\mathrm{1}}}} }   \ottsym{\}} \text{ and } \tyAc_{{\mathrm{2}}}  \ottsym{=}  \ottsym{\{}  \tyAv_{\ottmv{i}}  \ottsym{\mbox{$\mid$}}   \mathin{ \mathit{i} }{ \mathit{I_{{\mathrm{2}}}} }   \ottsym{\}} \\
        \tyAv_{{\mathrm{2}}} & \text{if } \tyAc_{{\mathrm{1}}}  \ottsym{=}   \valBotV  \text{ and } \tyAc_{{\mathrm{2}}}  \ottsym{=}  \tyAv_{{\mathrm{2}}} &         \tyFun{  \mathin{ \mathit{i} }{  \mathit{I_{{\mathrm{1}}}}  \cup  \mathit{I_{{\mathrm{2}}}}  }  }{ \tyAv'_{\ottmv{i}} }{ \tyAc'_{\ottmv{i}} }  & \text{if } \tyAc_{{\mathrm{1}}}  \ottsym{=}   \tyFun{  \mathin{ \mathit{i} }{ \mathit{I_{{\mathrm{1}}}} }  }{ \tyAv'_{\ottmv{i}} }{ \tyAc'_{\ottmv{i}} }  \text{ and }\tyAc_{{\mathrm{2}}}  \ottsym{=}   \tyFun{  \mathin{ \mathit{i} }{ \mathit{I_{{\mathrm{2}}}} }  }{ \tyAv'_{\ottmv{i}} }{ \tyAc'_{\ottmv{i}} } \\
        &&  \tmTopC  & \text{otherwise}
      \end{array}\right.
  \]}

\vspace{-2ex}
  \caption{Operations on Formulae}
  \label{fig:formula-ops}
\end{figure}
\begin{figure}
  {\small
  \drules[]{$ \judg{ \Gamma }{ \tmMeta }{ \tyAc } $}{formula assignment}
         {TSub, TBot, TBotV, TTop, TVar,
           TJoin, TSym, TPair,
           TSet, TFun, TLetSym,
           TLetPair, TForIn, TApp,
           TLetPairTop, TLetSymTop,TAppLTop,TAppRTop,
           TForInTop
         }
       }
       \vspace{-2ex}
  \caption{\gcalc Filter Model Formula Assignment}
  \label{fig:typing}
\end{figure}

A useful way of reasoning about \gcalc programs is by defining a denotational
semantics that captures their full (possibly infinite) meaning.
It turns out type systems are a useful tool for constructing such a
semantics, even though \gcalc is an untyped language.
Following \citet{barendregt83} and \citet{dezani94},
we will define a type system so precise that every part of the
behavior of a term can be described by a type.
In other words (as proven in \Sref{sec:logsem:results}), if two terms are assigned exactly
the same types, then they are contextually equivalent.
Thus, we can define the meaning of a term as the set of types that can be assigned
to it.
This construction is known as a \emph{filter model}.



Our filter model gives a ``logical'' semantics by assigning logical formulae
(which are essentially types) to terms via a system of inference rules.
Intuitively, logical formulae represent the \emph{finite} ``behaviors'' that a
term may have such as
``being a set containing at least the elements 1, 2, and 3,'' or
``behaving as a piecewise function that at least maps $ \symName{true} $ to
$ \symName{false} $ and $ \symName{false} $ to $ \symName{true} $.''
Terms with infinite behaviors, such as the set $ \tmApp{ \mathit{evens} }{ \ottsym{()} } $ from the
introduction, can still be handled; they are assigned an infinite number of
finite formulae.

\begin{remark}
It is natural to explicitly define the streaming order, which we have so far
discussed in informal terms, for logical formulae. \iflong
This definition of the streaming order is interesting in that it is analogous to
two well-known concepts that are usually thought of as distinct:
\begin{enumerate}
\item It corresponds to Scott's order of approximation on denotations.
\item It coincides with the \emph{opposite} of the usual order on types: the
  classic subtyping relation used in filter models and popularized by Cardelli.
\end{enumerate}
We choose to follow the order long used for denotational semantics by Scott, with
regret that the conventions of Scott and Cardelli are inconsistent with each
other.
\fi
\shortonly{Its definition, designed according to the order long used for denotational semantics by Scott, corresponds to the \emph{opposite} of the classic subtyping relation popularized by Cardelli.}
Consequently, joins of formulae in our setting play the role of
\emph{intersection types}~\cite{coppo78}.
The analog of the intersection type introduction rule will be shown to be
admissible in \Lref{lem:log-sem:directed}.
Our choice of order means that we are building a model of
\emph{ideals} (the order-theoretic dual of filters), but we use the
term \emph{filter model} to avoid confusion when comparing with past work.
\end{remark}

\subsection{Formulae \& Assignment}

The top of \Fref{fig:formulae} describes our logical formulae and the streaming
order over them.
Taking inspiration from call-by-push-value semantics~\cite{cbpv}, the
metavariable $\tyAc$ ranges over \emph{computation formulae} which describe the
behavior of all terms including both those that may fail and those that produce
a value.
\emph{Value formulae}, ranged over by $\tyAv$ and $\tyBv$, describe the behavior of terms that produce a successful
result.
Value formulae include the syntactic base values as well as pairs of value formulae.
We assume $\mathit{I}$ and $\mathit{J}$ range over \emph{finite} index sets.
Thus, the formula $\ottsym{\{}  \tyAv_{\ottmv{i}}  \ottsym{\mbox{$\mid$}}   \mathin{ \mathit{i} }{ \mathit{I} }   \ottsym{\}}$ contains a finite set of subformulae
of the shape $\tyAv_{\ottmv{i}}$.
The formula $ \tyFun{  \mathin{ \mathit{i} }{ \mathit{I} }  }{ \tyAv_{\ottmv{i}} }{ \tyAc_{\ottmv{i}} } $ is a join of a finite set of clauses.
Formulae of this shape are assigned to function values.
They describe the behavior of the function in terms of threshold queries; each clause
$ \tyFunOne{ \tyAv_{\ottmv{i}} }{ \tyAc_{\ottmv{i}} } $ (for some $ \mathin{ \mathit{i} }{ \mathit{I} } $) represents one such query in which
the input formula $\tyAv_{\ottmv{i}}$ is a threshold.
When this threshold is met by the input to the function, we say the clause for
$\mathit{i}$ is \emph{triggered} and the associated function produces a result of at
least $\tyAc_{\ottmv{i}}$.
The fact that function domains are restricted to value formulae reflects the
call-by-value nature of \gcalc.
As shorthand, we often omit the join symbol in the case $\mathit{I}$ is a singleton
or otherwise write it inline as in $ \tyFunOne{  \tyJoinSyn{  \tyFunOne{ \tyAv_{{\mathrm{1}}} }{ \tyAc_{{\mathrm{1}}} }  }{ \tyAv_{{\mathrm{2}}} }  }{ \tyAc_{{\mathrm{2}}} } $.
In formulae like this, the arrow constructor $\to$ binds tighter than joins.

Environments, ranged over by $\Gamma$, are finite partial mappings from
variables to value formulae.
The formula associated with a variable $\mathit{x}$ in $\Gamma$ is written
$\Gamma  \ottsym{(}  \mathit{x}  \ottsym{)}$.
The domain of $\Gamma$ is written $ \envDom{ \Gamma } $.
Environments separated by a comma are assumed to have disjoint domains.

The streaming order on formulae follows the order-theoretic intuition we
have seen so far.
In particular, \ottdrulename{TApxSet} states that as a set increases in the
streaming order, it may gain elements and existing elements may grow.
However, elements may not decrease or disappear completely.

Relating function formulae is a bit more involved.
We would like to define an order that somehow corresponds to the usual pointwise
ordering on functions.
In order theory, given functions $f$ and $g$ with domain $X$, we have
$\semapx{f}{g}$ iff $\mathforall{x \in X}{\semapx{f(x)}{g(x)}}$.
Suppose we have $\tyAv  \ottsym{=}   \tyFun{  \mathin{ \mathit{i} }{ \mathit{I} }  }{ \tyAv_{\ottmv{i}} }{ \tyAc_{\ottmv{i}} } $ and
$\tyAv'  \ottsym{=}   \tyFun{  \mathin{ \mathit{j} }{ \mathit{J} }  }{ \tyAv'_{\ottmv{j}} }{ \tyAc'_{\ottmv{j}} } $.
Consider an arbitrary input which we represent by the formula $\tyBv$.
Then, when applied to this input, the function denoted by $\tyAv$ will produce
an output denoted by $\tyAc  \ottsym{=}   \tyBigJoin{  \tyapx{ \tyAv_{\ottmv{i}} }{ \tyBv }  }{ \tyAc_{\ottmv{i}} } $.
This is the join of all of the outputs of the clauses of $\tyAv$ that are
triggered by $\tyBv$ (\ie the clauses whose input threshold $\tyBv$ meets).
Likewise, the corresponding output for $\tyAv'$ is $\tyAc'  \ottsym{=}   \tyBigJoin{  \tyapx{ \tyAv'_{\ottmv{j}} }{ \tyBv }  }{ \tyAc'_{\ottmv{j}} } $.
We need the definition of \ottdrulename{TApxFun} to ensure $ \tyapx{ \tyAv }{ \tyAv' } $ iff
$ \tyapx{ \tyAc }{ \tyAc' } $ for all $\tyBv$.
To do so, it requires that for each clause $ \tyFunOne{ \tyAv_{\ottmv{i}} }{ \tyAc_{\ottmv{i}} } $ of $\tyAv$ that
will be triggered by an input $\tyBv$ there exists a corresponding set of clauses
of $\tyAv'$, whose indices are given by $\mathit{J'}$, that meets two criteria:
\begin{enumerate}
  \item Each clause of $\mathit{J'}$ must be triggered by every input that
    might trigger the clause $ \tyFunOne{ \tyAv_{\ottmv{i}} }{ \tyAc_{\ottmv{i}} } $. In other words,
    $ \tyapx{  \tyBigJoin{  \mathin{ \mathit{j} }{ \mathit{J'} }  }{ \tyAv'_{\ottmv{j}} }  }{ \tyAv_{\ottmv{i}} } $.
  \item The combined output of all the clauses of $\mathit{J'}$ is at least $\tyAc_{\ottmv{i}}$.
    That is, $ \tyapx{ \tyAc_{\ottmv{i}} }{  \tyBigJoin{  \mathin{ \mathit{j} }{ \mathit{J'} }  }{ \tyAc'_{\ottmv{j}} }  } $.
\end{enumerate}
Note that in the case that $\mathit{I}$ and $\mathit{J}$ are each singleton sets,
\ottdrulename{TApxFun} specializes to the usual ordering for function types:
we have $ \tyapx{ \tyAv' }{ \tyAv } $ and $ \tyapx{ \tyAc }{ \tyAc' } $ imply $ \tyapx{  \tyFunOne{ \tyAv }{ \tyAc }  }{  \tyFunOne{ \tyAv' }{ \tyAc' }  } $.
Moreover, the following distributivity property holds.
\begin{lemma}
  \label{lem:log-sem:fun-join-dist}
  $ \tyapx{  \tyFunOne{ \tyAv }{ \ottsym{(}   \tyJoin{ \tyAc }{ \tyAc' }   \ottsym{)} }  }{  \tyJoinSyn{ \ottsym{(}   \tyFunOne{ \tyAv }{ \tyAc }   \ottsym{)} }{ \ottsym{(}   \tyFunOne{ \tyAv }{ \tyAc' }   \ottsym{)} }  } $
\end{lemma}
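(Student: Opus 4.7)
The plan is to apply \ottdrulename{TApxFun} directly. View the left-hand side $\tyFunOne{\tyAv}{(\tyJoin{\tyAc}{\tyAc'})}$ as a single-clause function formula, i.e., the case where the index set $I$ is a singleton $\{*\}$ with threshold $\tyAv_* = \tyAv$ and output $\tyAc_* = \tyJoin{\tyAc}{\tyAc'}$. View the right-hand side as the two-clause function formula indexed by $J = \{1,2\}$ where both clauses share the threshold $\tyAv$ and produce $\tyAc_1 = \tyAc$ and $\tyAc_2 = \tyAc'$, respectively.

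For the unique $i = * \in I$, I would pick $J' = J = \{1,2\}$. Under that choice, the two obligations of \ottdrulename{TApxFun} instantiate to $\tyapx{\tyJoin{\tyAv}{\tyAv}}{\tyAv}$ and $\tyapx{\tyJoin{\tyAc}{\tyAc'}}{\tyJoin{\tyAc}{\tyAc'}}$. The second is immediate by reflexivity of the streaming order. The first is idempotence of the meta-join on value formulae.

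The only place any real work happens is in dispatching $\tyapx{\tyJoin{\tyAv}{\tyAv}}{\tyAv}$: the metafunction $\sqcup$ defined in \Fref{fig:formula-ops} does not always produce a syntactically identical result (for instance, the set case yields $\{\tyBv_i \mid i \in I \cup I\}$ rather than $\{\tyBv_i \mid i \in I\}$, and the function case analogously duplicates clause indices), so one must do a small case analysis on the shape of $\tyAv$, closing each case with the appropriate congruence rule from \Fref{fig:formulae} (e.g.\ \ottdrulename{TApxSet}, \ottdrulename{TApxPair}, \ottdrulename{TApxFun}). I would expect this idempotence fact, together with reflexivity of $\sqsubseteq$, to be established earlier as a separate structural lemma about $\sqcup$ on formulae; once available, the present \Lref{lem:log-sem:fun-join-dist} is a single application of \ottdrulename{TApxFun} with $J' = J$.
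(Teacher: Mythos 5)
Your proof is correct and follows essentially the same route as the paper's: a single application of \ottdrulename{TApxFun} with $J' = J$, discharging the output obligation by reflexivity. The only (cosmetic) difference is in the input obligation $\tyapx{\tyJoin{\tyAv}{\tyAv}}{\tyAv}$: rather than proving idempotence of $\sqcup$ by a fresh case analysis, the paper obtains it as an instance of the already-established rule in \Lref{lem:log-sem:join-lub} (the join of two formulae each below $\tyAv$ is below $\tyAv$) with both premises given by reflexivity, so no separate structural lemma is needed.
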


We lift the streaming order from formulae to environments, defining
the proposition ${ \envapx{ \Gamma }{ \Gamma' } }$ to hold iff ${ \envDom{ \Gamma }  \subseteq  \envDom{ \Gamma' } }$ and for all
$\mathit{x} \in  \envDom{ \Gamma } $ we have $ \tyapx{ \Gamma  \ottsym{(}  \mathit{x}  \ottsym{)} }{ \Gamma'  \ottsym{(}  \mathit{x}  \ottsym{)} } $.

Key operations on formulae are defined in \Fref{fig:formula-ops}.
The operations $ \tyPairOp{ \tyAc_{{\mathrm{1}}} }{ \tyAc_{{\mathrm{2}}} } $ and $ \tySng{ \tyAc } $  monadically
lift the construction of pairs and singleton sets from value formulae to
computation formulae in a way that mimics evaluation.
The figure also defines the join operation on formulae, written
$ \tyJoin{ \tyAc_{{\mathrm{1}}} }{ \tyAc_{{\mathrm{2}}} } $.
The definition resembles that of the corresponding operation on results
from \Sref{sec:opsem}.
The following properties will let us establish that it does indeed represent a least
upper bound and that all operations on formulae are monotone.
\begin{lemma}
  \label{lem:log-sem:formula-op-mono}
  \label{lem:log-sem:join-lub}
  The following rules are all admissible:
  {\small
  \begin{mathpar}
    \inferrule
        { \tyapx{ \tyAc_{{\mathrm{1}}} }{ \tyAc'_{{\mathrm{1}}} }  \\
           \tyapx{ \tyAc_{{\mathrm{2}}} }{ \tyAc'_{{\mathrm{2}}} } }
        { \tyapx{  \tyPairOp{ \tyAc_{{\mathrm{1}}} }{ \tyAc_{{\mathrm{2}}} }  }{  \tyPairOp{ \tyAc'_{{\mathrm{1}}} }{ \tyAc'_{{\mathrm{2}}} }  } }

    \inferrule
        { \tyapx{ \tyAc }{ \tyAc' } }
        { \tyapx{  \tySng{ \tyAc }  }{  \tySng{ \tyAc' }  } }

    \inferrule
        { \tyapx{ \tyAc' }{ \tyAc }  \\
           \tyapx{ \tyAc'' }{ \tyAc } }
        { \tyapx{  \tyJoin{ \tyAc' }{ \tyAc'' }  }{ \tyAc } }

        \inferrule
            { \tyapx{ \tyAc }{ \tyAc' } }
            { \tyapx{ \tyAc }{  \tyJoin{ \tyAc' }{ \tyAc'' }  } }

            \inferrule
                { \tyapx{ \tyAc }{ \tyAc'' } }
                { \tyapx{ \tyAc }{  \tyJoin{ \tyAc' }{ \tyAc'' }  } }
      \end{mathpar}
      }
\end{lemma}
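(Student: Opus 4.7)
The plan is to prove all five admissible rules jointly by induction on the structure of the formulae, with case analysis on the shape of each input. Reflexivity and transitivity of $\sqsubseteq$ are assumed to be established separately (most likely alongside or prior to this lemma).

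For the two monotonicity claims (concerning $\tyPairOp{\cdot}{\cdot}$ and $\tySng{\cdot}$), I case-split on whether each argument is $\tmBotC$, $\tmTopC$, or a value formula. Because $\tmTopC$ is maximal in $\sqsubseteq$, the hypothesis $\tyAc \sqsubseteq \tyAc'$ forces $\tyAc' = \tmTopC$ whenever $\tyAc = \tmTopC$; similarly, if $\tyAc$ is a value formula then $\tyAc'$ is either a value formula or $\tmTopC$. In each case I compute the lifted result directly from its definition in \Fref{fig:formula-ops} and conclude using TApxBot, TApxTop, or TApxPair.

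For the least upper bound property (third rule), I case-split on the shapes of $\tyAc'$ and $\tyAc''$. When either is $\tmBotC$, the join reduces to the other formula and the result is immediate. When $\tyAc' \sqcup \tyAc''$ falls into the ``otherwise'' clause producing $\tmTopC$—for instance, joining a pair with a function, or two incomparable symbols—the hypotheses $\tyAc' \sqsubseteq \tyAc$ and $\tyAc'' \sqsubseteq \tyAc$ together force $\tyAc = \tmTopC$, since $\tyAc$ must accommodate both incompatible shapes and each TApx rule (other than TApxTop) constrains the two sides to share a constructor. For compatible structural shapes I apply the induction hypothesis component-wise and reassemble with TApxPair, TApxSet, or TApxFun. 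For the two upper-bound rules, the interesting cases are again structural: pairs use reflexivity and TApxPair; sets use TApxSet together with $I_1 \subseteq I_1 \cup I_2$; and for function formulae, the join is indexed by $I_1 \cup I_2$, so for each clause $i \in I_1$ of the smaller operand, the witnessing subset $J' = \{i\}$ in TApxFun together with reflexivity discharges both conditions.

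The main obstacle is the function case of the LUB rule. When $\tyAc'$, $\tyAc''$, and $\tyAc$ are all function formulae, $\tyAc' \sqcup \tyAc''$ consists of clauses indexed by $I_1 \cup I_2$. For each such clause I must produce a witness $J'$ for the TApxFun ordering against $\tyAc$, and this witness is extracted from the pre-existing witnesses provided by the hypotheses: a clause originating in $I_1$ reuses the witness from $\tyAc' \sqsubseteq \tyAc$, and likewise for $I_2$. The bookkeeping is mechanical but requires care in verifying that the ``otherwise'' clause for joining incompatible shapes never fires when $\tyAc$ is not $\tmTopC$, and in confirming that the induction is well-founded across the mutually referenced claims.
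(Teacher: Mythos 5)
Your proof is correct and takes essentially the same approach as the paper: the paper splits the five rules across two appendix lemmas, dispatching the $\tyPairOp{\cdot}{\cdot}$ and $\tySng{\cdot}$ monotonicity rules by case analysis using an Order Inversion lemma and the three join rules by routine induction on one operand, which matches your joint induction-plus-case-analysis. Your identification of the function case of the least-upper-bound rule as the only nontrivial step is consistent with what the paper's (terse) argument has to handle.
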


We define the \textit{size} of a formula $\ottsym{\mbox{$\mid$}}  \tyAc  \ottsym{\mbox{$\mid$}}$ to be its height when viewed as a syntax
tree.
In proofs, it is often useful to perform induction on this metric.
This allows us to benefit from the following lemma, which states that the size of
the join of a pair of formulae is no larger than the size of the larger of the
two.
\begin{lemma}[Size of Joins]
  \label{lem:log-sem:join-size}

  $\ottsym{\mbox{$\mid$}}   \tyJoin{ \tyAc }{ \tyAc' }   \ottsym{\mbox{$\mid$}} \leq \max\{\ottsym{\mbox{$\mid$}}  \tyAc  \ottsym{\mbox{$\mid$}}, \ottsym{\mbox{$\mid$}}  \tyAc'  \ottsym{\mbox{$\mid$}}\}$
\end{lemma}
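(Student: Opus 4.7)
The plan is to proceed by induction on $|\tyAc| + |\tyAc'|$ (or equivalently, by well-founded induction on the pair of formulae under the subformula ordering), performing case analysis on the clauses of the definition of $\tyJoin{\tyAc}{\tyAc'}$ given in \Fref{fig:formula-ops}. The bulk of the clauses admit an immediate bound: when one side is $\tmBotC$, the join is the other formula, so the size is trivially bounded; for the $\tmTopC$ ``otherwise'' clause, the join has size $1$ which is no larger than either input; for two symbols, the join is again a single symbol of size $1$; and for the set and function clauses, the result is a formula of the same outer shape whose immediate subformulae are exactly the original $\tyAv_i$ (for $i \in I_1 \cup I_2$), so its height is $1 + \max_{i \in I_1 \cup I_2} |\tyAv_i|$, which is exactly $\max\{|\tyAc|,|\tyAc'|\}$.

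The nontrivial case is the pair clause, where $\tyJoin{(\tyAv'_1,\tyAv''_1)}{(\tyAv'_2,\tyAv''_2)} = \tyPairOp{\tyJoin{\tyAv'_1}{\tyAv'_2}}{\tyJoin{\tyAv''_1}{\tyAv''_2}}$. Here the induction hypothesis gives $|\tyJoin{\tyAv'_1}{\tyAv'_2}| \le \max\{|\tyAv'_1|,|\tyAv'_2|\}$ and likewise for the second component. I then inspect the pair lifting $\tyPairOp{\cdot}{\cdot}$: if either recursive join collapses to $\tmBotC$ or $\tmTopC$, the lifted result has size $1$ and we are done; otherwise both recursive joins are value formulae and the lifted result is an ordinary pair of height $1 + \max$ of the two joined components. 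Combining with the IH and the identity $\max(1+a,1+b) = 1+\max(a,b)$, this yields $1 + \max\{\max\{|\tyAv'_1|,|\tyAv'_2|\},\max\{|\tyAv''_1|,|\tyAv''_2|\}\} = \max\{|(\tyAv'_1,\tyAv''_1)|,|(\tyAv'_2,\tyAv''_2)|\}$, as required.

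The main obstacle is really just bookkeeping in the pair case, since it is the only clause where the join recurses into subformulae and one must confirm that the intermediate $\tyPairOp{\cdot}{\cdot}$ operation does not spuriously increase the height. Once the pair case is dispatched, every remaining clause either returns a constant-size formula or reuses the original subformulae verbatim, making the bound immediate. No auxiliary lemma beyond elementary properties of $\max$ is needed, though \Lref{lem:log-sem:formula-op-mono} could be used if one preferred to reason about the pair lifting abstractly rather than by direct case split.
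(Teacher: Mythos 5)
Your proof is correct and follows essentially the same route as the paper's: induction on the size of the formulae with a case analysis on the clauses of the join, where the only interesting case is the pair clause. The sole cosmetic difference is that the paper factors the bound on the pair lifting ($\ottsym{\mbox{$\mid$}} \tyPairOp{ \tyAc }{ \tyAc' } \ottsym{\mbox{$\mid$}} \leq \max\{\ottsym{\mbox{$\mid$}} \tyAc \ottsym{\mbox{$\mid$}}, \ottsym{\mbox{$\mid$}} \tyAc' \ottsym{\mbox{$\mid$}}\} + 1$) into a separate auxiliary part, whereas you inline the same case split on $\tyPairOp{\cdot}{\cdot}$ directly.
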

\noindent Thus, in a proof by induction on a formula, given induction hypotheses for some
finite set of subformulae, we also have an induction hypothesis for their least
upper bound.

The judgement $ \judg{ \Gamma }{ \tmMeta }{ \tyAc } $ indicates that under the environment $\Gamma$
the term $\tmMeta$ is assigned the formula $\tyAc$.
The intuition is that when given input for each $\mathit{x} \in  \envDom{ \Gamma } $ that is
\emph{at least} $\Gamma  \ottsym{(}  \mathit{x}  \ottsym{)}$, the term $\tmMeta$ has \emph{at least} the
behaviors of $\tyAc$.
Many different formulae may be assigned to the same term.
The definition of the formula assignment is given inductively in \Fref{fig:typing}.


A number of the inference rules are familiar from the literature on type systems
or are otherwise straightforward.
We now discuss the rest.
Joins are assigned the formula that is the least upper bound of the formulae
assigned to their subterms.
Assigning formulae to set literals is non-trivial.
Since expressions in a set literal evaluate in parallel, we want
$\ottsym{\{}  \tmMeta_{{\mathrm{1}}}  \ottsym{,} \, ... \, \ottsym{,}  \tmMeta_{\ottmv{n}}  \ottsym{\}}$ to be assigned formulae in the same way as
$\tmJoinMany{\ottsym{\{}  \tmMeta_{{\mathrm{1}}}  \ottsym{\}}}{\ottsym{\{}  \tmMeta_{\ottmv{n}}  \ottsym{\}}}$.
That is, the presence of $ \tmBotC $ in the set will not affect the final result
while the presence of $ \tmTopC $ makes the entire assigned formula $ \tmTopC $.
To achieve this, the rule \ottdrulename{TSet} uses a metafunction
$ \tySng{ \tyAc } $ which injects a value formula into the singleton set and
propagates errors.

The rule \ottdrulename{TFun} is essentially the usual typing rule for functions
except that here, we exploit the piecewise nature of function formulae which
allow them to describe a function's different output behaviors depending on which
input it is given.
To assign formulae to the set elimination form, the rule \ottdrulename{TForIn} computes an aggregate over the set $\tmMeta_{{\mathrm{1}}}$. It computes the join over all formulae that can be ascribed to the body $\tmMeta_{{\mathrm{2}}}$ of the join when the argument $\mathit{x}$ ranges over the formulae describing the elements of $\tmMeta_{{\mathrm{1}}}$.
The final rules of \Fref{fig:typing} mimic the propagation of the error $ \tmTopC $ through
evaluation contexts in the operational semantics.

\subsection{Properties of Formula Assignment}
We now build the metatheory supported by this inference system, omitting proofs when they are routine.
The first step is to verify that the streaming order on formulae is a
preorder.

\begin{lemma}[Reflexivity]
  For all $\tyAc$, we have $ \tyapx{ \tyAc }{ \tyAc } $.
\end{lemma}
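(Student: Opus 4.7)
The plan is to prove reflexivity by induction on the size $|\tyAc|$ of the formula (as defined just before the lemma), proceeding by cases on the shape of $\tyAc$. The atomic cases $\tmBotC$, $\tmTopC$, $\valBotV$, and $\ottnt{s}$ are discharged immediately by \ottdrulename{TApxBot}, \ottdrulename{TApxTop}, \ottdrulename{TApxBotV}, and \ottdrulename{TApxSym} respectively. The pair case $(\tyAv_{{\mathrm{1}}}, \tyAv_{{\mathrm{2}}})$ follows by applying \ottdrulename{TApxPair} to the two induction hypotheses on $\tyAv_{{\mathrm{1}}}$ and $\tyAv_{{\mathrm{2}}}$ (both strictly smaller). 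The set case $\{\tyAv_{\ottmv{i}} \mid i \in I\}$ follows similarly from \ottdrulename{TApxSet} applied to the induction hypotheses on each $\tyAv_{\ottmv{i}}$.

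The only interesting case is the function case $\tyAc =  \tyFun{  \mathin{ \mathit{i} }{ \mathit{I} }  }{ \tyAv_{\ottmv{i}} }{ \tyAc_{\ottmv{i}} } $. I plan to instantiate \ottdrulename{TApxFun} with the obvious witness: for each $i \in I$, choose $J' = \{i\}$ as the matching subset of the right-hand index set (which is also $I$). With this choice, the join $\tyBigJoin{j \in J'}{\tau'_j}$ collapses to the single formula $\tyAv_{\ottmv{i}}$, so both premises of \ottdrulename{TApxFun} reduce to $\tyapx{\tyAv_{\ottmv{i}}}{\tyAv_{\ottmv{i}}}$ and $\tyapx{\tyAc_{\ottmv{i}}}{\tyAc_{\ottmv{i}}}$. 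These are provided by the induction hypothesis since $\tyAv_{\ottmv{i}}$ and $\tyAc_{\ottmv{i}}$ are strict subformulae of $\tyAc$.

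The main obstacle to watch out for is that \ottdrulename{TApxFun} is formulated in terms of joins of codomain formulae $\tyBigJoin{j \in J'}{\tyAc'_j}$ rather than individual clauses, which means in principle induction has to be available at joined formulae. This is exactly why it is convenient to induct on $|\tyAc|$ rather than directly on structure: \Lref{lem:log-sem:join-size} guarantees that joins do not increase size, so an induction hypothesis at each $\tyAv_{\ottmv{i}}$ and $\tyAc_{\ottmv{i}}$ is in fact enough. In the chosen witness $J' = \{i\}$, however, no nontrivial join is formed and this concern disappears, so the argument goes through cleanly.
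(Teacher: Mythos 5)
Your proof is correct and matches the paper's approach, which is simply a routine induction on $\tyAc$ (the paper gives no further detail). Your observation that choosing $J' = \{i\}$ in the \ottdrulename{TApxFun} case avoids any nontrivial joins is exactly right, and it also means plain structural induction would suffice here without appealing to \Lref{lem:log-sem:join-size}.
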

\iflong
\begin{proof}
  Routine induction on $\tyAc$.
\end{proof}
\fi
\begin{lemma}[Transitivity]
  If $ \tyapx{ \tyAc_{{\mathrm{1}}} }{ \tyAc_{{\mathrm{2}}} } $ and $ \tyapx{ \tyAc_{{\mathrm{2}}} }{ \tyAc_{{\mathrm{3}}} } $ then $ \tyapx{ \tyAc_{{\mathrm{1}}} }{ \tyAc_{{\mathrm{3}}} } $.
\end{lemma}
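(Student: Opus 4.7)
The plan is to prove this by strong induction on $|\tyAc_2|$, the size of the middle formula, with an outer case split on the shape of $\tyAc_2$---the shape determines which rule can conclude each hypothesis. The simple shapes ($ \tmBotC $, $ \tmTopC $, $ \valBotV $, and symbols) are immediate: the terminal rules \rulename{TApxBot}, \rulename{TApxBotV}, \rulename{TApxTop}, and \rulename{TApxSym} restrict $\tyAc_1$ and $\tyAc_3$ to shapes from which the target judgment follows by direct reapplication of the appropriate rule. The pair case is standard: invert both hypotheses, apply the IH to each component (each of strictly smaller size), and reapply \rulename{TApxPair}.

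The set and function cases share a common pattern of composing witness subsets by union; the function case is the main obstacle, and the set case is analogous but simpler. Decompose $\tyAc_1 = \tyFun{\mathin{i}{I}}{\tau_i}{\tyAc'_i}$, $\tyAc_2 = \tyFun{\mathin{j}{J}}{\tyBv_j}{\tyBc_j}$, and $\tyAc_3 = \tyFun{\mathin{k}{K}}{\tyBv'_k}{\tyBc'_k}$. Inversion on the first hypothesis supplies, for each $i \in I$, a subset $J_i \subseteq J$ witnessing $\tyapx{\tyBigJoin{\mathin{j}{J_i}}{\tyBv_j}}{\tau_i}$ and $\tyapx{\tyAc'_i}{\tyBigJoin{\mathin{j}{J_i}}{\tyBc_j}}$; inversion on the second supplies, for each $j \in J$, a subset $K_j \subseteq K$ satisfying $\tyapx{\tyBigJoin{\mathin{k}{K_j}}{\tyBv'_k}}{\tyBv_j}$ and $\tyapx{\tyBc_j}{\tyBigJoin{\mathin{k}{K_j}}{\tyBc'_k}}$.

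I would take $K'_i := \bigcup_{j \in J_i} K_j$ as the witness subset for $i$ when applying \rulename{TApxFun} to obtain the conclusion. The two required inequalities, $\tyapx{\tyBigJoin{\mathin{k}{K'_i}}{\tyBv'_k}}{\tau_i}$ and $\tyapx{\tyAc'_i}{\tyBigJoin{\mathin{k}{K'_i}}{\tyBc'_k}}$, each chain through a corresponding intermediate join---$\tyBigJoin{\mathin{j}{J_i}}{\tyBv_j}$ and $\tyBigJoin{\mathin{j}{J_i}}{\tyBc_j}$, respectively---with the lift from per-index inequalities to joins given by the monotonicity consequences of \Lref{lem:log-sem:formula-op-mono}. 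Each such intermediate formula is a join of strict subformulae of $\tyAc_2$, so by \Lref{lem:log-sem:join-size} its size is strictly less than $|\tyAc_2|$ and the IH applies. The delicate point is that the induction metric must track $|\tyAc_2|$ rather than $|\tyAc_1|$ or $|\tyAc_3|$, since transitivity is invoked twice on chains whose endpoints lie outside of $\tyAc_2$ but whose middle formula is always contained therein.
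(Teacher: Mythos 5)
Your proposal is correct and follows essentially the same route as the paper: induction on $\ottsym{\mbox{$\mid$}}  \tyAc_{{\mathrm{2}}}  \ottsym{\mbox{$\mid$}}$ with case analysis on its shape, inversion of both premises, and, in the function case, unioning the witness subsets and using \Lref{lem:log-sem:join-size} so that the induction hypothesis applies to the intermediate joins of subformulae of $\tyAc_{{\mathrm{2}}}$, together with the join monotonicity/upper-bound facts of \Lref{lem:log-sem:join-lub}. The identification of $\ottsym{\mbox{$\mid$}}  \tyAc_{{\mathrm{2}}}  \ottsym{\mbox{$\mid$}}$ as the right induction metric is exactly the delicate point the paper's proof relies on.
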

\begin{proof}
  Induction on $\tyAc_{{\mathrm{2}}}$, using \Lref{lem:log-sem:join-size} and the accompanying induction principle. \longonly{In each
  case, we invert both
  premises.
  Due to the use of the join operator in the definition of the streaming order,
  we use \Lref{lem:log-sem:join-size} and \Lref{lem:log-sem:join-lub} in the
  function case.}
\end{proof}

With these results in hand, we can turn our attention to some properties of
formula assignment.
%
The definition of the formula assignment rules is structurally recursive in nature;
the formulae assigned to any term is based on the formulae assigned to its subterms.
This leads to the following compositionality principle.
\begin{lemma}[Compositionality]
  \label{lem:log-sem:compos}

  Suppose that $ \judg{ \Gamma' }{ \ottnt{C}  \ottsym{[}  \tmMeta_{{\mathrm{1}}}  \ottsym{]} }{ \tyAc' } $ and moreover for all $\Gamma$ and
  $\tyAc$ such that $ \judg{ \Gamma }{ \tmMeta_{{\mathrm{1}}} }{ \tyAc } $ we have $ \judg{ \Gamma }{ \tmMeta_{{\mathrm{2}}} }{ \tyAc } $.
  It then follows that $ \judg{ \Gamma' }{ \ottnt{C}  \ottsym{[}  \tmMeta_{{\mathrm{2}}}  \ottsym{]} }{ \tyAc' } $.
\end{lemma}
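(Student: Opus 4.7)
The plan is to proceed by induction on the derivation of $\judg{\Gamma'}{\ottnt{C}\ottsym{[}\tmMeta_{{\mathrm{1}}}\ottsym{]}}{\tyAc'}$. Induction on the derivation (rather than on the structure of $\ottnt{C}$) handles the subsumption rule \rulename{TSub} cleanly, since it is the only rule whose conclusion form does not mirror a syntactic constructor and hence can be applied at any point regardless of the shape of $\ottnt{C}$.

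For the base case, the context is $\ottnt{C}  \ottsym{=}   \ctxHole $, so $\ottnt{C}\ottsym{[}\tmMeta_{{\mathrm{1}}}\ottsym{]}  \ottsym{=}  \tmMeta_{{\mathrm{1}}}$ and $\ottnt{C}\ottsym{[}\tmMeta_{{\mathrm{2}}}\ottsym{]}  \ottsym{=}  \tmMeta_{{\mathrm{2}}}$. The hypothesis instantiated at $\Gamma  \ottsym{=}  \Gamma'$ and $\tyAc  \ottsym{=}  \tyAc'$ yields $\judg{\Gamma'}{\tmMeta_{{\mathrm{2}}}}{\tyAc'}$ directly. For \rulename{TSub}, the derivation ends with a premise $\judg{\Gamma'}{\ottnt{C}\ottsym{[}\tmMeta_{{\mathrm{1}}}\ottsym{]}}{\tyAc''}$ and a streaming-order fact $\tyapx{\tyAc''}{\tyAc'}$ (or its Cardelli-dual, depending on the rule's phrasing); the IH gives $\judg{\Gamma'}{\ottnt{C}\ottsym{[}\tmMeta_{{\mathrm{2}}}\ottsym{]}}{\tyAc''}$, and reapplying \rulename{TSub} closes the case.

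For the remaining cases, the shape of $\ottnt{C}$ forces a particular introduction or elimination rule at the root of the derivation. In each case I invert the derivation, identify the unique premise whose subterm contains the hole (i.e.\ has the form $\ottnt{C}'\ottsym{[}\tmMeta_{{\mathrm{1}}}\ottsym{]}$ for a sub-context $\ottnt{C}'$), apply the induction hypothesis to that premise, and reassemble with the same rule using the unchanged sibling premises. The binder cases require care: for $\ottnt{C}  \ottsym{=}   \valLam{ \mathit{x} }{ \ottnt{C}' } $ handled by \rulename{TFun}, the derivation provides, for each $ \mathin{ \mathit{i} }{ \mathit{I} } $, a premise $\judg{\Gamma'  \ottsym{,}  \mathit{x}  \ottsym{:}  \tyAv_{\ottmv{i}}}{\ottnt{C}'\ottsym{[}\tmMeta_{{\mathrm{1}}}\ottsym{]}}{\tyAc_{\ottmv{i}}}$; since the hypothesis of the lemma is universally quantified over $\Gamma$ and $\tyAc$, the IH applies at each extended environment $\Gamma'  \ottsym{,}  \mathit{x}  \ottsym{:}  \tyAv_{\ottmv{i}}$ and formula $\tyAc_{\ottmv{i}}$, and the family of resulting judgements can be recombined via \rulename{TFun}. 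The same pattern handles $\kw{let}$ and the big-join case via \rulename{TLetPair}, \rulename{TLetSym}, and \rulename{TForIn}.

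The main obstacle I anticipate is simply bookkeeping: \Fref{fig:typing} contains both ``ordinary'' rules and their $ \tmTopC $-propagating variants (\rulename{TLetPairTop}, \rulename{TAppLTop}, etc.), so every context constructor with more than one subterm position admits several possible root rules depending on where the hole sits and whether one of the sibling subterms has been assigned $ \tmTopC $. In every such case, however, the inversion still exposes a premise of the form $\judg{\Gamma''}{\ottnt{C}'\ottsym{[}\tmMeta_{{\mathrm{1}}}\ottsym{]}}{\tyAc''}$ and the IH applies uniformly. No nontrivial reasoning about the streaming order is needed beyond what \rulename{TSub} already encapsulates, and no appeal to \Lref{lem:log-sem:formula-op-mono} or \Lref{lem:log-sem:join-size} is required.
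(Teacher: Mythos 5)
Your proposal is correct and matches the paper's argument in all essentials: the paper proves this by structural induction on $\ottnt{C}$, inverting the derivation of $ \judg{ \Gamma' }{ \ottnt{C}  \ottsym{[}  \tmMeta_{{\mathrm{1}}}  \ottsym{]} }{ \tyAc' } $ in each case and reassembling after applying the induction hypothesis, exactly as you do. Your only deviation is inducting on the derivation rather than on $\ottnt{C}$, which is a minor reorganization that cleanly isolates the \rulename{TSub} case; both versions rely on the hypothesis being quantified over all $\Gamma$ and $\tyAc$ to get through the binder and $ \tmTopC $-propagation rules.
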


Next, we establish a standard weakening result.
\begin{lemma}[Weakening]
  \label{lem:log-sem:weaken}
  If $ \judg{ \Gamma' }{ \tmMeta }{ \tyAc } $ and $ \envapx{ \Gamma' }{ \Gamma } $ then $ \judg{ \Gamma }{ \tmMeta }{ \tyAc } $
\end{lemma}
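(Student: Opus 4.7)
The plan is to proceed by induction on the derivation of $\judg{\Gamma'}{\tmMeta}{\tyAc}$. For all the cases besides \ottdrulename{TVar}, the proof is a routine reapplication of the same rule after invoking the induction hypothesis on each premise; no change to the formula $\tyAc$ is needed. For rules with binders, such as \ottdrulename{TFun}, \ottdrulename{TLetPair}, \ottdrulename{TLetSym}, and \ottdrulename{TForIn}, the induction hypothesis must be applied to a premise whose environment has been extended by some $\mathit{x}:\tyAv_{\ottmv{i}}$; this is fine because $\envapx{\Gamma'}{\Gamma}$ entails $\envapx{\Gamma',\mathit{x}:\tyAv_{\ottmv{i}}}{\Gamma,\mathit{x}:\tyAv_{\ottmv{i}}}$ by reflexivity of $\tyapxsym$ on the newly-bound entry.

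The only case requiring real work is \ottdrulename{TVar}. Here $\tmMeta = \mathit{x}$ and $\tyAc = \Gamma'(\mathit{x})$. From $\envapx{\Gamma'}{\Gamma}$ we know that $\mathit{x}\in\envDom{\Gamma}$ and that $\tyapx{\Gamma'(\mathit{x})}{\Gamma(\mathit{x})}$. Applying \ottdrulename{TVar} in the context $\Gamma$ yields $\judg{\Gamma}{\mathit{x}}{\Gamma(\mathit{x})}$, and then \ottdrulename{TSub}, used in the direction that replaces the assigned formula by any smaller one in the streaming order, lets us conclude $\judg{\Gamma}{\mathit{x}}{\Gamma'(\mathit{x})}$ as required.

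The main obstacle, such as it is, is just making sure the direction of $\tyapxsym$ lines up correctly: the streaming order is the Scott approximation order (not the Cardelli subtyping order), so ``more environment'' means pointwise-larger bindings, while the admissible weakening of the result formula is downward. Both monotonicities are packaged by \ottdrulename{TSub}, so the variable case discharges cleanly. There are no confluence or size-based subtleties to worry about, and none of the propagation rules for $\tmTopC$ introduce new difficulties because their premises are either formula-level side conditions or recursive judgements handled by the IH.
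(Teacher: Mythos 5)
Your proposal is correct and follows the same route as the paper, which simply performs a routine induction on the derivation of $ \judg{ \Gamma' }{ \tmMeta }{ \tyAc } $; your treatment of the variable case via \ottdrulename{TSub}, with the direction of the streaming order on environments handled as you describe, is exactly the one substantive step that induction requires.
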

\iflong
\begin{proof}
  Routine induction on $ \judg{ \Gamma' }{ \tmMeta }{ \tyAc } $.
\end{proof}
\fi
\noindent The following properties mean that the set of formulae assigned to a term
given a fixed environment is a non-empty downward-closed directed set known as an \emph{ideal}.
\nr{Should we define the ``denotation'' of a term earlier to make this more explicit?}
\begin{lemma}[Totality]
  \label{lem:log-sem:non-empty}

  For every $\Gamma$ and $\tmMeta$ there exists a formula $\tyAc$ such that $ \judg{ \Gamma }{ \tmMeta }{ \tyAc } $.
\end{lemma}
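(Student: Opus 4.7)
The plan is to proceed by structural induction on $\tmMeta$, in each case exhibiting at least one derivation of $ \judg{ \Gamma }{ \tmMeta }{ \tyAc } $. A useful preliminary observation is that, by TApxBot and TSub, any derived assignment can be collapsed down to the canonical witness $\tyAc =  \tmBotC $; so it suffices to show that each term form admits \emph{some} derivation.

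Base cases are immediate from their introduction rules: $ \tmBotC $, $ \tmTopC $, $ \valBotV $, and symbols $\ottnt{s}$ use TBot, TTop, TBotV, and TSym respectively, and a variable $\mathit{x} \in  \envDom{ \Gamma } $ is handled by TVar (implicitly, the lemma concerns terms whose free variables are captured by $\Gamma$). For compound terms, the IH supplies derivations for each subterm, which the corresponding formation rule combines into a derivation for the whole. Abstractions admit the empty-index instance of TFun, which requires no IH use on the body; set literals admit the analogous instance of TSet. Pairs and binary joins compose IH assignments directly via TPair and TJoin.

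The main obstacle lies in the elimination forms---application, the two let-forms, and the big join---whose base rules TApp, TLetPair, TLetSym, and TForIn demand scrutinee formulae of particular shape (for instance, a function formula in operator position of an application). Since TSub only permits moving downward in the streaming order, one cannot generally reshape an IH-supplied subterm formula upward into the required form. The system's Top variants TAppLTop, TAppRTop, TLetPairTop, TLetSymTop, and TForInTop are designed precisely to close these gaps by assigning $ \tmTopC $ to the compound whenever the standard path is unavailable. With these in hand, every compound form admits at least one derivation, and a final TSub step yields any desired smaller formula, in particular $ \tmBotC $.
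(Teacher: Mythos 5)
There is a genuine gap here, and it stems from a misreading of the rule \rulename{TBot}. In this system \rulename{TBot} is not the introduction rule for the \emph{expression} $ \tmBotC $; it is an axiom assigning the \emph{formula} $ \tmBotC $ to an arbitrary term under an arbitrary environment, i.e.\ $ \judg{ \Gamma }{ \tmMeta }{  \tmBotC  } $ with no premises. (You can confirm this from the paper's proof of the Fundamental Property, whose \rulename{TBot} case reads ``$\tyAc  \ottsym{=}   \tmBotC $'' for an arbitrary $\tmMeta$, and from the fact that the term $ \tmBotC $ has no other case there.) The paper's proof of Totality is therefore a one-liner: exhibit $\tyAc =  \tmBotC $ by \rulename{TBot}, done. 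No induction is needed.

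Your induction, read on its own terms, does not go through at the elimination forms. You claim the Top variants ``close these gaps by assigning $ \tmTopC $ to the compound whenever the standard path is unavailable,'' but those rules are not unconditional fallbacks: \rulename{TAppLTop} requires the premise $ \judg{ \Gamma }{ \tmMeta_{{\mathrm{1}}} }{  \tmTopC  } $ (and similarly for the other Top rules), and many terms---symbols, $\lambda$-abstractions, pairs of values---cannot be assigned $ \tmTopC $ at all. So for a term such as $ \tmApp{ \ottnt{s} }{ \ottnt{s'} } $ your induction hypothesis yields only formulae like $ \tmBotC $, $ \valBotV $, or $\ottnt{s}$ for the operator position, none of which is a function formula or $ \tmTopC $; neither \rulename{TApp} nor any Top variant applies, and the case is stuck unless you invoke the universal \rulename{TBot} axiom on the whole application---at which point the entire induction is superfluous. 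A smaller instance of the same issue is your variable case: the lemma quantifies over \emph{every} $\Gamma$, including those with $\mathit{x} \notin  \envDom{ \Gamma } $, and only \rulename{TBot} covers that situation.
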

\iflong
\begin{proof}
  Immediate from rule \rulename{TBot}.
\end{proof}
\fi
\begin{lemma}[Downward Closure]
  \label{lem:log-sem:down-clos}

  If $ \judg{ \Gamma }{ \tmMeta }{ \tyAc' } $ and $ \tyapx{ \tyAc }{ \tyAc' } $ then $ \judg{ \Gamma }{ \tmMeta }{ \tyAc } $.
\end{lemma}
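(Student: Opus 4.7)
My plan is that this statement is essentially immediate from the subsumption rule \rulename{TSub} listed in Figure~\ref{fig:typing}, which is designed to close the formula-assignment judgement downward under the streaming order. Given the hypothesized derivation of $\judg{\Gamma}{\tmMeta}{\tyAc'}$ and the hypothesis $\tyapx{\tyAc}{\tyAc'}$, I would simply extend the derivation by a single application of \rulename{TSub} at the root to obtain $\judg{\Gamma}{\tmMeta}{\tyAc}$.

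There is no real obstacle here; the content of the lemma is built into the inference system. It is stated as a standalone lemma primarily because, combined with Totality (\Lref{lem:log-sem:non-empty}), it is what identifies the set $\{ \tyAc \mid \judg{\Gamma}{\tmMeta}{\tyAc}\}$ as a non-empty, downward-closed set of formulae, which (together with directedness, captured by the admissibility of the meet-like rule alluded to in the surrounding text) justifies calling the construction an ideal/filter model.

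If for some reason \rulename{TSub} were not to be taken as primitive, the fallback would be induction on the derivation of $\judg{\Gamma}{\tmMeta}{\tyAc'}$. The routine cases are those assigning a compound value formula (\rulename{TPair}, \rulename{TSet}, \rulename{TFun}): by inverting the streaming order on the target using the corresponding rules \rulename{TApxPair}, \rulename{TApxSet}, \rulename{TApxFun}, one gets smaller premises to which the induction hypothesis applies. The \rulename{TFun} case is the most delicate because the order on function formulae involves choosing, for each clause of the larger formula, an index subset $\mathit{J'}$ of the smaller formula and reasoning about joins of its premises; here \Lref{lem:log-sem:join-size} supplies an induction principle that permits recurring on joins of subformulae, and \Lref{lem:log-sem:join-lub} lets us actually assemble those joined induction hypotheses into the required premise. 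The \rulename{TJoin} case would in turn use \Lref{lem:log-sem:join-lub} to propagate the ordering across the two subterms, and the error-propagating cases (\rulename{TLetPairTop}, \rulename{TAppLTop}, etc.) are handled uniformly using \rulename{TTop} since $\tyapx{\tyAc}{\tmTopC}$ already gives any $\tyAc$ we need.
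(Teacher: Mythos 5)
Your proposal is correct and matches the paper exactly: the paper's proof is the one-liner ``Immediate from rule \rulename{TSub}.'' The additional fallback discussion is unnecessary but does no harm.
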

\iflong
\begin{proof}
  Immediate from rule \rulename{TSub}.
\end{proof}
\fi
\begin{lemma}[Directedness]
  \label{lem:log-sem:directed}
  If $ \judg{ \Gamma }{ \tmMeta }{ \tyAc } $ and $ \judg{ \Gamma }{ \tmMeta }{ \tyAc' } $ then $ \judg{ \Gamma }{ \tmMeta }{  \tyJoin{ \tyAc }{ \tyAc' }  } $.
\end{lemma}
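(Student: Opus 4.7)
The plan is to proceed by structural induction on $\tmMeta$, with a case analysis on the last rule of each of the two derivations. Before case-analyzing the syntactic form of $\tmMeta$, I would reduce to the situation where neither derivation ends in \rulename{TSub}: if $\judg{\Gamma}{\tmMeta}{\tyAc}$ is derived by \rulename{TSub} from $\judg{\Gamma}{\tmMeta}{\tyAc''}$ with $\tyapx{\tyAc}{\tyAc''}$, the inductive hypothesis yields $\judg{\Gamma}{\tmMeta}{\tyJoin{\tyAc''}{\tyAc'}}$; monotonicity of join (\Lref{lem:log-sem:formula-op-mono}) gives $\tyapx{\tyJoin{\tyAc}{\tyAc'}}{\tyJoin{\tyAc''}{\tyAc'}}$, and \rulename{TSub} concludes. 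The absorbing cases $\tyAc = \tmBotC$ or $\tyAc = \tmTopC$ (and symmetrically for $\tyAc'$) are immediate from the definition of $\sqcup$ on formulae in \Fref{fig:formula-ops}, with the $\tmTopC$ case re-derived from \rulename{TTop} or the appropriate \rulename{*Top} propagation rule.

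With these reductions, both remaining derivations end in the structural rule matching the form of $\tmMeta$, and the join on formulae has been engineered so that, for each form, it builds the combined formula out of joins of its sub-formulae. For $\tmMeta = \valLam{\mathit{x}}{\tmMeta'}$, the two derivations assign function formulae $\tyFun{i \in I}{\tyAv_i}{\tyAc_i}$ and $\tyFun{j \in J}{\tyAv'_j}{\tyAc'_j}$, whose premises, after reindexing into a disjoint union $I \cup J$, supply exactly the body-derivations that one new application of \rulename{TFun} requires to obtain the joined formula. For $\tmMeta = \tmPair{\tmMeta_{{\mathrm{1}}}}{\tmMeta_{{\mathrm{2}}}}$ or a set literal, the inductive hypothesis applied to the subterms lets us derive the joins of their component formulae, which reassemble under a single \rulename{TPair} or \rulename{TSet}. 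The \rulename{TJoin} case (when $\tmMeta$ is itself a join expression) similarly combines componentwise, appealing to associativity–commutativity of $\sqcup$ via \Lref{lem:log-sem:join-lub} and \rulename{TSub} to rearrange.

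The main obstacle lies in the elimination forms $\tmApp{\tmMeta_{{\mathrm{1}}}}{\tmMeta_{{\mathrm{2}}}}$, $\tmBigJoin{\mathit{x}}{\tmMeta_{{\mathrm{1}}}}{\tmMeta_{{\mathrm{2}}}}$, and $\ottkw{let} \, \ottnt{s}  \ottsym{=}  \tmMeta_{{\mathrm{1}}} \, \ottkw{in} \, \tmMeta_{{\mathrm{2}}}$, where the shape of the joined output formula does not obviously match what the single elimination rule produces. In the application case, given \rulename{TApp}-derivations using intermediate function formulae $\tyFunOne{\tyBv_{{\mathrm{1}}}}{\tyAc_{{\mathrm{1}}}}$ and $\tyFunOne{\tyBv_{{\mathrm{2}}}}{\tyAc_{{\mathrm{2}}}}$, I would apply the induction hypothesis to $\tmMeta_{{\mathrm{1}}}$ to obtain a derivation at the two-clause join $\tyJoin{\tyFunOne{\tyBv_{{\mathrm{1}}}}{\tyAc_{{\mathrm{1}}}}}{\tyFunOne{\tyBv_{{\mathrm{2}}}}{\tyAc_{{\mathrm{2}}}}}$, observe via \rulename{TApxFun} that this is at least as informative as $\tyFunOne{\tyJoin{\tyBv_{{\mathrm{1}}}}{\tyBv_{{\mathrm{2}}}}}{\tyJoin{\tyAc_{{\mathrm{1}}}}{\tyAc_{{\mathrm{2}}}}}$, and then combine with the induction hypothesis for $\tmMeta_{{\mathrm{2}}}$ at $\tyJoin{\tyBv_{{\mathrm{1}}}}{\tyBv_{{\mathrm{2}}}}$ under a single \rulename{TApp} after an application of \rulename{TSub}. \Lref{lem:log-sem:join-size} guarantees that the joined formulae involved do not exceed the larger of their components, so any subsidiary induction on formula structure needed to justify the generation reasoning above remains well-founded.
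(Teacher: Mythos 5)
Your strategy---structural induction on $\tmMeta$, inverting both derivations, absorbing \rulename{TSub} via monotonicity of $\sqcup$, and pushing the join through arrow formulae in the function and application cases via \rulename{TApxFun} (this is exactly \Lref{lem:log-sem:fun-join-dist})---is the same as the paper's proof, which is an induction on $\tmMeta$ appealing to \Lref{lem:log-sem:fun-join-dist}, \Lref{lem:log-sem:join-lub}, and \Lref{lem:log-sem:weaken}. The one ingredient you omit is Weakening (\Lref{lem:log-sem:weaken}), and it is needed in a binder case you pass over: for $\ottkw{let} \, \ottsym{(}  \mathit{x_{{\mathrm{1}}}}  \ottsym{,}  \mathit{x_{{\mathrm{2}}}}  \ottsym{)}  \ottsym{=}  \tmMeta_{{\mathrm{1}}} \, \ottkw{in} \, \tmMeta_{{\mathrm{2}}}$, inverting the two derivations yields body typings under \emph{different} environments $\Gamma  \ottsym{,}  \mathit{x_{{\mathrm{1}}}}  \ottsym{:}  \tyAv_{{\mathrm{1}}}  \ottsym{,}  \mathit{x_{{\mathrm{2}}}}  \ottsym{:}  \tyAv_{{\mathrm{2}}}$ and $\Gamma  \ottsym{,}  \mathit{x_{{\mathrm{1}}}}  \ottsym{:}  \tyAv'_{{\mathrm{1}}}  \ottsym{,}  \mathit{x_{{\mathrm{2}}}}  \ottsym{:}  \tyAv'_{{\mathrm{2}}}$, so the induction hypothesis does not apply to them directly; you must first weaken both to the joined environment $\Gamma, \mathit{x_{{\mathrm{1}}}} : \tyJoin{\tyAv_{{\mathrm{1}}}}{\tyAv'_{{\mathrm{1}}}}, \mathit{x_{{\mathrm{2}}}} : \tyJoin{\tyAv_{{\mathrm{2}}}}{\tyAv'_{{\mathrm{2}}}}$ (checking that these joins are value formulae, and falling back to the $\tmTopC$-propagation rules when they are not). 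Note that the $\lambda$ and big-join cases genuinely do not need this, since there the joined formula is just the union of clauses and the inverted premises carry over verbatim, as you observe. With weakening added, your argument matches the paper's and goes through.
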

\begin{proof}
  Induction on $\tmMeta$, inverting both premises and using
  Lemmas~\ref{lem:log-sem:fun-join-dist}, \ref{lem:log-sem:join-lub}, and \ref{lem:log-sem:weaken}.
\end{proof}

At this point we can begin to formally connect the logical semantics with the
approximate operational semantics.
One essential property from the literature on intersection types is a ``backwards
preservation'' lemma also known as \emph{subject expansion}.
In our setting, this tells us that given $\tmMeta  \stepsym^{*}  \tmMeta'$, every behavior of
$\tmMeta'$ is also a behavior of $\tmMeta$.
Proving this first requires a few inversion properties.
We use the notation $ \judg{ \Gamma }{ \substmeta }{ \Gamma' } $ to mean for all $ \mathin{ \mathit{x} }{  \envDom{ \Gamma' }  } $
we have $ \judg{ \Gamma }{ \substmeta  \ottsym{(}  \mathit{x}  \ottsym{)} }{ \Gamma'  \ottsym{(}  \mathit{x}  \ottsym{)} } $.

\iflong
\begin{lemma}[Inversion of Substitution Typing]
\else
\begin{lemma}
  \fi
  \label{lem:log-sem:subst-inv}

  If $ \judg{ \Gamma }{ \substmeta  \ottsym{(}  \tmMeta  \ottsym{)} }{ \tyAc } $ then there exists $\Gamma'$ such that
  $ \judg{ \Gamma }{ \substmeta }{ \Gamma' } $ and $ \judg{ \Gamma  \ottsym{,}  \Gamma' }{ \tmMeta }{ \tyAc } $.
\end{lemma}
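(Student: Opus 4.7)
The plan is to proceed by induction on the structure of $\tmMeta$, inverting the given formula assignment at each step. The variable case is immediate: if $\tmMeta = \mathit{x}$ and $\mathit{x} \in \envDom{\substmeta}$, take $\Gamma' = \mathit{x} : \tyAc$, so $\judg{\Gamma}{\substmeta(\mathit{x})}{\Gamma'(\mathit{x})}$ is exactly the hypothesis and $\judg{\Gamma, \Gamma'}{\mathit{x}}{\tyAc}$ follows from \rulename{TVar}; if $\mathit{x} \notin \envDom{\substmeta}$ then $\substmeta(\mathit{x}) = \mathit{x}$ and we take $\Gamma' = \envEmpty$.

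For non-variable $\tmMeta$, the term $\substmeta(\tmMeta)$ has the same top-level constructor as $\tmMeta$, so the derivation of $\judg{\Gamma}{\substmeta(\tmMeta)}{\tyAc}$ must (possibly after peeling off uses of \rulename{TSub}) conclude with the syntax-directed rule for that constructor. Any leading \rulename{TSub} is absorbed via the induction hypothesis applied at the larger formula $\tyAc'$ followed by downward closure (\Lref{lem:log-sem:down-clos}). Inversion of the remaining rule delivers one subderivation $\judg{\Gamma_k}{\substmeta(\tmMeta_k)}{\tyAc_k}$ for each immediate subterm $\tmMeta_k$ of $\tmMeta$, where $\Gamma_k$ extends $\Gamma$ with any newly bound variables (for $\lambda$, pair-$\kw{let}$, symbol-$\kw{let}$, and $\bigvee$-binder). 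We apply the induction hypothesis to each to obtain $\Gamma'_k$ with $\judg{\Gamma_k}{\substmeta}{\Gamma'_k}$ and $\judg{\Gamma_k, \Gamma'_k}{\tmMeta_k}{\tyAc_k}$.

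To reassemble, we combine these environments into a single $\Gamma'$ by joining pointwise: $\Gamma'(\mathit{x}) \eqdef \bigsqcup_k \Gamma'_k(\mathit{x})$ over those $k$ with $\mathit{x} \in \envDom{\Gamma'_k}$. By \Lref{lem:log-sem:directed} (directedness), $\judg{\Gamma}{\substmeta(\mathit{x})}{\Gamma'(\mathit{x})}$ for each relevant $\mathit{x}$, so $\judg{\Gamma}{\substmeta}{\Gamma'}$. Because bound variables are chosen fresh and $\substmeta$ targets closed values, $\Gamma'_k$ does not mention the newly bound variables, so weakening (\Lref{lem:log-sem:weaken}) extends each subderivation to $\judg{\Gamma_k, \Gamma'}{\tmMeta_k}{\tyAc_k}$. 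The original constructor rule then reconstructs $\judg{\Gamma, \Gamma'}{\tmMeta}{\tyAc}$.

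The main obstacle is the rules with multiple subderivations that themselves come with their own environment assumptions—particularly \rulename{TFun}, which yields a family of clause derivations $\{\judg{\Gamma, \mathit{x} : \tyAv_i}{\substmeta(\tmMetaAlt)}{\tyAc_i}\}_{i \in I}$, and \rulename{TSet} and \rulename{TJoin}, which combine sibling subderivations. Each clause or sibling produces its own $\Gamma'_i$, and we must merge them without losing typeability of any subterm; this is exactly what \Lref{lem:log-sem:directed} and \Lref{lem:log-sem:weaken} are designed for. A small bookkeeping point is that \Lref{lem:log-sem:join-size} is needed silently whenever a merged $\Gamma'$ is fed back into an outer inductive step, to keep the recursion well-founded.
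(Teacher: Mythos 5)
Your proposal is correct and matches the paper's own argument: induction on the structure of $\tmMeta$, inverting the derivation in each case, merging the environments produced for sibling subderivations via the pointwise join justified by Directedness (\Lref{lem:log-sem:directed}) lifted to substitution typing, and using Weakening (\Lref{lem:log-sem:weaken}) to reassemble. The only quibble is that your closing remark about \Lref{lem:log-sem:join-size} is unnecessary here---the induction is purely structural on $\tmMeta$, so well-foundedness does not depend on formula sizes.
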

\begin{proof}
  \iflong
  First, note that \Lref{lem:log-sem:directed} lifts to the typing of substitutions.
  That is, if we have environments $\Gamma_{{\mathrm{1}}}$ and $\Gamma_{{\mathrm{2}}}$ such that
  $ \judg{ \Gamma }{ \substmeta }{ \Gamma_{{\mathrm{1}}} } $ and $ \judg{ \Gamma }{ \substmeta }{ \Gamma_{{\mathrm{2}}} } $ then $\Gamma' =  \envJoin{ \Gamma_{{\mathrm{1}}} }{ \Gamma_{{\mathrm{2}}} } $ exists and
  $ \judg{ \Gamma }{ \substmeta }{ \Gamma' } $.
  With this in mind, we proceed by induction on $\tmMeta$, in each case inverting
  its derivation and making use of weakening and directedness.
  \else
  Induction on $\tmMeta$, noting that \Lref{lem:log-sem:directed} lifts to the
  typing of substitutions.
  \fi
\end{proof}
\iflong
\begin{lemma}[Inversion of Join Typing]
  \else
\begin{lemma}
\fi
  \label{lem:log-sem:join-typ-inv}
  If $ \judg{ \Gamma }{  \synjoin{ \resMeta_{{\mathrm{1}}} }{ \resMeta_{{\mathrm{2}}} }  }{ \tyAc } $ then there exists $\tyAc_{{\mathrm{1}}}$ and $\tyAc_{{\mathrm{2}}}$ such
  that $ \judg{ \Gamma }{ \resMeta_{{\mathrm{1}}} }{ \tyAc_{{\mathrm{1}}} } $ and $ \judg{ \Gamma }{ \resMeta_{{\mathrm{2}}} }{ \tyAc_{{\mathrm{2}}} } $ and $ \tyapx{ \tyAc }{  \tyJoin{ \tyAc_{{\mathrm{1}}} }{ \tyAc_{{\mathrm{2}}} }  } $.
\end{lemma}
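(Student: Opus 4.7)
The plan is to proceed by case analysis on the shapes of $\resMeta_{{\mathrm{1}}}$ and $\resMeta_{{\mathrm{2}}}$, driven by the definition of $\synjoin{\cdot}{\cdot}$ in \Fref{fig:opsem}. Since the pair and set sub-cases recurse on immediate subcomponents, we formally run this as an induction on $\ottsym{\mbox{$\mid$}}  \resMeta_{{\mathrm{1}}}  \ottsym{\mbox{$\mid$}} + \ottsym{\mbox{$\mid$}}  \resMeta_{{\mathrm{2}}}  \ottsym{\mbox{$\mid$}}$. In every case the goal is the same: exhibit assignments $ \judg{ \Gamma }{ \resMeta_{{\mathrm{1}}} }{ \tyAc_{{\mathrm{1}}} } $ and $ \judg{ \Gamma }{ \resMeta_{{\mathrm{2}}} }{ \tyAc_{{\mathrm{2}}} } $ whose formula join dominates the given $\tyAc$ in the streaming order.

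The routine cases rely only on the base rules of \Fref{fig:typing} together with the lattice laws from \Lref{lem:log-sem:join-lub}. When $\resMeta_{{\mathrm{2}}}  \ottsym{=}   \tmBotC $, the meta-join collapses to $\resMeta_{{\mathrm{1}}}$, so taking $\tyAc_{{\mathrm{1}}}  \ottsym{=}  \tyAc$ and $\tyAc_{{\mathrm{2}}}  \ottsym{=}   \tmBotC $ (via \rulename{TBot}) works because $ \tyJoin{ \tyAc }{  \tmBotC  }   \ottsym{=}  \tyAc$. The $ \tmBotC $-left and $ \valBotV $ cases are analogous, using \rulename{TBotV}. For two joinable symbols, $\tyAc_{\ottmv{i}}  \ottsym{=}  \ottnt{s_{\ottmv{i}}}$ via \rulename{TSym} gives $ \tyJoin{ \ottnt{s_{{\mathrm{1}}}} }{ \ottnt{s_{{\mathrm{2}}}} }   \ottsym{=}   \ottnt{s_{{\mathrm{1}}}}  \sqcup  \ottnt{s_{{\mathrm{2}}}} $, which dominates $\tyAc$ by inversion of the symbol-typing derivation. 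For two $\lambda$-abstractions sharing a bound variable, $\synjoin{\cdot}{\cdot}$ returns the syntactic join $ \tmJoin{ \tmMeta_{{\mathrm{1}}} }{ \tmMeta_{{\mathrm{2}}} } $, and a routine inversion of \rulename{TJoin} (absorbing intervening uses of \rulename{TSub}) yields $\tyAc_{{\mathrm{1}}}$ and $\tyAc_{{\mathrm{2}}}$ directly with $ \tyapx{ \tyAc }{  \tyJoin{ \tyAc_{{\mathrm{1}}} }{ \tyAc_{{\mathrm{2}}} }  } $.

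The substantive cases are pairs, sets, and the ambiguous ``otherwise'' branch. For pairs, $\synjoin{\cdot}{\cdot}$ unfolds through $\pairop{\cdot}{\cdot}$; if both component joins produce values, the overall result is a pair of values, and inverting \rulename{TPair} on $\tyAc$ exposes componentwise formulae to which we apply the induction hypothesis, reassembling via the pair lifting of \Fref{fig:formula-ops} and its monotonicity from \Lref{lem:log-sem:formula-op-mono}. If one subjoin collapses to $ \tmBotC $ or $ \tmTopC $, the outer $\pairop{\cdot}{\cdot}$ drops us into a degenerate case already handled. For set unions we invert \rulename{TSet}---first using \Lref{lem:log-sem:directed} to coalesce multiple derivations assigning different formulae to the same element into a single representative---writing $\tyAc$ as a big join of singleton liftings $ \tySng{ \tyAc_{\ottmv{i}} } $ indexed by the combined element set, then partitioning the index set along the union boundary: the left part becomes $\tyAc_{{\mathrm{1}}}$ and the right part $\tyAc_{{\mathrm{2}}}$, each assignable via \rulename{TSet}, with $ \tyJoin{ \tyAc_{{\mathrm{1}}} }{ \tyAc_{{\mathrm{2}}} } $ recovering the full big join by definition. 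Finally, in the ambiguity case where $\synjoin{\resMeta_{{\mathrm{1}}}}{\resMeta_{{\mathrm{2}}}}  \ottsym{=}   \tmTopC $ because the shapes are incompatible, we pick shape-revealing value formulae---e.g.\ $\ottsym{(}   \valBotV   \ottsym{,}   \valBotV   \ottsym{)}$ for a pair, $\ottsym{\{}  \ottsym{\}}$ for a set, $ \tyFunOne{  \valBotV  }{  \tmBotC  } $ for a $\lambda$-abstraction, or $\ottnt{s}$ for a symbol---whose own formula join falls into the ``otherwise'' branch of \Fref{fig:formula-ops} yielding $ \tmTopC $, which dominates $\tyAc$ trivially. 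The main obstacle is the set case: the big-join structure of \rulename{TSet} means one cannot simply read off a unique formula per element from a single derivation, which is exactly why \Lref{lem:log-sem:directed} is needed as a preliminary to repackage multiple formulae per element into one representative before splitting along the union boundary.
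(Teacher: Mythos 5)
Your overall strategy---induction on the results with a case analysis driven by the definition of $ \synjoin{ \resMeta_{{\mathrm{1}}} }{ \resMeta_{{\mathrm{2}}} } $, discharging the degenerate, symbol, pair, set, and ambiguity cases via the lattice laws of \Lref{lem:log-sem:join-lub}, the monotonicity of the lifting operations (\Lref{lem:log-sem:formula-op-mono}), and shape-revealing witness formulae---is the same as ours, and those cases go through as you describe. The gap is in the $\lambda$-abstraction case. The metafunction does not return a top-level syntactic join of the two abstractions: it distributes the join \emph{under} the binder, so $ \synjoin{  \valLam{ \mathit{x} }{ \tmMeta }  }{  \valLam{ \mathit{x} }{ \tmMeta' }  } $ is $ \valLam{ \mathit{x} }{  \tmJoin{ \tmMeta }{ \tmMeta' }  } $. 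Consequently you cannot ``invert \rulename{TJoin}'' at the outermost level. You must first invert the function typing to obtain $ \tyapx{ \tyAc }{ \bigvee_{i \in I}( \tyFunOne{ \tyAv_i }{ \tyAc_i } ) } $ with $ \judg{ \Gamma , \mathit{x} : \tyAv_i }{  \tmJoin{ \tmMeta }{ \tmMeta' }  }{ \tyAc_i } $ for each clause, then invert the join typing of each \emph{body} to get $ \tyapx{ \tyAc_i }{  \tyJoin{ \tyAc'_i }{ \tyAc''_i }  } $, and finally reassemble $ \tyAc_{{\mathrm{1}}} = \bigvee_{i \in I}( \tyFunOne{ \tyAv_i }{ \tyAc'_i } ) $ and $ \tyAc_{{\mathrm{2}}} = \bigvee_{i \in I}( \tyFunOne{ \tyAv_i }{ \tyAc''_i } ) $ via \rulename{TFun}.

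The remaining obligation is then not ``direct'': it reduces, clause by clause, to $ \tyapx{  \tyFunOne{ \tyAv_i }{ \ottsym{(}   \tyJoin{ \tyAc'_i }{ \tyAc''_i }   \ottsym{)} }  }{  \tyJoinSyn{ \ottsym{(}   \tyFunOne{ \tyAv_i }{ \tyAc'_i }   \ottsym{)} }{ \ottsym{(}   \tyFunOne{ \tyAv_i }{ \tyAc''_i }   \ottsym{)} }  } $, which is exactly the distributivity property of \Lref{lem:log-sem:fun-join-dist}. Our proof cites that lemma for precisely this step, and your argument never invokes it; as written, the function case would not close. (A minor, harmless point elsewhere: a single \rulename{TSet} derivation already assigns one formula per element position of the combined literal, so the preliminary appeal to \Lref{lem:log-sem:directed} in the set case is unnecessary.)
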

\longonly{
\begin{proof}
  Induction on $\resMeta_{{\mathrm{1}}}$ and nested case analysis on $\resMeta_{{\mathrm{2}}}$.
  Uses \Lref{lem:log-sem:fun-join-dist} and \Lref{lem:log-sem:formula-op-mono}.
\end{proof}
}
\begin{lemma}
  \label{lem:log-sem:ectx-top}
  For all evaluation contexts $\ottnt{E}$ we have $ \judg{ \Gamma }{ \ottnt{E}  \ottsym{[}   \tmTopC   \ottsym{]} }{  \tmTopC  } $.
\end{lemma}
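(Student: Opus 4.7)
The plan is to proceed by induction on the structure of the evaluation context $E$, using the $\top$-propagation rules explicitly listed in Figure~\ref{fig:typing} together with the fact (visible from Figure~\ref{fig:formula-ops}) that the metafunctions $\tyPairOp{\cdot}{\cdot}$, $\tySng{\cdot}$, and $\tyJoin{\cdot}{\cdot}$ all collapse to $\top$ whenever one of their arguments is $\top$.

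For the base case $E = \ctxHole$, the goal reduces to $\judg{\Gamma}{\tmTopC}{\tmTopC}$, which is exactly the rule \rulename{TTop}. For the inductive cases, the induction hypothesis supplies $\judg{\Gamma}{\ottnt{E'}[\tmTopC]}{\tmTopC}$ for the immediate subcontext $\ottnt{E'}$, and I split on the shape of $E$. The cases for application ($\tmApp{\ottnt{E'}}{\tmMeta}$ and $\tmApp{\valMeta}{\ottnt{E'}}$), pair destructuring ($\ottkw{let}\,(x_1,x_2) = \ottnt{E'}\,\ottkw{in}\,\tmMeta$), symbol destructuring ($\ottkw{let}\,s = \ottnt{E'}\,\ottkw{in}\,\tmMeta$), and big join ($\tmBigJoin{x}{\ottnt{E'}}{\tmMeta}$) all follow directly by applying the corresponding rule \rulename{TAppLTop}, \rulename{TAppRTop}, \rulename{TLetPairTop}, \rulename{TLetSymTop}, or \rulename{TForInTop} to the IH (together with any auxiliary formula demanded of the sibling subterm, obtained from \Lref{lem:log-sem:non-empty}).

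The pair, set, and binary-join contexts require going through the formula-assignment rules \rulename{TPair}, \rulename{TSet}, and \rulename{TJoin} rather than a dedicated $\top$-propagation rule. For $E = \tmPair{\ottnt{E'}}{\tmMeta}$ or $\tmPair{\valMeta}{\ottnt{E'}}$, I use \Lref{lem:log-sem:non-empty} to assign some formula $\tyAc$ to the sibling and the IH to assign $\tmTopC$ to $\ottnt{E'}[\tmTopC]$; \rulename{TPair} then yields the pair lifting $\tyPairOp{\tmTopC}{\tyAc}$ (or $\tyPairOp{\tyAv}{\tmTopC}$), which by definition is $\tmTopC$. For the set context $\{\tmMeta_1,\ldots,\tmMeta_n,\ottnt{E'},\tmMeta'_1,\ldots,\tmMeta'_m\}$, a use of \rulename{TSet} gives a formula built via $\tySng{\tmTopC}$ and joins; since $\tySng{\tmTopC}=\tmTopC$ and $\tyJoin{\tmTopC}{\tyAc}=\tmTopC$ for any $\tyAc$, the result is $\tmTopC$. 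For $E = \tmJoin{\ottnt{E'}}{\tmMeta}$ (symmetrically $\tmJoin{\tmMeta}{\ottnt{E'}}$), \rulename{TJoin} together with the IH and \Lref{lem:log-sem:non-empty} assigns $\tyJoin{\tmTopC}{\tyAc} = \tmTopC$.

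I expect no genuine obstacle: every case is a one-line rule application after invoking the IH (and, where a sibling subterm must also be typed, \Lref{lem:log-sem:non-empty}). The only minor care is making sure that in the pair, set, and binary-join cases we read off the correct branch of the metafunction definitions so that $\tmTopC$ is delivered rather than some other formula; that amounts to inspecting the definitions of $\tyPairOp{\cdot}{\cdot}$, $\tySng{\cdot}$, and $\tyJoin{\cdot}{\cdot}$ in Figure~\ref{fig:formula-ops}.
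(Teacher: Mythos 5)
Your proposal is correct and follows the same route as the paper, which records this lemma as a routine induction on $\ottnt{E}$; your case split using the dedicated $\tmTopC$-propagation rules for the base and elimination-form contexts, and the $\tmTopC$-absorbing branches of $\tyPairOp{\cdot}{\cdot}$, $\tySng{\cdot}$, and $\tyJoin{\cdot}{\cdot}$ for the pair, set, and join contexts, is exactly the intended argument. One small adjustment: where a rule such as \rulename{TAppRTop} demands a \emph{value} formula for the sibling subterm (which is a syntactic value in the evaluation context $\tmApp{\valMeta}{\ottnt{E}}$), you should invoke \rulename{TBotV} to obtain $\valBotV$ rather than Totality, since the latter only directly yields the computation formula $\tmBotC$.
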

\iflong
\begin{proof}
  Routine induction on $\ottnt{E}$.
\end{proof}
\fi
\begin{lemma}[Subject Expansion]
  \label{lem:log-sem:subj-exp}
  If $ \step{ \tmMeta }{ \tmMeta' } $ and $ \judg{ \Gamma }{ \tmMeta' }{ \tyAc } $ then $ \judg{ \Gamma }{ \tmMeta }{ \tyAc } $.
\end{lemma}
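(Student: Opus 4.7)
The plan is to do induction on the derivation of $\step{\tmMeta}{\tmMeta'}$, or equivalently case analysis on which reduction rule was applied. The outermost \rulename{stepECtx} rule is handled by \Lref{lem:log-sem:compos}: the inductive hypothesis tells us that expanding the redex preserves assignability in any environment, and compositionality lifts this through the evaluation context. The rule $\step{\ottnt{E}[\tmTopC]}{\tmTopC}$ is immediate from \Lref{lem:log-sem:ectx-top}, since any formula assignable to $\tmTopC$ is by inversion $\tmTopC$ itself (or smaller), and \Lref{lem:log-sem:ectx-top} already gives $\judg{\Gamma}{\ottnt{E}[\tmTopC]}{\tmTopC}$; downward closure (\Lref{lem:log-sem:down-clos}) then finishes. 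The approximation rule $\step{\tmMeta}{\tmBotC}$ is handled by \Lref{lem:log-sem:non-empty} (totality) together with \Lref{lem:log-sem:down-clos}, since $\tmBotC$ is the least formula in the streaming order so anything assignable to $\tmBotC$ is assignable to any term.

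For the substitutive rules (beta reduction and pair destructuring), I would invert the formula assignment of the reduced term and then apply \Lref{lem:log-sem:subst-inv} to recover an environment-based typing for the body. For beta, this yields $\judg{\Gamma, \mathit{x}:\tyBv}{\tmMeta_0}{\tyAc}$ and $\judg{\Gamma}{\valMeta}{\tyBv}$, from which \rulename{TFun} produces $\judg{\Gamma}{\valLam{\mathit{x}}{\tmMeta_0}}{\tyFunOne{\tyBv}{\tyAc}}$ and \rulename{TApp} gives the desired assignment for the application. The pair-let case is analogous, using the substitution inversion twice. The symbol elimination rule $\step{\ottkw{let}\,\ottnt{s}=\ottnt{s'}\,\ottkw{in}\,\tmMeta}{\tmMeta}$ with $\ottnt{s} \leq \ottnt{s'}$ is handled by \rulename{TSym} and \rulename{TSub}: the scrutinee $\ottnt{s'}$ can be assigned the smaller formula $\ottnt{s}$, after which \rulename{TLetSym} reconstructs the let. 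For the \rulename{TForIn} reduction, invert the typing of the join and reassemble a derivation applying \rulename{TForIn} to the singleton set formulae, relying on \Lref{lem:log-sem:directed}.

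The main obstacle will be the join rule $\step{\tmJoin{\resMeta_{{\mathrm{1}}}}{\resMeta_{{\mathrm{2}}}}}{\synjoin{\resMeta_{{\mathrm{1}}}}{\resMeta_{{\mathrm{2}}}}}$, because $\synjoin{\cdot}{\cdot}$ is defined by a large case analysis on the shapes of the two results, and we must show that every formula assignable to the metafunction output is also assignable to the syntactic join. The strategy is to proceed by cases on the operational definition of $\synjoin{\resMeta_{{\mathrm{1}}}}{\resMeta_{{\mathrm{2}}}}$, invoking the formula-level join inversion (\Lref{lem:log-sem:join-typ-inv}) in reverse: for each case, decompose the hypothesized formula into formulae for $\resMeta_{{\mathrm{1}}}$ and $\resMeta_{{\mathrm{2}}}$ individually and use \rulename{TJoin} together with the monotonicity and distributivity properties (\Lref{lem:log-sem:formula-op-mono}, \Lref{lem:log-sem:fun-join-dist}). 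The $\tmTopC$-producing branches of $\synjoin{\cdot}{\cdot}$ (e.g.\ joining a pair with a function) appeal to the $\tmTopC$-propagation rules at the bottom of \Fref{fig:typing}. The set-cleanup rule is easier: the dropped $\tmBotC$ subterm can only contribute $\tySng{\tmBotC} = \tmBotC$ to the set formula, which is absorbed by the remaining joins, so a reapplication of \rulename{TSet} with \rulename{TBot} for the missing position and \Lref{lem:log-sem:directed} suffices.
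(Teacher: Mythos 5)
Your proposal is correct and follows essentially the same route as the paper's proof: induction on the step derivation, with \Lref{lem:log-sem:compos} lifting the inductive hypothesis through evaluation contexts, \Lref{lem:log-sem:ectx-top} for the $\tmTopC$-propagation case, \Lref{lem:log-sem:subst-inv} for the substitutive redexes, and \Lref{lem:log-sem:join-typ-inv} for the join case. The remaining cases you describe (approximation to $\tmBotC$, symbol thresholds, big join, set cleanup) are handled exactly as the paper intends, so there is nothing to add.
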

\begin{proof}
  \iflong
  Induction on $ \step{ \tmMeta }{ \tmMeta' } $, inverting the derivation of $\tmMeta'$ in each case.
  The case in which $ \step{ \ottnt{E}  \ottsym{[}   \tmTopC   \ottsym{]} }{  \tmTopC  } $ follows from \Lref{lem:log-sem:ectx-top}.
  Beta reduction cases make use of \Lref{lem:log-sem:subst-inv}.
  The case where $ \step{  \tmJoin{ \resMeta_{{\mathrm{1}}} }{ \resMeta_{{\mathrm{2}}} }  }{  \synjoin{ \resMeta_{{\mathrm{1}}} }{ \resMeta_{{\mathrm{2}}} }  } $ follows from
  \Lref{lem:log-sem:join-typ-inv}.
  \else
  Induction on $ \step{ \tmMeta }{ \tmMeta' } $, using Lemmas~\ref{lem:log-sem:subst-inv}--\ref{lem:log-sem:ectx-top}.
  \fi
\end{proof}

\subsection{Semantic Results}
\label{sec:logsem:results}

We define the meaning of a closed term $\tmMeta$ as
\(  \denot{\tmMeta}{} = {\mathsetcomp{\tyAc}{ \judg{  \envEmpty  }{ \tmMeta }{ \tyAc } }} \).
The definition of \emph{logical approximation} for closed terms follows:
\[ \logapx{\tmMeta_{{\mathrm{1}}}}{\tmMeta_{{\mathrm{2}}}} \textiff \denot{\tmMeta_{{\mathrm{1}}}}{} \subseteq \denot{\tmMeta_{{\mathrm{2}}}}{} \]
This notion is the formalization of our intuition of streaming order.
It allows us to restate \Lref{lem:log-sem:compos} as a
monotonicity result:
\begin{theorem}[Monotonicity]
For any context $\ottnt{C}$ and $\logapx{\tmMeta}{\tmMeta'}$,
we have $\logapx{\ottnt{C}  \ottsym{[}  \tmMeta  \ottsym{]}}{\ottnt{C}  \ottsym{[}  \tmMeta'  \ottsym{]}}$.
\end{theorem}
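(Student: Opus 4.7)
The plan is to obtain this as a direct corollary of the Compositionality Lemma (\Lref{lem:log-sem:compos}). Unfolding the definitions, $\logapx{C[e]}{C[e']}$ means that whenever $\judg{\envEmpty}{C[e]}{\tyAc}$, also $\judg{\envEmpty}{C[e']}{\tyAc}$. Fixing such a $\tyAc$, I would invoke Compositionality with $\Gamma' = \envEmpty$, $e_1 = e$, $e_2 = e'$, and $\tyAc' = \tyAc$; the conclusion is exactly what is required, leaving one premise to discharge, namely that for every $\Gamma$ and $\tyBc$, $\judg{\Gamma}{e}{\tyBc}$ implies $\judg{\Gamma}{e'}{\tyBc}$.

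To discharge this premise, I would use that $e$ and $e'$ are closed (forced by the definition of $\logapx{\cdot}{\cdot}$, which only speaks of typings in $\envEmpty$) together with Weakening (\Lref{lem:log-sem:weaken}). Weakening applied with the empty environment, which vacuously satisfies $\envapx{\envEmpty}{\Gamma}$, lifts any closed-term typing from $\envEmpty$ to an arbitrary $\Gamma$. Symmetrically, from $\judg{\Gamma}{e}{\tyBc}$ I would strengthen to $\judg{\envEmpty}{e}{\tyBc}$, which is routine for closed terms since no rule can inspect a variable outside the (empty) set of free variables of $e$. The chain is then: strengthen $\judg{\Gamma}{e}{\tyBc}$ to $\judg{\envEmpty}{e}{\tyBc}$, apply the hypothesis $\logapx{e}{e'}$ to obtain $\judg{\envEmpty}{e'}{\tyBc}$, and weaken back to $\judg{\Gamma}{e'}{\tyBc}$. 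This establishes the side condition and the theorem follows.

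The only step not directly backed by a lemma in the excerpt is the strengthening of a closed-term typing from $\Gamma$ to $\envEmpty$. I do not expect this to be a real obstacle: it follows by a short induction on the derivation, as the only rule that consults $\Gamma$ is \rulename{TVar}, and closedness of $e$ precludes its use on any variable drawn from $\Gamma$. If one wished to sidestep the point entirely, an equivalent route is to adopt a slightly generalized definition of logical approximation quantified over arbitrary $\Gamma$ up front and plug it straight into Compositionality; the monotonicity statement for closed terms then falls out by specializing $\Gamma$ to $\envEmpty$.
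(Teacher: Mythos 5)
Your proof is correct and takes essentially the same route as the paper, whose entire proof is ``Immediate from \Lref{lem:log-sem:compos}'' (Compositionality). The environment-quantification mismatch you flag is genuine --- $\logapx{\tmMeta}{\tmMeta'}$ only yields the implication at $\envEmpty$ while Compositionality's premise ranges over all $\Gamma$ --- and the paper silently elides it; your repair via Weakening (\Lref{lem:log-sem:weaken}) together with a routine strengthening lemma for closed terms is sound and, if anything, more careful than the original.
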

\noindent Thus, the idea that every construct in \gcalc is monotone has been made formal.
An equivalent formulation of this theorem is the statement that logical
approximation is a \emph{precongruence} relation.

Our goal is now to show that logical approximation is included within contextual
approximation\longonly{, which justifies the view that the logical semantics are a device for
establishing contextual equivalences}.
This is a consequence of the following Soundness and Adequacy lemmas.

\begin{lemma}[Soundness]
  If $\tmMeta  \stepsym^{*}  \tmMeta'$ then $\logapx{\tmMeta'}{\tmMeta}$.
\end{lemma}
\begin{proof}
  Induction on $\tmMeta  \stepsym^{*}  \tmMeta'$, applying Subject Expansion
  (\Lref{lem:log-sem:subj-exp}) at each step.
\end{proof}
\begin{lemma}[Adequacy]
  \label{lem:logsem:adequacy}

  If $\logapx{\valMeta}{\tmMeta}$ then $ \returnsany{ \tmMeta } $.
\end{lemma}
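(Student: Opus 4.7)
The plan is to reduce Adequacy to a logical-relations argument on formulae. First I would prove a short \emph{Value Typing} lemma: for every value $\valMeta$ there is a value formula $\tyAv$ with $\judg{ \envEmpty }{ \valMeta }{ \tyAv }$. This goes by structural induction on $\valMeta$: the rules \rulename{TBotV}, \rulename{TSym}, \rulename{TPair}, and \rulename{TSet} handle the base and compound cases, while \rulename{TFun} with an empty index set assigns the value formula $\botv$ to every $\lambda$-abstraction. Given $\logapx{\valMeta}{\tmMeta}$, this $\tyAv$ lies in $\denot{\tmMeta}{}$, so $\judg{ \envEmpty }{ \tmMeta }{ \tyAv }$. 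It therefore suffices to prove that every closed term typeable at a value formula converges.

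To this end I would define, by induction on $\ottsym{\mbox{$\mid$}} \tyAc \ottsym{\mbox{$\mid$}}$ (justified by \Lref{lem:log-sem:join-size}), a pair of logical predicates $\lrval{\tyAv}$ on values and $\lrexp{\tyAc}$ on closed terms. On the value side: $\lrval{\botv} = \{\botv\}$; $\lrval{\ottnt{s}}$ contains the symbol values $\ottnt{s'}$ with $\ottnt{s} \leq \ottnt{s'}$; pairs and set literals are interpreted pointwise and elementwise; and for a function formula $ \tyFun{ \mathin{ \mathit{i} }{ \mathit{I} } }{ \tyAv_{\ottmv{i}} }{ \tyAc_{\ottmv{i}} } $, the predicate contains those $ \valLam{ \mathit{x} }{ \tmMeta } $ such that whenever $\valMeta \in \lrval{\tyAv_{\ottmv{i}}}$ we have $ \subst{ \tmMeta }{ \valMeta }{ \mathit{x} }  \in \lrexp{\tyAc_{\ottmv{i}}}$. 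On the computation side, $\lrexp{ \tmBotC }$ is every closed term; $\lrexp{ \tmTopC } = \{\tmMeta : \tmMeta \stepsym^{*}  \tmTopC \}$; and for a value formula $\tyAv$, $\lrexp{\tyAv}$ consists of those $\tmMeta$ that reduce to some result in $\lrval{\tyAv}$.

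The heart of the argument is a \emph{Fundamental Lemma}: whenever $\judg{ \Gamma }{ \tmMeta }{ \tyAc } $ and a closing substitution $\substmeta$ lies in the natural extension of the relation to environments, $\substmeta(\tmMeta) \in \lrexp{\tyAc}$. I would proceed by induction on the formula-assignment derivation, using Subject Expansion (\Lref{lem:log-sem:subj-exp}) to transport membership across the reduction steps introduced in the beta, join, and set-elimination cases. The subsumption rule \rulename{TSub} requires that $\lrexp{\cdot}$ be downward closed in the streaming order, and \rulename{TJoin} requires closure under formula joins; both properties go by a side induction on $\ottsym{\mbox{$\mid$}} \tyAc \ottsym{\mbox{$\mid$}}$. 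Applying the Fundamental Lemma at the empty environment and the value formula $\tyAv$ obtained above yields $\tmMeta \stepsym^{*} \resMeta$ with $\resMeta \in \lrval{\tyAv}$, hence $\resMeta \neq  \tmBotC $ and $\returnsany{\tmMeta}$.

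The main obstacle will be the function case together with the join-closure property of $\lrexp{\cdot}$. Because a function formula is a join of clauses, showing that a $\lambda$-abstraction belongs to the value relation requires examining every input, and showing that an application lies in $\lrexp{\cdot}$ requires joining outputs from all simultaneously triggered clauses. Making these fit together forces the logical relation to satisfy a distributivity property mirroring \Lref{lem:log-sem:fun-join-dist} and to interact cleanly with \Lref{lem:log-sem:directed}. Establishing these closure properties by induction on $\ottsym{\mbox{$\mid$}} \tyAc \ottsym{\mbox{$\mid$}}$, using the size bound of \Lref{lem:log-sem:join-size} so that the induction hypothesis is available at joins of subformulae, is where the technical effort will concentrate.
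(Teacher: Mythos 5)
Your overall strategy (a formula-indexed logical predicate, a fundamental lemma proved against the assignment rules, then instantiation at a value formula) is the paper's strategy, and your opening reduction is fine --- in fact the paper shortcuts your Value Typing lemma by just using \rulename{TBotV} to get $\judg{\envEmpty}{\valMeta}{\valBotV}$. But there is a genuine gap in the middle: the definitions you give for the set and function cases of $\lrval{\cdot}$ are exactly the ``naive'' ones that the paper explicitly rejects as too weak, and the property you are counting on to rescue them --- closure of $\lrexp{\cdot}$ under formula joins, i.e.\ a semantic analogue of Directedness --- is not provable here. Membership $\tmMeta \in \lrexp{\tyAc}$ asserts the \emph{existence} of some reduction sequence landing in $\lrres{\tyAc}$; given $\tmMeta \in \lrexp{\tyAc}$ and $\tmMeta \in \lrexp{\tyAc'}$ you have two possibly incompatible reduction sequences, and because the approximation steps make the system nondeterministic and only trivially confluent, there is no way to merge them into one witnessing $\tmMeta \in \lrexp{\tyJoin{\tyAc}{\tyAc'}}$. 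No induction on $\ottsym{\mbox{$\mid$}}\tyAc\ottsym{\mbox{$\mid$}}$ gets around this; the paper states outright that it proves Semantic Downward Closure but \emph{cannot} prove the semantic analogue of Directedness. With your clause-by-clause function predicate, this failure bites in the proof that $\lrval{\cdot}$ is antitone along \rulename{TApxFun} (where one clause $\tyFunOne{\tyAv_{\ottmv{i}}}{\tyAc_{\ottmv{i}}}$ must be covered by a \emph{set} $J'$ of clauses whose outputs must be joined), and your elementwise set predicate fails similarly in the \rulename{TForIn} case, where a single set element may have to serve several formula positions at once.

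The paper's fix is not to prove join-closure but to bake the joins into the definitions: a $\lambda$-abstraction is in $\lrval{\tyFun{\mathin{\mathit{i}}{\mathit{I}}}{\tyAv_{\ottmv{i}}}{\tyAc_{\ottmv{i}}}}$ only if for every subset $\mathit{I'} \subseteq \mathit{I}$ and every input in $\lrval{\tyBigJoin{\mathin{\mathit{i}}{\mathit{I'}}}{\tyAv_{\ottmv{i}}}}$ the substituted body lies in $\lrexp{\tyBigJoin{\mathin{\mathit{i}}{\mathit{I'}}}{\tyAc_{\ottmv{i}}}}$, and a set value matches a set formula via a map $f : \mathit{I} \to \mathit{J}$ with each element in the predicate at the join of all formulae in $\mathinv{f}(\mathit{j})$. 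These strengthened definitions are still well-founded by \Lref{lem:log-sem:join-size}, and they supply the joined facts directly where your plan would need the unprovable closure lemma; syntactic Directedness (\Lref{lem:log-sem:directed}) is then applied to the \emph{typing derivations} rather than to the semantic predicate. A smaller issue: your $\lrexp{\tyAv}$ must also admit terms reducing to $\tmTopC$ (the paper sets $\lrres{\tyAv} = \lrval{\tyAv} \cup \{\tmTopC\}$), since $\tmTopC$ is syntactically assigned every formula and would otherwise falsify the fundamental lemma.
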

\noindent The proof of Adequacy requires a relatively involved logical relations argument, so we
defer it to \Sref{sec:logsem:adequacy} to get to the main result of this section.

\begin{theorem}
  If $\logapx{\tmMeta_{{\mathrm{1}}}}{\tmMeta_{{\mathrm{2}}}}$
  then $\ctxapx{\tmMeta_{{\mathrm{1}}}}{\tmMeta_{{\mathrm{2}}}}$.
\end{theorem}
\begin{proof}
  Consider a context $\ottnt{C}$ such that $ \returns{ \ottnt{C}  \ottsym{[}  \tmMeta_{{\mathrm{1}}}  \ottsym{]} }{ \resMeta } $ where $\resMeta \neq  \tmBotC $.
  We must show $ \returnsany{ \ottnt{C}  \ottsym{[}  \tmMeta_{{\mathrm{2}}}  \ottsym{]} } $.

  We deduce the following:
  \[
    \begin{array}{rcll}
       \valBotV  &\logapxsym & \resMeta     &\text{Straightforward from the formula assignment rules.} \\
               &\logapxsym & \ottnt{C}  \ottsym{[}  \tmMeta_{{\mathrm{1}}}  \ottsym{]} &\text{Soundness} \\
               &\logapxsym & \ottnt{C}  \ottsym{[}  \tmMeta_{{\mathrm{2}}}  \ottsym{]} &\text{Monotonicity} \\
    \end{array}
  \]
  Therefore we may apply Adequacy to complete the proof.
\end{proof}

\subsection{Adequacy}
\label{sec:logsem:adequacy}

\begin{figure}
  {\small
  \[
  \begin{array}{rcl@{\qquad}rcl}
     \lrexp{ \tyAc }  &=  & \mathsetcomp{\tmMeta}{\mathexists{r}{\tmMeta  \stepsym^{*}  \resMeta} \textand  \mathin{ \resMeta }{  \lrres{ \tyAc }  } }
    &  \lrres{  \tmTopC  }  &= & \{  \tmTopC  \} \\
 &&&   \lrres{  \tmBotC  }  &= & \synres \\
 &&&   \lrres{ \tyAv }  &= &  \lrval{ \tyAv }  \cup \{  \tmTopC  \} \\ \\
  \end{array}
\]
\vspace{-2ex}
\[
  \begin{array}{rcl}
     \lrval{ \ottnt{s} }  &= & \mathsetcomp{\ottnt{s'}}{\ottnt{s}  \leq  \ottnt{s'}} \\
     \lrval{ \ottsym{(}  \tyAv_{{\mathrm{1}}}  \ottsym{,}  \tyAv_{{\mathrm{2}}}  \ottsym{)} }  &=
    & \mathsetcomp{ \tmPair{ \valMeta_{{\mathrm{1}}} }{ \valMeta_{{\mathrm{2}}} } }{ \mathin{ \valMeta_{{\mathrm{1}}} }{  \lrval{ \tyAv_{{\mathrm{1}}} }  }  \textand  \mathin{ \valMeta_{{\mathrm{2}}} }{  \lrval{ \tyAv_{{\mathrm{2}}} }  } } \\
     \lrval{ \ottsym{\{}  \tyAv_{\ottmv{i}}  \ottsym{\mbox{$\mid$}}   \mathin{ \mathit{i} }{ \mathit{I} }   \ottsym{\}} }  &=
    & \mathsetcomp{\ottsym{\{}  \valMeta_{\ottmv{j}}  \ottsym{\mbox{$\mid$}}   \mathin{ \mathit{j} }{ \mathit{J} }   \ottsym{\}}}
      {\mathexists{f \in I \to J}
         {\mathforall{ \mathin{ \mathit{j} }{ \mathit{J} } }
                     {\valMeta_{\ottmv{j}} \in \lrres{\tyBigJoin{\mathit{i} \in \mathinv{f}(\mathit{j})}
                                                    {\tyAv_{\ottmv{i}}}}}}} \\
     \lrval{  \tyFun{  \mathin{ \mathit{i} }{ \mathit{I} }  }{ \tyAv_{\ottmv{i}} }{ \tyAc_{\ottmv{i}} }  }  &=
    & \mathsetcomp{ \valLam{ \mathit{x} }{ \tmMeta } }
                  {\mathforall{ \mathit{I'}  \subseteq  \mathit{I} , \valMeta \in \lrval{\tyBigJoin{ \mathin{ \mathit{i} }{ \mathit{I'} } }{\tyAv_{\ottmv{i}}}}}
                              { \subst{ \tmMeta }{ \valMeta }{ \mathit{x} }  \in \lrexp{\tyBigJoin{ \mathin{ \mathit{i} }{ \mathit{I'} } }{\tyAc_{\ottmv{i}}}}}} \\ \\

\multicolumn{3}{l}{     \lrenv{ \Gamma }  =
       \mathsetcomp{\substmeta}{\mathforall{\mathit{x} \in  \envDom{ \Gamma } }{ \mathin{ \substmeta  \ottsym{(}  \mathit{x}  \ottsym{)} }{  \lrval{ \Gamma  \ottsym{(}  \mathit{x}  \ottsym{)} }  } }} 
\qquad \qquad
     \judglr{ \Gamma }{ \tmMeta }{ \tyAc } \ \text{ iff }
       \ \mathforall{ \mathin{ \substmeta }{  \lrenv{ \Gamma }  } }{ \mathin{ \substmeta  \ottsym{(}  \tmMeta  \ottsym{)} }{  \lrexp{ \tyAc }  } }}
  \end{array}
  \]
}
\vspace{-2ex}
  \caption{Logical Predicates}
  \label{fig:logrel}
\vspace{-2ex}
\end{figure}

\todo{Revise this first paragraph. The problem it is trying to communicate is
  unclear and the references to other work are opaque.
  Maybe talk about directedness and note a semantic analog is hard to prove?}

To prove \Lref{lem:logsem:adequacy}, we define the logical predicates given in
\Fref{fig:logrel}.
The general shape of our argument is standard, but the details are tricky.
Determinism and confluence of reduction tend to be important properties
for making use of operational logical predicates in the presence of intersection
types; adequacy proofs for models similar to ours make use of them freely~\cite{plfa}.
\longonly{\citet{dezani94} lack both, but are able to define a confluent
auxiliary relation containing the reduction relation and exploit the relationship
between the two.}
Unfortunately, the system of reduction from \Sref{sec:opsem} is not deterministic and only trivially confluent.
\longonly{Moreover, the technique of \citeauthor{dezani94} does not seem to be possible in our setting.}
Thus, designing the logical predicate in a way that its use will not depend on
determinism or confluence properties is a major challenge for us.
It turns out to be possible, but only through careful treatment of definitions involving
joins of formulae and through the strengthening of certain definitions so as to
provide an induction hypothesis capable of proving adequacy.

Given a formula $\tyAc$, the logical predicate interprets it as a set of closed
terms $ \lrexp{ \tyAc } $ which all have at least the operational behavior specified
by the formula.
The predicates $ \lrres{ \tyAc } $ and $ \lrval{ \tyAv } $ similarly define the
closed results and values associated with $\tyAc$ and $\tyAv$ respectively.
The first case of the value predicate states that any symbol $\ottnt{s'}$ can be
thought of as having the behavior of $\ottnt{s}$ so long as $\ottnt{s}  \leq  \ottnt{s'}$.
The second relates pairs of formulas to pairs of values in a pointwise fashion.

The manner in which the value predicate is defined for sets is a bit
counterintuitive.
Until now, we have suggested that one set is less than another when, for every
element $x$ of the first, there is a corresponding element $y$ of the second such that
$x$ is at least $y$.
This is, for example, the definition used in the order on formulas from
\rulename{TApxSet} in \Fref{fig:formulae}.
One might expect, then, a definition such as the following.
\begin{equation}
  \label{eq:logrel:vset}
   \lrval{ \ottsym{\{}  \tyAv_{\ottmv{i}}  \ottsym{\mbox{$\mid$}}   \mathin{ \mathit{i} }{ \mathit{I} }   \ottsym{\}} }  = \mathsetcomp{\ottsym{\{}  \valMeta_{\ottmv{j}}  \ottsym{\mbox{$\mid$}}   \mathin{ \mathit{j} }{ \mathit{J} }   \ottsym{\}}}
            {\mathforall{i \in I}
              {\mathexists{ \mathin{ \mathit{j} }{ \mathit{J} } }
                          { \mathin{ \valMeta_{\ottmv{j}} }{  \lrres{ \tyAv_{\ottmv{i}} }  } }}}
\end{equation}
\todo{Paragraph below is a bit opaque.}
Unfortunately, this turns out to not be strong
enough to prove the fundamental property of the logical
relation~(\Lref{lem:log-sem:fund-prop}).
\todo{revisit this point in the corresponding case of the proof of the fundamental
  property}
Instead we define a value $\ottsym{\{}  \valMeta_{\ottmv{j}}  \ottsym{\mbox{$\mid$}}   \mathin{ \mathit{j} }{ \mathit{J} }   \ottsym{\}}$ to be an element of
the value predicate at formula $\ottsym{\{}  \tyAv_{\ottmv{i}}  \ottsym{\mbox{$\mid$}}   \mathin{ \mathit{i} }{ \mathit{I} }   \ottsym{\}}$ when there is a mapping $f$
from positions in the set formula to positions in the set value such that
each element $\valMeta_{\ottmv{j}}$ of the set value is in the logical predicate at the least
upper bound of the set of formulas that $f$ maps to $\valMeta_{\ottmv{j}}$.
In other words, $ \mathin{ \valMeta_{\ottmv{j}} }{  \lrres{  \tyBigJoin{  \mathin{ \mathit{i} }{ \mathit{I'} }  }{ \tyAv_{\ottmv{i}} }  }  } $ where
$\mathit{I'} = \mathsetcomp{ \mathin{ \mathit{i} }{ \mathit{I} } }{f(\mathit{i}) = \mathit{j}}$.\sz{Maybe say something about how this lets us ``collect together'' a join of elements in the value based on the formulae?}
The result predicate is used because the
join of a set of value formulae is not necessarily itself a value formula.
To concisely express this definition, in \Fref{fig:logrel} the notation $\mathinv{f}$ refers
to the inverse image of $f$.
It is not hard to check that the definition of the value predicate for sets from
the figure is included in that of equation \eqref{eq:logrel:vset}.
Demonstrating the opposite containment appears intractable thanks to the
nondeterministic nature of our reduction relation.

The refrain ``related functions map related inputs to related outputs,'' is commonly given as
a motto of logical relations.
\iflong
We must apply this philosophy with care in our setting, however.
A naive approach might lead to the seemingly reasonable definition given below.
Unfortunately, this first attempt once again turns out to be too weak.
\begin{equation*}
  \begin{array}{rcll}
     \lrval{  \tyFunOne{ \tyAv }{ \tyAc }  }  &=
    & \mathsetcomp{ \valLam{ \mathit{x} }{ \tmMeta } }{ \mathforall{  \mathin{ \valMeta }{  \lrval{ \tyAv }  }  }{  \mathin{  \subst{ \tmMeta }{ \valMeta }{ \mathit{x} }  }{  \lrexp{ \tyAc }  }  } } &\\
     \lrval{  \tyFun{  \mathin{ \mathit{i} }{ \mathit{I} }  }{ \tyAv_{\ottmv{i}} }{ \tyAc_{\ottmv{i}} }  }  &=
    & \mathsetcomp{\valMeta}{ \mathforall{  \mathin{ \mathit{i} }{ \mathit{I} }  }{  \mathin{ \valMeta }{  \lrval{  \tyFunOne{ \tyAv_{\ottmv{i}} }{ \tyAc_{\ottmv{i}} }  }  }  } }
    & \text{where $\mathit{I}$ is not a singleton}
  \end{array}
\end{equation*}
\fi
\longonly{Instead, given}\shortonly{Given} a function
$ \mathin{  \valLam{ \mathit{x} }{ \tmMeta }  }{  \lrval{  \tyFun{  \mathin{ \mathit{i} }{ \mathit{I} }  }{ \tyAv_{\ottmv{i}} }{ \tyAc_{\ottmv{i}} }  }  } $, the logical
predicate demands that for an input $\valMeta$ which satisfies the input
requirement $\tyAv_{\ottmv{i}}$ for any subset $ \mathit{I'}  \subseteq  \mathit{I} $ of the clauses in the formula,
the function must provide an output which is in the expression predicate
at the least upper bound of the set of output formulae for the triggered clauses.

Logical relations are traditionally defined by induction on a type and
make use of self-reference only through structural recursion.
In contrast, the definition in \Fref{fig:logrel} is not structurally recursive
due to the set and function cases of the value relation.
Nevertheless, it is still well-defined by induction on the size $\ottsym{\mbox{$\mid$}}  \tyAc  \ottsym{\mbox{$\mid$}}$ of the formula
indexing each predicate $\tyAc$.
To make this argument for the set and function predicates, we rely upon
\Lref{lem:log-sem:join-size}.

We now establish some properties of the logical predicate using
basic facts of reduction.
\begin{lemma}[Closure Under Antireduction]
  \label{lem:logsem:antired}

  If $\tmMeta  \stepsym^{*}  \tmMeta'$ and $ \mathin{ \tmMeta' }{  \lrexp{ \tyAc }  } $ then $ \mathin{ \tmMeta }{  \lrexp{ \tyAc }  } $.
\end{lemma}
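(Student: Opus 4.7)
The plan is to unfold the definition of $\lrexp{\tyAc}$ and then rely on the transitivity of the reflexive–transitive closure of reduction. Specifically, $\tmMeta' \in \lrexp{\tyAc}$ means by definition that there exists some result $\resMeta$ with $\tmMeta' \stepsym^* \resMeta$ and $\resMeta \in \lrres{\tyAc}$. Combining the given reduction $\tmMeta \stepsym^* \tmMeta'$ with this $\tmMeta' \stepsym^* \resMeta$ by transitivity of $\stepsym^*$ yields $\tmMeta \stepsym^* \resMeta$ for the same $\resMeta \in \lrres{\tyAc}$, which witnesses $\tmMeta \in \lrexp{\tyAc}$.

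Concretely, I would state the proof in two short sentences: first extract the existential witness $\resMeta$ from the hypothesis $\tmMeta' \in \lrexp{\tyAc}$; second, use transitivity of $\stepsym^*$ to concatenate $\tmMeta \stepsym^* \tmMeta'$ with $\tmMeta' \stepsym^* \resMeta$, and re-pack the result as the existential required by $\lrexp{\tyAc}$. No induction on $\tyAc$ or on the structure of $\tmMeta$ is needed, and neither the various subpredicates $\lrres{\cdot}$ and $\lrval{\cdot}$ nor any of the subject expansion or directedness lemmas need to be touched: the statement is a property purely of the ``expression'' layer of the logical predicate, which exists precisely to make antireduction trivial.

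Because the argument is so short, the only thing that could go wrong is a mismatch with implicit conventions about $\stepsym^*$, namely whether it is defined as the reflexive–transitive closure (so the zero-step case $\tmMeta \stepsym^0 \tmMeta$ is included) and whether transitivity is a stated or derived property. Both are standard for the reduction system of \Fref{fig:opsem}, so I do not anticipate any genuine obstacle; I would simply cite transitivity of $\stepsym^*$ as immediate. This is the reason the paper's text treats the lemma as a basic fact — the interesting antireduction work (if any) has already been absorbed into the definition of $\lrexp{\tyAc}$ in \Fref{fig:logrel}.
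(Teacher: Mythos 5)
Your proof is correct and matches the paper's own argument, which simply cites transitivity of the reduction relation; unfolding the existential in $\lrexp{\tyAc}$ and concatenating the two reduction sequences is exactly the intended reasoning.
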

\iflong
\begin{proof}
  Immediate consequence of the transitivity of the reduction relation.
\end{proof}
\fi
\begin{lemma}[Monadic Unit]
  \label{lem:logsem:mon-unit}
   \longonly{$ \lrres{ \tyAc }  \subseteq  \lrexp{ \tyAc } $ and }
   $ \lrval{ \tyAv }  \subseteq  \lrexp{ \tyAv } $
\end{lemma}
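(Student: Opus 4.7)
The proof plan is to unfold the definitions of $\lrexp{-}$ and $\lrres{-}$ and observe that everything follows immediately from the reflexivity of $\stepsym^*$, so no induction on $\tyAc$ is needed.

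First I would dispatch $\lrres{\tyAc} \subseteq \lrexp{\tyAc}$ by taking an arbitrary $\resMeta \in \lrres{\tyAc}$. Since results are expressions (by the grammar in \Fref{fig:syntax}) and $\resMeta \stepsym^* \resMeta$ in zero reduction steps, the witness required by the set comprehension defining $\lrexp{\tyAc}$ is just $\resMeta$ itself, and the side condition $\resMeta \in \lrres{\tyAc}$ holds by assumption. Thus $\resMeta \in \lrexp{\tyAc}$.

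Second, for $\lrval{\tyAv} \subseteq \lrexp{\tyAv}$, I would take $\valMeta \in \lrval{\tyAv}$. By definition, $\lrres{\tyAv} = \lrval{\tyAv} \cup \{\tmTopC\}$, so $\valMeta \in \lrres{\tyAv}$. Then applying the first containment at $\tyAc = \tyAv$ gives $\valMeta \in \lrexp{\tyAv}$.

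There is no real obstacle here: the lemma's content is simply that the predicate $\lrexp{-}$ is built from $\lrres{-}$ by a trivial reduction-closure operation, so the inclusions it states hold by taking the empty reduction sequence. The lemma will be used in later cases of the fundamental property to conclude that values (and results more generally) satisfying the value (respectively result) predicate also satisfy the expression predicate at the same formula, without having to re-derive the reduction witness.
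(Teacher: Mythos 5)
Your proof is correct and matches the paper's own argument, which simply invokes the reflexivity of the reduction relation; you have merely spelled out the unfolding of $\lrexp{-}$ and the inclusion $\lrval{\tyAv} \subseteq \lrres{\tyAv}$ that the paper leaves implicit. No gaps.
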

\iflong
\begin{proof}
  Immediate consequence of the reflexivity of the reduction relation.
\end{proof}
\fi
\begin{lemma}[Monadic Bind]
  \label{lem:logsem:mon-bind}
  \longonly{
  \leavevmode
  \begin{enumerate}
  \item If $ \mathin{ \tmMeta }{  \lrexp{ \tyAc }  } $ and for all $ \mathin{ \resMeta }{  \lrres{ \tyAc }  } $ we have
  $ \mathin{ \ottnt{E}  \ottsym{[}  \resMeta  \ottsym{]} }{  \lrexp{ \tyAc' }  } $ then $ \mathin{ \ottnt{E}  \ottsym{[}  \tmMeta  \ottsym{]} }{  \lrexp{ \tyAc' }  } $.

  \item If $ \mathin{ \tmMeta }{  \lrexp{ \tyAv }  } $ and for all $ \mathin{ \valMeta }{  \lrval{ \tyAv }  } $ we have
  $ \mathin{ \ottnt{E}  \ottsym{[}  \valMeta  \ottsym{]} }{  \lrexp{ \tyAc' }  } $ then $ \mathin{ \ottnt{E}  \ottsym{[}  \tmMeta  \ottsym{]} }{  \lrexp{ \tyAc' }  } $.
  \end{enumerate}
  }
  \shortonly{
    If $ \mathin{ \tmMeta }{  \lrexp{ \tyAv }  } $ and for all $ \mathin{ \valMeta }{  \lrval{ \tyAv }  } $ we have
    $ \mathin{ \ottnt{E}  \ottsym{[}  \valMeta  \ottsym{]} }{  \lrexp{ \tyAc' }  } $ then $ \mathin{ \ottnt{E}  \ottsym{[}  \tmMeta  \ottsym{]} }{  \lrexp{ \tyAc' }  } $.
  }
\end{lemma}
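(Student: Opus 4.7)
The plan is to peel off the expression predicate in the first hypothesis, chase the resulting reduction through the evaluation context, and then close using antireduction. Concretely, from $\tmMeta \in \lrexp{\tyAc}$ (resp.\ $\lrexp{\tyAv}$) I first unfold the definition in \Fref{fig:logrel} to obtain some $\resMeta$ with $\tmMeta \stepsym^{*} \resMeta$ and $\resMeta \in \lrres{\tyAc}$. Iterating the \rulename{stepECtx} rule from \Fref{fig:opsem} then lifts the reduction under the context: $\ottnt{E}[\tmMeta] \stepsym^{*} \ottnt{E}[\resMeta]$. By Closure Under Antireduction (\Lref{lem:logsem:antired}), it suffices to establish $\ottnt{E}[\resMeta] \in \lrexp{\tyAc'}$.

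For the result-indexed form of the lemma, the membership $\ottnt{E}[\resMeta] \in \lrexp{\tyAc'}$ is exactly what the second hypothesis provides, and the argument terminates. For the value-indexed form, I would case split on $\resMeta \in \lrval{\tyAv} \cup \{\tmTopC\}$. If $\resMeta$ is a value $\valMeta \in \lrval{\tyAv}$, the required $\ottnt{E}[\valMeta] \in \lrexp{\tyAc'}$ is again the second hypothesis. Otherwise $\resMeta = \tmTopC$; here the $\tmTopC$-propagation rule yields $\ottnt{E}[\tmTopC] \stepsym \tmTopC$, and inspection of the three clauses defining $\lrres{\cdot}$ shows that $\tmTopC \in \lrres{\tyAc''}$ for every computation formula $\tyAc''$, so $\tmTopC \in \lrexp{\tyAc'}$; one more invocation of antireduction finishes this subcase.

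The argument is essentially routine given the available machinery; there is no real obstacle. The only point worth flagging is why the value-indexed statement is valid without a hypothesis concerning $\ottnt{E}$ applied to $\tmTopC$: this is harmless precisely because $\tmTopC$ is absorbing under every evaluation context and lies in every result predicate, so the top subcase is discharged uniformly. In effect, the two parts of the lemma coincide modulo this observation, and I expect the value-indexed version in the short form to follow by the same structure with the added top subcase.
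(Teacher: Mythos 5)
Your proposal is correct and follows essentially the same route as the paper's own proof: unfold the expression predicate to obtain a reduct, lift the reduction through the evaluation context, close under antireduction, and discharge the value-indexed case by splitting on whether the result is a value or $ \tmTopC $ (the latter handled by $ \tmTopC $-propagation and the fact that $ \tmTopC $ inhabits every result predicate). No further comment is needed.
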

\longonly{
\begin{proof}
  For the first part, we have a result $ \mathin{ \resMeta }{  \lrres{ \tyAc }  } $ such that $\tmMeta  \stepsym^{*}  \resMeta$ from
  the definition of the expression predicate.
  It follows that $\ottnt{E}  \ottsym{[}  \tmMeta  \ottsym{]}  \stepsym^{*}  \ottnt{E}  \ottsym{[}  \resMeta  \ottsym{]}$ so by \Lref{lem:logsem:antired} all we need to show
  is $ \mathin{ \ottnt{E}  \ottsym{[}  \resMeta  \ottsym{]} }{  \lrexp{ \tyAc' }  } $.
  This is immediate from our premise.

  To prove the second part, let $ \mathin{ \resMeta }{  \lrres{ \tyAv }  } $.
  By the first part of this lemma, it suffices to show $ \mathin{ \ottnt{E}  \ottsym{[}  \resMeta  \ottsym{]} }{  \lrexp{ \tyAc' }  } $.
  Examining the definition of $ \lrres{ \tyAv } $, we see that either $\resMeta  \ottsym{=}   \tmTopC $
  or $\resMeta$ is a value in $ \lrval{ \tyAv } $.
  In the former case, we have $\ottnt{E}  \ottsym{[}   \tmTopC   \ottsym{]}  \stepsym^{*}   \tmTopC $ and $ \mathin{  \tmTopC  }{  \lrres{ \tyAc' }  } $.
  The latter case is immediate.
\end{proof}
}

\begin{lemma}[Semantic Downward Closure]
  \label{lem:logsem:sem-downclos}
  Suppose $ \tyapx{ \tyAc }{ \tyAc' } $ and $ \tyapx{ \tyAv }{ \tyAv' } $. Then:
  \[ (1)\ \  \lrexp{ \tyAc' }  \subseteq  \lrexp{ \tyAc } 
     \qquad (2)\ \  \lrres{ \tyAc' }  \subseteq  \lrres{ \tyAc } 
     \qquad (3)\ \  \lrval{ \tyAv' }  \subseteq  \lrval{ \tyAv }  \]
\end{lemma}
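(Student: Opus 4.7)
The plan is to prove all three containments simultaneously by mutual induction on the size of the indexing formula, taking $|\tyAc'|$ for parts (1) and (2) and $|\tyAv'|$ for part (3). The three predicates are intertwined---$\lrexp{\tyAc}$ is defined in terms of $\lrres{\tyAc}$, which in turn is defined in terms of $\lrval{\tyAv}$---so a mutual formulation is natural. Crucially, the induction hypothesis at an indexing formula applies to any formula strictly smaller, including joins of its subformulae, thanks to \Lref{lem:log-sem:join-size}, which caps $|\tyJoin{\tyAc}{\tyAc'}|$ by $\max\{|\tyAc|,|\tyAc'|\}$.

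Parts (1) and (2) are essentially bookkeeping. Part (1) follows from (2): any $\tmMeta \in \lrexp{\tyAc'}$ reduces to some $\resMeta \in \lrres{\tyAc'}$, and (2) pushes $\resMeta$ into $\lrres{\tyAc}$. Part (2) proceeds by case analysis on $\tyAc'$. If $\tyAc' = \tmTopC$, inverting $\tyapx{\tyAc}{\tmTopC}$ forces $\tyAc = \tmTopC$, and $\lrres{\tmTopC} = \{\tmTopC\}$ holds trivially. If $\tyAc'$ is a value formula $\tyAv'$, inversion forces $\tyAc$ to be either $\tmBotC$ (so $\lrres{\tmBotC} = \synres$ contains everything) or a value formula $\tyAv$ with $\tyapx{\tyAv}{\tyAv'}$, reducing to part (3) since $\lrres{\cdot}$ at a value formula is just $\lrval{\cdot} \cup \{\tmTopC\}$.

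Part (3) case-splits on $\tyAv'$. The cases $\valBotV$ and symbols are dispatched by inversion and the transitivity of $\leq$ on symbols; the pair case is pointwise from the induction hypothesis at strictly smaller formulae. For $\tyAv' = \{\tau'_j \mid j \in J\}$, inversion of \rulename{TApxSet} yields a map $g : I \to J$ with $\tyapx{\tau_i}{\tau'_{g(i)}}$ for each $i \in I$. Given $\{\valMeta_k \mid k \in K\} \in \lrval{\tyAv'}$ witnessed by some $f' : J \to K$, I would define $f = f' \circ g : I \to K$. For each $k \in K$, every $i \in \mathinv{f}(k)$ satisfies $g(i) \in \mathinv{f'}(k)$ together with $\tyapx{\tau_i}{\tau'_{g(i)}}$, so \Lref{lem:log-sem:join-lub} gives $\tyapx{\bigvee_{i \in \mathinv{f}(k)} \tau_i}{\bigvee_{j \in \mathinv{f'}(k)} \tau'_j}$. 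Since the joined formulae have size at most $\max_i |\tau_i|$ (by \Lref{lem:log-sem:join-size}), which is strictly smaller than $|\tyAv'|$, the induction hypothesis for part (2) transports each $\valMeta_k$ down, establishing membership in $\lrval{\{\tau_i \mid i \in I\}}$.

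The function case is the main obstacle, as downward closure must be invoked contravariantly on inputs and covariantly on outputs against the piecewise, join-based structure of function formulae. Suppose $\tyapx{\bigvee_{i \in I}(\tyFunOne{\tau_i}{\phi_i})}{\bigvee_{j \in J}(\tyFunOne{\tau'_j}{\phi'_j})}$ via \rulename{TApxFun}, and let $\valLam{\mathit{x}}{\tmMeta} \in \lrval{\bigvee_{j \in J}(\tyFunOne{\tau'_j}{\phi'_j})}$. Given $I' \subseteq I$ and $\valMeta \in \lrval{\bigvee_{i \in I'} \tau_i}$, I must produce $\subst{\tmMeta}{\valMeta}{\mathit{x}} \in \lrexp{\bigvee_{i \in I'} \phi_i}$. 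For each $i \in I'$, \rulename{TApxFun} supplies a subset $J'_i \subseteq J$ with $\tyapx{\bigvee_{j \in J'_i} \tau'_j}{\tau_i}$ and $\tyapx{\phi_i}{\bigvee_{j \in J'_i} \phi'_j}$; let $J'' = \bigcup_{i \in I'} J'_i$. By \Lref{lem:log-sem:join-lub} we obtain $\tyapx{\bigvee_{j \in J''} \tau'_j}{\bigvee_{i \in I'} \tau_i}$, so the induction hypothesis for part (3) yields $\valMeta \in \lrval{\bigvee_{j \in J''} \tau'_j}$. Applying the defining property of $\valLam{\mathit{x}}{\tmMeta} \in \lrval{\bigvee_{j \in J}(\tyFunOne{\tau'_j}{\phi'_j})}$ at the subset $J'' \subseteq J$ and the input $\valMeta$ gives $\subst{\tmMeta}{\valMeta}{\mathit{x}} \in \lrexp{\bigvee_{j \in J''} \phi'_j}$. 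Finally, $\tyapx{\bigvee_{i \in I'} \phi_i}{\bigvee_{j \in J''} \phi'_j}$ holds by \Lref{lem:log-sem:join-lub}, and the induction hypothesis for part (1) closes the case. Throughout, \Lref{lem:log-sem:join-size} is what keeps the induction well-founded when the hypothesis is applied to these joined formulae, and the mutual formulation is essential because the two IH invocations in this case land in different predicates.
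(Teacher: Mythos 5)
Your proposal is correct and follows essentially the same route as the paper's proof: a simultaneous induction on formula size (the paper uses $\max\{\ottsym{\mbox{$\mid$}}  \tyAv  \ottsym{\mbox{$\mid$}}, \ottsym{\mbox{$\mid$}}  \tyAv'  \ottsym{\mbox{$\mid$}}\}$ where you use $\ottsym{\mbox{$\mid$}}  \tyAv'  \ottsym{\mbox{$\mid$}}$, but either is well-founded given \Lref{lem:log-sem:join-size}), with the set case handled by composing the two index maps and the function case by taking the union of the witnessing subsets $J'_i$ before invoking the IH contravariantly on inputs and covariantly on outputs. The key ideas---reducing (1) and (2) to (3), and using \Lref{lem:log-sem:join-size} to justify IH applications at joins of subformulae---match the paper exactly.
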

\begin{proof}
  We prove the three parts simultaneously; the first two are straightforward.
  For the third we proceed by induction on $\max \{\ottsym{\mbox{$\mid$}}  \tyAv  \ottsym{\mbox{$\mid$}}, \ottsym{\mbox{$\mid$}}  \tyAv'  \ottsym{\mbox{$\mid$}}\}$ and
  case analysis on $ \tyapx{ \tyAv }{ \tyAv' } $.
  \longonly{See the proof of \Lref{lem:apdx:logsem:sem-downclos} in the
    appendices for the details.}
\end{proof}

Although we have now proven a semantic analog of Downward Closure, we cannot
do the same for Directedness. However the following lemma about joins is sufficient.
\begin{lemma}[Semantic Join]
  \label{lem:logsem:join}
  \longonly{
    \leavevmode
    \begin{enumerate}
  \item If $ \mathin{ \valMeta }{  \lrval{ \tyAv }  } $ or $ \mathin{ \valMeta' }{  \lrval{ \tyAv }  } $ then $ \mathin{  \synjoin{ \valMeta }{ \valMeta' }  }{  \lrres{  \tyJoin{ \tyAv }{ \tyAv' }  }  } $.
  \item If $ \mathin{ \valMeta }{  \lrval{ \tyAv }  } $ and $ \mathin{ \valMeta' }{  \lrval{ \tyAv' }  } $ then $ \mathin{  \synjoin{ \valMeta }{ \valMeta' }  }{  \lrres{  \tyJoin{ \tyAv }{ \tyAv' }  }  } $.
  \item If $ \mathin{ \tmMeta }{  \lrexp{ \tyAc }  } $ and $ \mathin{ \tmMeta' }{  \lrexp{ \tyAc' }  } $ then $ \mathin{  \tmJoin{ \tmMeta }{ \tmMeta' }  }{  \lrexp{  \tyJoin{ \tyAc }{ \tyAc' }  }  } $.
  \item Suppose there exists $f : \mathit{I} \to \mathit{J}$ such that for all
    $ \mathin{ \mathit{j} }{ \mathit{J} } $ we have $ \mathin{ \tmMeta_{\ottmv{j}} }{  \lrexp{  \tyBigJoin{  \mathin{ \mathit{i} }{  \mathinv{ \mathit{f} }   \ottsym{(}  \mathit{j}  \ottsym{)} }  }{ \tyAc_{\ottmv{i}} }  }  } $.
    Then $ \mathin{  \tmJoinManySet{  \mathin{ \mathit{j} }{ \mathit{J} }  }{ \tmMeta_{\ottmv{j}} }  }{  \lrexp{  \tyBigJoin{  \mathin{ \mathit{i} }{ \mathit{I} }  }{ \tyAc_{\ottmv{i}} }  }  } $.
  \end{enumerate}
  }
  \shortonly{
    Suppose there exists $f : \mathit{I} \to \mathit{J}$ such that for all
    $ \mathin{ \mathit{j} }{ \mathit{J} } $ we have $ \mathin{ \tmMeta_{\ottmv{j}} }{  \lrexp{  \tyBigJoin{  \mathin{ \mathit{i} }{  \mathinv{ \mathit{f} }   \ottsym{(}  \mathit{j}  \ottsym{)} }  }{ \tyAc_{\ottmv{i}} }  }  } $.
    Then $ \mathin{  \tmJoinManySet{  \mathin{ \mathit{j} }{ \mathit{J} }  }{ \tmMeta_{\ottmv{j}} }  }{  \lrexp{  \tyBigJoin{  \mathin{ \mathit{i} }{ \mathit{I} }  }{ \tyAc_{\ottmv{i}} }  }  } $.
  }
\end{lemma}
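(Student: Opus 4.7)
The plan is to reduce each expression $e_j$ to a result, rewrite the big outer join into a metajoin of results, and then establish the desired membership at the level of results.

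For each $j \in J$, unpacking the premise $e_j \in \lrexp{\bigvee_{i \in f^{-1}(j)} \phi_i}$ yields a result $r_j$ with $e_j \mapsto^{*} r_j$ and $r_j \in \lrres{\bigvee_{i \in f^{-1}(j)} \phi_i}$. Using evaluation-context closure of reduction, $\bigvee_{j \in J} e_j \mapsto^{*} \bigvee_{j \in J} r_j$ by reducing one coordinate at a time, and then iterated application of the result-join reduction rule collapses this to the single term $\bigsqcup_{j \in J} r_j$; this is well-defined up to order since the metajoin on results is associative and commutative. By Closure Under Antireduction (\Lref{lem:logsem:antired}), it suffices to show $\bigsqcup_{j \in J} r_j \in \lrres{\bigvee_{i \in I} \phi_i}$.

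This reduces the problem to a purely semantic claim about metajoins of results, which I would establish as an internal lemma: if $r_j \in \lrres{\psi_j}$ for each $j \in J$, then $\bigsqcup_{j \in J} r_j \in \lrres{\bigvee_{j \in J} \psi_j}$. I would prove this by induction on $\max_{j \in J} |\psi_j|$, exploiting \Lref{lem:log-sem:join-size} to ensure the induction hypothesis remains available for all component joins of formulae. The case analysis is on the shapes of the $r_j$: if any $r_j$ or any $\psi_j$ is $\tmTopC$, both sides collapse to $\tmTopC$ and the goal is immediate; $\tmBotC$ contributions on either side are absorbed by the definitions; when the $r_j$ are bona fide values of matching shape---symbols, pairs, singleton sets, or functions---we invoke the induction hypothesis on the strictly smaller component formulae.

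The main obstacle will be the function case. Joining $\lambda$-abstractions pointwise produces a single $\lambda$-abstraction whose body is the syntactic join of the component bodies. To place this combined function in $\lrval{\bigvee_k (\tau_k \to \phi_k)}$, we must verify, for every clause-subset $I'$ of the combined formula and every $v \in \lrval{\bigvee_{i \in I'} \tau_i}$, that the substituted combined body lies in $\lrexp{\bigvee_{i \in I'} \phi_i}$. This will require partitioning $I'$ according to which original function each clause came from, applying the hypotheses of the original functions on each block of the partition to obtain expression memberships, and then invoking this very lemma on those substituted bodies to combine them---which is precisely why the statement must be formulated over an arbitrary index set $J$ rather than just two terms. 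The set case is analogous but easier: the relabelling witnesses for the individual set values can be spliced together into a single witness for the joined set formula by taking disjoint union of the component maps over $J$.
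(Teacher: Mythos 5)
Your proof is correct and lands on the same key ideas as the paper's, but with a different decomposition. The paper factors the argument into a chain of binary lemmas: a ``weak'' value-level join (routine induction on the formula), a binary result-level join (simultaneous induction on the maximum size of the two formulae, using \Lref{lem:logsem:sem-downclos}), a binary expression-level join (via \Lref{lem:logsem:mon-bind}), and then the $n$-ary statement by repeated application of the binary case after rewriting $\bigsqcup_{i\in I}\phi_i$ as $\bigsqcup_{j\in J}\bigsqcup_{i\in f^{-1}(j)}\phi_i$. You instead reduce every $e_j$ to a result up front and prove one $n$-ary result-level lemma; this subsumes the paper's ``weak'' lemma as the case where some $\psi_j$ is $ \tmBotC $ while $r_j$ is a structured value, a mild economy. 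Two things to tighten: the metajoin is not literally commutative on syntax (joining $\lambda$-abstractions concatenates bodies in order), so drop the ``well-defined up to order'' remark and just follow the fixed nesting of $\bigvee_{j\in J}e_j$, which is all you need; and since your function case re-invokes the expression-level statement on the substituted bodies, you must set up the expression-level and result-level claims as one simultaneous induction on formula size, justified by \Lref{lem:log-sem:join-size} because the clause outputs are strictly smaller than the function formula---you gesture at this, but the measure for the whole mutually recursive package should be stated explicitly, as the paper does for its parts (2) and (3).
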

\longonly{
\begin{proof}
  The proof of the first part proceeds by routine induction on $\tyAv$.
  The second and third parts are proven by simultaneous induction on the maximum
  size of the two types involved.
  The second part uses the first in addition to \Lref{lem:logsem:sem-downclos}; see the proof of \Lref{lem:apdx:logsem:join:val-strong} in the appendices for details.
  The third part is a straightforward application of the second and \Lref{lem:logsem:mon-bind}.
  The final part follows from repeated application of the third; see the proof of \Lref{lem:apdx:logsem:join}.
\end{proof}
}

\subsubsection*{The Fundamental Property}
The last part of \Fref{fig:logrel} lifts the definitions of the logical predicate
from closed terms to open terms.
It gives a predicate $ \lrenv{ \Gamma } $ over closing substitutions and uses it
to define a judgment over open terms written $ \judglr{ \Gamma }{ \tmMeta }{ \tyAc } $ whose intuitive meaning is
``$\tmMeta$ is \emph{semantically assigned} the formula $\tyAc$'' in contrast with the \emph{syntactic assignment} rules of \Fref{fig:typing}.
These definitions allow us to state the Fundamental Property of the logical predicate, namely that
under any environment, every formula syntactically assigned to a term $\tmMeta$ is also
semantically assigned to $\tmMeta$.
\begin{lemma}[Fundamental Property]
  \label{lem:log-sem:fund-prop}
  If $ \judg{ \Gamma }{ \tmMeta }{ \tyAc } $ then $ \judglr{ \Gamma }{ \tmMeta }{ \tyAc } $.
\end{lemma}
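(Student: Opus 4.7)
The plan is to proceed by induction on the derivation of $\judg{\Gamma}{\tmMeta}{\tyAc}$. In each case, for every $\substmeta \in \lrenv{\Gamma}$ I must show $\substmeta(\tmMeta) \in \lrexp{\tyAc}$. The routine cases dispatch quickly: \rulename{TBot}, \rulename{TTop}, \rulename{TBotV}, \rulename{TSym}, and \rulename{TVar} follow directly from the definitions of $\lrval{\cdot}$, $\lrres{\cdot}$, and $\lrenv{\cdot}$; \rulename{TSub} is exactly Semantic Downward Closure (\Lref{lem:logsem:sem-downclos}); and \rulename{TJoin} is an immediate consequence of Semantic Join (\Lref{lem:logsem:join}).

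The introduction-form rules \rulename{TPair} and \rulename{TSet} reduce each subterm to a result via Monadic Bind (\Lref{lem:logsem:mon-bind}) and reassemble the pieces using the $\tyPairOp{\cdot}{\cdot}$ and $\tySng{\cdot}$ liftings, matching the case analyses in their definitions; in the set case, I instantiate the witness function $f$ in the definition of $\lrval{\ottsym{\{}\tyAv_i \mid i \in I\ottsym{\}}}$ as the identity on set-literal positions, so the per-element IHs immediately supply the required memberships. The elimination rules \rulename{TLetSym}, \rulename{TLetPair}, \rulename{TForIn}, and \rulename{TApp} share a common pattern: use Monadic Bind to move from the scrutinee to its value in the appropriate value predicate, then unpack that predicate to obtain a suitable $\stepsym$-step, and finally close under anti-reduction (\Lref{lem:logsem:antired}). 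The \rulename{TLetPairTop}, \rulename{TLetSymTop}, \rulename{TAppLTop}, \rulename{TAppRTop}, and \rulename{TForInTop} variants are analogous, exploiting the reduction $\ottnt{E}[\tmTopC] \stepsym \tmTopC$ and the fact that $\tmTopC \in \lrres{\tyAc}$ for every $\tyAc$.

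The technical crux is \rulename{TFun}, with analogous aggregations appearing in \rulename{TApp} and \rulename{TForIn}. The induction hypothesis supplies, for each clause $i \in I$, a semantic judgment $\judglr{\Gamma, x : \tyAv_i}{\tmMeta}{\tyAc_i}$, but the deliberately strengthened definition of $\lrval{\tyFun{i \in I}{\tyAv_i}{\tyAc_i}}$ demands, for every subset $I' \subseteq I$ and every $\valMeta \in \lrval{\tyBigJoin{\mathin{i}{I'}}{\tyAv_i}}$, that $\subst{\substmeta(\tmMeta)}{\valMeta}{x}$ lie in $\lrexp{\tyBigJoin{\mathin{i}{I'}}{\tyAc_i}}$. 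Semantic Downward Closure together with \Lref{lem:log-sem:join-lub} gives $\valMeta \in \lrval{\tyAv_i}$ for each $i \in I'$, and the IH then supplies membership in each $\lrexp{\tyAc_i}$ \emph{individually}. Since no semantic analog of Directedness is available, the hard part is to collapse this family of single-clause memberships into one membership at the join formula. The plan is to invoke the aggregation part of Semantic Join with an index function $f$ chosen so that the components furnished by the IH recombine correctly at $\tyBigJoin{\mathin{i}{I'}}{\tyAc_i}$, then appeal to anti-reduction together with the idempotency of the result-level $\synjoin{\cdot}{\cdot}$ to transfer the resulting membership back to the single substituted body term. This combination maneuver is the main obstacle, and also explains the paper's earlier remark that the strengthening of $\lrval{\cdot}$ on sets and functions was required to make the induction go through.
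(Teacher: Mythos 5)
There is a genuine gap in the crux case, and it traces back to how you set up the induction. You induct once on the derivation of $\judg{\Gamma}{\tmMeta}{\tyAc}$; the paper instead uses a \emph{nested} induction, outer on the term $\tmMeta$ and inner on the derivation. This matters precisely in \rulename{TFun}: the paper never decomposes into per-clause semantic memberships at all. It first applies \emph{syntactic} Directedness (\Lref{lem:log-sem:directed}, lifted to joined environment entries via Weakening) to the premise derivations, obtaining a single derivation $\judg{\Gamma, x : \bigsqcup_{i\in I'}\tyAv_i}{\tmMeta'}{\bigsqcup_{i\in I'}\tyAc_i}$, and then invokes the \emph{outer} induction hypothesis on the subterm $\tmMeta'$ applied to this new (non-sub-)derivation. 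Your single induction on the derivation cannot do this, because the combined derivation is not structurally smaller, and that is what forces you into the combination maneuver.

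That maneuver does not go through. From the per-clause IHs you obtain $\subst{\substmeta(\tmMeta)}{\valMeta}{x} \in \lrexp{\tyAc_i}$ for each $i \in I'$, and Semantic Join then yields membership of the \emph{syntactic join term} $e \vee \cdots \vee e$ (with $e = \subst{\substmeta(\tmMeta)}{\valMeta}{x}$) in $\lrexp{\bigsqcup_{i\in I'}\tyAc_i}$. But the value predicate for function formulae demands membership of the single term $e$, and neither anti-reduction nor idempotency of $\synjoin{\resMeta}{\resMeta'}$ bridges that gap: anti-reduction transfers membership from reducts back to sources, and $e$ does not reduce to $\tmJoin{e}{e}$. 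Worse, $\tmJoin{e}{e} \in \lrexp{\tyAc}$ only asserts the existence of \emph{some} reduction in which the two copies may be driven to \emph{different} results $\resMeta_1, \resMeta_2$ with only $\synjoin{\resMeta_1}{\resMeta_2}$ landing in the result predicate; extracting from this a single run of $e$ reaching such a result is exactly a confluence or determinism argument, which the paper explicitly flags as unavailable for this reduction system. (By contrast, in \rulename{TForIn} the aggregation part of Semantic Join is legitimate, because reduction of the big join genuinely produces a syntactic join of distinct substituted instances, one per set element, so no back-transfer is needed; the paper still uses syntactic Directedness there to get a single IH per element.) The fix is the paper's: restructure the induction so that Directedness does the combining at the level of typing derivations \emph{before} the induction hypothesis is applied. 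The remainder of your case analysis is in line with the paper's proof.
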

\begin{proof}
  We proceed by nested induction first on $\tmMeta$ and then on $ \judg{ \Gamma }{ \tmMeta }{ \tyAc } $.\longonly{ See the proof of \Lref{lem:apdx:log-sem:fund-prop} in the appendices for details.}
\end{proof}
\noindent Using the Fundamental Property, it is not hard to verify
Adequacy~(\Lref{lem:logsem:adequacy}).
\iflong
\begin{proof}
  Suppose $\logapx{\valMeta}{\tmMeta}$.
  We must show $\tmMeta  \stepsym^{*}   \tmTopC $ or $\mathexists{\valMeta'}{\tmMeta  \stepsym^{*}  \valMeta'}$.
  It is immediate from the definition of formula assignment (specifically the rule
  \ottdrulename{TBotV}) that there exists some value formula $\tyAv$ such that
  $ \judg{  \envEmpty  }{ \valMeta }{ \tyAv } $. 
  By assumption, we have $ \judg{  \envEmpty  }{ \tmMeta }{ \tyAv } $ and thus $ \judglr{  \envEmpty  }{ \tmMeta }{ \tyAv } $ thanks
  to the Fundamental Property (\Lref{lem:log-sem:fund-prop}).
  The definition of $ \lrexp{ \tyAv } $ then gives us a result $\resMeta$
  such that $\tmMeta  \stepsym^{*}  \resMeta$ and $ \mathin{ \resMeta }{  \lrres{ \tyAv }  } $.
  Examining the definition of $ \lrres{ \tyAv } $ reveals that $\resMeta$ must
  either be a value or $ \tmTopC $.
\end{proof}
\else
\begin{proof}
  Suppose $\logapx{\valMeta}{\tmMeta}$.
  We must show $\tmMeta  \stepsym^{*}   \tmTopC $ or $\mathexists{\valMeta'}{\tmMeta  \stepsym^{*}  \valMeta'}$.
  From \ottdrulename{TBotV} we have $ \judg{  \envEmpty  }{ \valMeta }{  \valBotV  } $.
  By assumption, we have $ \judg{  \envEmpty  }{ \tmMeta }{  \valBotV  } $ and thus $ \judglr{  \envEmpty  }{ \tmMeta }{  \valBotV  } $ thanks
  to the Fundamental Property (\Lref{lem:log-sem:fund-prop}).
  The definitions of $ \lrexp{  \valBotV  } $ and $ \lrres{  \valBotV  } $ provide $\resMeta$ such that $ \returns{ \tmMeta }{ \resMeta } $.
\end{proof}
\fi
\subsection{Domain Theory}
\label{sec:log-sem:domains}
We now give a brief description of the domain-theoretic view of \gcalc
and the relevance of our filter model.
The goal is not to be completely rigorous but rather to intuitively justify the
allusions to domain theory that we have made throughout the paper.
We assume some knowledge of the topic.
Our terminology and definitions follow those of \citet{cartwright16}; the proofs and much of the approach originate with \citet{scott82}.
\longonly{The full details are given in \Aref{sec:apdx:domains}.}

It is known that filter models lead to Scott-style models in a straightforward
way.
We have seen that a single formula represents a finite behavior of a program.
The sets of formulae $\synvtype$ and $\synctype$ each form a type of
preorder that is known as a \emph{finitary basis}.
That is, formulae correspond to the finite or \emph{compact} elements of a
domain.
The entire domain---including both finite and infinite elements---can be obtained through a construct known as the \emph{ideal completion}.

As mentioned previously, an \emph{ideal} over a preorder $X$ is a non-empty
downward-closed directed subset of $X$.
The set of ideals over a finitary basis $\fbasis{A}$ is written
$\ideals{\fbasis{A}}$ and forms a Scott domain.
The meaning of any term (defined in \Sref{sec:logsem:results}) is an ideal
over computation formulae thanks to Lemmas~\ref{lem:log-sem:non-empty}-\ref{lem:log-sem:directed}
and the domain $\ideals{\synvtype}$ is a solution $D$ of the following
domain equation.\longonly{ (See \Tref{lem:apdx:densem:iso} in the appendices.)}
\begin{equation}
  \label{eqn:logsem:dom-eqn}
  \domval  \cong (\ideals{\symset} + \domval \times \domval + \powdomh{\domval} + (\domval \to \domval_{\bot\top}))_\botv
\end{equation}
In this equation, the notation $D \to D'$ represents the
\emph{continuous function space} over domains.
The operators $D_\bot$, $D_\botv$, and $D_\top$ represent the domain $D$ extended
with a least or greatest element.
The cartesian product of two domains is written $D \times D'$.
The \emph{Hoare powerdomain}~\cite{winskel85} is written $\powdomh{D}$.

\todo{One of the most important points of the paper, but it is buried!}
The domain equation~\eqref{eqn:logsem:dom-eqn} highlights that the meaning of
\gcalc programs is essentially deterministic.
We can see, for example, that a program might \emph{mean} only one of $ \symName{true} $ or $ \symName{false} $.
To understand the isomorphism, the notion of \emph{approximable mapping} is helpful.
\begin{definition}
  An approximable mapping on finitary bases $\fbasis{A}$ and $\fbasis{B}$ is a relation
  $R \subseteq \fbasis{A} \times \fbasis{B}$ such that:
  \begin{itemize}
    \item $\mathforall{a \in \fbasis{A}}{\mathexists{b \in \fbasis{B}}{(a, b) \in R}}$
    \item If $(a, b) \in R$ and $b' \sqsubseteq_\fbasis{B} b$ then $(a, b') \in R$.
    \item If $(a, b) \in R$ and $a \sqsubseteq_\fbasis{A} a'$ then $(a', b) \in R$.
    \item If $(a, b) \in R$ and $(a, b') \in R$ then $(a, b \sqcup b') \in R$.
  \end{itemize}
\end{definition}

The criteria for approximable mappings correspond to lemmas we established
earlier in this section, so the relation $\{(\tyAv, \tyAc) \mid  \judg{  \envEmpty  }{  \valLam{ \mathit{x} }{ \tmMeta }  }{  \tyFunOne{ \tyAv }{ \tyAc }  }  \}$ is an approximable mapping for any function $ \valLam{ \mathit{x} }{ \tmMeta } $.
It turns out that the approximable mappings between two finitary bases are
isomorphic as a partial order to the space of continuous functions over the
corresponding full domains.


One traditionally defines a meaning function by structural
recursion on terms that produces the meaning of a program in a domain.
Here, we conjecture that the equations that usually define the meaning function
can be proven in terms of the notion of meaning arising from the filter model.


\section{Discussion}
\label{sec:disc}

While we have seen that \gcalc has an appealing theoretical foundation, questions remain with respect to its application in practice.
This section discusses future directions designers of languages based on \gcalc might take to
devise practical implementation techniques and
provide the expressivity needed to support programmers of distributed systems.

\subsection{Considerations for Implementation}
\label{sec:disc:impl}



Interacting with a \gcalc program involves running it and then
watching the observations it produces over time.
In general, one should \emph{not} wait for the program to produce a result in its
entirety before taking action in response because the full result may be
infinite.
In \Sref{sec:opsem:nondet} we studied the challenges of defining a semantics
that supports applying strict functions like $\mathit{head}$ to infinite arguments
like $ \tmApp{ \mathit{fromN} }{ \ottsym{0} } $.
This problem motivated the use of nondeterministic approximation steps, which we noted
were a declarative approach to specifying a system providing pipeline parallelism.

\begin{figure}
  {\small
  \begin{center}
    \begin{tabular}{c|c c c c}
      Program & \multicolumn{4}{ c }{Observations} \\
      \hline
      $\tmMeta$ & $\valMeta_{{\mathrm{1}}}$ & $\valMeta_{{\mathrm{2}}}$ & $\valMeta_{{\mathrm{3}}}$ & $\ldots$ \\
      \hline
      $ \subst{ \tmMeta' }{ \valMeta_{{\mathrm{1}}} }{ \mathit{x} } $ & \cellcolor{lightgray} $\resMeta'_{1,1}$ & $\resMeta'_{1,2}$ & $\resMeta'_{1,3}$ & $\ldots$ \\
      $ \subst{ \tmMeta' }{ \valMeta_{{\mathrm{2}}} }{ \mathit{x} } $ & $\resMeta'_{2,1}$ & \cellcolor{lightgray} $\resMeta'_{2,2}$ & $\resMeta'_{2,3}$ & $\ldots$ \\
      $ \subst{ \tmMeta' }{ \valMeta_{{\mathrm{3}}} }{ \mathit{x} } $ & $\resMeta'_{3,1}$ & $\resMeta'_{3,2}$ & \cellcolor{lightgray} $\resMeta'_{3,3}$ & $\ldots$ \\
      $\vdots$ & $\vdots$ & $\vdots$ & $\vdots$ & $\ddots$ \\
      \hline
      $ \tmApp{ \ottsym{(}   \valLam{ \mathit{x} }{ \tmMeta' }   \ottsym{)} }{ \tmMeta } $ & \cellcolor{lightgray} $\resMeta'_{1,1}$ & \cellcolor{lightgray} $\resMeta'_{2,2}$ & \cellcolor{lightgray} $\resMeta'_{3,3}$ & $\ldots$
    \end{tabular}
  \end{center}
  }
  \caption{Interleaved evaluation of $ \tmApp{ \ottsym{(}   \valLam{ \mathit{x} }{ \tmMeta' }   \ottsym{)} }{ \tmMeta } $ simulating pipeline parallelism.}
  \label{fig:interleaved}
\end{figure}

This problem is also the chief concern an implementation must deal with since
approximation steps are not realizable in practice.
It is expected that an implementation interleaves computation of the
output that a function produces with the computation of the input that its
argument provides.
To illustrate this idea, consider the term $ \tmApp{ \ottsym{(}   \valLam{ \mathit{x} }{ \tmMeta' }   \ottsym{)} }{ \tmMeta } $.
How should an interpreter evaluate it?
One approach is to evaluate \gcalc expressions to streams of observations that
improve over time, much as we saw in \Sref{sec:language}.
In this case, suppose we observe the input stream $\valMeta_{{\mathrm{1}}}, \valMeta_{{\mathrm{2}}}, \ldots$
from the evaluation of $\tmMeta$.
For each one of these observations $\valMeta_{\ottmv{i}}$, we may obtain a stream of
observations $\resMeta'_{i,1}, \resMeta'_{i,2}, \ldots$ by evaluating
$ \subst{ \tmMeta' }{ \valMeta_{\ottmv{i}} }{ \mathit{x} } $.
To \emph{fairly} interleave the input and output computations, we take the
diagonal as depicted by \Fref{fig:interleaved}.

A concrete implementation might represent streams of elements of the set $X$ as
functions in the set ${\mathnat \to X}$, which in Haskell can be represented as values of the
monadic type \code{Reader Nat X} where \code{Nat} is the type of natural numbers.
The monadic join operation (which is unrelated to semilattice joins despite the
name) performs diagonalization.\footnote{The monadic
nature of streams appears to be folklore. See, for example,
the blog post of \citet{gibbons10}.}
Its type is given below.
\begin{lstlisting}[language=Haskell]
    join :: Reader Nat (Reader Nat X) → Reader Nat X
\end{lstlisting}
In this way, the details of interleaving can be hidden behind a monadic
abstraction and the definition of an interpreter can remain largely conventional.\footnote{A proof-of-concept implementation in Haskell is available online.~\cite{rioux25impl}}

While this strategy is easy to implement, it is inefficient.
Enumerating the elements of a diagonalized stream is slow.
Moreover, each time we compute any $\resMeta'_{i,j}$, we are recomputing the
output of $\tmMeta'$ from scratch on the input $\valMeta_{\ottmv{i}}$.
This involves much repeated work; it would be desirable to find an
incremental approach to evaluation that does only the work needed to calculate
the change in output for each change in input.
Such an approach might resemble the \emph{seminaive evaluation} of Datalog, which
\citet{arntzenius19} have already adapted to support higher-order functional
programming in Datafun.


Recall the function $\mathit{reaches}$ from \Sref{sec:datalog}.
It streams the correct output for all graphs, but it does not terminate on cyclic
inputs.
On one hand, this is not a problem \emph{semantically} since termination does not
affect the meaning of a \gcalc program.
On the other, it is clearly preferable \emph{in practice} that an implementation terminate when
possible to conserve resources.
Logic programming languages can achieve termination on computations similar to
$\mathit{reaches}$ using a technique called \emph{tabling} that, in the functional
setting, corresponds to \emph{memoization}.
Although memoization has long been used to improve the efficiency of functional
programs, it seems particularly important in realistic implementations of \gcalc in order to obtain
good termination behavior.



\subsection{Monotonicity \& Beyond}
\label{sec:disc:ext}
What should we make of monotonicity?  Is it too stringent of a limitation on the
programmer?
\longonly{One might worry that ruling out non-monotone operations leaves
\gcalc impoverished, with limited ability to express useful computations.}
After all, its set data type does not permit computing differences or complements
and it is impossible to implement a boolean-valued membership function because
monotone operations cannot detect absence from a set.

One important point is that these are \emph{not} limitations as compared with traditional functional programming languages such as Haskell or ML.
Comparing \gcalc sets with ML's Set is comparing apples
and oranges: they are simply different data types with different trade-offs.
While the elements of each type may appear similar, their order structure is not.
In \gcalc, the streaming order describes the set $\ottsym{\{}  \ottsym{1}  \ottsym{\}}$ as an approximation
of $\ottsym{\{}  \ottsym{1}  \ottsym{,}  \ottsym{2}  \ottsym{\}}$.
In ML, they are entirely different values, incomparable in ML's semantic order.
ML's order allows for operations like set membership and difference to be seen as
monotone, at the cost of streaming.
A similar ``discretely ordered set'' could be added to \gcalc  and it would
support the operations of the ML type\longonly{ (just as numeric operations are monotone
with respect to the discrete order)}, but it would not be able to express examples
like those from \Sref{sec:datalog}.
Thus, monotonicity does not make \gcalc less expressive than conventional functional languages.
Indeed, as is clear from the study of denotational semantics, such languages are
also monotone; it is the join operator and the rich order
structure on data types needed for streaming behavior that they lack.

Moreover, the restrictions of \gcalc are largely familiar in the distributed
setting.
The \gcalc set data type generalizes \emph{grow-only set} CRDTs, which do not
support the removal of elements.
Users of set LVars face this limitation as well as the lack of a boolean
membership test.
That said, there is good reason to combine streaming data with
non-monotone updates.
A distributed key value store, for instance, needs to be able to accept arbitrary updates
from clients, not just inflationary ones.



\paragraph{Frozen Values}
The point of monotonicity is that we do not want a program to take any action
that will have to be undone later when more input arrives.
As noted in the introduction, if our input is a set and we produce some output because the
set lacks a particular element, we would have to retract the output if that
element were to arrive later.
However, if at some point the program receives all of the input and knows for
sure that no more elements of the set are headed its way, then, intuitively, there
is no problem with asking whether or not an element is in the set.

Datafun~\cite{arntzenius16} and LVish~\cite{kuper14} both provide mechanisms to
address this situation.
In a similar manner, we propose that a producer of a value be able to
\emph{freeze} it by setting a flag promising the context that no further data
will be produced.
This enables the consumer to safely perform otherwise non-monotone operations.

In \gcalc, we might write $\ottkw{frz} \, \valMeta$ to indicate the frozen value $\valMeta$.
While $\ottsym{\{}  \ottsym{1}  \ottsym{,}  \ottsym{2}  \ottsym{\}}$ represents the knowledge that a set contains the elements
$1$ and $2$, the value $\ottkw{frz} \, \ottsym{\{}  \ottsym{1}  \ottsym{,}  \ottsym{2}  \ottsym{\}}$ additionally contains the knowledge
that all other elements are absent from the set.
Given a value $\valMeta$, we expect to have $ \ctxapx{ \valMeta }{ \ottkw{frz} \, \valMeta } $ since $\valMeta$ may
be frozen in the future.
\iflong
Freezing should also respect equivalence, of course, so
if $ \ctxeq{ \valMeta }{ \valMeta' } $ then we should have $ \ctxeq{ \ottkw{frz} \, \valMeta }{ \ottkw{frz} \, \valMeta' } $.
However, $ \ctxapx{ \valMeta }{ \valMeta' } $ should \emph{not} imply $ \ctxapx{ \ottkw{frz} \, \valMeta }{ \ottkw{frz} \, \valMeta' } $:
we want $\ottkw{frz} \, \ottsym{\{}  \ottsym{1}  \ottsym{\}}$ to be incomparable from $\ottkw{frz} \, \ottsym{\{}  \ottsym{1}  \ottsym{,}  \ottsym{2}  \ottsym{\}}$ just as the
corresponding ML sets are incomparable.
\fi
\shortonly{However, $ \ctxapx{ \valMeta }{ \valMeta' } $ should not imply $ \ctxapx{ \ottkw{frz} \, \valMeta }{ \ottkw{frz} \, \valMeta' } $.}
\iflong
In order to rule out the non-monotone function $ \valLam{ \mathit{x} }{ \ottkw{frz} \, \mathit{x} } $,
we need to prevent unfrozen streaming variables from appearing inside a frozen
value.
This could be accomplished by defining a set of closed \emph{freezable values} or
by imposing a modal type system in the style of \citet{arntzenius16}.
\fi

\paragraph{Versioned Values}
While freezing allows for otherwise non-monotone operations on data that is no
longer changing, programmers of distributed systems occasionally require a way to
model data that changes arbitrarily over time.
This is an old problem in the distributed systems literature with known solutions
like those of Amazon's Dynamo~\cite{dynamo}.
To follow Dynamo's approach, we might add \emph{lexicographic pairs} $ \tmLexPair{ \valMeta_{{\mathrm{1}}} }{ \valMeta_{{\mathrm{2}}} } $ to \gcalc.
A lexicographic ordering allows the programmer to tag a datum $\valMeta_{{\mathrm{2}}}$ with a
version $\valMeta_{{\mathrm{1}}}$.
The datum can change arbitrarily so long as the version increases.
The version is frequently a vector clock.
To ensure monotonicity is preserved, the elimination form will need to take the
form of a monadic bind operator $ \tmMonBind{ \mathit{x} }{ \tmMeta_{{\mathrm{1}}} }{ \tmMeta_{{\mathrm{2}}} } $.
This operator evaluates $\tmMeta_{{\mathrm{1}}}$ to a pair $ \tmLexPair{ \valMeta_{{\mathrm{1}}} }{ \valMeta'_{{\mathrm{1}}} } $, and then evaluates $ \subst{ \tmMeta_{{\mathrm{2}}} }{ \valMeta'_{{\mathrm{1}}} }{ \mathit{x} } $.
This should return a pair $ \tmLexPair{ \valMeta_{{\mathrm{2}}} }{ \valMeta'_{{\mathrm{2}}} } $.
The final result is $ \tmLexPair{  \synjoin{ \valMeta_{{\mathrm{1}}} }{ \valMeta_{{\mathrm{2}}} }  }{ \valMeta'_{{\mathrm{2}}} } $.
Combining lexicographic pairs with \gcalc's sets, one can even model
\emph{multiversioning} in which multiple irreconcilable versions of a piece of
data may exist\longonly{ due to conflicting writes}.


The Bloom programming language~~\cite{alvaro11bloom,conway12} has similar
lattice-based data types for dealing with non-monotone updates in
distributed systems.
These enable the implementation of systems like the Anna key-value
store~\cite{wu18}\longonly{, which provide a wide variety of consistency guarantees}.
It seems that Bloom's data types could be adopted in \gcalc without issue.

\section{Related Work}
\label{sec:related}

\paragraph{Datalog}


The negation-free fragment of Datalog epitomizes
``monotonic-by-construction'' program semantics, and its constraints make it
amenable to very efficient implementations.
\iflong
In our terminology, the relevant streaming order is subset inclusion on sets of
facts, and the programs are fixed points of monotone functions on those sets.

\fi
The declarative nature of Datalog programs, according to \citet{hellerstein21},
``is so natural in the cloud, it almost seems to be crying out for it.''
Backing this up, a venerable line of inquiry due to Hellerstein and his
collaborators~\cite{hellerstein10,loo09,alvaro11dedalus,alvaro11bloom,conway12}
has shown Datalog to be a fruitful basis for describing distributed computations.





However, classic Datalog is characterized by a limited programming
model---it has no higher-order functions, only rudimentary data types, and is
always terminating.
This makes it an inexpressive starting point\longonly{: it cannot implement many
programs that can be written in general-purpose languages}.
One prior attempt to encode functional programming in Datalog handles only the
first-order fragment automatically, resorting to defunctionalization for
higher-order functions~\cite{pacak22}.
Datafun~\cite{arntzenius16} is another language that combines Datalog-style logic
programs with functional programming, to which the present work owes much
inspiration.
Still, it lacks facilities for writing recursive functions.

\paragraph{Infinite Data \& Lazy Functional Programming}
In a lazy functional programming language like Haskell, the input to a function
is evaluated only as needed.
Lazy functions can be more time efficient by avoiding unnecessary computation.
However, they risk sacrificing space efficiency as unevaluated thunks can build
up at runtime.
Moreover, in Haskell, lazy functions support infinite data while strict functions do
not.
Programmers of such languages must balance these tradeoffs between laziness and
strictness.
Ensuring a function is just \emph{lazy enough} can be tricky to get right; this
property is an implicit consequence of how the function is defined and
is not documented by type signatures.
Empirically, many operations on standard Haskell data types are too strict to
support \gcalc-style computation~\cite{breitner23}.

Since all functions in \gcalc are strict, programmers instead must explicitly
make the choice to defer evaluation by wrapping a computation in a thunk.
Thanks to its streaming semantics, \gcalc supports infinite data in a first-class
manner without the need for laziness.
This ability to handle unrestricted infinite values is unusual in a strict functional
programming language, though there is some precedent from ML-family
languages~\cite{jeannin13} that support programming with \emph{regular}
(\ie cyclic) infinite data.

\iflong
From the parallel streaming perspective, one might view lazy programs as
\emph{pull}-based systems: a program produces output only when it is
\emph{demanded} by the context.
Strict programs are \emph{push}-based: a context is ``subscribed'' to receive
updates from a term but a term evaluates independently, producing output on its
own without the need to be requested by a consumer.
\fi

\paragraph{LVars}
The \emph{LVish} Haskell library~\cite{kuper13}, another influence on this
work, provides one way to address the issue of nondeterminism in the presence
of shared state. Values stored in shared mutable reference
cells called LVars are endowed with a semilattice structure and updates
to the cell can only join the old cell contents with a newly written value.
The semilattice properties ensure that after a sequence of LVar writes occurs,
the resulting value in memory is the same, regardless of the order in which the
writes were applied.
LVars are accessed via monotone threshold queries which ensure that
read-write races do not cause nondeterminism.
Since Haskell is typed, LVish can statically rule out sources of errors that
\gcalc cannot.

Our work on \gcalc builds on LVars in a number of ways.
Whereas LVars provide an imperative interface, \gcalc allows for pure and
declarative descriptions of programs that are conducive to equational reasoning.
LVish and \gcalc take different approaches to compositionality: whereas each class
of LVars must be declared as a new type implementing an appropriate
interface, in \gcalc the provided data constructors can be nested ad hoc to build
compound streaming values.

The step from lattice theory to domain theory helps \gcalc support a
variety of streaming data, including infinite and higher-order values.
In contrast, the LVars formalism does not support higher-order shared data:
function application, though monotone, is not expressible as a threshold query.
For a similar reason, it seems unlikely that an analog of \gcalc's set data type
can be implemented as an LVar.

\paragraph{CRDTs}
\emph{Conflict-free replicated data types}~\cite{shapiro11}
(specifically the convergent variety) are one technique that guarantees eventual
consistency.
In this model of programming, distributed replicas share data that is endowed
with semilattice structure that can only increase over time in response to
updates.
Replicas share their local state with each other and use the join operation to
merge their current state with that of others.
\longonly{The commutativity of join implies tolerance to the reordering of
  updates over the network, idempotence protects against duplication, and
  associativity allows multiple updates to be batched together.}
The use of CRDTs alone does not provide the
application-level guarantees one might wish for.
In particular, when the replicas (or a client) take actions based on the values
read from the shared state, the end-to-end behavior of the program might be
not be monotonic and thus not deterministic.
For this reason, \citet{kuper2014joining} and \citet{laddad22} propose restricting the
\textit{uses} of CRDTs to be monotonic.

\paragraph{Functional Reactive Programming} Functional reactive
programming~(FRP)~\cite{elliott97} is a widely known approach to programming with
time-varying values.
It has been applied to distributed
programming~\cite{shibanai18,moriguchi23} among other domains.
Although we discuss values changing over time in describing the streaming behavior
of \gcalc, the similarities end there.
FRP allows values to change in arbitrary ways and makes strong assumptions about
time itself.
We, on the other hand, exploit the simplifying assumptions that values only
increase according to the streaming order over time and that all functions are
monotone.
Importantly, we do not assume time is continuous or even totally ordered.
Thanks to these differing assumptions, we predict an implementation of
\gcalc would not have to deal with issues like glitching that arise in FRP.


\iflong
\paragraph{Fortress} \citet{park13} propose using ``big'' versions of associative
operators to express MapReduce-style computations~\cite{dean08}.
This feature has its origins in the Fortress programming language~\cite{allen08}.
Such a strategy might be useful for automatically distributing \gcalc programs.
\fi

\paragraph{Dunfield's Merge Operator} Perhaps unexpectedly, our join operator was
inspired in part by the \emph{merge operator} of \citet{dunfield14} whose
design led to the study of \emph{disjoint intersection types}~\cite{oliveira16}.
These types have proven useful to encode solutions to the expression
problem~\cite{zhang21} and prevent ambiguous overloading~\cite{rioux23}.
In the future, this line of work may inform the design of a type system for
\gcalc capable of ruling out ambiguity errors.

\paragraph{Dataflow, Stream Processing, and Incremental Computation}
Kahn Process Networks~(KPNs)~\cite{kahn74} are concurrent computations described by
directed graphs in which the edges are streams and nodes are functions over
streams.
Kahn gives a domain-theoretic denotational semantics that captures the idea that
the information order describes how data evolves over time.
In contrast to \gcalc, the only streaming data type in a KPN is the collection of
streams over a fixed data type.
The parallelism offered by KPNs is also limited in expressiveness compared to
\gcalc: they cannot encode parallel or, for example.
KPNs can be embedded in other programming languages, providing some parallel
streaming functionality.
On the other hand, parallelism is a first-class feature in \gcalc that does not
depend on stratification of the language into parallel and functional parts.

Following Kahn, much attention has been given to studying
\emph{dataflow programs}~\cite{akidau15}.
They are commonly used to implement parallel
streaming~\cite{flink,laddad25} and functional reactive~\cite{cooper06}
systems.
Compared to \gcalc, these often lack composable higher-order streaming data types and equational reasoning.

Work on incremental programming~\cite{dbsp,mcsherry13,cutler24} aims
to efficiently compute updates to a program's output in response to changes.
The present work is not concerned with efficiency of implementation; we focus on a rich programming model.


\paragraph{Domain Theory \& Concurrent Lambda Calculi}
Domain-theoretic joins have a storied history as hypothetical functional
programming language features---and in support of parallelism, no less.  In a
seminal paper, \citet{plotkin77} demonstrated that the Scott semantics for PCF is
not fully abstract, essentially because \emph{parallel or} exists in the
model but cannot be defined in the language's syntax.  By adding syntax for it,
full abstraction can be obtained.
We demonstrated how to encode parallel or using the join operator in
\Sref{sec:por} and conjecture that, as a result, our filter model is fully
abstract.
The results of \citet{dezani94} further bolster this belief.

In the wake of Plotkin's result, significant effort was expended
investigating language features inspired by parallel or.
This led to various concurrent lambda calculi~\cite{abramsky93,boudol94,dezani94}
whose semantics have influenced those of \gcalc.
Some of these feature incarnations of the join operator.
Whereas past efforts focus on composing \emph{computations} in parallel, the order
structure present at the \emph{value level} of \gcalc is its characteristic feature.
This structure enables the description of iterative fixed point computation and
reveals the connection between concurrent lambda calculi and Datalog.
To that end, the presence of a set data type whose meaning is a powerdomain and
which supports the big join elimination form is novel; designing the
logical relation in \Sref{sec:logsem:adequacy} to accommodate this feature was
a significant challenge.
\longonly{The goal of determinism in this setting and the connection of domain-theoretic
notions to distributed computing are also novel to the best of our knowledge.}

\longonly{
\section{Conclusion}
\citeauthor{dezani94} state that operators like parallel or
``deserve interest not just because of their ability of
filling the gap between operational and denotational semantics, but also because
they model, though in a very crude way, relevant aspects of computation
strategies in actual implementations.''
On the other hand, a common folk belief today is that operational semantics are
better suited for describing parallel computation than domain theoretic models;
the former approach captures the possible interleavings of
behaviors present in actual implementations.

Our work is witness to yet another perspective.
It may be true that domain theory offers only limited insight into certain approaches to
parallelism that are popular albeit fraught with nondeterminism.
However, the theory \emph{is} a suitable foundation for
deterministic-by-construction parallel programming.
It provides a blueprint for integrating streaming behaviors into functional programming
which work with rich data types and are compatible with call-by-value evaluation.
}


\begin{acks}
  We would like to thank the PLDI\longonly{ 2025} reviewers for extensive feedback
that greatly improved this work.
We are also extremely grateful to Michael Arntzenius, Lindsey Kuper, and Joey
Velez-Ginorio for their comments on early versions of this paper.

This work was supported by the National Science Foundation under grant number
2247088.
Any opinions, findings, and conclusions or recommendations expressed in this
material are those of the authors and do not necessarily reflect the views of the
NSF.

\end{acks}

\bibliographystyle{ACM-Reference-Format}
\bibliography{biblio}

\iflong
\clearpage
\appendix
\section{Proofs: A Filter Model}
\subsection{Formulae}

\begin{lemma}
  \label{lem:apdx:log-sem:join-lub}
  The following rules are admissible:
  \begin{mathpar}
    \inferrule
        { \tyapx{ \tyAc' }{ \tyAc }  \\
          \tyapx{ \tyAc'' }{ \tyAc } }
        { \tyapx{  \tyJoin{ \tyAc' }{ \tyAc'' }  }{ \tyAc } }

     \inferrule
       { \tyapx{ \tyAc }{ \tyAc' } }
       { \tyapx{ \tyAc }{  \tyJoin{ \tyAc' }{ \tyAc'' }  } }

     \inferrule
       { \tyapx{ \tyAc }{ \tyAc'' } }
       { \tyapx{ \tyAc }{  \tyJoin{ \tyAc' }{ \tyAc'' }  } }
  \end{mathpar}
\end{lemma}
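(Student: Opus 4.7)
The plan is to prove all three admissibility statements simultaneously by induction on $|\tyAc'| + |\tyAc''|$ (or equivalently on the pair of sizes lexicographically), exploiting \Lref{lem:log-sem:join-size} so that whenever the join of two formulae appears under a join or in an argument position, we still have access to an induction hypothesis at the joined formula. The base cases $\tyAc'' = \tmBotC$ and $\tyAc' = \tmBotC$ collapse immediately because the join then reduces to the other argument; the cases $\tyAc = \tmTopC$ or where the join reduces to $\tmTopC$ are handled by \rulename{TApxTop}.

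For the upper-bound parts (the second and third rules), the plan is to case-split on the outermost shape of the operand that $\tyAc$ is being joined with. When $\tyAc'$ and $\tyAc''$ have compatible shapes, the definition of $ \tyJoin{ \tyAc' }{ \tyAc'' } $ produces a formula of the same shape whose components are computed pointwise from the two arguments; in each such case I derive $ \tyapx{ \tyAc }{  \tyJoin{ \tyAc' }{ \tyAc'' }  } $ by inverting the assumed approximation $\tyapx{\tyAc}{\tyAc'}$ (or $\tyapx{\tyAc}{\tyAc''}$) and appealing to the appropriate \rulename{TApx*} rule on the resulting components, using the IH on subformulae. When the shapes are incompatible, the join is $\tmTopC$, and \rulename{TApxTop} closes the goal.

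For the least-upper-bound part, given $ \tyapx{ \tyAc' }{ \tyAc } $ and $ \tyapx{ \tyAc'' }{ \tyAc } $, I would case-split on $\tyAc'$ and $\tyAc''$. The crucial observation is that each \rulename{TApx*} rule preserves the top-level shape except via $\tmBotC$, $\valBotV$, or $\tmTopC$: a pair formula can only approximate a pair formula or $\tmTopC$, a function only a function or $\tmTopC$, and so on. So whenever $\tyAc'$ and $\tyAc''$ have mismatched non-trivial shapes, $\tyAc$ must in fact be $\tmTopC$, which matches $ \tyJoin{ \tyAc' }{ \tyAc'' }  = \tmTopC$. When the shapes match, I inspect both approximations, compute the join componentwise, and reassemble the approximation using the IH, with \Lref{lem:log-sem:join-size} providing the shrinking measure when invoking the IH on joined components produced by the pair and function cases.

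I expect the function case of the least-upper-bound direction to be the main obstacle. The definition of \rulename{TApxFun} already involves picking, for each clause on the left, a subset of clauses on the right whose input joins dominate the left input and whose output joins are dominated by the left output. Combining two such choices for $\tyAc'$ and $\tyAc''$ approximating $\tyAc$ into a single choice for $ \tyJoin{ \tyAc' }{ \tyAc'' }  = \tyFun{ \mathin{ \mathit{i} }{  \mathit{I_{1}}  \cup  \mathit{I_{2}}  } }{ \tyAv'_{\ottmv{i}} }{ \tyAc'_{\ottmv{i}} } $ requires carefully taking the disjoint union of the two clause selections, then invoking \Lref{lem:log-sem:fun-join-dist} together with the IH to justify that the joined outputs dominate what is needed. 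The set case is analogous but easier, since \rulename{TApxSet} only requires a covering in one direction.
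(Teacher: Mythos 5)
Your proposal is correct and takes essentially the same route as the paper, which proves this by routine induction on $\tyAc'$ for the first two rules and on $\tyAc''$ for the third, with exactly the case analysis on formula shapes and inversion of the streaming order that you describe. One simplification worth noting: in the function case of the least-upper-bound rule, every clause of $ \tyJoin{ \tyAc' }{ \tyAc'' } $ already occurs in (at least) one of the two premises, so its witnessing subset $J'$ can be taken directly from that premise---no combination of the two clause selections, no appeal to \Lref{lem:log-sem:fun-join-dist}, and no induction hypothesis is needed there.
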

\begin{proof}
  Routine induction on $\tyAc'$ for the first two rules and $\tyAc''$ for the third.
\end{proof}

\begin{lemma}[Order Inversion]
  \label{lem:apdx:log-sem:subtyping-inv}
  \leavevmode
  \begin{enumerate}
    \item If $ \tyapx{  \tmTopC  }{ \tyAc } $ then $\tyAc  \ottsym{=}   \tmTopC $.
    \item If $ \tyapx{ \tyAc }{  \tmBotC  } $ then $\tyAc  \ottsym{=}   \tmBotC $.
    \item If $ \tyapx{ \tyAc }{ \tyAv } $ then $\tyAc  \ottsym{=}   \tmBotC $
      or $\tyAc \in \synvtype$.
    \item If $ \tyapx{ \tyAv }{ \tyAc } $ then $\tyAc  \ottsym{=}   \tmTopC $
      or $\tyAc \in \synvtype$.
    \item If $ \tyapx{ \tyAv }{ \ottnt{s} } $ then either $\tyAv  \ottsym{=}   \valBotV $ or
      $\tyAv  \ottsym{=}  \ottnt{s'}$ for some $\ottnt{s'}  \leq  \ottnt{s}$.
    \item If $ \tyapx{ \ottnt{s} }{ \tyAv } $ then $\tyAv  \ottsym{=}  \ottnt{s'}$ where $\ottnt{s}  \leq  \ottnt{s'}$.
    \item If $ \tyapx{ \tyAv }{ \ottsym{(}  \tyAv'_{{\mathrm{1}}}  \ottsym{,}  \tyAv'_{{\mathrm{2}}}  \ottsym{)} } $ then either $\tyAv  \ottsym{=}   \valBotV $ or
      $\tyAv  \ottsym{=}  \ottsym{(}  \tyAv_{{\mathrm{1}}}  \ottsym{,}  \tyAv_{{\mathrm{2}}}  \ottsym{)}$ where we have both $ \tyapx{ \tyAv_{{\mathrm{1}}} }{ \tyAv'_{{\mathrm{1}}} } $ and
      $ \tyapx{ \tyAv_{{\mathrm{2}}} }{ \tyAv'_{{\mathrm{2}}} } $.
    \item If $ \tyapx{ \ottsym{(}  \tyAv_{{\mathrm{1}}}  \ottsym{,}  \tyAv_{{\mathrm{2}}}  \ottsym{)} }{ \tyAv' } $ then $\tyAv'  \ottsym{=}  \ottsym{(}  \tyAv'_{{\mathrm{1}}}  \ottsym{,}  \tyAv'_{{\mathrm{2}}}  \ottsym{)}$ where
      $ \tyapx{ \tyAv_{{\mathrm{1}}} }{ \tyAv'_{{\mathrm{1}}} } $ and $ \tyapx{ \tyAv_{{\mathrm{2}}} }{ \tyAv'_{{\mathrm{2}}} } $.
    \item If $ \tyapx{ \tyAv }{ \ottsym{\{}  \tyAv'_{\ottmv{j}}  \ottsym{\mbox{$\mid$}}   \mathin{ \mathit{j} }{ \mathit{J} }   \ottsym{\}} } $ then either $\tyAv  \ottsym{=}   \valBotV $ or
      $\tyAv  \ottsym{=}  \ottsym{\{}  \tyAv_{\ottmv{i}}  \ottsym{\mbox{$\mid$}}   \mathin{ \mathit{i} }{ \mathit{I} }   \ottsym{\}}$ where we have
      $ \mathforall{  \mathin{ \mathit{i} }{ \mathit{I} }  }{  \mathexists{  \mathin{ \mathit{j} }{ \mathit{J} }  }{  \tyapx{ \tyAv_{\ottmv{i}} }{ \tyAv'_{\ottmv{j}} }  }  } $.
    \item If $ \tyapx{ \ottsym{\{}  \tyAv_{\ottmv{i}}  \ottsym{\mbox{$\mid$}}   \mathin{ \mathit{i} }{ \mathit{I} }   \ottsym{\}} }{ \tyAv' } $ then $\tyAv'  \ottsym{=}  \ottsym{\{}  \tyAv'_{\ottmv{j}}  \ottsym{\mbox{$\mid$}}   \mathin{ \mathit{j} }{ \mathit{J} }   \ottsym{\}}$
      where $ \mathforall{  \mathin{ \mathit{i} }{ \mathit{I} }  }{  \mathexists{  \mathin{ \mathit{j} }{ \mathit{J} }  }{  \tyapx{ \tyAv_{\ottmv{i}} }{ \tyAv'_{\ottmv{j}} }  }  } $.
    \item If $ \tyapx{ \tyAv }{  \tyFun{  \mathin{ \mathit{j} }{ \mathit{J} }  }{ \tyAv'_{\ottmv{j}} }{ \tyAc'_{\ottmv{j}} }  } $ then either $\tyAv  \ottsym{=}   \valBotV $ or
      $\tyAv  \ottsym{=}   \tyFun{  \mathin{ \mathit{i} }{ \mathit{I} }  }{ \tyAv_{\ottmv{i}} }{ \tyAc_{\ottmv{i}} } $
      where \[   \mathforall{  \mathin{ \mathit{i} }{ \mathit{I} }  }{  \mathexists{  \mathit{J'}  \subseteq  \mathit{J}  }{  \tyapx{  \tyBigJoin{  \mathin{ \mathit{j} }{ \mathit{J'} }  }{ \tyAv'_{\ottmv{j}} }  }{  \tyBigJoin{  \mathin{ \mathit{i} }{ \mathit{I} }  }{ \tyAv_{\ottmv{i}} }  }  }  }   \textand   \tyapx{  \tyBigJoin{  \mathin{ \mathit{i} }{ \mathit{I} }  }{ \tyAc_{\ottmv{i}} }  }{  \tyBigJoin{  \mathin{ \mathit{j} }{ \mathit{J'} }  }{ \tyAc'_{\ottmv{j}} }  }   \]
    \item If $ \tyapx{  \tyFun{  \mathin{ \mathit{i} }{ \mathit{I} }  }{ \tyAv_{\ottmv{i}} }{ \tyAc_{\ottmv{i}} }  }{ \tyAv' } $ then $\tyAv'  \ottsym{=}   \tyFun{  \mathin{ \mathit{j} }{ \mathit{J} }  }{ \tyAv'_{\ottmv{j}} }{ \tyAc'_{\ottmv{j}} } $
      where \[   \mathforall{  \mathin{ \mathit{i} }{ \mathit{I} }  }{  \mathexists{  \mathit{J'}  \subseteq  \mathit{J}  }{  \tyapx{  \tyBigJoin{  \mathin{ \mathit{j} }{ \mathit{J'} }  }{ \tyAv'_{\ottmv{j}} }  }{  \tyBigJoin{  \mathin{ \mathit{i} }{ \mathit{I} }  }{ \tyAv_{\ottmv{i}} }  }  }  }   \textand   \tyapx{  \tyBigJoin{  \mathin{ \mathit{i} }{ \mathit{I} }  }{ \tyAc_{\ottmv{i}} }  }{  \tyBigJoin{  \mathin{ \mathit{j} }{ \mathit{J'} }  }{ \tyAc'_{\ottmv{j}} }  }   \]
  \end{enumerate}
\end{lemma}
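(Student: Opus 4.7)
The plan is to prove all thirteen parts by case analysis on the derivation of the hypothesis, exploiting that each inference rule defining $\sqsubseteq$ in \Fref{fig:formulae} is largely determined by the outermost shape of one or both of its sides. The parts are independent of one another; it is cleanest to dispatch the boundary cases first, then the shape-matching pairs, then the function cases.

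The boundary cases (parts 1--4) are nearly immediate. For part 1, \rulename{TApxTop} is the only rule whose left-hand side is $\tmTopC$, and its conclusion forces the right-hand side to be $\tmTopC$; dually, for part 2, \rulename{TApxBot} is the only rule whose right-hand side is $\tmBotC$. For parts 3 and 4, a sweep through the rule list shows that every rule other than \rulename{TApxBot} and \rulename{TApxTop} relates two value formulae, so once one side is known to lie in $\synvtype$, the other must be either a value formula or the appropriate one of $\tmBotC$, $\tmTopC$.

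The shape-matching pairs (parts 5--11) follow a uniform pattern. For each shape of value formula (symbol, pair, set) on the right, the only rules that can apply are \rulename{TApxBotV} (yielding the $\valBotV$ alternative on the left) and the corresponding structural rule \rulename{TApxSym}, \rulename{TApxPair}, or \rulename{TApxSet}, whose premise is exactly the structural conclusion demanded by the statement. The versions with the shape on the left (parts 6, 8, 10) are even simpler because \rulename{TApxBotV} does not apply; only the matching structural rule is possible and its premise again supplies what we need.

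The function cases (12, 13) are where the real work lies and will be the main obstacle. Part 13 is essentially a direct reading of the \rulename{TApxFun} premise: inverting it produces, for each $i \in I$, some $J'_i \subseteq J$ satisfying $\bigsqcup_{j \in J'_i} \tau'_j \sqsubseteq \tau_i$ and $\phi_i \sqsubseteq \bigsqcup_{j \in J'_i} \phi'_j$. Part 12 requires a bit more care: after handling the $\valBotV$ alternative via \rulename{TApxBotV}, the function alternative demands joins over all of $I$ on the opposite side of each inequality, which \rulename{TApxFun}'s premise does not provide pointwise. I would choose the single witness $J' = \bigcup_{i \in I} J'_i$ (the same choice for every outer $i$, which is stronger than the statement requires); \Lref{lem:log-sem:join-lub} together with the per-$i$ bounds then yields both $\bigsqcup_{j \in J'} \tau'_j \sqsubseteq \bigsqcup_{i \in I} \tau_i$ and $\bigsqcup_{i \in I} \phi_i \sqsubseteq \bigsqcup_{j \in J'} \phi'_j$, by monotonicity of joins and the fact that each $J'_i \subseteq J'$.
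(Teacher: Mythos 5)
Your proposal is correct and takes essentially the same approach as the paper, whose entire proof is the one-line remark that each part is a straightforward case analysis on the rules defining the order on formulae; your extra care in the function cases (choosing the single witness $J' = \bigcup_{i} J'_i$ and discharging the joins over $I$ via \Lref{lem:log-sem:join-lub} rather than via transitivity, which is not yet available) is exactly the right way to bridge the gap between the pointwise \rulename{TApxFun} premise and the stated conclusion. One cosmetic slip: the lemma has twelve parts, not thirteen, so your ``parts 12 and 13'' are really parts 11 and 12.
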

\begin{proof}
  Each part is a straightforward case analysis on the rules defining the order
  on formulae.
\end{proof}

\begin{lemma}
  \label{lem:apdx:log-sem:formula-op-mono}
  The following rules are admissible:
  \begin{mathpar}
      \inferrule
          { \tyapx{ \tyAc_{{\mathrm{1}}} }{ \tyAc'_{{\mathrm{1}}} }  \\
             \tyapx{ \tyAc_{{\mathrm{2}}} }{ \tyAc'_{{\mathrm{2}}} } }
          { \tyapx{  \tyPairOp{ \tyAc_{{\mathrm{1}}} }{ \tyAc_{{\mathrm{2}}} }  }{  \tyPairOp{ \tyAc'_{{\mathrm{1}}} }{ \tyAc'_{{\mathrm{2}}} }  } }

      \inferrule
          { \tyapx{ \tyAc }{ \tyAc' } }
          { \tyapx{  \tySng{ \tyAc }  }{  \tySng{ \tyAc' }  } }
    \end{mathpar}
\end{lemma}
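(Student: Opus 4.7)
Both admissibility claims will fall out of routine case analysis using Order Inversion (\Lref{lem:apdx:log-sem:subtyping-inv}), which was stated just above and which precisely tells us the shape that an upper bound of a formula of a particular shape must have. No induction is required; the proof just needs to walk through the cases of each lifting metafunction.

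For the singleton lifting rule, I would split on the shape of $\tyAc$. If $\tyAc =  \tmBotC $ then $ \tySng{ \tyAc }  =  \tmBotC $ and the goal follows from \rulename{TApxBot}. If $\tyAc =  \tmTopC $, then part (1) of Order Inversion applied to $ \tyapx{ \tyAc }{ \tyAc' } $ forces $\tyAc' =  \tmTopC $ as well, so both sides reduce to $ \tmTopC $ and reflexivity finishes the case. Otherwise $\tyAc = \tyAv$ for some value formula; part (4) tells us that $\tyAc'$ is either $ \tmTopC $ (in which case the right-hand side is $ \tmTopC $ and \rulename{TApxTop} applies) or a value formula $\tyAv'$ with $ \tyapx{ \tyAv }{ \tyAv' } $, from which $ \tyapx{ \ottsym{\{}  \tyAv  \ottsym{\}} }{ \ottsym{\{}  \tyAv'  \ottsym{\}} } $ follows by \rulename{TApxSet}.

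For pair lifting, I would do a nested case analysis on $\tyAc_{{\mathrm{1}}}$ first, then, in the value sub-case, on $\tyAc_{{\mathrm{2}}}$. The exceptional cases $\tyAc_{{\mathrm{1}}} =  \tmBotC $ and $\tyAc_{{\mathrm{1}}} =  \tmTopC $ dispatch immediately: the former makes the left-hand side $ \tmBotC $ and uses \rulename{TApxBot}, while the latter uses Order Inversion to force $\tyAc'_{{\mathrm{1}}} =  \tmTopC $ and hence both sides to $ \tmTopC $. When $\tyAc_{{\mathrm{1}}} = \tyAv_{{\mathrm{1}}}$, Order Inversion (part 4) gives $\tyAc'_{{\mathrm{1}}} =  \tmTopC $ or $\tyAc'_{{\mathrm{1}}} = \tyAv'_{{\mathrm{1}}}$ with $ \tyapx{ \tyAv_{{\mathrm{1}}} }{ \tyAv'_{{\mathrm{1}}} } $; the former sends the right-hand side to $ \tmTopC $ immediately. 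In the latter sub-case I repeat the analysis on $\tyAc_{{\mathrm{2}}}$: the $ \tmBotC $ and $ \tmTopC $ sub-sub-cases resolve as before, and in the remaining sub-sub-case both liftings produce genuine pairs and \rulename{TApxPair} applies with the premises $ \tyapx{ \tyAv_{{\mathrm{1}}} }{ \tyAv'_{{\mathrm{1}}} } $ and $ \tyapx{ \tyAv_{{\mathrm{2}}} }{ \tyAv'_{{\mathrm{2}}} } $ already in hand.

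There is no real obstacle here; the only point that requires a moment of care is the asymmetry in the definition of pair lifting, which prioritizes $\tyAc_{{\mathrm{1}}}$'s exceptional tag over $\tyAc_{{\mathrm{2}}}$'s. Monotonicity is not threatened because Order Inversion ensures that upper bounds of value formulae cannot regress to $ \tmBotC $, while $ \tmTopC $-ness is propagated strictly upward, so the case dispatches on both sides align correctly.
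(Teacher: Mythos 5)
Your proof is correct and takes exactly the approach the paper uses: the paper's proof is the one-line remark that both rules follow ``by straightforward case analysis on formulae using'' the Order Inversion lemma, which is precisely the case analysis you carry out in detail. Your elaboration of the cases, including the handling of the asymmetry in the pair lifting, checks out against the definitions.
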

\begin{proof}
  Both rules can be seen admissible by straightforward
  case analysis on formulae using \Lref{lem:apdx:log-sem:subtyping-inv}.
\end{proof}

\begin{lemma}[Size of Joins]
  \label{lem:apdx:log-sem:join-size}
  \leavevmode
  \begin{enumerate}
  \item $\ottsym{\mbox{$\mid$}}   \tyPairOp{ \tyAc }{ \tyAc' }   \ottsym{\mbox{$\mid$}} \leq \max \{\ottsym{\mbox{$\mid$}}  \tyAc  \ottsym{\mbox{$\mid$}}, \ottsym{\mbox{$\mid$}}  \tyAc'  \ottsym{\mbox{$\mid$}}\} + 1$
  \item $\ottsym{\mbox{$\mid$}}   \tyJoin{ \tyAc }{ \tyAc' }   \ottsym{\mbox{$\mid$}} \leq \max\{\ottsym{\mbox{$\mid$}}  \tyAc  \ottsym{\mbox{$\mid$}}, \ottsym{\mbox{$\mid$}}  \tyAc'  \ottsym{\mbox{$\mid$}}\}$
  \end{enumerate}
\end{lemma}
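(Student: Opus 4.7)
The plan is to prove Part (1) by a direct case analysis and then derive Part (2) by induction, leveraging Part (1) in the one recursive case. Because $\tyPairOp{-}{-}$ is not itself defined recursively, Part (1) needs no induction.

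For Part (1), I would simply inspect the three clauses in the definition of $\tyPairOp{\tyAc}{\tyAc'}$. In the first two clauses the result is $\tmTopC$ or $\tmBotC$, whose size is $1$, so the bound is immediate. In the third clause, where $\tyAc = \tyAv_1$ and $\tyAc' = \tyAv_2$, the result is $(\tyAv_1, \tyAv_2)$, whose size is $\max\{|\tyAv_1|, |\tyAv_2|\} + 1 = \max\{|\tyAc|, |\tyAc'|\} + 1$, matching the bound exactly.

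For Part (2), I would induct on $\max\{|\tyAc|, |\tyAc'|\}$ and case-analyze on the definition of $\tyJoin{\tyAc}{\tyAc'}$. The clauses where one side is $\tmBotC$, $\valBotV$, or where the result is $\tmTopC$ are trivial since the result is either one of the inputs or has size $1$. The symbol clause is also immediate because $\ottnt{s_{{\mathrm{1}}}}  \sqcup  \ottnt{s_{{\mathrm{2}}}}$ is again a symbol of size $1$. The set clause produces $\{\tyAv_{\ottmv{i}} \mid \mathit{i} \in \mathit{I_1} \cup \mathit{I_2}\}$, whose size is $\max_{\mathit{i} \in I_1 \cup I_2} |\tyAv_{\ottmv{i}}| + 1 = \max\{|\tyAc|, |\tyAc'|\}$ by the definition of formula size. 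The function clause is analogous, since the clause bodies are not themselves joined---we simply union indexing sets.

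The only genuinely substantive case, and thus the main obstacle, is the pair clause, where $\tyAc = (\tyAv'_1, \tyAv''_1)$ and $\tyAc' = (\tyAv'_2, \tyAv''_2)$ and the result is $\tyPairOp{\tyJoin{\tyAv'_1}{\tyAv'_2}}{\tyJoin{\tyAv''_1}{\tyAv''_2}}$. Here I would apply Part (1) to reduce the bound to $\max\{|\tyJoin{\tyAv'_1}{\tyAv'_2}|, |\tyJoin{\tyAv''_1}{\tyAv''_2}|\} + 1$, then apply the induction hypothesis of Part (2) twice (the subformulae are strictly smaller than the pairs) to bound this by $\max\{|\tyAv'_1|, |\tyAv'_2|, |\tyAv''_1|, |\tyAv''_2|\} + 1$. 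A small algebraic rearrangement shows this equals $\max\{|\tyAc|, |\tyAc'|\}$, completing the case. The delicate point is ensuring that the $+1$ introduced by Part (1) is absorbed by the $+1$ already present in the size of each pair formula rather than adding to it---this is precisely what makes the lemma tight enough to support the ``induction on formula size'' proof strategy used elsewhere in the development.
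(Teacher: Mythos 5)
Your proposal is correct and follows essentially the same route as the paper: a direct case analysis on the clauses of the pair-lifting operation for Part (1), followed by induction on $\max\{\ottsym{\mbox{$\mid$}}  \tyAc  \ottsym{\mbox{$\mid$}}, \ottsym{\mbox{$\mid$}}  \tyAc'  \ottsym{\mbox{$\mid$}}\}$ with a case analysis on the join clauses for Part (2), invoking Part (1) in the pair case. Your accounting of how the $+1$ from Part (1) is absorbed by the $+1$ in the size of the pair formulae is exactly the point that makes the bound go through.
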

\begin{proof}
  The first part can be verified by a routine case analysis on $\tyAc$ and
  $\tyAc'$.
  The second part proceeds by induction on
  $\max \{ \ottsym{\mbox{$\mid$}}  \tyAc  \ottsym{\mbox{$\mid$}}, \ottsym{\mbox{$\mid$}}  \tyAc'  \ottsym{\mbox{$\mid$}} \}$ with a straightforward case analysis on
  $\tyAc$ and $\tyAc'$.
  The pair case uses the first part.
\end{proof}

\begin{lemma}[Reflexivity]
  \label{lem:apdx:log-sem:subtyping-refl}
  For all $\tyAc$, we have $ \tyapx{ \tyAc }{ \tyAc } $.
\end{lemma}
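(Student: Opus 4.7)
The plan is to proceed by induction on the size $|\phi|$ of the formula (in the sense of Lemma~\ref{lem:apdx:log-sem:join-size}), dispatching each syntactic shape by the corresponding introduction rule for $\sqsubseteq$. For $\bot$, $\top$, and $\bot_v$ the rules \textsc{TApxBot}, \textsc{TApxTop}, and \textsc{TApxBotV} apply immediately. For a symbol $s$, \textsc{TApxSym} applies since $s \leq s$ by idempotence of $\sqcup$ on $\symset$. For a pair $(\tau_1,\tau_2)$ the rule \textsc{TApxPair} reduces the goal to reflexivity for $\tau_1$ and $\tau_2$, which are strictly smaller.

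For a set formula $\{\tau_i \mid i \in I\}$, the rule \textsc{TApxSet} requires, for each $i \in I$, some $j \in I$ with $\tau_i \sqsubseteq \tau_j$. I would simply take $j = i$ and invoke the induction hypothesis on $\tau_i$, which is strictly smaller than the set.

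The function case is the only one requiring any thought. To show $\bigvee_{i \in I}(\tau_i \to \phi_i) \sqsubseteq \bigvee_{i \in I}(\tau_i \to \phi_i)$ via \textsc{TApxFun}, for each $i \in I$ I would pick the singleton $J' = \{i\}$. The two obligations then become $\bigvee_{j \in J'} \tau_j = \tau_i \sqsubseteq \tau_i$ and $\phi_i \sqsubseteq \bigvee_{j \in J'} \phi_j = \phi_i$, both of which follow from the induction hypothesis since $\tau_i$ and $\phi_i$ are strict subformulae of the overall function formula. There is no real obstacle here; the only thing to keep honest is that induction is on size rather than structure, since \textsc{TApxFun} in general refers to joins of subformulae, but for reflexivity the singleton choice of $J'$ avoids needing Lemma~\ref{lem:apdx:log-sem:join-size} at all.
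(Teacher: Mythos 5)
Your proof is correct and matches the paper's approach, which simply states ``routine induction on $\tyAc$''; your case analysis (including the singleton choice of $J'$ in the function case, which keeps the argument purely structural) is exactly the intended routine argument.
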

\begin{proof}
  Routine induction on $\tyAc$.
\end{proof}

\begin{lemma}[Transitivity]
  If $ \tyapx{ \tyAc }{ \tyAc' } $ and $ \tyapx{ \tyAc' }{ \tyAc'' } $ then $ \tyapx{ \tyAc }{ \tyAc'' } $.
\end{lemma}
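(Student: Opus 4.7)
The plan is to proceed by strong induction on the size $|\tyAc'|$ of the middle formula, rather than structural induction on it. This matters because the function case of the streaming order (rule \ottdrulename{TApxFun}) involves joins of subformulae of $\tyAc'$, and we will need Lemma~\ref{lem:log-sem:join-size} to certify that those joins are bounded in size by $|\tyAc'|$ so the induction hypothesis applies at the shapes we actually need to chain.

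I would case-analyze $\tyAc'$, and in each case apply Order Inversion (Lemma~\ref{lem:apdx:log-sem:subtyping-inv}) to both $\tyapx{\tyAc}{\tyAc'}$ and $\tyapx{\tyAc'}{\tyAc''}$ to pin down the possible shapes of $\tyAc$ and $\tyAc''$. The $\tmBotC$ and $\tmTopC$ cases close immediately: inversion forces $\tyAc = \tmBotC$ or $\tyAc'' = \tmTopC$ respectively, and reflexivity (Lemma~\ref{lem:apdx:log-sem:subtyping-refl}) finishes. The $\valBotV$ case forks only in the extreme formulae and is similarly direct. The symbol case reduces to transitivity of $\leq$ on symbols. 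The pair case pairs components on either side and invokes the induction hypothesis at the two subformulae of $\tyAc'$, which are strictly smaller. The set case chases the existential witnesses on each side: for every element of $\tyAc$, composing the two witnessing maps yields an element of $\tyAc''$ whose subformula dominates it, again via the induction hypothesis at a subformula of $\tyAc'$.

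The hard case is the function case, where $\tyAc' = \tyFun{k \in K}{\tyBv_k}{\tyBc_k}$. Inversion produces function shapes $\tyAc = \tyFun{i \in I}{\tyAv_i}{\tyAc_i}$ and $\tyAc'' = \tyFun{m \in M}{\tyAv''_m}{\tyAc''_m}$, together with two families of witnessing subsets: for each $i \in I$ a subset $K'_i \subseteq K$ with $\tyapx{\bigvee_{k \in K'_i} \tyBv_k}{\tyAv_i}$ and $\tyapx{\tyAc_i}{\bigvee_{k \in K'_i} \tyBc_k}$, and for each $k \in K$ a subset $M'_k \subseteq M$ with $\tyapx{\bigvee_{m \in M'_k} \tyAv''_m}{\tyBv_k}$ and $\tyapx{\tyBc_k}{\bigvee_{m \in M'_k} \tyBc''_m}$. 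To apply \ottdrulename{TApxFun} for $\tyapx{\tyAc}{\tyAc''}$, I would take, for each $i \in I$, the combined subset $M^\star_i = \bigcup_{k \in K'_i} M'_k \subseteq M$. The two required chains $\tyapx{\bigvee_{m \in M^\star_i} \tyAv''_m}{\tyAv_i}$ and $\tyapx{\tyAc_i}{\bigvee_{m \in M^\star_i} \tyAc''_m}$ then follow by transitivity at the intermediate joins $\bigvee_{k \in K'_i} \tyBv_k$ and $\bigvee_{k \in K'_i} \tyBc_k$, each of which has size at most $|\tyAc'|$ by Lemma~\ref{lem:log-sem:join-size}, so the induction hypothesis is available. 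Assembling the combined upper bounds on the $\tyAc''_m$ side uses Lemma~\ref{lem:apdx:log-sem:join-lub}.

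The main obstacle is precisely this bookkeeping in the function case: the witnessing subsets must be composed correctly, and each transitivity step must be justified at a join whose size is strictly controlled by $|\tyAc'|$. The size-based induction together with Lemma~\ref{lem:log-sem:join-size} is exactly what makes this work; without it, the naive structural induction would fail because the mediating joins are not subterms of $\tyAc'$.
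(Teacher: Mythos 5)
Your proposal is correct and follows essentially the same route as the paper's proof: induction on the size $\ottsym{\mbox{$\mid$}}  \tyAc'  \ottsym{\mbox{$\mid$}}$ of the middle formula with case analysis on its shape, inversion of both premises, and reliance on the Size of Joins lemma together with the join least-upper-bound lemma to handle the mediating joins in the function case. The only nitpick is that the mediating joins $\bigvee_{k \in K'_i} \tyBv_k$ and $\bigvee_{k \in K'_i} \tyBc_k$ have size \emph{strictly} less than $\ottsym{\mbox{$\mid$}}  \tyAc'  \ottsym{\mbox{$\mid$}}$ (not merely ``at most''), which is what licenses the induction hypothesis.
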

\begin{proof}
  Induction on $\ottsym{\mbox{$\mid$}}  \tyAc'  \ottsym{\mbox{$\mid$}}$, performing case analysis on $\tyAc'$. In each
  case, we invert both premises with \Lref{lem:apdx:log-sem:subtyping-inv}.
  Due to the use of the join operator in the definition of the streaming order,
  we use \Lref{lem:apdx:log-sem:join-size} and \Lref{lem:apdx:log-sem:join-lub} in the
  function case.
\end{proof}

\begin{lemma}
  \label{lem:apdx:log-sem:fun-join-dist}
  $ \tyapx{  \tyFunOne{ \tyAv }{ \ottsym{(}   \tyJoin{ \tyAc }{ \tyAc' }   \ottsym{)} }  }{  \tyJoinSyn{ \ottsym{(}   \tyFunOne{ \tyAv }{ \tyAc }   \ottsym{)} }{ \ottsym{(}   \tyFunOne{ \tyAv }{ \tyAc' }   \ottsym{)} }  } $
\end{lemma}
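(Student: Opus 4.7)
The plan is to invoke the \rulename{TApxFun} rule directly and exhibit the witnessing subset of the right-hand clause set. First I would rewrite the right-hand side using the function case of the formula join metafunction from \Fref{fig:formula-ops}: since the two function formulae $\tyFunOne{\tyAv}{\tyAc}$ and $\tyFunOne{\tyAv}{\tyAc'}$ share the identical domain $\tyAv$, their syntactic join unfolds as the single function formula $\tyFun{i \in \{1,2\}}{\tyAv_i}{\tyAc_i}$ with $\tyAv_1 = \tyAv_2 = \tyAv$, $\tyAc_1 = \tyAc$, and $\tyAc_2 = \tyAc'$. The left-hand side is already a function formula with a singleton index set $I = \{*\}$, input threshold $\tyAv$, and output $\tyJoin{\tyAc}{\tyAc'}$.

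Next I would apply \rulename{TApxFun}. There is only one index $*$ on the left to discharge, so I need to produce a single $J' \subseteq \{1,2\}$ satisfying the two premises. I would take $J' = \{1,2\}$. The first premise then reduces to $\tyapx{\tyJoin{\tyAv}{\tyAv}}{\tyAv}$, which follows from \Lref{lem:log-sem:join-lub} using reflexivity of $\tyapx{\tyAv}{\tyAv}$ on both sides. The second premise reduces to $\tyapx{\tyJoin{\tyAc}{\tyAc'}}{\tyJoin{\tyAc}{\tyAc'}}$, which is immediate from reflexivity.

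I do not anticipate any real obstacle here — once the notational unfolding of the right-hand side is made explicit, the proof is a two-line check against \rulename{TApxFun}. The only minor subtlety is remembering that the two premises of \rulename{TApxFun} point in opposite directions (inputs are compared contravariantly and outputs covariantly), but choosing $J'$ to be all of $J$ trivially satisfies the contravariant side because the shared domain $\tyAv$ is pointwise self-comparable.
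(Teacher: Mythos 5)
Your proof is correct and follows essentially the same route as the paper's: a direct application of \rulename{TApxFun} with $J'$ taken to be the full index set, discharging the contravariant premise via \Lref{lem:log-sem:join-lub} and reflexivity. (The only cosmetic quibble is that the $\tyJoinSynSym$ on the right-hand side is already the syntactic inline notation for a two-clause function formula rather than an application of the $\tyJoinSym$ metafunction, but your unfolding lands on the same formula either way.)
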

\begin{proof}
  Straightforward application of \rulename{TApxFun},
  \Lref{lem:apdx:log-sem:join-lub},
  and \Lref{lem:apdx:log-sem:subtyping-refl}.
\end{proof}

\subsection{Formula Assignment}

\begin{lemma}[Expression Formula Assignment Inversion]
  Suppose $ \judg{ \Gamma }{ \tmMeta }{ \tyAc } $. Then either $\tyAc  \ottsym{=}   \tmBotC $ or all of the following hold.
  \begin{enumerate}
    \item If $\tmMeta  \ottsym{=}   \tmTopC $ then $\tyAc  \ottsym{=}   \tmTopC $.
    \item If $\tmMeta  \ottsym{=}   \tmJoin{ \tmMeta_{{\mathrm{1}}} }{ \tmMeta_{{\mathrm{2}}} } $ then $ \tyapx{ \tyAc }{  \tyJoin{ \tyAc_{{\mathrm{1}}} }{ \tyAc_{{\mathrm{2}}} }  } $ where
      $ \judg{ \Gamma }{ \tmMeta_{{\mathrm{1}}} }{ \tyAc_{{\mathrm{1}}} } $ and $ \judg{ \Gamma }{ \tmMeta_{{\mathrm{2}}} }{ \tyAc_{{\mathrm{2}}} } $.
    \item If $\tmMeta  \ottsym{=}   \tmPair{ \tmMeta_{{\mathrm{1}}} }{ \tmMeta_{{\mathrm{2}}} } $ then $ \judg{ \Gamma }{ \tmMeta_{{\mathrm{1}}} }{ \tyAc_{{\mathrm{1}}} } $ and
      $ \judg{ \Gamma }{ \tmMeta_{{\mathrm{2}}} }{ \tyAc_{{\mathrm{2}}} } $ and $ \tyapx{ \tyAc }{  \tyPairOp{ \tyAc_{{\mathrm{1}}} }{ \tyAc_{{\mathrm{2}}} }  } $.
    \item If $\tmMeta  \ottsym{=}  \ottsym{\{}  \tmMeta_{\ottmv{i}}  \ottsym{\mbox{$\mid$}}   \mathin{ \mathit{i} }{ \mathit{I} }   \ottsym{\}}$ then $\tyAc  \ottsym{=}   \tyJoin{ \ottsym{\{\}} }{  \tyBigJoin{  \mathin{ \mathit{i} }{ \mathit{I} }  }{  \tySng{ \tyAc_{\ottmv{i}} }  }  } $
      where $ \mathforall{  \mathin{ \mathit{i} }{ \mathit{I} }  }{  \judg{ \Gamma }{ \tmMeta_{\ottmv{i}} }{ \tyAc_{\ottmv{i}} }  } $.
    \item If $\tmMeta  \ottsym{=}  \ottkw{let} \, \ottnt{s}  \ottsym{=}  \tmMeta_{{\mathrm{1}}} \, \ottkw{in} \, \tmMeta_{{\mathrm{2}}}$ then either
      \begin{enumerate}
      \item $ \judg{ \Gamma }{ \tmMeta_{{\mathrm{1}}} }{ \ottnt{s} } $ and $ \judg{ \Gamma }{ \tmMeta_{{\mathrm{2}}} }{ \tyAc } $, or
      \item $\tyAc  \ottsym{=}   \tmTopC $ and $ \judg{ \Gamma }{ \tmMeta_{{\mathrm{1}}} }{  \tmTopC  } $.
      \end{enumerate}
    \item If $\tmMeta  \ottsym{=}  \ottkw{let} \, \ottsym{(}  \mathit{x_{{\mathrm{1}}}}  \ottsym{,}  \mathit{x_{{\mathrm{2}}}}  \ottsym{)}  \ottsym{=}  \tmMeta' \, \ottkw{in} \, \tmMeta''$ then either
      \begin{enumerate}
      \item $ \judg{ \Gamma }{ \tmMeta' }{ \ottsym{(}  \tyAv_{{\mathrm{1}}}  \ottsym{,}  \tyAv_{{\mathrm{2}}}  \ottsym{)} } $ and
        $ \judg{ \Gamma  \ottsym{,}  \mathit{x_{{\mathrm{1}}}}  \ottsym{:}  \tyAv_{{\mathrm{1}}}  \ottsym{,}  \mathit{x_{{\mathrm{2}}}}  \ottsym{:}  \tyAv_{{\mathrm{2}}} }{ \tmMeta'' }{ \tyAc } $, or
      \item $\tyAc  \ottsym{=}   \tmTopC $ and $ \judg{ \Gamma }{ \tmMeta' }{  \tmTopC  } $.
      \end{enumerate}
    \item If $\tmMeta  \ottsym{=}   \tmBigJoin{ \mathit{x} }{ \tmMeta_{{\mathrm{1}}} }{ \tmMeta_{{\mathrm{2}}} } $ then either
      \begin{enumerate}
      \item $\tyAc  \ottsym{=}   \tyBigJoin{  \mathin{ \mathit{i} }{ \mathit{I} }  }{ \tyAc_{\ottmv{i}} } $ where $ \judg{ \Gamma }{ \tmMeta_{{\mathrm{1}}} }{ \ottsym{\{}  \tyAv_{\ottmv{i}}  \ottsym{\mbox{$\mid$}}   \mathin{ \mathit{i} }{ \mathit{I} }   \ottsym{\}} } $
        and $ \mathforall{  \mathin{ \mathit{i} }{ \mathit{I} }  }{  \judg{ \Gamma  \ottsym{,}  \mathit{x}  \ottsym{:}  \tyAv_{\ottmv{i}} }{ \tmMeta_{{\mathrm{2}}} }{ \tyAc_{\ottmv{i}} }  } $, or
      \item $\tyAc  \ottsym{=}   \tmTopC $ and $ \judg{ \Gamma }{ \tmMeta_{{\mathrm{1}}} }{  \tmTopC  } $.
      \end{enumerate}
    \item If $\tmMeta  \ottsym{=}   \tmApp{ \tmMeta_{{\mathrm{1}}} }{ \tmMeta_{{\mathrm{2}}} } $ then either
      \begin{enumerate}
      \item $ \judg{ \Gamma }{ \tmMeta_{{\mathrm{1}}} }{  \tyFunOne{ \tyAv }{ \tyAc }  } $ and $ \judg{ \Gamma }{ \tmMeta_{{\mathrm{2}}} }{ \tyAv } $,
      \item $\tyAc  \ottsym{=}   \tmTopC $ and $ \judg{ \Gamma }{ \tmMeta_{{\mathrm{1}}} }{  \tmTopC  } $, or
      \item $\tyAc  \ottsym{=}   \tmTopC $ and $ \judg{ \Gamma }{ \tmMeta_{{\mathrm{1}}} }{ \tyAv } $ and $ \judg{ \Gamma }{ \tmMeta_{{\mathrm{2}}} }{  \tmTopC  } $.
      \end{enumerate}
  \end{enumerate}
\end{lemma}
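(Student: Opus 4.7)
The plan is to proceed by induction on the derivation $\judg{\Gamma}{\tmMeta}{\tyAc}$, considering for each case of $\tmMeta$ which rules could have concluded the derivation. For each syntactic form of $\tmMeta$, only a handful of rules apply: the direct introduction/elimination rule, the corresponding $\tmTopC$-propagation rule (when present), the $\rulename{TBot}$ rule (which yields $\tyAc = \tmBotC$, the escape hatch of the statement), and $\rulename{TSub}$.

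The base cases $\rulename{TBot}$ and the direct rules are immediate: the direct rule gives exactly the premises listed in the conclusion of the corresponding clause of the lemma, and $\rulename{TBot}$ lands in the $\tyAc = \tmBotC$ alternative. The $\tmTopC$-propagation rules (\rulename{TLetSymTop}, \rulename{TLetPairTop}, \rulename{TAppLTop}, \rulename{TAppRTop}, \rulename{TForInTop}) give precisely the second disjunct in each case where such an alternative is listed.

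The main obstacle is the $\rulename{TSub}$ case: if the derivation ends with $\tyapx{\tyAc}{\tyAc'}$ and $\judg{\Gamma}{\tmMeta}{\tyAc'}$, the induction hypothesis gives a structured decomposition of the derivation for $\tyAc'$, and I must transport it down the streaming order to $\tyAc$. I will handle this by cases on $\tmMeta$, applying the appropriate part of Order Inversion (\Lref{lem:apdx:log-sem:subtyping-inv}) to $\tyapx{\tyAc}{\tyAc'}$. For instance, in the pair case, the induction hypothesis yields $\tyAc_1, \tyAc_2$ with $\tyapx{\tyAc'}{\tyPairOp{\tyAc_1}{\tyAc_2}}$, so transitivity gives $\tyapx{\tyAc}{\tyPairOp{\tyAc_1}{\tyAc_2}}$ directly. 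For set literals the induction hypothesis gives an exact equality, so I would weaken the claim statement for this case to an inequality under $\tyapx{\cdot}{\cdot}$ (or absorb the subsumption into a downward-closure argument) before performing the induction; this keeps the set case closed under $\rulename{TSub}$. For the function/application and for-in cases the IH gives bounds already, so transitivity of $\tyapx{\cdot}{\cdot}$ suffices, possibly combined with the distributivity property (\Lref{lem:apdx:log-sem:fun-join-dist}) and monotonicity of the formula operations (\Lref{lem:apdx:log-sem:formula-op-mono}).

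A secondary subtlety is the interaction between $\rulename{TSub}$ and the $\tmTopC$-propagation alternatives: Order Inversion tells us that if $\tyapx{\tyAc}{\tmTopC}$ holds trivially, then the $\tmTopC$ alternative for $\tyAc'$ translates to the same alternative for $\tyAc$ (possibly with a weaker $\tyAc \ne \tmTopC$ conclusion), but since the statement only asserts the existence of some premise derivation, this is unproblematic. Finally, I would collect the cases, noting that each uses only Order Inversion, transitivity of $\tyapx{\cdot}{\cdot}$, and monotonicity of the lifting operators, so the proof is essentially a large but mechanical case analysis after the right structural phrasing is fixed.
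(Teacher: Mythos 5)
Your approach---case analysis on the last rule of the derivation, with induction to peel off stacked uses of \rulename{TSub}---is exactly the paper's, whose entire proof is the single line ``straightforward case analysis on the formula assignment rules.'' The one substantive point you raise is genuine: clause~4 asserts an exact equality and, as stated, is not closed under \rulename{TSub} (for example $ \judg{ \Gamma }{ \ottsym{\{}  \tmMeta'  \ottsym{\}} }{  \valBotV  } $ is derivable by subsumption from the empty-set formula, yet $ \valBotV $ is never of the form $ \tyJoin{ \ottsym{\{\}} }{ \cdot } $), so weakening that clause to an inequality under $\synapxsym$, as you propose, is the right repair and is what clauses~2 and~3 already do.
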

\begin{proof}
  Straightforward case analysis on the formula assignment rules.
\end{proof}

\begin{lemma}[Result Formula Assignment Inversion]
  If $ \judg{ \Gamma }{ \resMeta }{ \tyAc } $ then either
  \begin{enumerate}
  \item $\tyAc  \ottsym{=}   \tmBotC $,
  \item $\resMeta  \ottsym{=}   \tmTopC $, or
  \item $\resMeta \in \synres$ and $\tyAc \in \synvtype$.
  \end{enumerate}
\end{lemma}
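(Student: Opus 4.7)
The plan is to proceed by case analysis on the result $\resMeta$, which by definition is one of $\tmBotC$, $\tmTopC$, or a value $\valMeta$. The cases for $\tmBotC$ and $\tmTopC$ are almost immediate, while the value case will require an inner induction on the derivation of $\judg{\Gamma}{\valMeta}{\tyAc}$.

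When $\resMeta = \tmBotC$, the only introduction rule whose conclusion matches this expression shape is \rulename{TBot}, which gives $\tyAc = \tmBotC$; any use of \rulename{TSub} yields a formula $\tyAc$ with $\tyapx{\tyAc}{\tmBotC}$, and Order Inversion (\Lref{lem:apdx:log-sem:subtyping-inv}, part~2) forces $\tyAc = \tmBotC$, giving case~1. When $\resMeta = \tmTopC$, case~2 is immediate.

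For $\resMeta = \valMeta$, I would induct on the derivation of $\judg{\Gamma}{\valMeta}{\tyAc}$. Since $\valMeta$ is a value, it cannot match the syntactic shapes required by \rulename{TJoin}, \rulename{TLetSym}, \rulename{TLetPair}, \rulename{TForIn}, \rulename{TApp}, or their $\tmTopC$-propagating variants \rulename{TLetSymTop}, \rulename{TLetPairTop}, \rulename{TAppLTop}, \rulename{TAppRTop}, \rulename{TForInTop}; \rulename{TTop} applies only to $\tmTopC$ itself. Thus the last rule must be one of \rulename{TBot}, \rulename{TBotV}, \rulename{TVar}, \rulename{TSym}, \rulename{TPair}, \rulename{TSet}, \rulename{TFun}, or \rulename{TSub}. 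The first gives $\tyAc = \tmBotC$ (case~1); the next six directly assign a value formula $\tyAv \in \synvtype$ (case~3). For \rulename{TSub}, the subderivation yields some $\tyAc'$ with $\tyapx{\tyAc}{\tyAc'}$, and the induction hypothesis tells us $\tyAc' = \tmBotC$ or $\tyAc' \in \synvtype$; in the first subcase Order Inversion (part~2) forces $\tyAc = \tmBotC$, and in the second Order Inversion (part~3) gives $\tyAc = \tmBotC$ or $\tyAc \in \synvtype$.

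I do not anticipate any significant obstacle. The result is essentially routine bookkeeping against the formula assignment rules combined with Order Inversion. The only subtlety is that \rulename{TSub} prevents a purely direct case split on $\tyAc$, so the argument for the value case must be structured as an induction on the derivation rather than on the syntax of $\tyAc$, letting the induction hypothesis discharge the subsumption case.
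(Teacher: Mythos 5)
Your proof is correct and takes essentially the same approach as the paper, whose entire argument is ``straightforward case analysis on the formula assignment rules''; your version simply makes explicit the derivation-induction needed to discharge \rulename{TSub} via Order Inversion. (One tiny imprecision: \rulename{TPair} concludes $ \tyPairOp{ \tyAc_{{\mathrm{1}}} }{ \tyAc_{{\mathrm{2}}} } $, which can be $ \tmBotC $ rather than a value formula when a component is $ \tmBotC $, but that still lands in case~1, so the disjunctive conclusion is unaffected.)
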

\begin{proof}
  Straightforward case analysis on the formula assignment rules.
\end{proof}

\begin{lemma}[Value Formula Assignment Inversion]
  If $ \judg{ \Gamma }{ \valMeta }{ \tyAv } $ then either $\tyAv  \ottsym{=}   \valBotV $ or all of the following hold.
  \begin{enumerate}
  \item If $\valMeta  \ottsym{=}  \ottnt{s}$ then $\tyAv  \ottsym{=}  \ottnt{s'}$ and $\ottnt{s'}  \leq  \ottnt{s}$.
  \item If $\valMeta  \ottsym{=}   \tmPair{ \valMeta_{{\mathrm{1}}} }{ \valMeta_{{\mathrm{2}}} } $ then $\tyAv  \ottsym{=}  \ottsym{(}  \tyAv_{{\mathrm{1}}}  \ottsym{,}  \tyAv_{{\mathrm{2}}}  \ottsym{)}$ where $ \judg{ \Gamma }{ \valMeta_{{\mathrm{1}}} }{ \tyAv_{{\mathrm{1}}} } $ and
    $ \judg{ \Gamma }{ \valMeta_{{\mathrm{2}}} }{ \tyAv_{{\mathrm{2}}} } $.
  \item If $\valMeta  \ottsym{=}  \ottsym{\{}  \valMeta_{\ottmv{j}}  \ottsym{\mbox{$\mid$}}   \mathin{ \mathit{j} }{ \mathit{J} }   \ottsym{\}}$ then $\tyAv  \ottsym{=}  \ottsym{\{}  \tyAv_{\ottmv{i}}  \ottsym{\mbox{$\mid$}}   \mathin{ \mathit{i} }{ \mathit{I} }   \ottsym{\}}$ and
    $ \mathforall{  \mathin{ \mathit{i} }{ \mathit{I} }  }{  \mathexists{  \mathin{ \mathit{j} }{ \mathit{J} }  }{  \judg{ \Gamma }{ \valMeta_{\ottmv{j}} }{ \tyAv_{\ottmv{i}} }  }  } $.
  \item If $\valMeta  \ottsym{=}   \valLam{ \mathit{x} }{ \tmMeta } $ then $\tyAv  \ottsym{=}   \tyFun{  \mathin{ \mathit{i} }{ \mathit{I} }  }{ \tyAv_{\ottmv{i}} }{ \tyAc_{\ottmv{i}} } $ and
    $ \mathforall{  \mathin{ \mathit{i} }{ \mathit{I} }  }{  \judg{ \Gamma  \ottsym{,}  \mathit{x}  \ottsym{:}  \tyAv_{\ottmv{i}} }{ \tmMeta }{ \tyAc_{\ottmv{i}} }  } $.
  \item If $\valMeta  \ottsym{=}  \mathit{x}$ then $ \tyapx{ \tyAv }{ \Gamma  \ottsym{(}  \mathit{x}  \ottsym{)} } $.
  \end{enumerate}
\end{lemma}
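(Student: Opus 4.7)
The plan is to proceed by induction on the derivation of $ \judg{ \Gamma }{ \valMeta }{ \tyAv } $ with case analysis on the last rule applied. Since $\valMeta$ is a value, the only rules that can end the derivation are \rulename{TBotV}, \rulename{TVar}, \rulename{TSym}, \rulename{TPair}, \rulename{TSet}, \rulename{TFun}, and \rulename{TSub}; computation-form rules like \rulename{TApp}, \rulename{TJoin}, or \rulename{TLetPair} do not apply because the conclusion term of each has a top-level shape incompatible with being a value. Each of the first six rules directly yields either $\tyAv =  \valBotV $ or the requested structural information. For the \rulename{TPair} and \rulename{TSet} cases, I would additionally appeal to the definitions of $ \tyPairOp{ \tyAc_{{\mathrm{1}}} }{ \tyAc_{{\mathrm{2}}} } $ and $ \tySng{ \tyAc } $ to observe that when the overall result is a value formula, the components fed into these liftings must themselves have been value formulas, not $ \tmBotC $ or $ \tmTopC $.

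The real work is in the \rulename{TSub} case: we have $ \judg{ \Gamma }{ \valMeta }{ \tyAc' } $ with $ \tyapx{ \tyAv }{ \tyAc' } $ over a smaller derivation. First I would apply the preceding Result Formula Assignment Inversion lemma: since $\valMeta$ is a result distinct from $ \tmTopC $, $\tyAc'$ is either $ \tmBotC $ or a value formula. The $ \tmBotC $ option is ruled out because $ \tyapx{ \tyAv }{  \tmBotC  } $ would force $\tyAv =  \tmBotC $, contradicting that $\tyAv$ is a value formula. So the IH is applicable. If the IH returns $\tyAc' =  \valBotV $, then $ \tyapx{ \tyAv }{  \valBotV  } $ forces $\tyAv =  \valBotV $ (the only value formula below $ \valBotV $). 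Otherwise the IH gives the specific structure on $\tyAc'$ matching the shape of $\valMeta$, and I would use Order Inversion (\Lref{lem:apdx:log-sem:subtyping-inv}) on $ \tyapx{ \tyAv }{ \tyAc' } $ to transfer that structure to $\tyAv$, using \rulename{TSub} to reconstitute the premises for the subterms.

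I expect the function subcase to be the main obstacle. Order Inversion there returns, for each clause $i$ of $\tyAv$, a subset $\mathit{J'}$ of the clauses of $\tyAc'$ satisfying $ \tyapx{  \tyBigJoin{  \mathin{ \mathit{j} }{ \mathit{J'} }  }{ \tau'_{\ottmv{j}} }  }{ \tau_{\ottmv{i}} } $ and $ \tyapx{ \tyAc_{\ottmv{i}} }{  \tyBigJoin{  \mathin{ \mathit{j} }{ \mathit{J'} }  }{ \tyAc'_{\ottmv{j}} }  } $. To reassemble $ \judg{ \Gamma  \ottsym{,}  \mathit{x}  \ottsym{:}  \tau_{\ottmv{i}} }{ \tmMeta }{ \tyAc_{\ottmv{i}} } $ from the per-clause IH $ \judg{ \Gamma  \ottsym{,}  \mathit{x}  \ottsym{:}  \tau'_{\ottmv{j}} }{ \tmMeta }{ \tyAc'_{\ottmv{j}} } $ for $ \mathin{ \mathit{j} }{ \mathit{J'} } $, I would use Weakening (each $\tau'_{\ottmv{j}}$ lies below the join and hence below $\tau_{\ottmv{i}}$, permitting us to widen the context), combine the resulting subderivations via Directedness (\Lref{lem:log-sem:directed}) into a derivation with output $ \tyBigJoin{  \mathin{ \mathit{j} }{ \mathit{J'} }  }{ \tyAc'_{\ottmv{j}} } $, and finally apply \rulename{TSub} to weaken the output down to $\tyAc_{\ottmv{i}}$. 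The remaining subcases are routine case analyses on the streaming order.
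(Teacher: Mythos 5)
Your overall strategy is the right one, and it is in fact far more explicit than the paper's own proof, which dismisses this lemma with a one-line ``straightforward case analysis on the formula assignment rules.'' Your enumeration of the applicable rules, the use of Result Formula Assignment Inversion to rule out $\tmBotC$ in the \rulename{TSub} case, the observation that $\tyapx{\tyAv}{\valBotV}$ forces $\tyAv = \valBotV$, and the reconstruction of the per-clause premises for function formulae via Order Inversion, Weakening, Directedness, and \rulename{TSub} are all exactly the steps that a careful version of the paper's argument would have to contain. One small point: \rulename{TBot} is also syntactically applicable to a value, but is excluded because its conclusion formula $\tmBotC$ is not a value formula; and for \rulename{TSet} the components need not all be value formulae --- a $\tmBotC$ component is permitted and simply contributes nothing to the resulting set formula, which is why the set clause of the lemma only asks for a witness $j$ for each $i$ in the formula's index set rather than the other way around.

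The one substantive issue you should address is a dependency cycle. The paper establishes this inversion lemma \emph{before} Directedness (\Lref{lem:log-sem:directed}), and the proof of Directedness proceeds by ``inverting both premises,'' i.e.\ it relies on precisely the inversion lemmas you are proving. Your \rulename{TSub} function subcase invokes Directedness as a black box, so taken at face value the two proofs are circular. The cycle is benign, but only because of a fact you do not state: you apply Directedness only to the lambda body $\tmMeta$, a \emph{proper subterm} of the value $\valLam{\mathit{x}}{\tmMeta}$ being inverted, whereas Directedness at a term $\tmMetaAlt$ needs inversion only at $\tmMetaAlt$ and its subterms. So the whole development must be set up as a mutual induction on term size (prove inversion and Directedness simultaneously, with inversion at size $n$ permitted to use Directedness at sizes strictly below $n$). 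You should also note that when the clause set $\mathit{J'}$ produced by Order Inversion is empty, the combined output formula is $\tmBotC$, forcing $\tyAc_{\ottmv{i}} = \tmBotC$, and the required derivation is supplied by \rulename{TBot} rather than by Directedness. With those two repairs the argument goes through.
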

\begin{proof}
  Straightforward case analysis on the formula assignment rules.
\end{proof}

\begin{lemma}[Compositionality]
  \label{lem:apdx:log-sem:compos}

  If $ \judg{ \Gamma' }{ \ottnt{C}  \ottsym{[}  \tmMeta_{{\mathrm{1}}}  \ottsym{]} }{ \tyAc' } $ and for all $\Gamma$ and $\tyAc$ such that
  $ \judg{ \Gamma }{ \tmMeta_{{\mathrm{1}}} }{ \tyAc } $ we have $ \judg{ \Gamma }{ \tmMeta_{{\mathrm{2}}} }{ \tyAc } $ then
  $ \judg{ \Gamma' }{ \ottnt{C}  \ottsym{[}  \tmMeta_{{\mathrm{2}}}  \ottsym{]} }{ \tyAc' } $.
\end{lemma}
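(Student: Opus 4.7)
The approach is induction on the structure of the context $C$. The base case $C = \ctxHole$ is immediate: instantiating the second hypothesis with $\Gamma \eqdef \Gamma'$ and $\tyAc \eqdef \tyAc'$ converts $\judg{\Gamma'}{\tmMeta_1}{\tyAc'}$ into $\judg{\Gamma'}{\tmMeta_2}{\tyAc'}$.

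For the inductive step, where $C$ wraps a strictly smaller subcontext $C'$ holding the hole, I would carry out a nested induction on the derivation of $\judg{\Gamma'}{C[\tmMeta_1]}{\tyAc'}$. If the derivation concludes with \rulename{TSub} applied to $\judg{\Gamma'}{C[\tmMeta_1]}{\tyAc''}$ with $\tyapx{\tyAc'}{\tyAc''}$, apply the nested IH to the premise to obtain $\judg{\Gamma'}{C[\tmMeta_2]}{\tyAc''}$ and reapply \rulename{TSub}. Otherwise the last rule matches the outer form of $C$: invert the derivation to extract a subderivation for the subterm containing the hole (under a possibly extended environment, as inside $\lambda$, $\kw{let}$, and big-join binders), apply the outer IH to $C'$, and reapply the same structural rule. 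The universal quantification over $\Gamma$ and $\tyAc$ in the hypothesis is essential here, since it lets the outer IH be used under any extended environment.

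The main obstacle is the rules whose conclusion formula is a join indexed by a set---\rulename{TFun}, \rulename{TSet}, and \rulename{TForIn}---where inversion yields a family of subderivations for the body or elements rather than just one. Whichever indexed position contains $C'$ produces a family of derivations of $C'[\tmMeta_1]$ (under various environments in the \rulename{TFun} and \rulename{TForIn} cases); the outer IH converts each pointwise into a derivation of $C'[\tmMeta_2]$, after which the same rule reassembles the join. The $\tmTopC$-propagation rules (\rulename{TLetPairTop}, \rulename{TLetSymTop}, \rulename{TAppLTop}, \rulename{TAppRTop}, \rulename{TForInTop}) are routine variants: inversion delivers a $\tmTopC$ assignment on the subterm containing the hole, which the IH transports to $C'[\tmMeta_2]$ before the propagation rule is reapplied.
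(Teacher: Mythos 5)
Your proposal is correct and matches the paper's proof, which is exactly a structural induction on $\ottnt{C}$ that inverts the derivation of $ \judg{ \Gamma' }{ \ottnt{C}  \ottsym{[}  \tmMeta_{{\mathrm{1}}}  \ottsym{]} }{ \tyAc' } $ and reassembles it via the induction hypothesis. Your additional observations---handling \rulename{TSub} by a nested induction, and noting that the universal quantification over $\Gamma$ is what lets the IH pass under binders---are exactly the details the paper's one-line proof leaves implicit.
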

\begin{proof}
  We proceed by structural induction on $\ottnt{C}$.
  In each case, we invert the derivation of $ \judg{ \Gamma' }{ \ottnt{C}  \ottsym{[}  \tmMeta_{{\mathrm{1}}}  \ottsym{]} }{ \tyAc' } $.
  Applying the induction hypothesis then allows us to construct a derivation of
  $ \judg{ \Gamma' }{ \ottnt{C}  \ottsym{[}  \tmMeta_{{\mathrm{2}}}  \ottsym{]} }{ \tyAc' } $.
\end{proof}

\begin{lemma}[Weakening]
  \label{lem:apdx:log-sem:weaken}
  If $ \judg{ \Gamma' }{ \tmMeta }{ \tyAc } $ and $ \envapx{ \Gamma' }{ \Gamma } $ then $ \judg{ \Gamma }{ \tmMeta }{ \tyAc } $
\end{lemma}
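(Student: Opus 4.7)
The plan is to proceed by straightforward induction on the derivation of $\judg{\Gamma'}{\tmMeta}{\tyAc}$, with the inductive hypothesis stating that for any $\Gamma$ with $\envapx{\Gamma'}{\Gamma}$, we obtain $\judg{\Gamma}{\tmMeta}{\tyAc}$. In each case I would re-apply the same formula assignment rule, using the induction hypothesis on each premise and observing that extending both $\Gamma'$ and $\Gamma$ by identical bindings preserves the ordering $\envapx{\Gamma'}{\Gamma}$.

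The only case that is not purely mechanical is \rulename{TVar}: from $\judg{\Gamma'}{x}{\Gamma'(x)}$ I would first invoke \rulename{TVar} in the larger environment to get $\judg{\Gamma}{x}{\Gamma(x)}$, which is well-defined since $\envDom{\Gamma'} \subseteq \envDom{\Gamma}$ and thus $x \in \envDom{\Gamma}$. Then the hypothesis $\tyapx{\Gamma'(x)}{\Gamma(x)}$ together with Downward Closure (\Lref{lem:apdx:log-sem:subtyping-inv} not needed; rather rule \rulename{TSub}) yields $\judg{\Gamma}{x}{\Gamma'(x)}$.

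The binding cases (\rulename{TFun}, \rulename{TLetPair}, \rulename{TLetSym}, \rulename{TForIn}) all follow the same pattern: after inverting the rule, I extend both environments by the same fresh bindings (e.g., $x : \tyAv_i$ for \rulename{TFun}), observe $\envapx{\Gamma', x:\tyAv_i}{\Gamma, x:\tyAv_i}$ holds trivially because the new entries coincide, apply the induction hypothesis to each sub-derivation, and re-assemble with the original rule. The remaining cases (\rulename{TSub}, \rulename{TBot}, \rulename{TBotV}, \rulename{TTop}, \rulename{TSym}, \rulename{TPair}, \rulename{TSet}, \rulename{TJoin}, \rulename{TApp} and the various $\tmTopC$-propagation rules) do not bind any variables and so the induction hypothesis applies to each premise with the same environment pair.

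There is no real obstacle here; the result is essentially structural, and the only subtlety worth flagging is to remember that \rulename{TVar} requires subsumption to bridge the gap between $\Gamma(x)$ and $\Gamma'(x)$, rather than being immediate. Everything else is bookkeeping, which is why the authors will likely state the proof as a one-line routine induction.
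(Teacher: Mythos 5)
Your proof is correct and follows the same route as the paper, which simply states ``routine induction on $ \judg{ \Gamma' }{ \tmMeta }{ \tyAc } $''; your additional observation that the \rulename{TVar} case needs subsumption (via \rulename{TSub} and transitivity of $\tyapxsym$) to bridge $\Gamma'  \ottsym{(}  \mathit{x}  \ottsym{)}$ and $\Gamma  \ottsym{(}  \mathit{x}  \ottsym{)}$ is exactly the one non-mechanical step the paper leaves implicit.
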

\begin{proof}
  Routine induction on $ \judg{ \Gamma' }{ \tmMeta }{ \tyAc } $.
\end{proof}

\begin{lemma}[Totality]
  \label{lem:apdx:log-sem:non-empty}

  For every $\Gamma$ and $\tmMeta$ there exists a formula $\tyAc$ such that $ \judg{ \Gamma }{ \tmMeta }{ \tyAc } $.
\end{lemma}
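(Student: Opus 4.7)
The plan is to observe that the formula $\tmBotC$ represents the ``meaningless'' or ``no information'' behavior and, semantically, every term trivially exhibits it (a running computation that has produced nothing yet). I expect the formula assignment system to contain a rule, namely \rulename{TBot}, that ascribes $\tmBotC$ to any term $\tmMeta$ under any environment $\Gamma$ without further premises. This is the analog in the formula system of the fact that the meaning of every term, viewed as an ideal over computation formulae, is non-empty because it must contain the bottom element.

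Concretely, I would simply witness the existential by taking $\tyAc = \tmBotC$ and appealing directly to \rulename{TBot} to derive $\judg{\Gamma}{\tmMeta}{\tmBotC}$. No induction on $\tmMeta$ or on the environment is needed; the rule fires uniformly. I expect no main obstacle here: the content of the lemma is essentially that the rule \rulename{TBot} exists and is applicable everywhere, which is immediate from inspection of Figure~\ref{fig:typing}.

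The only thing worth double-checking is that \rulename{TBot} is indeed premise-free (or at most requires a well-formedness side condition on $\Gamma$ that is automatically met). If there were a side condition---for instance, that all formulae in $\Gamma$ be well-formed---it would be discharged by induction on the structure of $\Gamma$, but this is routine and adds nothing of substance. The lemma is therefore best presented as a one-line appeal to \rulename{TBot}, mirroring its role in ensuring downward-closure (\Lref{lem:log-sem:down-clos}) and thereby contributing, together with Lemmas~\ref{lem:log-sem:down-clos} and~\ref{lem:log-sem:directed}, to the ideal structure of $\denotcl{\tmMeta}$.
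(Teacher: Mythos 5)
Your proposal is correct and matches the paper's proof exactly: the paper also discharges Totality as an immediate consequence of the premise-free rule \rulename{TBot}, witnessing the existential with $ \tmBotC $. Nothing further is needed.
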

\begin{proof}
  Immediate from \rulename{TBot}.
\end{proof}

\begin{lemma}[Downward Closure]
  \label{lem:apdx:log-sem:down-clos}

  If $ \judg{ \Gamma }{ \tmMeta }{ \tyAc' } $ and $ \tyapx{ \tyAc }{ \tyAc' } $ then $ \judg{ \Gamma }{ \tmMeta }{ \tyAc } $.
\end{lemma}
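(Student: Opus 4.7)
The plan is to observe that this statement is essentially a direct restatement of one of the inference rules in the formula assignment system, namely \rulename{TSub}, which is listed first among the rules in \Fref{fig:typing}. By its position and name (``subsumption''), and by analogy with the standard intersection-type discipline on which this filter model is patterned, \rulename{TSub} should have exactly the shape: from $\judg{\Gamma}{\tmMeta}{\tyAc'}$ and $\tyapx{\tyAc}{\tyAc'}$ derive $\judg{\Gamma}{\tmMeta}{\tyAc}$. Thus the proof is a single step.

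Concretely, I would proceed as follows. Given the hypotheses $\judg{\Gamma}{\tmMeta}{\tyAc'}$ and $\tyapx{\tyAc}{\tyAc'}$, apply \rulename{TSub} with these two premises to conclude $\judg{\Gamma}{\tmMeta}{\tyAc}$. No induction is required, and no appeal to the auxiliary lemmas about joins, sizes, or the streaming order is needed, since the ordering premise is one of the hypotheses of the lemma.

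There is no real obstacle here; the only thing worth double-checking is that \rulename{TSub} is indeed stated in the direction matching our streaming-order convention (recall the remark in the excerpt that this convention is the opposite of Cardelli-style subtyping, so the premise reads $\tyapx{\tyAc}{\tyAc'}$ rather than $\tyapx{\tyAc'}{\tyAc}$). Given the paper's consistent use of $\semapxsym$ for the streaming order and the author's own one-line justification in the short version of the proof, this is the intended orientation.
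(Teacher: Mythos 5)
Your proposal is correct and matches the paper's own proof exactly: both the main text and the appendix justify this lemma as ``Immediate from rule \rulename{TSub}.'' Your additional check of the orientation of the streaming order is a sensible sanity check but does not change the argument.
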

\begin{proof}
  Immediate from \rulename{TSub}.
\end{proof}

\begin{lemma}[Directedness]
  \label{lem:apdx:log-sem:directed}
  If $ \judg{ \Gamma }{ \tmMeta }{ \tyAc } $ and $ \judg{ \Gamma }{ \tmMeta }{ \tyAc' } $ then $ \judg{ \Gamma }{ \tmMeta }{  \tyJoin{ \tyAc }{ \tyAc' }  } $.
\end{lemma}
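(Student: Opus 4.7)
The plan is to proceed by structural induction on $\tmMeta$, invoking the expression inversion lemma on both derivations $\judg{\Gamma}{\tmMeta}{\tyAc}$ and $\judg{\Gamma}{\tmMeta}{\tyAc'}$ in each case. The $\tmBotC$ subcases are trivial by reflexivity (Lemma~\ref{lem:log-sem:join-lub} applied to $\tyapx{\tyAc}{\tyAc}$), so I will always assume nontrivial decompositions. The strategy in each case is then to combine the component assignments via the IH, assemble them with the appropriate formula-assignment rule, and finish with \rulename{TSub} using the LUB properties of Lemma~\ref{lem:log-sem:join-lub} to discharge the residual subsumption from the inverted premises.

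For the structural cases that merely combine pieces in parallel ($\tmPair{\cdot}{\cdot}$, set literals, $\tmJoin{\cdot}{\cdot}$, and the application case), the IH applied at each subterm yields joined assignments for each piece; then I re-form the outer term with \rulename{TPair}, \rulename{TSet}, \rulename{TJoin}, or \rulename{TApp}, and the LUB rules let me weaken from $\tyPairOp{\tyJoin{\tyAc_1}{\tyAc'_1}}{\tyJoin{\tyAc_2}{\tyAc'_2}}$ (or the analogous expression) down to $\tyJoin{\tyAc}{\tyAc'}$. For the binding cases ($\ottkw{let}\,(\mathit{x_{{\mathrm{1}}}},\mathit{x_{{\mathrm{2}}}})$, $\ottkw{let}\,\ottnt{s}$, and $\tmBigJoin{x}{\tmMeta_1}{\tmMeta_2}$), Weakening (Lemma~\ref{lem:log-sem:weaken}) is essential: I first weaken each premise about the body from its original environment (e.g.\ $\Gamma,x{:}\tyAv_i$) up to the combined environment (e.g.\ $\Gamma,x{:}\tyJoin{\tyAv_i}{\tyBv_j}$) using the LUB property for the extended binding, and only then apply the IH to combine the two body assignments under a common environment. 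The cases involving $\tmTopC$-propagating rules (\rulename{TLetPairTop} etc.) fall out because $\tmTopC$ is the top of the formula order.

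The function case is where I expect the genuine work. After inversion, I have $\tyAc$ and $\tyAc'$ both below function formulas $\tyFun{i \in I}{\tyAv_i}{\tyAc_i}$ and $\tyFun{j \in J}{\tyBv_j}{\tyBc_j}$ with premises typing the body of $\valLam{x}{\tmMeta_0}$ under extensions $\Gamma,x{:}\tyAv_i$ and $\Gamma,x{:}\tyBv_j$ respectively. I plan to reindex using the disjoint union $I \uplus J$ and apply \rulename{TFun} to get the combined formula directly, treating the clauses as independent. The subtlety arises when consolidating clauses that share an input threshold: here I use the IH on $\tmMeta_0$ under a shared environment (obtained again by Weakening) to obtain a joined-output premise $\judg{\Gamma,x{:}\tyAv}{\tmMeta_0}{\tyJoin{\tyAc_i}{\tyBc_j}}$, and then Lemma~\ref{lem:log-sem:fun-join-dist} together with \rulename{TSub} converts a clause with a joined output into the desired join of clauses with individual outputs.

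I anticipate the main obstacle to be tracking the subsumption bookkeeping cleanly across the piecewise structure of function formulae---in particular, ensuring the disjoint-union reindexing used for the combined \rulename{TFun} actually matches the formula produced by $\tyJoin{\cdot}{\cdot}$ on function formulae (which per Figure~\ref{fig:formula-ops} is only literally defined when both sides share a family) and that the distributivity step recovers the cases where clauses must be merged rather than juxtaposed. Everything else amounts to careful but routine manipulation using the three cited lemmas.
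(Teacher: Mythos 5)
Your proposal is correct and follows essentially the same route as the paper: induction on $\tmMeta$, inverting both formula-assignment derivations, and combining the pieces using the LUB properties of \Lref{lem:log-sem:join-lub}, Weakening (\Lref{lem:log-sem:weaken}) for the binding and function cases, and the distributivity fact \Lref{lem:log-sem:fun-join-dist} to reconcile clauses of function formulae. The paper records only this one-line sketch, so your elaboration of the clause-reindexing bookkeeping is a faithful expansion of the intended argument rather than a departure from it.
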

\begin{proof}
  Induction on $\tmMeta$, inverting both premises and making use of
  Lemmas \ref{lem:apdx:log-sem:fun-join-dist}, \ref{lem:apdx:log-sem:join-lub}, and \ref{lem:apdx:log-sem:weaken}.
\end{proof}

\begin{lemma}
  If $ \judg{ \Gamma  \ottsym{,}  \mathit{x}  \ottsym{:}  \tyAv }{ \tmMeta }{ \tyAc } $ and $ \judg{ \Gamma  \ottsym{,}  \mathit{x}  \ottsym{:}  \tyAv' }{ \tmMeta }{ \tyAc' } $ and
  $\tyBv =  \tyJoin{ \tyAv }{ \tyAv' } $ is a value formula then $ \judg{ \Gamma  \ottsym{,}  \mathit{x}  \ottsym{:}  \tyBv }{ \tmMeta }{  \tyJoin{ \tyAc }{ \tyAc' }  } $.
\end{lemma}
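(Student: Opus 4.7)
The plan is to derive this as a direct consequence of Weakening (\Lref{lem:apdx:log-sem:weaken}) and Directedness (\Lref{lem:apdx:log-sem:directed}), using the least-upper-bound characterization of the formula join (\Lref{lem:log-sem:join-lub}). The hypothesis that $\sigma = \tyJoin{\tau}{\tau'}$ is a value formula is important because it allows $\sigma$ to legitimately appear in an environment binding; without that assumption, we might end up with $\sigma = \tmTopC$ and the statement would be false (or at least, the binding would not fit the grammar of environments).

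First, I would observe that since $\sigma = \tyJoin{\tau}{\tau'}$, \Lref{lem:log-sem:join-lub} gives $\tyapx{\tau}{\sigma}$ and $\tyapx{\tau'}{\sigma}$. Lifting pointwise to environments, we obtain $\envapx{\Gamma, x:\tau}{\Gamma, x:\sigma}$ and $\envapx{\Gamma, x:\tau'}{\Gamma, x:\sigma}$ (the other variables are handled by reflexivity of $\tyapxsym$, i.e.\ \Lref{lem:apdx:log-sem:subtyping-refl}). Applying Weakening to the two hypotheses therefore yields $\judg{\Gamma, x:\sigma}{e}{\phi}$ and $\judg{\Gamma, x:\sigma}{e}{\phi'}$.

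Next, I would apply Directedness (\Lref{lem:apdx:log-sem:directed}) at the common environment $\Gamma, x:\sigma$ to combine these two judgments into $\judg{\Gamma, x:\sigma}{e}{\tyJoin{\phi}{\phi'}}$, which is the desired conclusion. Since Directedness is already established and Weakening handles the environment adjustment, no induction on $e$ is needed here.

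I do not anticipate any real obstacle; the only subtlety is being explicit that $\envapx{\cdot}{\cdot}$ is defined pointwise and that enlarging the binding of $x$ is a legal application of Weakening. Note that the analogous statement for \emph{computation} formulae in the binding position would not even type-check, since environments bind variables to value formulae only; this is why the hypothesis that $\sigma$ is a value formula is exactly what is needed.
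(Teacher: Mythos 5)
Your proof is correct and is exactly the paper's argument: the paper's proof reads ``Follows directly from Lemmas~\ref{lem:apdx:log-sem:weaken}~and~\ref{lem:apdx:log-sem:directed},'' i.e.\ Weakening into the joined environment followed by Directedness, which is precisely what you spell out (with the correct direction of the environment order for Weakening and the correct use of the join's upper-bound property).
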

\begin{proof}
  Follows directly from Lemmas~\ref{lem:apdx:log-sem:weaken}~and~\ref{lem:apdx:log-sem:directed}.
\end{proof}

\subsection{Subject Expansion}
\begin{lemma}[Inversion of Substitution Typing]
  \label{lem:apdx:log-sem:subst-inv}

  If $ \judg{ \Gamma }{ \substmeta  \ottsym{(}  \tmMeta  \ottsym{)} }{ \tyAc } $ then there exists $\Gamma'$ such that
  $ \judg{ \Gamma }{ \substmeta }{ \Gamma' } $ and $ \judg{ \Gamma  \ottsym{,}  \Gamma' }{ \tmMeta }{ \tyAc } $.
\end{lemma}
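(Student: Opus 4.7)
The plan is to proceed by structural induction on $\tmMeta$. Before diving into cases, I would first establish an auxiliary observation that Directedness (\Lref{lem:log-sem:directed}) lifts pointwise to substitution typing: given $\judg{\Gamma}{\substmeta}{\Gamma_1}$ and $\judg{\Gamma}{\substmeta}{\Gamma_2}$, the environment $\envJoin{\Gamma_1}{\Gamma_2}$ is well-defined (take the join of formulae on shared variables, otherwise the corresponding $\Gamma_i$'s entry) and satisfies $\judg{\Gamma}{\substmeta}{\envJoin{\Gamma_1}{\Gamma_2}}$. This follows by applying Directedness at each variable in the common domain. This device will be needed whenever a term has multiple subterms whose inductive hypotheses deliver different candidate environments $\Gamma'$.

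In each case of the induction on $\tmMeta$, I would invert the derivation of $\judg{\Gamma}{\substmeta(\tmMeta)}{\tyAc}$ using the appropriate Formula Assignment Inversion lemmas. The base cases are routine: for $\tmMeta = x$, the substituted term $\substmeta(x)$ has formula $\tyAc$, so I set $\Gamma' = x : \tyAc$, use \rulename{TVar} to derive $\judg{\Gamma, \Gamma'}{x}{\tyAc}$, and observe $\judg{\Gamma}{\substmeta}{\Gamma'}$ directly. The cases for $\tmBotC$, $\tmTopC$, $\botv$, and $\ottnt{s}$ are trivial since $\substmeta$ does not affect them; take $\Gamma' = \envEmpty$. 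For compound constructs ($ \tmJoin{\tmMeta_1}{\tmMeta_2} $, $ \tmPair{\tmMeta_1}{\tmMeta_2} $, set literals, application, the various $\kw{let}$ forms, big join, and $\lambda$-abstraction), I would invert to obtain formulae for each subterm, apply the IH to each subterm to obtain per-subterm environments $\Gamma'_k$, join these with the lifted directedness observation to obtain a single $\Gamma'$, and then use \Lref{lem:log-sem:weaken} to raise each subterm's judgment to this combined environment before reassembling the derivation with the appropriate formula assignment rule.

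The two cases that require some care are $ \valLam{\mathit{x}}{\tmMeta_0} $ and the binding forms ($\kw{let}$ and big join). For the $\lambda$ case, inversion yields a collection of premises $\judg{\Gamma, \mathit{x} : \tyAv_i}{\substmeta(\tmMeta_0)}{\tyAc_i}$ indexed by $i \in I$; applying the IH to each gives an environment $\Gamma'_i$ possibly containing entries for $\mathit{x}$. Since $\mathit{x}$ is $\alpha$-fresh with respect to $\substmeta$, the IH's environment cannot need $\mathit{x}$; the cleanest way to arrange this is to assume (via $\alpha$-conversion) that $\mathit{x} \notin \envDom{\Gamma} \cup \envDom{\substmeta}$. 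Then I strip $\mathit{x}$ from each $\Gamma'_i$, join all the resulting environments together, and weaken. The binding $\kw{let}$ and big join cases work analogously. The main obstacle is bookkeeping: ensuring that the joined environment $\Gamma'$ types $\substmeta$ simultaneously for all subterms, and managing bound-variable freshness so the induction hypotheses compose cleanly. Once the lifted-directedness lemma is in hand, the rest is a case-by-case reassembly mediated by Weakening.
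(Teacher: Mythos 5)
Your proposal is correct and follows essentially the same route as the paper: the paper's proof likewise first observes that Directedness lifts to substitution typing (so candidate environments from different subterms can be joined), then proceeds by induction on $\tmMeta$, inverting each derivation and reassembling with Weakening and Directedness. The extra bookkeeping you describe for binders and $\alpha$-freshness, and the small case split in the variable case when the assigned formula is $ \tmBotC $ rather than a value formula, are exactly the routine details the paper leaves implicit.
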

\begin{proof}
  First, note that \Lref{lem:apdx:log-sem:directed} lifts to the typing of substitutions.
  That is, if we have environments $\Gamma_{{\mathrm{1}}}$ and $\Gamma_{{\mathrm{2}}}$ such that
  $ \judg{ \Gamma }{ \substmeta }{ \Gamma_{{\mathrm{1}}} } $ and $ \judg{ \Gamma }{ \substmeta }{ \Gamma_{{\mathrm{2}}} } $ then $\Gamma' =  \envJoin{ \Gamma_{{\mathrm{1}}} }{ \Gamma_{{\mathrm{2}}} } $ exists and
  $ \judg{ \Gamma }{ \substmeta }{ \Gamma' } $.
  With this in mind, we proceed by induction on $\tmMeta$, in each case inverting
  its derivation and making use of weakening and directedness.
\end{proof}

\begin{lemma}
  If $ \synjoin{ \resMeta_{{\mathrm{1}}} }{ \resMeta_{{\mathrm{2}}} }   \ottsym{=}   \tmTopC $ then there exists $\tyAc_{{\mathrm{1}}}$ and $\tyAc_{{\mathrm{2}}}$ such that
  $ \tyJoin{ \tyAc_{{\mathrm{1}}} }{ \tyAc_{{\mathrm{2}}} }  =  \tmTopC $ and $ \judg{  \envEmpty  }{ \resMeta_{{\mathrm{1}}} }{ \tyAc_{{\mathrm{1}}} } $ and $ \judg{  \envEmpty  }{ \resMeta_{{\mathrm{2}}} }{ \tyAc_{{\mathrm{2}}} } $.
\end{lemma}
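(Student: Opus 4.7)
The plan is to proceed by strong induction on the total size of $(\resMeta_1, \resMeta_2)$, with case analysis guided by the definition of $\synjoin$. The hypothesis $\synjoin{\resMeta_1}{\resMeta_2} = \tmTopC$ can arise in three structurally distinct ways: (i) one of the $\resMeta_i$ is itself $\tmTopC$ (either an explicit clause fires, or the "otherwise" branch does); (ii) both are values whose top-level shapes do not match in any structural clause of $\synjoin$, so the "otherwise" branch fires directly; or (iii) both are pair values and the recursive calls to $\synjoin$ on their components, assembled through $\pairop$, produce $\tmTopC$.

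For cases (i) and (ii), I would choose a canonical minimal formula for each result reflecting its top-level constructor: $\tmTopC$ for the result $\tmTopC$ (by \rulename{TTop}); $\ottnt{s}$ for a symbol (by \rulename{TSym}); $(\valBotV, \valBotV)$ for any pair, via \rulename{TBotV} on each component and \rulename{TPair}; $\ottsym{\{\}}$ for any set, via \rulename{TSet} for the empty set and, for a non-empty set, by first typing it at $\ottsym{\{}\valBotV\ottsym{\}}$ and applying Downward Closure (\Lref{lem:apdx:log-sem:down-clos}) since $\tyapx{\ottsym{\{\}}}{\{\tyAv_j \mid j \in J\}}$ vacuously by \rulename{TApxSet}; and $\valBotV \to \tmBotC$ for any lambda, via \rulename{TFun} with a single clause whose body is typed by \rulename{TBot}. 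Inspection of the formula join in Figure~\ref{fig:formula-ops} shows its structural clauses mirror those of $\synjoin$ exactly, so joining two such canonical formulae of incompatible shape falls through to $\tmTopC$ at the formula level in precisely the analogous situations, including the subcase of two symbols with undefined $\sqcup$, where both $\synjoin$ and $\tyJoin$ fall through.

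For case (iii), where $\resMeta_1 = \tmPair{\valMeta_1}{\valMeta_2}$ and $\resMeta_2 = \tmPair{\valMeta_1'}{\valMeta_2'}$, I would invoke the inductive hypothesis on the component at which $\synjoin$ already delivers $\tmTopC$. If $\synjoin{\valMeta_1}{\valMeta_1'} = \tmTopC$, the IH supplies value formulae $\tyAv_1, \tyAv_1'$ with $\tyJoin{\tyAv_1}{\tyAv_1'} = \tmTopC$; then $(\tyAv_1, \valBotV)$ and $(\tyAv_1', \valBotV)$ are the desired formulae for the two pairs (by \rulename{TPair} with \rulename{TBotV}), and their formula join is $\tyPairOp{\tmTopC}{\valBotV} = \tmTopC$. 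The symmetric situation, in which the first components' join reduces to a value and the second components' join is $\tmTopC$, is handled analogously by pairing $\valBotV$ with the IH-supplied formulae and using the second clause of $\tyPairOp$.

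The main obstacle is not deep technicality but careful bookkeeping: the hardest part is lining up, case by case, the clauses of $\synjoin$ with those of $\tyJoin$ and ensuring the induction is well-founded through the pair case. The mild subtleties are handling the undefined symbol-join uniformly (both $\synjoin$ and $\tyJoin$ must use the same fall-through convention) and correctly selecting which component to recurse on in case (iii), matching exactly the structure of $\pairop$ used in the hypothesis.
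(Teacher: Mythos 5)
Your proof is correct and takes essentially the same route as the paper, which disposes of this lemma by ``straightforward induction on $\resMeta_{{\mathrm{1}}}$ and an inner case analysis on $\resMeta_{{\mathrm{2}}}$''; your size-based induction, clause-by-clause matching of $ \synjoin{ \resMeta }{ \resMeta' } $ against $ \tyJoin{ \tyAc_{{\mathrm{1}}} }{ \tyAc_{{\mathrm{2}}} } $, and recursion through the pair/$\pairop{\cdot}{\cdot}$ case are exactly the intended argument, just spelled out in more detail. The only points worth tightening are that the inductive hypothesis in the pair case returns computation formulae, so you should note briefly why they must be value formulae (a value cannot be assigned $ \tmTopC $, and if either were $ \tmBotC $ their join could not be $ \tmTopC $), and that your list of canonical formulae should also cover the results $ \tmBotC $ and $ \valBotV $.
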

\begin{proof}
  Straightforward induction on $\resMeta_{{\mathrm{1}}}$ and an inner case analysis on $\resMeta_{{\mathrm{2}}}$.
\end{proof}

\begin{lemma}[Inversion of Join Typing]
  \label{lem:apdx:log-sem:join-typ-inv}
  If $ \judg{ \Gamma }{  \synjoin{ \resMeta_{{\mathrm{1}}} }{ \resMeta_{{\mathrm{2}}} }  }{ \tyAc } $ then there exists $\tyAc_{{\mathrm{1}}}$ and $\tyAc_{{\mathrm{2}}}$ such
  that $ \judg{ \Gamma }{ \resMeta_{{\mathrm{1}}} }{ \tyAc_{{\mathrm{1}}} } $ and $ \judg{ \Gamma }{ \resMeta_{{\mathrm{2}}} }{ \tyAc_{{\mathrm{2}}} } $ and $ \tyapx{ \tyAc }{  \tyJoin{ \tyAc_{{\mathrm{1}}} }{ \tyAc_{{\mathrm{2}}} }  } $.
\end{lemma}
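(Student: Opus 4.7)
The plan is to prove the statement by induction on $r_1$ with nested case analysis on $r_2$, unfolding the definition of $\synjoin{r_1}{r_2}$ from \Fref{fig:opsem} in each case and inverting the formula assignment of the resulting term. The easiest cases are when $r_1 = \tmBotC$ or $r_2 = \tmBotC$: the join collapses to the other result, so we can take $\tyAc_1 = \tyAc$ and $\tyAc_2 = \tmBotC$ (or symmetrically), relying on the fact that $\tyJoin{\tyAc}{\tmBotC} = \tyAc$ together with reflexivity of the streaming order. The $\valBotV$ cases proceed analogously, as do the cases where one side is $\tmTopC$ (which make $\synjoin{r_1}{r_2} = \tmTopC$, typable by $\tmTopC$ on both sides via \rulename{TTop}).

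For the compatible value cases, the symbol case $r_1 = s_1$, $r_2 = s_2$ is handled by inverting the formula assignment of $s_1 \sqcup s_2$: the result is either $\tmBotC$ or a symbol below $s_1 \sqcup s_2$, and in either case the definition of formula join on symbols (matching the operational $\sqcup$) lets us pick formulae for $s_1$ and $s_2$ whose join bounds $\tyAc$. The set case is a straightforward partition of the index set of the assigned set formula between the contributions of $r_1$ and $r_2$, combined with the definition of formula join on sets. In the lambda case, $\synjoin{\valLam{x}{e}}{\valLam{x}{e'}}$ remains the syntactic join $\tmJoin{\valLam{x}{e}}{\valLam{x}{e'}}$; Expression Formula Assignment Inversion for \rulename{TJoin} directly supplies the two formulae, and \Lref{lem:log-sem:fun-join-dist} is applied as needed to distribute a function formula over the syntactic join so that each side individually can be typed.

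The main obstacle is the pair case, where $r_1 = \tmPair{v_1}{v_2}$, $r_2 = \tmPair{v_1'}{v_2'}$, and the join unfolds to $\pairop{\synjoin{v_1}{v_1'}}{\synjoin{v_2}{v_2'}}$. Here we invert the formula assigned to this pair lifting (which may itself be $\tmBotC$, $\tmTopC$, or a genuine pair formula depending on how errors propagate in the two components), apply the induction hypothesis to each of the component joins to extract formulae $\tyBv_1, \tyBv_1'$ for $v_1, v_1'$ and $\tyBv_2, \tyBv_2'$ for $v_2, v_2'$, reassemble these into pair formulae $\ottsym{(}\tyBv_1, \tyBv_2\ottsym{)}$ and $\ottsym{(}\tyBv_1', \tyBv_2'\ottsym{)}$ typing $r_1$ and $r_2$ via \rulename{TPair}, and finally use monotonicity of pair lifting from \Lref{lem:log-sem:formula-op-mono} together with the definition of formula join on pairs to conclude $\tyapx{\tyAc}{\tyJoin{\tyAc_1}{\tyAc_2}}$. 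The bookkeeping for subcases where one component is $\valBotV$ or where the lifted pair degenerates to $\tmBotC$ or $\tmTopC$ is tedious but routine once the main structure is in place.
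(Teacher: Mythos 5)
Your proposal follows the paper's proof exactly: induction on $\resMeta_{{\mathrm{1}}}$ with nested case analysis on $\resMeta_{{\mathrm{2}}}$, using \Lref{lem:log-sem:fun-join-dist} in the function case and \Lref{lem:log-sem:formula-op-mono} in the pair case, so the structure and key lemmas match. One small correction: in the lambda case the operational join does not remain the syntactic join of two abstractions but is pushed inside the binder, i.e.\ $ \synjoin{  \valLam{ \mathit{x} }{ \tmMeta }  }{  \valLam{ \mathit{x} }{ \tmMeta' }  } $ yields a single abstraction whose body is $ \tmJoin{ \tmMeta }{ \tmMeta' } $, so you cannot appeal to \rulename{TJoin} inversion at the top level; instead you must invert the function-formula assignment clause by clause, invert the join typing of the body under each input threshold, reassemble two function formulae, and only then apply \Lref{lem:log-sem:fun-join-dist} --- which is precisely why that lemma is needed in this case.
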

\begin{proof}
  Induction on $\resMeta_{{\mathrm{1}}}$ and nested case analysis on $\resMeta_{{\mathrm{2}}}$.
  Uses \Lref{lem:apdx:log-sem:fun-join-dist} and \Lref{lem:apdx:log-sem:formula-op-mono}.
\end{proof}

\begin{lemma}
  \label{lem:apdx:log-sem:ectx-top}
  For all evaluation contexts $\ottnt{E}$ we have $ \judg{ \Gamma }{ \ottnt{E}  \ottsym{[}   \tmTopC   \ottsym{]} }{  \tmTopC  } $.
\end{lemma}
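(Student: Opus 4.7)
The plan is a routine structural induction on the evaluation context $\ottnt{E}$. In the base case $\ottnt{E}  \ottsym{=}   \ctxHole $ we have $\ottnt{E}  \ottsym{[}   \tmTopC   \ottsym{]}  \ottsym{=}   \tmTopC $, so the conclusion follows directly from the rule \rulename{TTop}.

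For each inductive case, I would invoke the induction hypothesis on the subcontext to obtain $ \judg{ \Gamma }{ \ottnt{E}  \ottsym{[}   \tmTopC   \ottsym{]} }{  \tmTopC  } $, and then apply whichever ``top-propagating'' formula assignment rule corresponds to the syntactic form being built up. Concretely: for $\ottkw{let} \, \ottsym{(}  \mathit{x_{{\mathrm{1}}}}  \ottsym{,}  \mathit{x_{{\mathrm{2}}}}  \ottsym{)}  \ottsym{=}  \ottnt{E} \, \ottkw{in} \, \tmMeta$, use \rulename{TLetPairTop}; for $\ottkw{let} \, \ottnt{s}  \ottsym{=}  \ottnt{E} \, \ottkw{in} \, \tmMeta$, use \rulename{TLetSymTop}; for $ \tmApp{ \ottnt{E} }{ \tmMeta } $ and $ \tmApp{ \valMeta }{ \ottnt{E} } $, use \rulename{TAppLTop} and \rulename{TAppRTop} respectively; and for $ \tmBigJoin{ \mathit{x} }{ \ottnt{E} }{ \tmMeta } $, use \rulename{TForInTop}. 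These rules are precisely those listed in \Fref{fig:typing} for this purpose.

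The pair, set, and join contexts are handled slightly differently, since there is no dedicated ``top'' rule for them in the figure. For $ \tmPair{ \ottnt{E} }{ \tmMeta } $ (and similarly $ \tmPair{ \valMeta }{ \ottnt{E} } $), I would combine the induction hypothesis with \Lref{lem:log-sem:non-empty} to obtain a formula $\tyAc$ for the other component, then use \rulename{TPair} to derive the computation formula $ \tyPairOp{  \tmTopC  }{ \tyAc } $, which by definition of $ \tyPairOp{ \cdot }{ \cdot } $ is $ \tmTopC $. For a set context $\ottsym{\{}  \tmMeta_{{\mathrm{1}}}  \ottsym{,} \, ... \, \ottsym{,}  \ottnt{E}  \ottsym{,} \, ... \, \ottsym{,}  \tmMeta'_{\ottmv{m}}  \ottsym{\}}$, an analogous argument using \rulename{TSet} together with $ \tySng{  \tmTopC  }   \ottsym{=}   \tmTopC $ and \Lref{lem:log-sem:join-lub} yields $ \tmTopC $ as an upper bound. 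Finally, for the join contexts $ \tmJoin{ \ottnt{E} }{ \tmMeta } $ and $ \tmJoin{ \tmMeta }{ \ottnt{E} } $, apply \rulename{TJoin} with the induction hypothesis on one side and \Lref{lem:log-sem:non-empty} on the other, using that $ \tyJoin{  \tmTopC  }{ \tyAc }  =  \tmTopC $ from \Fref{fig:formula-ops}.

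There is no real obstacle here; the only mildly delicate part is the pair/set/join cases, where the top formula is produced not directly by a dedicated rule but by the combination of a general introduction rule with the fact that the metafunctions $ \tyPairOp{ \cdot }{ \cdot } $, $ \tySng{ \cdot } $, and $ \tyJoin{ \cdot }{ \cdot } $ propagate $ \tmTopC $. All of this is immediate from the definitions in \Fref{fig:formula-ops}.
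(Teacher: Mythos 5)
Your proof is correct and follows exactly the paper's approach, which is simply ``routine induction on $\ottnt{E}$''; the case analysis you give, dispatching the dedicated $\tmTopC$-propagation rules where they exist and otherwise combining \rulename{TPair}, \rulename{TSet}, or \rulename{TJoin} with the fact that $\tyPairOp{\cdot}{\cdot}$, $\tySng{\cdot}$, and $\tyJoin{\cdot}{\cdot}$ absorb $\tmTopC$, is the intended elaboration. One pedantic note: in the $\tmPair{\valMeta}{\ottnt{E}}$ case you need a \emph{value} formula for $\valMeta$ (take $\valBotV$ via \rulename{TBotV}) rather than the arbitrary formula Totality provides, since $\tyPairOp{\tmBotC}{\tmTopC} = \tmBotC$.
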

\begin{proof}
  Routine induction on $\ottnt{E}$.
\end{proof}

\begin{lemma}[Subject Expansion]
  \label{lem:apdx:log-sem:subj-exp}
  If $ \step{ \tmMeta }{ \tmMeta' } $ and $ \judg{ \Gamma }{ \tmMeta' }{ \tyAc } $ then $ \judg{ \Gamma }{ \tmMeta }{ \tyAc } $.
\end{lemma}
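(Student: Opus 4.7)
The plan is to proceed by induction on the derivation of $\step{\tmMeta}{\tmMeta'}$, performing case analysis on the reduction rule applied at the root. In each case I would invert the hypothesis $\judg{\Gamma}{\tmMeta'}{\tyAc}$ to extract typings for the subterms of $\tmMeta'$, then reassemble a typing for $\tmMeta$.

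Several cases are essentially free. The nondeterministic approximation step $\step{\tmMeta}{\tmBotC}$ is immediate from \rulename{TBot}, which assigns $\tmBotC$ to every term under every environment. The error-propagation step $\step{\ottnt{E}[\tmTopC]}{\tmTopC}$ is handled directly by \Lref{lem:log-sem:ectx-top}. The evaluation-context closure rule reduces to an application of Compositionality (\Lref{lem:log-sem:compos}) using the inductive hypothesis.

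The remaining computational steps each rely on a targeted inversion lemma. For $\beta$-reduction $\step{(\valLam{\mathit{x}}{\tmMeta})\,\valMeta}{\subst{\tmMeta}{\valMeta}{\mathit{x}}}$, I would apply Inversion of Substitution Typing (\Lref{lem:log-sem:subst-inv}) to obtain an environment $\Gamma'$ giving $\valMeta$ some formula $\tyAv$ and extending the context so that $\judg{\Gamma, \mathit{x}:\tyAv}{\tmMeta}{\tyAc}$; then \rulename{TFun} with a singleton clause set followed by \rulename{TApp} reconstructs a typing of the redex. The pair-destructuring case is analogous with a two-variable substitution. The threshold step $\step{\ottkw{let}\,\ottnt{s}=\ottnt{s'}\,\ottkw{in}\,\tmMeta}{\tmMeta}$ (with $\ottnt{s} \leq \ottnt{s'}$) uses \rulename{TSym} and \rulename{TSub} to type $\ottnt{s'}$ at $\ottnt{s}$, after which \rulename{TLetSym} closes the case. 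The big-join reduction unfolds into an iterated $\tmJoinSym$, so applying \Lref{lem:log-sem:join-typ-inv} repeatedly yields typings of each $\subst{\tmMeta}{\valMeta_i}{\mathit{x}}$; substitution inversion on each and \rulename{TForIn} then complete the case. The set reduction that drops a $\tmBotC$ subterm is handled by adding the subterm back via \rulename{TBot} inside \rulename{TSet}, noting that $\tySng{\tmBotC}$ is $\tmBotC$ and is absorbed by the formula-level join.

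The main obstacle is the join reduction $\step{\tmJoin{\resMeta_{{\mathrm{1}}}}{\resMeta_{{\mathrm{2}}}}}{\synjoin{\resMeta_{{\mathrm{1}}}}{\resMeta_{{\mathrm{2}}}}}$, precisely because $\synjoin{\cdot}{\cdot}$ is defined by an elaborate case split that can lift through pairs, use the primitive symbol join, or distribute across $\lambda$-abstractions. The cleanest path is to funnel everything through Inversion of Join Typing (\Lref{lem:log-sem:join-typ-inv}): it recovers $\tyAc_{{\mathrm{1}}}$ and $\tyAc_{{\mathrm{2}}}$ with $\judg{\Gamma}{\resMeta_{{\mathrm{1}}}}{\tyAc_{{\mathrm{1}}}}$, $\judg{\Gamma}{\resMeta_{{\mathrm{2}}}}{\tyAc_{{\mathrm{2}}}}$, and $\tyapx{\tyAc}{\tyJoin{\tyAc_{{\mathrm{1}}}}{\tyAc_{{\mathrm{2}}}}}$, after which \rulename{TJoin} together with Downward Closure (\Lref{lem:log-sem:down-clos}) finishes the case. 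The real work is thereby isolated inside \Lref{lem:log-sem:join-typ-inv}, whose proof forms the delicate bridge between the operational metafunction $\synjoin{\cdot}{\cdot}$ and the formula-level join $\tyJoin{\cdot}{\cdot}$, and which in turn depends on the monotonicity of $\tyPairOp{\cdot}{\cdot}$ (\Lref{lem:log-sem:formula-op-mono}) and the distributivity property \Lref{lem:log-sem:fun-join-dist}.
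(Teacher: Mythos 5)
Your proposal is correct and takes essentially the same route as the paper: induction on the reduction step with per-case inversion of the typing derivation, discharging the delicate cases via \Lref{lem:log-sem:ectx-top}, \Lref{lem:log-sem:subst-inv}, and \Lref{lem:log-sem:join-typ-inv}, exactly as the paper does (your explicit appeal to Compositionality for the evaluation-context case is a fine way to package what the paper leaves implicit). One small slip: in the big-join case the reduct $\subst{\tmMeta}{\valMeta_{{\mathrm{1}}}}{\mathit{x}} \tmJoinSym \cdots \tmJoinSym \subst{\tmMeta}{\valMeta_{\ottmv{n}}}{\mathit{x}}$ is a \emph{syntactic} join of expressions, so the inversion needed there is the routine inversion of \rulename{TJoin} (modulo \rulename{TSub}), not \Lref{lem:log-sem:join-typ-inv}, which concerns the result-level metafunction $\synjoin{\resMeta_{{\mathrm{1}}}}{\resMeta_{{\mathrm{2}}}}$; this misattribution does not affect the argument.
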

\begin{proof}
  Induction on $ \step{ \tmMeta }{ \tmMeta' } $, inverting the derivation of $\tmMeta'$ in each case.
  The case in which $ \step{ \ottnt{E}  \ottsym{[}   \tmTopC   \ottsym{]} }{  \tmTopC  } $ follows from \Lref{lem:apdx:log-sem:ectx-top}.
  Beta reduction cases make use of \Lref{lem:apdx:log-sem:subst-inv}.
  The case where $ \step{  \tmJoin{ \resMeta_{{\mathrm{1}}} }{ \resMeta_{{\mathrm{2}}} }  }{  \synjoin{ \resMeta_{{\mathrm{1}}} }{ \resMeta_{{\mathrm{2}}} }  } $ follows from
  \Lref{lem:apdx:log-sem:join-typ-inv}.
\end{proof}

\subsection{Adequacy}
\begin{lemma}[Closure Under Antireduction]
  \label{lem:apdx:logsem:antired}

  If $\tmMeta  \stepsym^{*}  \tmMeta'$ and $ \mathin{ \tmMeta' }{  \lrexp{ \tyAc }  } $ then $ \mathin{ \tmMeta }{  \lrexp{ \tyAc }  } $.
\end{lemma}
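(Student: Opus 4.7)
The plan is to unfold the definition of $\lrexp{\tyAc}$ given in Figure~\ref{fig:logrel} and then use transitivity of multi-step reduction. Recall that $\tmMeta' \in \lrexp{\tyAc}$ means there exists some result $\resMeta$ with $\tmMeta' \stepsym^* \resMeta$ and $\resMeta \in \lrres{\tyAc}$. So the witness $\resMeta$ provided for $\tmMeta'$ is exactly what we need to exhibit to show $\tmMeta \in \lrexp{\tyAc}$.

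First, I would assume $\tmMeta \stepsym^* \tmMeta'$ and $\tmMeta' \in \lrexp{\tyAc}$ and extract such a witness $\resMeta$. Then composing $\tmMeta \stepsym^* \tmMeta'$ with $\tmMeta' \stepsym^* \resMeta$ via transitivity of $\stepsym^*$ yields $\tmMeta \stepsym^* \resMeta$. Packaging this together with the unchanged fact $\resMeta \in \lrres{\tyAc}$ gives the required $\tmMeta \in \lrexp{\tyAc}$ directly from the definition.

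There is no real obstacle here: the expression predicate is defined precisely so that membership is characterized by the existence of a reachable result in $\lrres{\tyAc}$, and since $\stepsym^*$ is the reflexive-transitive closure of the one-step reduction it is closed under prepending reduction sequences. No case analysis on $\tyAc$ or $\tmMeta$ is needed, and neither determinism nor confluence of reduction is invoked. In short, the lemma is an immediate consequence of transitivity of $\stepsym^*$ applied to the definition of $\lrexp{\cdot}$.
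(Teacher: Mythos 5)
Your proof is correct and matches the paper's own argument, which is stated simply as ``Immediate consequence of the transitivity of the reduction relation.'' You have just unfolded the definition of $\lrexp{\tyAc}$ explicitly and composed the two reduction sequences, which is exactly the intended reasoning.
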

\begin{proof}
  Immediate consequence of the transitivity of the reduction relation.
\end{proof}
\begin{lemma}[Monadic Unit]
  \label{lem:apdx:logsem:mon-unit}
   \longonly{$ \lrres{ \tyAc }  \subseteq  \lrexp{ \tyAc } $ and }
   $ \lrval{ \tyAv }  \subseteq  \lrexp{ \tyAv } $
\end{lemma}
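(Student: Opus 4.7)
The proof is essentially immediate from the reflexivity of the reduction relation $\stepsym^*$ together with the definitions of $\lrexp{\cdot}$, $\lrres{\cdot}$, and $\lrval{\cdot}$ given in \Fref{fig:logrel}. There is no real obstacle here; the lemma is a ``monadic unit'' law in the sense that every result (and in particular every value) trivially reduces to itself in zero steps and so sits inside the expression predicate at the same formula. I would write the argument as two short observations rather than a full induction.

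First, for the containment $\lrres{\tyAc} \subseteq \lrexp{\tyAc}$, let $\resMeta \in \lrres{\tyAc}$. Since $\stepsym^*$ is the reflexive transitive closure of $\stepsym$, we have $\resMeta \stepsym^* \resMeta$. Unfolding the definition $\lrexp{\tyAc} = \{\tmMeta \mid \exists \resMeta'.\ \tmMeta \stepsym^* \resMeta' \textand \resMeta' \in \lrres{\tyAc}\}$, taking $\resMeta' = \resMeta$ witnesses $\resMeta \in \lrexp{\tyAc}$.

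Second, for the containment $\lrval{\tyAv} \subseteq \lrexp{\tyAv}$, let $\valMeta \in \lrval{\tyAv}$. By the definition of $\lrres{\tyAv}$ in \Fref{fig:logrel}, we have $\lrres{\tyAv} = \lrval{\tyAv} \cup \{\tmTopC\}$, so $\valMeta \in \lrres{\tyAv}$. Applying the first part then yields $\valMeta \in \lrexp{\tyAv}$, completing the proof. No case analysis on $\tyAc$ or $\tyAv$ is needed, and no auxiliary lemmas beyond reflexivity of reduction are invoked.
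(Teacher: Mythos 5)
Your proof is correct and matches the paper's own argument, which likewise justifies the lemma as an immediate consequence of the reflexivity of the reduction relation. You simply spell out the two containments (and the inclusion $\lrval{\tyAv} \subseteq \lrres{\tyAv}$ from the definition) that the paper leaves implicit.
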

\begin{proof}
  Immediate consequence of the reflexivity of the reduction relation.
\end{proof}
\begin{lemma}[Monadic Bind]
  \label{lem:apdx:logsem:mon-bind}
  \leavevmode
  \begin{enumerate}
  \item If $ \mathin{ \tmMeta }{  \lrexp{ \tyAc }  } $ and for all $ \mathin{ \resMeta }{  \lrres{ \tyAc }  } $ we have
  $ \mathin{ \ottnt{E}  \ottsym{[}  \resMeta  \ottsym{]} }{  \lrexp{ \tyAc' }  } $ then $ \mathin{ \ottnt{E}  \ottsym{[}  \tmMeta  \ottsym{]} }{  \lrexp{ \tyAc' }  } $.

  \item If $ \mathin{ \tmMeta }{  \lrexp{ \tyAv }  } $ and for all $ \mathin{ \valMeta }{  \lrval{ \tyAv }  } $ we have
  $ \mathin{ \ottnt{E}  \ottsym{[}  \valMeta  \ottsym{]} }{  \lrexp{ \tyAc' }  } $ then $ \mathin{ \ottnt{E}  \ottsym{[}  \tmMeta  \ottsym{]} }{  \lrexp{ \tyAc' }  } $.
  \end{enumerate}
\end{lemma}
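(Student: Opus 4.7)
The plan is to prove the two parts in order, with Part 1 being the workhorse and Part 2 following as a straightforward refinement. Both arguments are essentially unfolding of definitions combined with the Closure Under Antireduction lemma (\Lref{lem:logsem:antired}); there is no real obstacle here, just some care with the case $\resMeta =  \tmTopC $.

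For Part 1, I would start by unfolding the definition of $ \lrexp{ \tyAc } $ on the hypothesis $ \mathin{ \tmMeta }{  \lrexp{ \tyAc }  } $, obtaining a result $\resMeta$ with $\tmMeta  \stepsym^{*}  \resMeta$ and $ \mathin{ \resMeta }{  \lrres{ \tyAc }  } $. Because reduction is closed under evaluation contexts, this lifts to $\ottnt{E}  \ottsym{[}  \tmMeta  \ottsym{]}  \stepsym^{*}  \ottnt{E}  \ottsym{[}  \resMeta  \ottsym{]}$. The second hypothesis applied to this particular $\resMeta$ gives $ \mathin{ \ottnt{E}  \ottsym{[}  \resMeta  \ottsym{]} }{  \lrexp{ \tyAc' }  } $, and Closure Under Antireduction then yields $ \mathin{ \ottnt{E}  \ottsym{[}  \tmMeta  \ottsym{]} }{  \lrexp{ \tyAc' }  } $.

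For Part 2, I would reduce to Part 1 instantiated at $\tyAc = \tyAv$. It suffices to show that for every $ \mathin{ \resMeta }{  \lrres{ \tyAv }  } $ we have $ \mathin{ \ottnt{E}  \ottsym{[}  \resMeta  \ottsym{]} }{  \lrexp{ \tyAc' }  } $. By the definition of $ \lrres{ \tyAv } $, either $\resMeta  \ottsym{=}  \valMeta$ for some $ \mathin{ \valMeta }{  \lrval{ \tyAv }  } $—in which case the conclusion is exactly the second hypothesis—or $\resMeta  \ottsym{=}   \tmTopC $. In the latter case, $ \tmTopC $ propagates through $\ottnt{E}$ via the \rulename{stepECtx} rule, so $\ottnt{E}  \ottsym{[}   \tmTopC   \ottsym{]}  \stepsym^{*}   \tmTopC $; since $ \tmTopC $ lies in $ \lrres{ \tyAc' } $ for any $\tyAc'$ (by inspection of all three clauses defining $ \lrres{-} $), we have $ \mathin{  \tmTopC  }{  \lrexp{ \tyAc' }  } $ by Monadic Unit, and then Closure Under Antireduction closes the case.

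The only subtle point worth double-checking is that $ \tmTopC $ genuinely belongs to $ \lrres{ \tyAc' } $ for every choice of $\tyAc'$, so that the $ \tmTopC $ case in Part 2 does not require any further structure on $\tyAc'$. This is built into the definition: $ \lrres{  \tmTopC  }  = \{ \tmTopC \}$, $ \lrres{  \tmBotC  }  = \synres$, and $ \lrres{ \tyAv }  =  \lrval{ \tyAv }  \cup \{ \tmTopC \}$. With this observation in hand, both parts are short and I do not anticipate any technical obstacle.
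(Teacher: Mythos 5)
Your proposal is correct and follows essentially the same route as the paper's own proof: Part~1 by unfolding the expression predicate, lifting reduction through $\ottnt{E}$, and applying Closure Under Antireduction; Part~2 by reducing to Part~1 and splitting $ \lrres{ \tyAv } $ into the value case and the $ \tmTopC $ case, with $\ottnt{E}  \ottsym{[}   \tmTopC   \ottsym{]}  \stepsym^{*}   \tmTopC $ and $ \mathin{  \tmTopC  }{  \lrres{ \tyAc' }  } $ handling the latter. Your explicit check that $ \tmTopC $ lies in $ \lrres{ \tyAc' } $ for every $\tyAc'$ is exactly the point the paper relies on implicitly.
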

\begin{proof}
  For the first part, we have a result $ \mathin{ \resMeta }{  \lrres{ \tyAc }  } $ such that $\tmMeta  \stepsym^{*}  \resMeta$ from
  the definition of the expression predicate.
  It follows that $\ottnt{E}  \ottsym{[}  \tmMeta  \ottsym{]}  \stepsym^{*}  \ottnt{E}  \ottsym{[}  \resMeta  \ottsym{]}$ so by \Lref{lem:apdx:logsem:antired} all we need to show
  is $ \mathin{ \ottnt{E}  \ottsym{[}  \resMeta  \ottsym{]} }{  \lrexp{ \tyAc' }  } $.
  This is immediate from our premise.

  To prove the second part, let $ \mathin{ \resMeta }{  \lrres{ \tyAv }  } $.
  By the first part of this lemma, it suffices to show $ \mathin{ \ottnt{E}  \ottsym{[}  \resMeta  \ottsym{]} }{  \lrexp{ \tyAc' }  } $.
  Examining the definition of $ \lrres{ \tyAv } $, we see that either $\resMeta  \ottsym{=}   \tmTopC $
  or $\resMeta$ is a value in $ \lrval{ \tyAv } $.
  In the former case, we have $\ottnt{E}  \ottsym{[}   \tmTopC   \ottsym{]}  \stepsym^{*}   \tmTopC $ and $ \mathin{  \tmTopC  }{  \lrres{ \tyAc' }  } $.
  The latter case is a consequence of our premise.
\end{proof}

\begin{lemma}[Semantic Downward Closure]
  \label{lem:apdx:logsem:sem-downclos}
  Suppose $ \tyapx{ \tyAc }{ \tyAc' } $ and $ \tyapx{ \tyAv }{ \tyAv' } $. Then:
  \begin{enumerate}
  \item $ \lrexp{ \tyAc' }  \subseteq  \lrexp{ \tyAc } $
  \item $ \lrres{ \tyAc' }  \subseteq  \lrres{ \tyAc } $
  \item $ \lrval{ \tyAv' }  \subseteq  \lrval{ \tyAv } $
  \end{enumerate}
\end{lemma}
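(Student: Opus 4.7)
The plan is to prove all three inclusions simultaneously by induction on $\max\{|\tyAv|, |\tyAv'|\}$ (and for parts (1) and (2), on $\max\{|\tyAc|, |\tyAc'|\}$), combined with case analysis on $\tyapx{\tyAv}{\tyAv'}$ (respectively $\tyapx{\tyAc}{\tyAc'}$). Parts (1) and (2) are mostly bookkeeping once part (3) is in hand. For (1), if $\tmMeta \stepsym^{*} \resMeta$ with $\mathin{\resMeta}{\lrres{\tyAc'}}$, then $\mathin{\resMeta}{\lrres{\tyAc}}$ by (2), so $\mathin{\tmMeta}{\lrexp{\tyAc}}$. For (2), invert $\tyapx{\tyAc}{\tyAc'}$ using \Lref{lem:apdx:log-sem:subtyping-inv}: the $\tmBotC$ case is trivial since $\lrres{\tmBotC} = \synres$; the $\tmTopC$ case forces $\tyAc' = \tmTopC$; and the case $\tyAc = \tyAv$ either has $\tyAc' = \tmTopC$ (immediate, since $\tmTopC \in \lrres{\tyAv}$) or $\tyAc' = \tyAv'$ a value formula, where we appeal to (3).

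For part (3), I would case analyze on the derivation of $\tyapx{\tyAv}{\tyAv'}$. The \rulename{TApxBotV} case and symbol case are immediate (the latter using transitivity of $\leq$ on symbols), and the pair case applies the IH componentwise, justified by $|\tyAv_k| < |(\tyAv_1, \tyAv_2)|$. The set case requires repackaging the witnessing function. Given $\ottsym{\{}\valMeta_j \mid j \in J\ottsym{\}} \in \lrval{\ottsym{\{}\tyAv'_{i'} \mid i' \in I'\ottsym{\}}}$ with witness $f': I' \to J$, the inversion of $\tyapx{\ottsym{\{}\tyAv_i \mid i \in I\ottsym{\}}}{\ottsym{\{}\tyAv'_{i'} \mid i' \in I'\ottsym{\}}}$ gives a map $g: I \to I'$ with $\tyapx{\tyAv_i}{\tyAv'_{g(i)}}$. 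Taking $f = f' \circ g$ makes $f^{-1}(j) = g^{-1}(f'^{-1}(j))$, so $\bigvee_{i \in f^{-1}(j)} \tyAv_i$ is below $\bigvee_{i' \in f'^{-1}(j)} \tyAv'_{i'}$ by monotonicity of joins (\Lref{lem:log-sem:join-lub}), and the IH (at a strictly smaller size, guaranteed by \Lref{lem:log-sem:join-size}) then transfers $\valMeta_j$ from one result predicate to the other.

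The hardest case---and the main obstacle---will be the function case. Suppose $\valLam{\mathit{x}}{\tmMeta} \in \lrval{\tyFun{i' \in I'}{\tyAv'_{i'}}{\tyAc'_{i'}}}$ and we must show membership in $\lrval{\tyFun{i \in I}{\tyAv_{i}}{\tyAc_{i}}}$. Fix any $\mathit{I''} \subseteq \mathit{I}$ and $\valMeta \in \lrval{\bigvee_{i \in \mathit{I''}} \tyAv_i}$; we need $\subst{\tmMeta}{\valMeta}{\mathit{x}} \in \lrexp{\bigvee_{i \in \mathit{I''}} \tyAc_i}$. The inversion of \rulename{TApxFun} gives, for each $i \in \mathit{I''}$, a set $J'_i \subseteq I'$ with $\tyapx{\bigvee_{j \in J'_i} \tyAv'_j}{\tyAv_i}$ and $\tyapx{\tyAc_i}{\bigvee_{j \in J'_i} \tyAc'_j}$. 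Setting $J' = \bigcup_{i \in \mathit{I''}} J'_i$ yields $\tyapx{\bigvee_{j \in J'} \tyAv'_j}{\bigvee_{i \in \mathit{I''}} \tyAv_i}$, so by the IH (part 3, at size strictly smaller than the current maximum thanks to \Lref{lem:log-sem:join-size}) we transfer $\valMeta$ into $\lrval{\bigvee_{j \in J'} \tyAv'_j}$. Unfolding the function predicate on the larger formula at the subset $J' \subseteq I'$ gives $\subst{\tmMeta}{\valMeta}{\mathit{x}} \in \lrexp{\bigvee_{j \in J'} \tyAc'_j}$. Summing the output inequalities yields $\tyapx{\bigvee_{i \in \mathit{I''}} \tyAc_i}{\bigvee_{j \in J'} \tyAc'_j}$, so the IH (part 1) finishes the job.

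Throughout, the subtle point is choosing the induction metric. A purely structural induction would fail because the definitions of $\lrval{}$ and the inversion lemma introduce joins of subformulae; the appeal to \Lref{lem:log-sem:join-size} is essential to keep the IH applicable when juggling those joins in both the set and function cases.
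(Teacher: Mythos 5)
Your proposal is correct and follows essentially the same route as the paper's proof: simultaneous induction on $\max\{\ottsym{\mbox{$\mid$}}  \tyAv  \ottsym{\mbox{$\mid$}}, \ottsym{\mbox{$\mid$}}  \tyAv'  \ottsym{\mbox{$\mid$}}\}$ with case analysis on the order derivation, composing/uniting the witnessing maps in the set and function cases, and invoking the size-of-joins lemma to keep the induction hypothesis applicable to joins of subformulae. The details of your set case (composite $f' \circ g$ and inverse images) and function case (taking $J'$ as the union of the per-clause witness sets and transferring input and output via the IH) match the appendix proof.
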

\begin{proof}
  We prove the three parts simultaneously; the first two are straightforward.
  For the third we proceed by induction on $\max \{\ottsym{\mbox{$\mid$}}  \tyAv  \ottsym{\mbox{$\mid$}}, \ottsym{\mbox{$\mid$}}  \tyAv'  \ottsym{\mbox{$\mid$}}\}$ and
  case analysis on $ \tyapx{ \tyAv }{ \tyAv' } $.


  \begin{case}{$ \tyapx{  \valBotV  }{ \tyAv' } $}
    $ \lrval{ \tyAv' }  \subseteq \synval  =  \lrval{  \valBotV  } $
  \end{case}


  \begin{case}{$ \tyapx{  \ottnt{s}  }{ \ottnt{s'} } $ where $\ottnt{s}  \leq  \ottnt{s'}$}
    Suppose $ \mathin{ \ottnt{s_{{\mathrm{0}}}} }{  \lrval{ \ottnt{s'} }  } $.
    It follows $\ottnt{s'}  \leq  \ottnt{s_{{\mathrm{0}}}}$.
    We need to show $ \mathin{ \ottnt{s_{{\mathrm{0}}}} }{  \lrval{ \ottnt{s} }  } $, or $\ottnt{s}  \leq  \ottnt{s_{{\mathrm{0}}}}$, which follows by
    transitivity.
  \end{case}

  \begin{case}{$ \tyapx{ \ottsym{(}  \tyAv_{{\mathrm{1}}}  \ottsym{,}  \tyAv_{{\mathrm{2}}}  \ottsym{)} }{ \ottsym{(}  \tyAv'_{{\mathrm{1}}}  \ottsym{,}  \tyAv'_{{\mathrm{2}}}  \ottsym{)} } $ where $ \tyapx{ \tyAv_{{\mathrm{1}}} }{ \tyAv'_{{\mathrm{1}}} } $ and
      $ \tyapx{ \tyAv_{{\mathrm{2}}} }{ \tyAv'_{{\mathrm{2}}} } $}
    Suppose $ \mathin{  \tmPair{ \valMeta_{{\mathrm{1}}} }{ \valMeta_{{\mathrm{2}}} }  }{  \lrval{ \ottsym{(}  \tyAv'_{{\mathrm{1}}}  \ottsym{,}  \tyAv'_{{\mathrm{2}}}  \ottsym{)} }  } $.
    It follows $ \mathin{ \valMeta_{{\mathrm{1}}} }{  \lrval{ \tyAv'_{{\mathrm{1}}} }  } $ and $ \mathin{ \valMeta_{{\mathrm{2}}} }{  \lrval{ \tyAv'_{{\mathrm{2}}} }  } $.
    We need to show $ \mathin{  \tmPair{ \valMeta_{{\mathrm{1}}} }{ \valMeta_{{\mathrm{2}}} }  }{  \lrval{ \ottsym{(}  \tyAv_{{\mathrm{1}}}  \ottsym{,}  \tyAv_{{\mathrm{2}}}  \ottsym{)} }  } $.
    It suffices to show $ \mathin{ \valMeta_{{\mathrm{1}}} }{  \lrval{ \tyAv_{{\mathrm{1}}} }  } $ and $ \mathin{ \valMeta_{{\mathrm{2}}} }{  \lrval{ \tyAv_{{\mathrm{2}}} }  } $.
    By the induction hypothesis, we have $ \lrval{ \tyAv'_{{\mathrm{1}}} }  \subseteq  \lrval{ \tyAv_{{\mathrm{1}}} } $
    and $ \lrval{ \tyAv'_{{\mathrm{2}}} }  \subseteq  \lrval{ \tyAv_{{\mathrm{2}}} } $.
    Our goal immediately follows.
  \end{case}

  \begin{case}{$ \tyapx{ \ottsym{\{}  \tyAv_{\ottmv{i}}  \ottsym{\mbox{$\mid$}}   \mathin{ \mathit{i} }{ \mathit{I} }   \ottsym{\}} }{ \ottsym{\{}  \tyAv'_{\ottmv{j}}  \ottsym{\mbox{$\mid$}}   \mathin{ \mathit{j} }{ \mathit{J} }   \ottsym{\}} } $ where
      $ \mathforall{  \mathin{ \mathit{i} }{ \mathit{I} }  }{  \mathexists{  \mathin{ \mathit{j} }{ \mathit{J} }  }{  \tyapx{ \tyAv_{\ottmv{i}} }{ \tyAv'_{\ottmv{j}} }  }  } $}
    Let $ \mathin{ \ottsym{\{}  \valMeta_{\ottmv{k}}  \ottsym{\mbox{$\mid$}}   \mathin{ \mathit{k} }{ \mathit{K} }   \ottsym{\}} }{  \lrval{ \ottsym{\{}  \tyAv'_{\ottmv{j}}  \ottsym{\mbox{$\mid$}}   \mathin{ \mathit{j} }{ \mathit{J} }   \ottsym{\}} }  } $.
    We must show $ \mathin{ \ottsym{\{}  \valMeta_{\ottmv{k}}  \ottsym{\mbox{$\mid$}}   \mathin{ \mathit{k} }{ \mathit{K} }   \ottsym{\}} }{  \lrval{ \ottsym{\{}  \tyAv_{\ottmv{i}}  \ottsym{\mbox{$\mid$}}   \mathin{ \mathit{i} }{ \mathit{I} }   \ottsym{\}} }  } $.
    From the definition of the value predicate, we have $f : J \to K$ such that:
    \begin{equation}
      \label{eqn:logsem:downclos:set}
      \mathforall{ \mathin{ \mathit{k} }{ \mathit{K} } }
        {\valMeta_{\ottmv{k}} \in \lrval{ \tyBigJoin{\mathit{j} \in \mathinv{f}(\mathit{k}) }{\tyAv'_{\ottmv{j}}} }}
    \end{equation}
    From our assumption for this case, we also have $g : I \to J$ such that $\mathforall{ \mathin{ \mathit{i} }{ \mathit{I} } }{\tyapx{\tyAv_{\ottmv{i}}}{\tyAv'_{g(i)}}}$.
    Consider arbitrary $ \mathin{ \mathit{k} }{ \mathit{K} } $.
    The definition of the value predicate requires us to show:
    \[
      \valMeta_{\ottmv{k}} \in \lrval{\tyBigJoin{\mathit{i} \in \mathinv{(\comp{f}{g})}(\mathit{k})}{\tyAv_{\ottmv{i}}}}
    \]
    We now derive:
    \[
      \begin{array}{rcll}
        \tyBigJoin{\mathit{i} \in \mathinv{(\comp{f}{g})}(\mathit{k})}
                  {\tyAv_{\ottmv{i}}} &\tyapxsym
      & \tyBigJoin{\mathit{i} \in \mathinv{(\comp{f}{g})}(\mathit{k})}
                  {\tyAv'_{g(\mathit{i})}}
      & \text{monotonicity of joins and the fact
          $\tyapx{\tyAv_{\ottmv{i}}}{\tyAv'_{g(\mathit{i})}}$ for $ \mathin{ \mathit{i} }{ \mathit{I} } $}\\
      &= & \tyBigJoin{\mathit{j} \in (\comp{g}{\comp{\mathinv{g}}{\mathinv{f}}})(\mathit{k})}{\tyAv'_{\ottmv{j}}}
      & \text{properties of inverse images} \\
      &\tyapxsym
      & \tyBigJoin{\mathit{j} \in \mathinv{f}(\mathit{k})}
                  {\tyAv'_{\ottmv{j}}}
      & \text{properties of joins and inverse images}
      \end{array}
    \]
    We have an induction hypothesis corresponding the above fact:
    \[
      \lrval{\tyBigJoin{\mathit{j} \in \mathinv{f}(\mathit{k})}{\tyAv'_{\ottmv{j}}}}
      \subseteq \lrval{\tyBigJoin{\mathit{i} \in
      \mathinv{(\comp{f}{g})}(\mathit{k})}{\tyAv_{\ottmv{i}}}}
    \]
    As a result, \eqref{eqn:logsem:downclos:set} completes the proof of this case.
  \end{case}

  \begin{case}{$ \tyapx{  \tyFun{  \mathin{ \mathit{i} }{ \mathit{I} }  }{ \tyAv_{\ottmv{i}} }{ \tyAc_{\ottmv{i}} }  }{  \tyFun{  \mathin{ \mathit{i} }{ \mathit{J} }  }{ \tyAv'_{\ottmv{i}} }{ \tyAc'_{\ottmv{i}} }  } $ where
      $\mathforall{ \mathin{ \mathit{i} }{ \mathit{I} } }
                 {\mathexists{ \mathit{J'}  \subseteq  \mathit{J} }{ \tyapx{  \tyBigJoin{  \mathin{ \mathit{j} }{ \mathit{J'} }  }{ \tyAv'_{\ottmv{j}} }  }{ \tyAv_{\ottmv{i}} }  \textand
                      \tyapx{ \tyAc_{\ottmv{i}} }{  \tyBigJoin{  \mathin{ \mathit{j} }{ \mathit{J'} }  }{ \tyAc'_{\ottmv{j}} }  } }}$}
    Let $ \mathin{  \valLam{ \mathit{x} }{ \tmMeta }  }{  \lrval{  \tyFun{  \mathin{ \mathit{i} }{ \mathit{J} }  }{ \tyAv'_{\ottmv{i}} }{ \tyAc'_{\ottmv{i}} }  }  } $.
    We must show $ \mathin{  \valLam{ \mathit{x} }{ \tmMeta }  }{  \lrval{  \tyFun{  \mathin{ \mathit{i} }{ \mathit{J} }  }{ \tyAv_{\ottmv{i}} }{ \tyAc_{\ottmv{i}} }  }  } $.
    This requires us to consider an arbitrary set $ \mathit{I'}  \subseteq  \mathit{I} $ such that
    $ \mathin{ \valMeta }{  \lrval{  \tyBigJoin{  \mathin{ \mathit{i} }{ \mathit{I'} }  }{ \tyAv_{\ottmv{i}} }  }  } $ and prove
    $ \mathin{  \subst{ \tmMeta }{ \valMeta }{ \mathit{x} }  }{  \lrexp{  \tyBigJoin{  \mathin{ \mathit{i} }{ \mathit{I'} }  }{ \tyAc_{\ottmv{i}} }  }  } $.

    From our assumption for this case, we have a function
    $f : \mathit{I} \to \powerset{\mathit{J}}$ such that for all $ \mathin{ \mathit{i} }{ \mathit{I} } $:
    \begin{equation}
      \label{eqn:logsem:downclos:fun}
      \tyapx{\tyBigJoin{\mathit{j} \in f(\mathit{j})}{\tyAv'_{\ottmv{j}}}}
            {\tyAv_{\ottmv{i}}}\ \textand \ 
      \tyapx{\tyAc_{\ottmv{i}}}
            {\tyBigJoin{\mathit{j} \in f(\mathit{i})}{\tyAc'_{\ottmv{j}}}}
    \end{equation}
    Let $\mathit{J'} = \{ j \mid \mathexists{i \in I'}{j \in f(i)} \}$.
    Suppose we knew the following two facts:
    \begin{equation}
       \tyBigJoin{  \mathin{ \mathit{j} }{ \mathit{J'} }  }{ \tyAv'_{\ottmv{j}} }  \tyapxsym  \tyBigJoin{  \mathin{ \mathit{i} }{ \mathit{I'} }  }{ \tyAv_{\ottmv{i}} } 
      \ \textand\ %
       \tyBigJoin{  \mathin{ \mathit{i} }{ \mathit{I'} }  }{ \tyAc_{\ottmv{i}} }  \tyapxsym  \tyBigJoin{  \mathin{ \mathit{j} }{ \mathit{J'} }  }{ \tyAc'_{\ottmv{j}} } 
      \label{eqn:logsem:downclos:goal1}
    \end{equation}
    Then we would have the corresponding induction hypotheses:
    \[
     \lrval{  \tyBigJoin{  \mathin{ \mathit{i} }{ \mathit{I'} }  }{ \tyAv_{\ottmv{i}} }  }  \subseteq  \lrval{  \tyBigJoin{  \mathin{ \mathit{j} }{ \mathit{J'} }  }{ \tyAv'_{\ottmv{j}} }  } \ \textand\ 
     \lrexp{  \tyBigJoin{  \mathin{ \mathit{j} }{ \mathit{J'} }  }{ \tyAc'_{\ottmv{j}} }  }  \subseteq  \lrexp{  \tyBigJoin{  \mathin{ \mathit{i} }{ \mathit{I'} }  }{ \tyAc_{\ottmv{i}} }  } 
    \]
    The former would allow us to instantiate the fact that
    $ \mathin{  \valLam{ \mathit{x} }{ \tmMeta }  }{  \lrval{  \tyFun{  \mathin{ \mathit{i} }{ \mathit{J} }  }{ \tyAv'_{\ottmv{i}} }{ \tyAc'_{\ottmv{i}} }  }  } $ with $\valMeta$.
    It follows ${ \mathin{  \subst{ \tmMeta }{ \valMeta }{ \mathit{x} }  }{  \lrexp{  \tyBigJoin{  \mathin{ \mathit{j} }{ \mathit{J'} }  }{ \tyAc'_{\ottmv{j}} }  }  } }$.
    Finally, the latter induction hypothesis would complete the proof.

    Thus, to complete the proof it suffices to show \eqref{eqn:logsem:downclos:goal1}.
    We derive it below.
    \[
      \begin{array}{rcll}
         \tyBigJoin{  \mathin{ \mathit{j} }{ \mathit{J'} }  }{ \tyAv'_{\ottmv{j}} }  &\tyapxsym
        &\tyBigJoin{ \mathin{ \mathit{i} }{ \mathit{I'} } }
          {\tyBigJoin{\mathit{j} \in f(\mathit{i})}{\tyAv'_{\ottmv{j}}}}
          & \text{properties of joins and definition of $J'$} \\
        &\tyapxsym
          & \tyBigJoin{  \mathin{ \mathit{i} }{ \mathit{I'} }  }{ \tyAv_{\ottmv{i}} }  & \text{monotonicity of joins and \eqref{eqn:logsem:downclos:fun}}
      \end{array}
    \]
    \[
      \begin{array}{rcll}
         \tyBigJoin{  \mathin{ \mathit{i} }{ \mathit{I'} }  }{ \tyAc_{\ottmv{i}} }  &\tyapxsym
        &\tyBigJoin{ \mathin{ \mathit{i} }{ \mathit{I'} } }
        {\tyBigJoin{\mathit{j} \in f(\mathit{i})}{\tyAc'_{\ottmv{j}}}}
        & \text{monotonicity of joins and \eqref{eqn:logsem:downclos:fun}} \\
                           &\tyapxsym & \tyBigJoin{  \mathin{ \mathit{j} }{ \mathit{J'} }  }{ \tyAc'_{\ottmv{j}} } 
        &\text{properties of joins and definition of $J'$}
      \end{array}
    \]
  \end{case}
\end{proof}

\begin{lemma}
  \label{lem:apdx:logsem:join:val-weak}

  If $ \mathin{ \valMeta }{  \lrval{ \tyAv }  } $ or $ \mathin{ \valMeta' }{  \lrval{ \tyAv }  } $ then $ \mathin{  \synjoin{ \valMeta }{ \valMeta' }  }{  \lrres{ \tyAv }  } $.
\end{lemma}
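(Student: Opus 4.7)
The plan is to proceed by induction on the value formula $\tyAv$. By symmetry of $\synjoin{\cdot}{\cdot}$ on values we may assume $\mathin{\valMeta}{\lrval{\tyAv}}$, and then perform a case analysis on the shape of $\valMeta'$. In every case, $\synjoin{\valMeta}{\valMeta'}$ either evaluates to $\tmTopC$ (which lies in $\lrres{\tyAv}$ by definition) or to a value of the same shape as $\valMeta$, in which case we must establish membership in $\lrval{\tyAv}$.

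First I would dispatch the easy cases. When $\valMeta' = \valBotV$ the join is simply $\valMeta$, so we are done. When $\valMeta'$ has a shape incompatible with $\valMeta$ (e.g., a set joined with a pair, or a symbol joined with a function) the join is $\tmTopC$. When $\tyAv = \valBotV$ the conclusion is immediate from the definitions. When $\tyAv = \ottnt{s}$, the join of two symbols is either a symbol bounded below by $\ottnt{s}$ via transitivity of the order on symbols, or $\tmTopC$. The pair case reduces to two applications of the inductive hypothesis on the components, combined with the definition of pair lifting $\tyPairOp{\cdot}{\cdot}$. For sets, given a witness $f : \mathit{I} \to \mathit{J}$ exhibiting $\valMeta = \ottsym{\{}\valMeta_{\ottmv{j}} \ottsym{\mbox{$\mid$}} \mathin{\mathit{j}}{\mathit{J}}\ottsym{\}}$ in the predicate and a second set $\valMeta'$ indexed over some $\mathit{K}$, I would extend $f$ to a witness for the union by keeping its image in $\mathit{J}$ unchanged; each fresh index contributed by $\mathit{K}$ then has empty preimage, and the corresponding condition on the new elements reduces to membership at the join over the empty set of formulae, which holds vacuously.

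The main obstacle is the function case. When $\valMeta = \valLam{\mathit{x}}{\tmMeta}$ and $\valMeta' = \valLam{\mathit{x}}{\tmMeta'}$, the join is $\valLam{\mathit{x}}{\tmJoin{\tmMeta}{\tmMeta'}}$, and to place this in $\lrval{\tyFun{\mathin{\mathit{i}}{\mathit{I}}}{\tyAv_{\ottmv{i}}}{\tyAc_{\ottmv{i}}}}$ we must show that for every $\mathit{I'} \subseteq \mathit{I}$ and every $\valMeta_{{\mathrm{0}}}$ in $\lrval{\tyBigJoin{\mathin{\mathit{i}}{\mathit{I'}}}{\tyAv_{\ottmv{i}}}}$, the body $\tmJoin{\tmMeta[\valMeta_{{\mathrm{0}}}/\mathit{x}]}{\tmMeta'[\valMeta_{{\mathrm{0}}}/\mathit{x}]}$ lies in $\lrexp{\tyBigJoin{\mathin{\mathit{i}}{\mathit{I'}}}{\tyAc_{\ottmv{i}}}}$. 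The premise supplies $\tmMeta[\valMeta_{{\mathrm{0}}}/\mathit{x}] \stepsym^{*} \resMeta$ for some $\resMeta$ in the corresponding result predicate, but we know nothing about $\tmMeta'[\valMeta_{{\mathrm{0}}}/\mathit{x}]$. The crux is to invoke the approximation reduction rule of \Fref{fig:opsem} to drive $\tmMeta'[\valMeta_{{\mathrm{0}}}/\mathit{x}] \stepsym^{*} \tmBotC$, after which $\tmJoin{\tmMeta[\valMeta_{{\mathrm{0}}}/\mathit{x}]}{\tmMeta'[\valMeta_{{\mathrm{0}}}/\mathit{x}]} \stepsym^{*} \synjoin{\resMeta}{\tmBotC} = \resMeta$; \Lref{lem:apdx:logsem:antired} then places the whole expression in the desired expression predicate.
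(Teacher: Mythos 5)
Your proof is correct and follows the same route as the paper, which records this lemma only as ``routine induction on $\tyAv$''; your case analysis fills in exactly the details that proof elides. In particular, your identification of the approximation step $\step{\tmMeta}{\tmBotC}$ as what lets the one-sided premise go through in the function case (discarding the unknown body and reducing the join to the known result) is the right observation, and the set case's extension of the witness $f$ with empty preimages on the new indices is likewise the intended argument.
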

\begin{proof}
  Routine induction on $\tyAv$.
\end{proof}

\begin{lemma}
  \label{lem:apdx:logsem:join:val-strong}
  \leavevmode
  \begin{enumerate}
  \item If $ \mathin{ \valMeta }{  \lrval{ \tyAv }  } $ and $ \mathin{ \valMeta' }{  \lrval{ \tyAv' }  } $ then
    $ \mathin{  \synjoin{ \valMeta }{ \valMeta' }  }{  \lrres{  \tyJoin{ \tyAv }{ \tyAv' }  }  } $.
  \item If $ \mathin{ \resMeta }{  \lrres{ \tyAc }  } $ and $ \mathin{ \resMeta' }{  \lrres{ \tyAc' }  } $ then
    $ \mathin{  \synjoin{ \resMeta }{ \resMeta' }  }{  \lrres{  \tyJoin{ \tyAc }{ \tyAc' }  }  } $.
  \end{enumerate}
\end{lemma}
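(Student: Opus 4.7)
The plan is to prove both parts by simultaneous strong induction on $\max\{|\tyAc|, |\tyAc'|\}$ (reading Part 1 as the special case where both formulae are value formulae), so that invocations of either part inside the other are justified at strictly smaller formula size.

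For Part 2 (results), no new induction is opened: the whole argument is case analysis on the shapes of $\resMeta$ and $\resMeta'$. If either is $\tmTopC$ then $\synjoin{\resMeta}{\resMeta'} = \tmTopC$ and $\tmTopC \in \lrres{ \tyJoin{ \tyAc }{ \tyAc' } }$ holds uniformly. If either is $\tmBotC$, inverting the result predicate forces the corresponding formula to be $\tmBotC$, so the formula-level join collapses to the other side and the result-level join likewise collapses; the goal follows from the premise on the other side. The remaining case has both operands values $\valMeta, \valMeta'$, and delegates to Part 1 at the same $\max$-size. To handle the subcase where one of $\tyAc, \tyAc'$ is $\tmBotC$ but the corresponding operand is nevertheless a concrete value, we observe that $\lrval{\valBotV}$ is all values, so we may apply Part 1 with $\valBotV$ in place of $\tmBotC$; since $ \tyJoin{ \valBotV }{ \tyAv' }  = \tyAv'$, the conclusion aligns.

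For Part 1 (values), we case-analyze on the outer constructors of $\tyAv$ and $\tyAv'$. Mismatched constructor pairs yield $\tyJoin{\tyAv}{\tyAv'} = \tmTopC$ and $\synjoin{\valMeta}{\valMeta'} = \tmTopC$ by the "otherwise" clauses, and $\tmTopC \in \lrres{\tmTopC}$. Cases where $\tyAv$ or $\tyAv'$ is $\valBotV$ reduce to Lemma~\ref{lem:apdx:logsem:mon-unit}. Matching symbol, pair, and set cases proceed by inverting the value-predicate definitions to expose the component data, applying the IH of Part 1 at strictly smaller size to the components, and using Lemma~\ref{lem:apdx:logsem:sem-downclos} along with Lemma~\ref{lem:apdx:logsem:join:val-weak} to reassemble the joined value, noting that the pair and singleton liftings propagate $\tmTopC$ exactly as required by the formula-level join.

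The main obstacle is the function case. Writing $\valMeta = \valLam{x}{\tmMeta}$ and $\valMeta' = \valLam{x}{\tmMeta'}$, we have $\synjoin{\valMeta}{\valMeta'} = \valLam{x}{(\tmJoin{\tmMeta}{\tmMeta'})}$, and the logical predicate for $\tyJoin{\tyAv}{\tyAv'}$ quantifies over an arbitrary $K \subseteq I \cup J$ and an argument $\valMeta_0 \in \lrval{\tyBigJoin{k}{K}{\tau_k}}$. Semantic Downward Closure (Lemma~\ref{lem:apdx:logsem:sem-downclos}) splits this membership into the two halves $K \cap I$ and $K \cap J$; instantiating the predicates for $\valMeta$ and $\valMeta'$ yields results $\resMeta_1, \resMeta_2$ with $\subst{\tmMeta}{\valMeta_0}{x} \stepsym^* \resMeta_1$ and $\subst{\tmMeta'}{\valMeta_0}{x} \stepsym^* \resMeta_2$, with $\resMeta_1$ and $\resMeta_2$ in the appropriate result predicates over subformulae $\phi_k, \phi'_k$. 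The operational rule $\tmJoin{\resMeta_1}{\resMeta_2} \stepsym \synjoin{\resMeta_1}{\resMeta_2}$ then reduces the body of the joined lambda to $\synjoin{\resMeta_1}{\resMeta_2}$; since $|\phi_k|, |\phi'_k| < |\tyAv|, |\tyAv'|$, the IH of Part 2 applies at strictly smaller size and yields membership in $\lrres{(\tyBigJoin{k}{K \cap I}{\phi_k}) \sqcup (\tyBigJoin{k}{K \cap J}{\phi'_k})}$. Closure under antireduction (Lemma~\ref{lem:apdx:logsem:antired}) lifts this back to the required $\lrexp{\cdot}$ membership, closing the function case and the whole proof.
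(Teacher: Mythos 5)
Your proof is correct and takes essentially the same route as the paper's: a simultaneous induction on the maximum size of the two formulae, with Part~2 reduced by case analysis on the shapes of the results (top, bottom, both values) to Part~1, and Part~1 handled by case analysis on the formula constructors using Semantic Downward Closure and Lemma~\ref{lem:apdx:logsem:join:val-weak}, the function case recursing into Part~2 at strictly smaller size via Lemma~\ref{lem:apdx:log-sem:join-size}. One small slip: the case where $\tyAv$ or $\tyAv'$ is $ \valBotV $ needs Lemma~\ref{lem:apdx:logsem:join:val-weak} rather than Lemma~\ref{lem:apdx:logsem:mon-unit}, since $ \synjoin{ \valMeta }{ \valMeta' } $ need not equal $\valMeta'$ there.
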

\begin{proof}
  We prove both parts simultaneously by induction on the maximum
  size of the two types involved.
  The first part uses \Lref{lem:apdx:logsem:sem-downclos} and
  \Lref{lem:apdx:logsem:join:val-weak}.

  We prove the second part by case analysis.
  If either $\resMeta$ or $\resMeta'$ is $ \tmTopC $ then so is $ \synjoin{ \resMeta }{ \resMeta' } $,
  making the goal trivial.
  We can thus assume $\resMeta$, $\resMeta'$, $\tyAc$ and $\tyAc'$ are not $ \tmTopC $.
  If $\resMeta$ is $ \tmBotC $ then so is $\tyAc$ and we have
  $ \synjoin{ \resMeta }{ \resMeta' }   \ottsym{=}  \resMeta'$ and $ \tyJoin{ \tyAc }{ \tyAc' }   \ottsym{=}  \tyAc'$, making the goal immediate.
  Without loss of generality, from here we can assume both $\resMeta$ and
  $\resMeta'$ are values.
  It is still possible that $\tyAc$ or $\tyAc'$ is $ \tmBotC $; a
  straightforward case analysis on these formulae allows us to apply
  \Lref{lem:apdx:logsem:join:val-weak} and the first part of this lemma to
  complete the proof.
\end{proof}

\begin{lemma}
  \label{lem:apdx:logsem:join:exp}

  If $ \mathin{ \tmMeta }{  \lrexp{ \tyAc }  } $ and $ \mathin{ \tmMeta' }{  \lrexp{ \tyAc' }  } $ then $ \mathin{  \tmJoin{ \tmMeta }{ \tmMeta' }  }{  \lrexp{  \tyJoin{ \tyAc }{ \tyAc' }  }  } $.
\end{lemma}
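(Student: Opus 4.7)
The plan is to reduce the claim to the corresponding statement about results, which is part 2 of the same lemma (\Lref{lem:apdx:logsem:join:val-strong}), by peeling off the two subterms one at a time with Monadic Bind.

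First I would apply Monadic Bind (\Lref{lem:apdx:logsem:mon-bind}, first part) with the evaluation context $E = \ctxHole \vee \tmMeta'$. This is a valid evaluation context by \Fref{fig:syntax}, and the $\tmMeta \in \lrexp{\tyAc}$ premise is given. It then suffices to show that for every $\resMeta \in \lrres{\tyAc}$ we have $\resMeta \vee \tmMeta' \in \lrexp{\tyAc \sqcup \tyAc'}$. I would apply Monadic Bind a second time, this time with $E' = \resMeta \vee \ctxHole$ and the premise $\tmMeta' \in \lrexp{\tyAc'}$, reducing the goal further to showing $\resMeta \vee \resMeta' \in \lrexp{\tyAc \sqcup \tyAc'}$ for every $\resMeta' \in \lrres{\tyAc'}$.

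At this point both operands of the join are results, so the join reduction rule from \Fref{fig:opsem} gives $\resMeta \vee \resMeta' \stepsym \synjoin{\resMeta}{\resMeta'}$, where the right-hand side is the metafunction defined there. The previously established part 2 of this lemma (\Lref{lem:apdx:logsem:join:val-strong}) directly yields $\synjoin{\resMeta}{\resMeta'} \in \lrres{\tyJoin{\tyAc}{\tyAc'}}$. Combining this with Closure Under Antireduction (\Lref{lem:apdx:logsem:antired}) and the inclusion $\lrres{\tyJoin{\tyAc}{\tyAc'}} \subseteq \lrexp{\tyJoin{\tyAc}{\tyAc'}}$ from the definition of the expression predicate (equivalently Monadic Unit, \Lref{lem:apdx:logsem:mon-unit}) discharges the remaining obligation.

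There is no real obstacle here: the nontrivial content lies entirely in the result-level case (part 2), which has already been proven by induction on formula size. The expression-level case is essentially a bookkeeping exercise that uses Monadic Bind to schedule the two subcomputations and then appeals to the result-level fact once both sides are reduced. The only point requiring minor care is checking that $\ctxHole \vee \tmMeta'$ and $\resMeta \vee \ctxHole$ are indeed among the evaluation contexts of \Fref{fig:syntax}, which is immediate from the clauses for binary join.
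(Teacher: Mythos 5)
Your proposal is correct and is essentially the paper's own argument: the paper proves this lemma as a ``straightforward application'' of \Lref{lem:apdx:logsem:join:val-strong} and \Lref{lem:apdx:logsem:mon-bind}, which is precisely the decomposition you carry out (two uses of Monadic Bind through the join evaluation contexts, then the single reduction step to $\synjoin{\resMeta}{\resMeta'}$ and the result-level join lemma). Your spelled-out details, including the check that $\ctxHole \vee \tmMeta'$ and $\resMeta \vee \ctxHole$ are legal evaluation contexts, are all accurate.
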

\begin{proof}
  Straightforward application of \Lref{lem:apdx:logsem:join:val-strong} and
  \Lref{lem:apdx:logsem:mon-bind}.
\end{proof}

\begin{lemma}[Semantic Join]
  \label{lem:apdx:logsem:join}

  Suppose there exists $f : \mathit{I} \to \mathit{J}$ such that for all
  $ \mathin{ \mathit{j} }{ \mathit{J} } $ we have $ \mathin{ \tmMeta_{\ottmv{j}} }{  \lrexp{  \tyBigJoin{  \mathin{ \mathit{i} }{  \mathinv{ \mathit{f} }   \ottsym{(}  \mathit{j}  \ottsym{)} }  }{ \tyAc_{\ottmv{i}} }  }  } $.
  Then $ \mathin{  \tmJoinManySet{  \mathin{ \mathit{j} }{ \mathit{J} }  }{ \tmMeta_{\ottmv{j}} }  }{  \lrexp{  \tyBigJoin{  \mathin{ \mathit{i} }{ \mathit{I} }  }{ \tyAc_{\ottmv{i}} }  }  } $.
\end{lemma}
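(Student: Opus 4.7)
The plan is to reduce the statement to repeated application of the binary semantic join lemma (\Lref{lem:apdx:logsem:join:exp}). The key observation is that since $f : \mathit{I} \to \mathit{J}$ partitions the domain $\mathit{I}$ into preimages $\{\mathinv{f}(\mathit{j}) \mid \mathit{j} \in \mathit{J}\}$, the target formula decomposes as
\[  \tyBigJoin{  \mathin{ \mathit{i} }{ \mathit{I} }  }{ \tyAc_{\ottmv{i}} }  \;\;\tyapxsym\;\; \tyBigJoin{ \mathin{ \mathit{j} }{ \mathit{J} } }{ \tyBigJoin{ \mathit{i} \in \mathinv{f}(\mathit{j}) }{\tyAc_{\ottmv{i}}} } \]
with the order going both ways, so by \Lref{lem:apdx:logsem:sem-downclos}(1) it suffices to prove membership in the expression predicate of the right-hand side.

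First, I would proceed by induction on the cardinality of the finite index set $\mathit{J}$. In the base case $\mathit{J} = \emptyset$, the domain $\mathit{I}$ must also be empty (since $f$ is total), so both sides become the empty join, which we take to be $ \tmBotC $; membership is then immediate, since $ \lrexp{  \tmBotC  }  = \syncomp$ by the definition of the result predicate. For the inductive step, I would pick some $\mathit{j_{{\mathrm{0}}}} \in \mathit{J}$ and let $\mathit{J'} = \mathit{J} \setminus \{\mathit{j_{{\mathrm{0}}}}\}$. Applying the induction hypothesis to the restriction of $f$ to $\mathit{I'} = \mathit{I} \setminus \mathinv{f}(\mathit{j_{{\mathrm{0}}}})$, which is a map into $\mathit{J'}$, yields
\[  \tmJoinManySet{  \mathin{ \mathit{j} }{ \mathit{J'} }  }{ \tmMeta_{\ottmv{j}} }  \in  \lrexp{ \tyBigJoin{  \mathin{ \mathit{j} }{ \mathit{J'} }  }{ \tyBigJoin{ \mathit{i} \in \mathinv{f}(\mathit{j}) }{\tyAc_{\ottmv{i}}} } } . \]
Together with the premise $\tmMeta_{\mathit{j_{{\mathrm{0}}}}} \in  \lrexp{ \tyBigJoin{ \mathit{i} \in \mathinv{f}(\mathit{j_{{\mathrm{0}}}})}{\tyAc_{\ottmv{i}}}} $, I can invoke the binary lemma \Lref{lem:apdx:logsem:join:exp} to conclude that
\[ \tmMeta_{\mathit{j_{{\mathrm{0}}}}} \tmJoinSym  \tmJoinManySet{  \mathin{ \mathit{j} }{ \mathit{J'} }  }{ \tmMeta_{\ottmv{j}} }  \in \lrexp{\tyBigJoin{ \mathit{i} \in \mathinv{f}(\mathit{j_{{\mathrm{0}}}})}{\tyAc_{\ottmv{i}}} \tyJoinSym  \tyBigJoin{  \mathin{ \mathit{j} }{ \mathit{J'} }  }{ \tyBigJoin{ \mathit{i} \in \mathinv{f}(\mathit{j}) }{\tyAc_{\ottmv{i}}}}}. \]
The resulting formula is equivalent in the streaming order to $ \tyBigJoin{  \mathin{ \mathit{i} }{ \mathit{I} }  }{ \tyAc_{\ottmv{i}} } $, using associativity and commutativity of $\tyJoinSym$ together with the partition identity above. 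A final appeal to \Lref{lem:apdx:logsem:sem-downclos}(1) converts this to membership in the desired predicate.

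I do not anticipate major obstacles: both the binary join lemma and semantic downward closure are already in hand, and the finiteness of $\mathit{J}$ ensures the induction terminates. The main bookkeeping step is handling the syntactic arrangement of the nested joins and verifying that the join $\tmMeta_{\mathit{j_{{\mathrm{0}}}}} \tmJoinSym  \tmJoinManySet{  \mathin{ \mathit{j} }{ \mathit{J'} }  }{ \tmMeta_{\ottmv{j}} } $ agrees with $ \tmJoinManySet{  \mathin{ \mathit{j} }{ \mathit{J} }  }{ \tmMeta_{\ottmv{j}} } $; this is trivial up to the usual treatment of iterated joins, but should be stated explicitly.
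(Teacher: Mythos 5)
Your proposal is correct and follows essentially the same route as the paper's proof: partition $\mathit{I}$ into the preimages $\mathinv{f}(\mathit{j})$, pass to the nested join via \Lref{lem:apdx:logsem:sem-downclos}, and then discharge the goal by repeated application of \Lref{lem:apdx:logsem:join:exp}. Your explicit induction on the cardinality of $\mathit{J}$ is just the formal version of the paper's ``applying the binary lemma repeatedly,'' so there is nothing substantive to change.
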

\begin{proof}
  Note that since $f$ is a total function, $\mathit{I} = \bigcup_{ \mathin{ \mathit{j} }{ \mathit{J} } }{\mathinv{f}(j)}$.
  It follows $ \tyapx{  \tyBigJoin{  \mathin{ \mathit{i} }{ \mathit{I} }  }{ \tyAc_{\ottmv{i}} }  }{  \tyBigJoin{  \mathin{ \mathit{j} }{ \mathit{J} }  }{  \tyBigJoin{  \mathin{ \mathit{i} }{  \mathinv{ \mathit{f} }   \ottsym{(}  \mathit{j}  \ottsym{)} }  }{ \tyAc_{\ottmv{i}} }  }  } $.
  Thus by \Lref{lem:apdx:logsem:sem-downclos}, it suffices to show:
  \[
     \mathin{  \tmJoinManySet{  \mathin{ \mathit{j} }{ \mathit{J} }  }{ \tmMeta_{\ottmv{j}} }  }{  \lrexp{  \tyBigJoin{  \mathin{ \mathit{j} }{ \mathit{J} }  }{  \tyBigJoin{  \mathin{ \mathit{i} }{  \mathinv{ \mathit{f} }   \ottsym{(}  \mathit{j}  \ottsym{)} }  }{ \tyAc_{\ottmv{i}} }  }  }  } 
  \]
  Let $ \mathin{ \mathit{j} }{ \mathit{J} } $.
  By applying \Lref{lem:apdx:logsem:join:exp} repeatedly (by convention, $\mathit{J}$ is assumed to be finite), our goal reduces to showing
  $ \mathin{ \tmMeta_{\ottmv{j}} }{  \lrexp{  \tyBigJoin{  \mathin{ \mathit{i} }{  \mathinv{ \mathit{f} }   \ottsym{(}  \mathit{j}  \ottsym{)} }  }{ \tyAc_{\ottmv{i}} }  }  } $, which we assumed as a premise.
\end{proof}


\begin{lemma}
  \label{lem:apdx:logsem:set-mono-res}
  Suppose there exists $f : \mathit{I} \to \mathit{J}$ such that for all
  $ \mathin{ \mathit{j} }{ \mathit{J} } $ we have $ \mathin{ \resMeta_{\ottmv{j}} }{  \lrres{  \tyBigJoin{  \mathin{ \mathit{i} }{  \mathinv{ \mathit{f} }   \ottsym{(}  \mathit{j}  \ottsym{)} }  }{ \tyAc_{\ottmv{i}} }  }  } $.
  Then $ \mathin{ \ottsym{\{}  \resMeta_{\ottmv{j}}  \ottsym{\mbox{$\mid$}}   \mathin{ \mathit{j} }{ \mathit{J} }   \ottsym{\}} }{  \lrexp{  \tyBigJoin{  \mathin{ \mathit{i} }{ \mathit{I} }  }{  \tySng{ \tyAc_{\ottmv{i}} }  }  }  } $.
\end{lemma}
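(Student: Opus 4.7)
The plan is to case-split on whether any $\resMeta_{\ottmv{j}}$ equals $\tmTopC$. If some $\resMeta_{\ottmv{j_{{\mathrm{0}}}}} = \tmTopC$, then the set literal reduces to $\tmTopC$ via the evaluation-context rule applied inside the set, and we are done because $\tmTopC \in \lrres{\tyAc}$ for every computation formula $\tyAc$ (immediate from the definition of $\lrres{\cdot}$).

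In the main case where no $\resMeta_{\ottmv{j}}$ is $\tmTopC$, I would first argue that no $\tyAc_{\ottmv{i}}$ equals $\tmTopC$ either: otherwise $\tyBigJoin{i' \in \mathinv{f}(f(i))}{\tyAc_{\ottmv{i'}}}$ would be $\tmTopC$, forcing $\resMeta_{\ottmv{f(i)}} = \tmTopC$. Then I would apply the reduction $\ottsym{\{} \ldots \ottsym{,} \tmBotC \ottsym{,} \ldots \ottsym{\}} \stepsym \ottsym{\{} \ldots \ottsym{\}}$ repeatedly to reduce the literal to $\ottsym{\{}\valMeta_{\ottmv{j}} \ottsym{\mbox{$\mid$}} \mathin{j}{J'} \ottsym{\}}$, where $J' = \{j \mid \resMeta_{\ottmv{j}} \ne \tmBotC\}$ and each $\valMeta_{\ottmv{j}} = \resMeta_{\ottmv{j}}$ is a value.

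Letting $I_v = \{\mathin{i}{I} \mid \tyAc_{\ottmv{i}} \text{ is a value formula}\}$, the outer formula $\tyBigJoin{i \in I}{\tySng{\tyAc_{\ottmv{i}}}}$ collapses either to $\tmBotC$ (if $I_v$ is empty, which makes the goal trivial since $\lrexp{\tmBotC}$ contains every term) or to the set formula $\ottsym{\{}\tyAc_{\ottmv{i}} \ottsym{\mbox{$\mid$}} \mathin{i}{I_v} \ottsym{\}}$. In the nontrivial subcase, \Lref{lem:logsem:mon-unit} reduces the goal to showing $\ottsym{\{}\valMeta_{\ottmv{j}} \ottsym{\mbox{$\mid$}} \mathin{j}{J'} \ottsym{\}} \in \lrval{\ottsym{\{}\tyAc_{\ottmv{i}} \ottsym{\mbox{$\mid$}} \mathin{i}{I_v} \ottsym{\}}}$, and I would supply the restriction $f|_{I_v}$ as witness function. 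For each $\mathin{i}{I_v}$ the inner join $\tyBigJoin{i' \in \mathinv{f}(f(i))}{\tyAc_{\ottmv{i'}}}$ is a value formula (it has a value summand and no $\tmTopC$ summand), which forces $\resMeta_{\ottmv{f(i)}}$ to be a value and hence $f(i) \in J'$; and for each $\mathin{j}{J'}$ the preimage $\mathinv{(f|_{I_v})}(j)$ differs from $\mathinv{f}(j)$ only in $\tmBotC$-summand entries that drop out of the join, so the hypothesis supplies the required per-element residency condition.

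The hardest part will be the bookkeeping around how $\tmBotC$ and $\tmTopC$ summands propagate through the singleton lifting $\tySng{\cdot}$ and through the join on set formulae, and checking that $f|_{I_v}$ is indeed a valid witness for the set value predicate even in degenerate situations where some preimage is empty (in which case $\lrres{\tmBotC} = \synres$ makes the residency condition trivial).
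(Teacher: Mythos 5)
Your proof is correct and follows essentially the same route as the paper's: dispatch the $\tmTopC$ case by reduction, prune the $\tmBotC$ entries from both the set literal (yielding $J'$) and the formula (yielding your $I_v$, the paper's $I'$), use the restriction of $f$ as the witness for the set value predicate, and discharge the per-element obligation via semantic downward closure since the dropped indices contribute only $\tmBotC$ to the joins. The extra degenerate cases you flag (empty $I_v$, empty preimages) are handled exactly as you suggest and do not change the argument.
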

\begin{proof}
  If $\resMeta_{\ottmv{j}}  \ottsym{=}   \tmTopC $ for \emph{any} $ \mathin{ \mathit{j} }{ \mathit{J} } $, then
  $\ottsym{\{}  \resMeta_{\ottmv{j}}  \ottsym{\mbox{$\mid$}}   \mathin{ \mathit{j} }{ \mathit{J} }   \ottsym{\}}  \stepsym^{*}   \tmTopC $ and we have shown our goal.
  Thus, we will assume all $\resMeta_{\ottmv{j}}$ and $\tyAc_{\ottmv{i}}$ are not $ \tmTopC $.

  Let $\mathit{J'} = \{ \mathit{j} \mid  \mathin{ \mathit{j} }{ \mathit{J} }  \textand \resMeta_{\ottmv{j}} \neq  \tmBotC  \}$.
  Every result $\resMeta_{\ottmv{j}}$ where $ \mathin{ \mathit{j} }{ \mathit{J'} } $ is a value.
  Using the operational semantics and \Lref{lem:apdx:logsem:antired}, it suffices
  to show:
  \[  \mathin{ \ottsym{\{}  \resMeta_{\ottmv{j}}  \ottsym{\mbox{$\mid$}}   \mathin{ \mathit{j} }{ \mathit{J'} }   \ottsym{\}} }{  \lrexp{  \tyBigJoin{  \mathin{ \mathit{i} }{ \mathit{I} }  }{  \tySng{ \tyAc_{\ottmv{i}} }  }  }  }  \]

  Let $\mathit{I'} = \{ \mathit{i} \mid  \mathin{ \mathit{i} }{ \mathit{I} }  \textand \tyAc_{\ottmv{i}} \neq  \tmBotC  \}$.
  Every formula $\tyAc_{\ottmv{i}}$ where $ \mathin{ \mathit{i} }{ \mathit{I'} } $ is a value formula;
  let $\tyAv_{\ottmv{i}}  \ottsym{=}  \tyAc_{\ottmv{i}}$.
  Using this and the definition of $\tySng{\mathemdash}$, we have
  $ \tyapx{  \tyBigJoin{  \mathin{ \mathit{i} }{ \mathit{I} }  }{  \tySng{ \tyAc_{\ottmv{i}} }  }  }{  \tyBigJoin{  \mathin{ \mathit{i} }{ \mathit{I'} }  }{ \ottsym{\{}  \tyAv_{\ottmv{i}}  \ottsym{\}} }  }  = \ottsym{\{}  \tyAv_{\ottmv{i}}  \ottsym{\mbox{$\mid$}}   \mathin{ \mathit{i} }{ \mathit{I'} }   \ottsym{\}}$.
  Thus, by \Lref{lem:apdx:logsem:sem-downclos}, it suffices to show:
  \[
     \mathin{ \ottsym{\{}  \resMeta_{\ottmv{j}}  \ottsym{\mbox{$\mid$}}   \mathin{ \mathit{j} }{ \mathit{J'} }   \ottsym{\}} }{  \lrexp{ \ottsym{\{}  \tyAv_{\ottmv{i}}  \ottsym{\mbox{$\mid$}}   \mathin{ \mathit{i} }{ \mathit{I'} }   \ottsym{\}} }  }  \supseteq  \lrval{ \ottsym{\{}  \tyAv_{\ottmv{i}}  \ottsym{\mbox{$\mid$}}   \mathin{ \mathit{i} }{ \mathit{I'} }   \ottsym{\}} } 
  \]
  Let $g : \mathit{I'} \to \mathit{J'}$ be
  $\mathit{f}$ restricted to the domain $\mathit{I'}$.
  Note the following:
  \begin{enumerate}
    \item The function $g$ is well defined.
      Suppose $ \mathin{ \mathit{i} }{ \mathit{I'} } $.
      By definition, $\tyAc_{\ottmv{i}}$ is not $ \tmBotC $.
      Our premise states that
      $\resMeta_{\mathit{f}  \ottsym{(}  \mathit{i}  \ottsym{)}} \in  \lrres{  \tyBigJoin{  \mathin{ \mathit{k} }{  \mathinv{ \mathit{f} }   \ottsym{(}  \mathit{f}  \ottsym{(}  \mathit{i}  \ottsym{)}  \ottsym{)} }  }{ \tyAc_{\ottmv{k}} }  } $.
      By properties of inverse images, $ \mathin{ \mathit{i} }{  \mathinv{ \mathit{f} }   \ottsym{(}  \mathit{f}  \ottsym{(}  \mathit{i}  \ottsym{)}  \ottsym{)} } $.
      These facts and the definition of the result predicate imply that
      $\resMeta_{\mathit{f}  \ottsym{(}  \mathit{i}  \ottsym{)}} \neq  \tmBotC $.
      Thus, $\mathit{f}  \ottsym{(}  \mathit{i}  \ottsym{)} \in \mathit{J'}$ for any $ \mathin{ \mathit{i} }{ \mathit{I'} } $.
    \item $ \mathinv{ g }   \ottsym{(}  \mathit{j}  \ottsym{)} =   \mathinv{ \mathit{f} }   \ottsym{(}  \mathit{j}  \ottsym{)}  \cap  \mathit{I'} $ for all $ \mathin{ \mathit{j} }{ \mathit{J'} } $ by properties
      of inverse images.
  \end{enumerate}
  Consider arbitrary $ \mathin{ \mathit{j} }{ \mathit{J'} } $.
  By the definition of the value predicate, it is enough to demonstrate:
  \[
   \mathin{ \resMeta_{\ottmv{j}} }{  \lrres{  \tyBigJoin{  \mathin{ \mathit{i} }{  \mathinv{ g }   \ottsym{(}  \mathit{j}  \ottsym{)} }  }{ \tyAv_{\ottmv{i}} }  }  } 
  \]
  From our premise we have:
  \[
   \mathin{ \resMeta_{\ottmv{j}} }{  \lrres{  \tyBigJoin{  \mathin{ \mathit{i} }{  \mathinv{ \mathit{f} }   \ottsym{(}  \mathit{j}  \ottsym{)} }  }{ \tyAc_{\ottmv{i}} }  }  } 
  \]
  By \Lref{lem:apdx:logsem:sem-downclos}, it suffices to show
  $ \tyapx{  \tyBigJoin{  \mathin{ \mathit{i} }{  \mathinv{ \mathit{f} }   \ottsym{(}  \mathit{j}  \ottsym{)} }  }{ \tyAc_{\ottmv{i}} }  }{  \tyBigJoin{  \mathin{ \mathit{i} }{  \mathinv{ g }   \ottsym{(}  \mathit{j}  \ottsym{)} }  }{ \tyAv_{\ottmv{i}} }  } $.
  Note $\tyAc_{\ottmv{i}}  \ottsym{=}   \tmBotC $ for every $ \mathin{ \mathit{i} }{  \mathit{I}  \setminus  \mathit{I'}  } $ so we have
  \[
     \tyapx{  \tyBigJoin{  \mathin{ \mathit{i} }{  \mathinv{ \mathit{f} }   \ottsym{(}  \mathit{j}  \ottsym{)} }  }{ \tyAc_{\ottmv{i}} }  }{  \tyBigJoin{  \mathin{ \mathit{i} }{   \mathinv{ \mathit{f} }   \ottsym{(}  \mathit{j}  \ottsym{)}  \cap  \mathit{I'}  }  }{ \tyAv_{\ottmv{i}} }  }  =  \tyBigJoin{  \mathin{ \mathit{i} }{  \mathinv{ g }   \ottsym{(}  \mathit{j}  \ottsym{)} }  }{ \tyAv_{\ottmv{i}} } 
  \]

\end{proof}

\begin{lemma}
  \label{lem:apdx:logsem:set-mono}
  Suppose $ \mathin{ \tmMeta_{\ottmv{i}} }{  \lrexp{ \tyAc_{\ottmv{i}} }  } $ for all $ \mathin{ \mathit{i} }{ \mathit{I} } $.
  Then we have $ \mathin{ \ottsym{\{}  \tmMeta_{\ottmv{i}}  \ottsym{\mbox{$\mid$}}   \mathin{ \mathit{i} }{ \mathit{I} }   \ottsym{\}} }{  \lrexp{  \tyBigJoin{  \mathin{ \mathit{i} }{ \mathit{I} }  }{  \tySng{ \tyAc_{\ottmv{i}} }  }  }  } $.
\end{lemma}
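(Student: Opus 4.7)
The plan is to reduce this to the previously-established \Lref{lem:apdx:logsem:set-mono-res}, which handles the corresponding statement when the set elements are already results. Specifically, for each $\mathit{i} \in \mathit{I}$, the definition of $\lrexp{\tyAc_{\ottmv{i}}}$ supplies a result $\resMeta_{\ottmv{i}}$ with $\tmMeta_{\ottmv{i}} \stepsym^{*} \resMeta_{\ottmv{i}}$ and $\resMeta_{\ottmv{i}} \in \lrres{\tyAc_{\ottmv{i}}}$. Taking $f$ to be the identity function on $\mathit{I}$, we have $\mathinv{f}(\mathit{i}) = \{\mathit{i}\}$ and thus $\tyBigJoin{\mathit{k} \in \mathinv{f}(\mathit{i})}{\tyAc_{\ottmv{k}}} = \tyAc_{\ottmv{i}}$, so the hypotheses of \Lref{lem:apdx:logsem:set-mono-res} are met and it yields $\{\resMeta_{\ottmv{i}} \mid \mathit{i} \in \mathit{I}\} \in \lrexp{\tyBigJoin{\mathit{i} \in \mathit{I}}{\tySng{\tyAc_{\ottmv{i}}}}}$.

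To transfer this back to the original expressions, I would use the fact that set introduction forms an evaluation context in each position, so the reduction sequences $\tmMeta_{\ottmv{i}} \stepsym^{*} \resMeta_{\ottmv{i}}$ can be replayed one at a time inside the surrounding set literal, giving $\{\tmMeta_{\ottmv{i}} \mid \mathit{i} \in \mathit{I}\} \stepsym^{*} \{\resMeta_{\ottmv{i}} \mid \mathit{i} \in \mathit{I}\}$. Closure under antireduction (\Lref{lem:apdx:logsem:antired}) then lifts membership in $\lrexp{\tyBigJoin{\mathit{i} \in \mathit{I}}{\tySng{\tyAc_{\ottmv{i}}}}}$ from the reduct back to the original set. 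Equivalently, one could present this as an induction on the (finite) cardinality of $\mathit{I}$, applying the second part of Monadic Bind (\Lref{lem:apdx:logsem:mon-bind}) once per element to reduce each $\tmMeta_{\ottmv{i}}$ to its witness $\resMeta_{\ottmv{i}}$ inside the set constructor; the two styles of argument are interchangeable and both depend only on the fact that $\synectx$ has a hole for each set position.

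The step I expect to require the most care is the verification that the evaluation contexts really let us reduce elementwise inside a set literal in the non-deterministic semantics of \Sref{sec:opsem}; since the set constructor appears in $\synectx$ with a single hole, each reduction $\tmMeta_{\ottmv{i}} \stepsym^{*} \resMeta_{\ottmv{i}}$ has to be performed using a distinct decomposition, and the resulting sequence relies on transitivity of $\stepsym^{*}$ rather than on any single-step confluence property. The heavy lifting — in particular the delicate treatment of $\tmBotC$-elements being dropped from the set, $\tmTopC$-elements collapsing the whole set, and the use of the $\tySng{\mathemdash}$ lifting on the formula side — has already been carried out inside \Lref{lem:apdx:logsem:set-mono-res}, so once the reduction step is established the remainder is a direct invocation of that lemma.
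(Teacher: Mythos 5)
Your proposal is correct and follows essentially the same route as the paper's proof: the paper likewise discharges the goal by reducing each set element to a result via Monadic Bind (\Lref{lem:apdx:logsem:mon-bind}) and then invoking \Lref{lem:apdx:logsem:set-mono-res}, with the identity instantiation of $f$ left implicit. Your extra care about elementwise reduction inside set literals and the explicit antireduction variant are sound but amount to unfolding the same Monadic Bind argument.
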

\begin{proof}
  Let $ \mathin{ \resMeta_{\ottmv{i}} }{  \lrres{ \tyAc_{\ottmv{i}} }  } $ for all $ \mathin{ \mathit{i} }{ \mathit{I} } $.
  By \Lref{lem:apdx:logsem:mon-bind}, it suffices to show
  $ \mathin{ \ottsym{\{}  \resMeta_{\ottmv{i}}  \ottsym{\mbox{$\mid$}}   \mathin{ \mathit{i} }{ \mathit{I} }   \ottsym{\}} }{  \lrexp{  \tyBigJoin{  \mathin{ \mathit{i} }{ \mathit{I} }  }{  \tySng{ \tyAc_{\ottmv{i}} }  }  }  } $.
  This follows from \Lref{lem:apdx:logsem:set-mono-res}
\end{proof}

\begin{lemma}
  \label{lem:apdx:logsem:pair}
  If $ \mathin{ \resMeta_{{\mathrm{1}}} }{  \lrres{ \tyAc_{{\mathrm{1}}} }  } $ and $ \mathin{ \resMeta_{{\mathrm{2}}} }{  \lrres{ \tyAc_{{\mathrm{2}}} }  } $ then
  $ \mathin{  \tmPair{ \resMeta_{{\mathrm{1}}} }{ \resMeta_{{\mathrm{2}}} }  }{  \lrexp{  \tyPairOp{ \tyAc_{{\mathrm{1}}} }{ \tyAc_{{\mathrm{2}}} }  }  } $.
\end{lemma}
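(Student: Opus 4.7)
The plan is to proceed by case analysis on $\resMeta_1$ and $\resMeta_2$ using the definition of $\lrres{\tyAc_i}$, tracking how $(\resMeta_1, \resMeta_2)$ reduces in each case and checking that the outcome lies in $\lrres{\tyPairOp{\tyAc_1}{\tyAc_2}}$. The key observation is that $\tmTopC \in \lrres{\tyAc}$ for every $\tyAc$ (immediate from the three clauses defining $\lrres{\mathemdash}$), and that every expression is in $\lrexp{\tmBotC}$ via a single approximation step, so the ``error'' and ``$\tmBotC$'' cases are trivial.

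First, I would dispatch the easy cases. If $\resMeta_1 = \tmBotC$, then by the approximation reduction $(\resMeta_1, \resMeta_2) \stepsym \tmBotC$. Since the definition of $\tyPairOp{\mathemdash}{\mathemdash}$ gives $\tyPairOp{\tmBotC}{\tyAc_2} = \tmBotC$ (noting that $\resMeta_1 = \tmBotC$ forces $\tyAc_1 = \tmBotC$ or $\tyAc_1 = \tyAv_1$ with $\tmBotC$ admissible in $\lrres{\tyAv_1}$ only indirectly via the clauses, so I need to be careful and do case analysis on $\tyAc_1$ itself), we reach $\lrres{\tmBotC} = \synres$. If $\resMeta_1 = \tmTopC$, then $(\tmTopC, \resMeta_2)$ sits in the evaluation context $(\ctxHole, \resMeta_2)$, so it reduces to $\tmTopC$ by the \textsc{stepECtx} rule, and $\tmTopC \in \lrres{\tyPairOp{\tyAc_1}{\tyAc_2}}$ trivially.

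Next, the main case is when $\resMeta_1$ is a value $\valMeta_1$. From $\valMeta_1 \in \lrres{\tyAc_1}$ I know $\tyAc_1 = \tmBotC$ (handled by approximation) or $\tyAc_1 = \tyAv_1$ with $\valMeta_1 \in \lrval{\tyAv_1}$. Assuming the latter, I then case split on $\resMeta_2$. If $\resMeta_2 = \tmBotC$, approximation again gives $(\valMeta_1, \tmBotC) \stepsym \tmBotC$ and $\tyPairOp{\tyAv_1}{\tmBotC} = \tmBotC$. If $\resMeta_2 = \tmTopC$, the evaluation context $(\valMeta_1, \ctxHole)$ reduces the term to $\tmTopC$ and $\tyPairOp{\tyAv_1}{\tmTopC} = \tmTopC$. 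Finally, if $\resMeta_2$ is a value $\valMeta_2$, then $\tyAc_2 = \tmBotC$ (easy) or $\tyAc_2 = \tyAv_2$ with $\valMeta_2 \in \lrval{\tyAv_2}$; in the latter case $(\valMeta_1, \valMeta_2)$ is already a value and $\tyPairOp{\tyAv_1}{\tyAv_2} = (\tyAv_1, \tyAv_2)$, so membership in $\lrval{(\tyAv_1, \tyAv_2)}$ follows directly from the pair clause in the definition of the value predicate.

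There is no real obstacle here; the lemma is essentially the ``monadic unit for pairs'' property, and the only mild subtlety is making sure the case structure on $\tyAc_i$ and on $\resMeta_i$ is kept in sync with the three-way definitions of $\tyPairOp{\mathemdash}{\mathemdash}$ on formulae and $(\mathemdash, \mathemdash)$ on results. A clean way to organize the proof is to note that if \emph{either} $\tyAc_i = \tmBotC$ then the goal collapses via a single approximation step, if \emph{either} $\resMeta_i = \tmTopC$ (taken left-to-right) the goal collapses via \textsc{stepECtx}, and otherwise both formulae are value formulae, both results are values, and the pair clause of $\lrval{\mathemdash}$ applies directly.
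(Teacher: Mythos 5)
Your proof is correct and follows essentially the same route as the paper's: both arguments are an exhaustive case analysis aligning the $\tmBotC$/$\tmTopC$/value structure of the results with the three clauses of $\tyPairOp{\tyAc_{{\mathrm{1}}}}{\tyAc_{{\mathrm{2}}}}$, dispatching the $\tmBotC$ cases by an approximation step, the $\tmTopC$ cases by error propagation through an evaluation context, and the value/value case by the pair clause of $\lrval{\mathemdash}$ (the paper cases on the formulae first, you on the results first; given how tightly the two are coupled by $\lrres{\mathemdash}$ this is the same argument). One caution about your final ``clean'' reorganization: checking ``either $\tyAc_i = \tmBotC$'' \emph{before} the $\tmTopC$ check mishandles the corner case $\tyAc_{{\mathrm{1}}} = \tmTopC$, $\tyAc_{{\mathrm{2}}} = \tmBotC$, where the first clause of the definition gives $\tyPairOp{\tyAc_{{\mathrm{1}}}}{\tyAc_{{\mathrm{2}}}} = \tmTopC$ and $\lrres{\tmTopC} = \{\tmTopC\}$ does not contain $\tmBotC$, so a single approximation step does not land in the predicate; there you must instead use that $\resMeta_{{\mathrm{1}}} = \tmTopC$ is forced and propagate the error. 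Your detailed case analysis in the body already gets this right because it dispatches $\resMeta_{{\mathrm{1}}} = \tmTopC$ first, so only the summary paragraph's ordering needs to be fixed.
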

\begin{proof}
  We proceed by case analysis on $\tyAc_{{\mathrm{1}}}$.
  If $\tyAc_{{\mathrm{1}}}  \ottsym{=}   \tmTopC $ then $\resMeta_{{\mathrm{1}}}  \ottsym{=}   \tmTopC $ and $ \tmPair{ \resMeta_{{\mathrm{1}}} }{ \resMeta_{{\mathrm{2}}} }   \stepsym^{*}   \tmTopC $.
  If $\tyAc_{{\mathrm{2}}}  \ottsym{=}   \tmBotC $ then $ \tyPairOp{ \tyAc_{{\mathrm{1}}} }{ \tyAc_{{\mathrm{2}}} }   \ottsym{=}   \tmBotC $ and our goal is immediate.
  Otherwise, let $\tyAv_{{\mathrm{1}}}  \ottsym{=}  \tyAc_{{\mathrm{1}}}$.

  We continue by case analysis on $\tyAc_{{\mathrm{2}}}$.
  As before, if $\tyAc_{{\mathrm{2}}}  \ottsym{=}   \tmTopC $ or $\tyAc_{{\mathrm{2}}}  \ottsym{=}   \tmBotC $, the goal is easily fulfilled.
  Otherwise, let $\tyAv_{{\mathrm{2}}}  \ottsym{=}  \tyAc_{{\mathrm{2}}}$.
  We have $ \tyPairOp{ \tyAc_{{\mathrm{1}}} }{ \tyAc_{{\mathrm{2}}} }   \ottsym{=}   \tyPairOp{ \tyAv_{{\mathrm{1}}} }{ \tyAv_{{\mathrm{2}}} }  = \ottsym{(}  \tyAv_{{\mathrm{1}}}  \ottsym{,}  \tyAv_{{\mathrm{2}}}  \ottsym{)}$.

  It remains to show $ \mathin{  \tmPair{ \resMeta_{{\mathrm{1}}} }{ \resMeta_{{\mathrm{2}}} }  }{  \lrexp{ \ottsym{(}  \tyAv_{{\mathrm{1}}}  \ottsym{,}  \tyAv_{{\mathrm{2}}}  \ottsym{)} }  } $.
  Since $ \tmBotC $ is not included in $ \lrres{ \tyAv_{\ottmv{i}} } $, we know $\resMeta_{{\mathrm{1}}}$ and
  $\resMeta_{{\mathrm{2}}}$ are not $ \tmBotC $.
  If either is $ \tmTopC $ then $ \tmPair{ \resMeta_{{\mathrm{1}}} }{ \resMeta_{{\mathrm{2}}} }   \stepsym^{*}   \tmTopC $, completing the proof.
  Otherwise, let $\valMeta_{{\mathrm{1}}}  \ottsym{=}  \resMeta_{{\mathrm{1}}}$ and $\valMeta_{{\mathrm{2}}}  \ottsym{=}  \resMeta_{{\mathrm{2}}}$.

  We must show $ \mathin{  \tmPair{ \valMeta_{{\mathrm{1}}} }{ \valMeta_{{\mathrm{2}}} }  }{  \lrexp{ \ottsym{(}  \tyAv_{{\mathrm{1}}}  \ottsym{,}  \tyAv_{{\mathrm{2}}}  \ottsym{)} }  }  \supseteq  \lrval{ \ottsym{(}  \tyAv_{{\mathrm{1}}}  \ottsym{,}  \tyAv_{{\mathrm{2}}}  \ottsym{)} } $.
  From the definition of the value predicate, it suffices to show
  $ \mathin{ \valMeta_{{\mathrm{1}}} }{  \lrval{ \tyAv_{{\mathrm{1}}} }  } $ and $ \mathin{ \valMeta_{{\mathrm{2}}} }{  \lrres{ \tyAv_{{\mathrm{2}}} }  } $.
  This follows from our premise that $ \mathin{ \valMeta_{{\mathrm{1}}} }{  \lrres{ \tyAv_{{\mathrm{1}}} }  } $ and $ \mathin{ \valMeta_{{\mathrm{2}}} }{  \lrres{ \tyAv_{{\mathrm{2}}} }  } $.
\end{proof}

\begin{lemma}[Fundamental Property]
  \label{lem:apdx:log-sem:fund-prop}
  If $ \judg{ \Gamma }{ \tmMeta }{ \tyAc } $ then $ \judglr{ \Gamma }{ \tmMeta }{ \tyAc } $.
\end{lemma}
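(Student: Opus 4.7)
The plan is to prove the fundamental property by nested induction: the outer induction is on the term $\tmMeta$, and the inner induction is on the derivation of $\judg{\Gamma}{\tmMeta}{\tyAc}$. This two-level structure is necessary because the assignment rules include both structural rules that recurse into subterms (\rulename{TApp}, \rulename{TFun}, \rulename{TPair}, \rulename{TSet}, \rulename{TForIn}, \ldots) and non-structural rules like \rulename{TSub} that keep the term fixed while adjusting the formula. The inner induction handles \rulename{TSub} by applying Semantic Downward Closure (\Lref{lem:apdx:logsem:sem-downclos}) to the inner IH, and similarly handles \rulename{TJoin} (on the syntactic join construct) by invoking \Lref{lem:apdx:logsem:join:exp} on two inner IH calls.

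Each case begins by fixing a substitution $\substmeta \in \lrenv{\Gamma}$ and reducing to showing $\substmeta(\tmMeta) \in \lrexp{\tyAc}$. The leaf cases are immediate: \rulename{TBot} is trivial since $\lrexp{\tmBotC} = \syncomp$; \rulename{TTop} and the *\textsc{Top} propagation rules follow because the corresponding evaluation contexts reduce $\tmTopC$ to $\tmTopC$; \rulename{TVar} follows from the definition of $\lrenv{\Gamma}$; \rulename{TBotV} and \rulename{TSym} are routine. The rules \rulename{TPair}, \rulename{TLetPair}, \rulename{TLetSym}, and \rulename{TApp} are handled by combining the outer IH on the subterms with \Lref{lem:apdx:logsem:mon-bind} to sequence the evaluations, plus \Lref{lem:apdx:logsem:pair} for pair construction.

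The interesting cases are \rulename{TFun}, \rulename{TSet}, and \rulename{TForIn}. For \rulename{TFun}, given $\judg{\Gamma}{\valLam{\mathit{x}}{\tmMeta}}{\tyFun{\mathin{\mathit{i}}{\mathit{I}}}{\tyAv_i}{\tyAc_i}}$, we must check that for every $\mathit{I'} \subseteq \mathit{I}$ and every $\valMeta \in \lrval{\tyBigJoin{\mathin{\mathit{i}}{\mathit{I'}}}{\tyAv_i}}$ the substituted body lies in $\lrexp{\tyBigJoin{\mathin{\mathit{i}}{\mathit{I'}}}{\tyAc_i}}$. Inverting the derivation yields $\judg{\Gamma, \mathit{x}:\tyAv_i}{\tmMeta}{\tyAc_i}$ for each $i$; applying Directedness (\Lref{lem:apdx:log-sem:directed}) and Weakening (\Lref{lem:apdx:log-sem:weaken}) over the finite set $\mathit{I'}$ collapses these into one derivation with the joined formulae, after which the outer IH on the strictly smaller term $\tmMeta$ applies. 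For \rulename{TSet} we use \Lref{lem:apdx:logsem:set-mono} on the outer IH applied to each element. For \rulename{TForIn}, the outer IH applied to $\tmMeta_1$ produces a value $\ottsym{\{}\valMeta_j \mid \mathin{\mathit{j}}{\mathit{J}}\ottsym{\}}$ together with a witnessing function $f : I \to J$ (from the definition of $\lrval$ at a set formula); we then invoke the outer IH on $\tmMeta_2$ under each substitution of $\valMeta_j$ for $\mathit{x}$ and stitch the results together using the Semantic Join lemma (\Lref{lem:apdx:logsem:join}) indexed by the very same $f$.

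The main obstacle is the \rulename{TFun} case, where the logical predicate quantifies over \emph{all} subsets $\mathit{I'} \subseteq \mathit{I}$ rather than just the singletons $\{i\}$ picked out by the inversion of \rulename{TFun}. This is precisely what motivated the strong formulation of $\lrval$ for functions in the figure, and is exactly what Directedness was designed to unblock: it lets us manufacture a derivation at the joined formula from the per-clause derivations. A parallel subtlety arises in \rulename{TForIn} and \rulename{TSet}: the value predicate at set formulae is indexed by a function $f$ rather than by a per-element correspondence, so we must match the $f$ supplied by unpacking the input with the $f$ demanded when reassembling the output. Once these bookkeeping pieces line up, the remaining \textsc{Top}-propagation rules follow the same pattern as \rulename{TApp} with the observation that evaluation contexts carry $\tmTopC$ to $\tmTopC$, which lies in every $\lrres{\tyAc}$.
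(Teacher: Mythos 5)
Your proposal is correct and follows essentially the same route as the paper's proof: nested induction (outer on the term, inner on the derivation), Semantic Downward Closure for \rulename{TSub}, Monadic Bind for sequencing evaluation, Directedness plus Weakening to collapse the per-clause derivations in the \rulename{TFun} case, the set-monotonicity lemma for \rulename{TSet}, and the Semantic Join lemma indexed by the inverse-image function $f$ for \rulename{TForIn}. The only cosmetic difference is that the paper also explicitly threads Monadic Bind and Directedness through the \rulename{TForIn} case before invoking the outer induction hypothesis at the joined formulae, a step your sketch subsumes under the ``parallel subtlety'' remark.
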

\begin{proof}
  We proceed by nested induction first on $\tmMeta$ and then on $ \judg{ \Gamma }{ \tmMeta }{ \tyAc } $.
  In every case, we start by assuming $ \mathin{ \substmeta }{  \lrenv{ \Gamma }  } $ and then demonstrate
  that $ \mathin{ \substmeta  \ottsym{(}  \tmMeta  \ottsym{)} }{  \lrexp{ \tyAc }  } $.
  We show the cases of the \emph{inner} induction below, sometimes making use of
  an \emph{outer} induction hypothesis for subterms of $\tmMeta$.

  \begin{case}{$ \judg{ \Gamma }{ \tmMeta }{ \tyAc' } $ and $ \tyapx{ \tyAc }{ \tyAc' } $}
    The inner induction hypothesis is $ \judglr{ \Gamma }{ \tmMeta }{ \tyAc' } $.
    From this we have $ \mathin{ \substmeta  \ottsym{(}  \tmMeta  \ottsym{)} }{  \lrexp{ \tyAc' }  } $.
    By Semantic Downward Closure, $ \mathin{ \substmeta  \ottsym{(}  \tmMeta  \ottsym{)} }{  \lrexp{ \tyAc }  } $.
  \end{case}

  \begin{case}{$\tyAc  \ottsym{=}   \tmBotC $}
    We must show $ \mathin{ \substmeta  \ottsym{(}  \tmMeta  \ottsym{)} }{  \lrexp{  \tmBotC  }  } $.
    By the operational semantics, $\substmeta  \ottsym{(}  \tmMeta  \ottsym{)}  \stepsym^{*}   \tmBotC $ so it remains
    to show $ \mathin{  \tmBotC  }{  \lrres{  \tmBotC  }  } $.
    This is immediate from the definition of the result predicate.
  \end{case}

  \begin{case}{$\tmMeta  \ottsym{=}  \valMeta$ and $\tyAc  \ottsym{=}   \valBotV $}
    We must show $ \mathin{ \substmeta  \ottsym{(}  \valMeta  \ottsym{)} }{  \lrexp{  \valBotV  }  }  \supseteq  \lrval{  \valBotV  } $.
    But $ \lrval{  \valBotV  }  = \synval$ so this is immediate.
  \end{case}

  \begin{case}{$\tmMeta  \ottsym{=}   \tmTopC $ and $\tyAc  \ottsym{=}   \tmTopC $}
    We must show $ \mathin{  \tmTopC  }{  \lrexp{  \tmTopC  }  }  \supseteq  \lrres{  \tmTopC  } $.
    But $ \lrres{  \tmTopC  }  = \{  \tmTopC  \}$ so this is immediate.
  \end{case}

  \begin{case}{$\tmMeta  \ottsym{=}  \mathit{x}$ where $\Gamma  \ottsym{(}  \mathit{x}  \ottsym{)}  \ottsym{=}  \tyAv$ and $\tyAc  \ottsym{=}  \tyAv$}
    We have $ \mathin{ \substmeta  \ottsym{(}  \mathit{x}  \ottsym{)} }{  \lrval{ \tyAv }  } $ from the assumption that $ \mathin{ \substmeta }{  \lrenv{ \Gamma }  } $.
    \Lref{lem:apdx:logsem:mon-unit} yields $ \mathin{ \substmeta  \ottsym{(}  \mathit{x}  \ottsym{)} }{  \lrexp{ \tyAv }  } $.
  \end{case}

  \begin{case}{$\tmMeta  \ottsym{=}   \tmJoin{ \tmMeta_{{\mathrm{1}}} }{ \tmMeta_{{\mathrm{2}}} } $ and $\tyAc  \ottsym{=}   \tyJoin{ \tyAc_{{\mathrm{1}}} }{ \tyAc_{{\mathrm{2}}} } $ where
      $ \judg{ \Gamma }{ \tmMeta_{{\mathrm{1}}} }{ \tyAc_{{\mathrm{1}}} } $ and $ \judg{ \Gamma }{ \tmMeta_{{\mathrm{2}}} }{ \tyAc_{{\mathrm{2}}} } $}
    We must show $ \mathin{  \tmJoin{ \substmeta  \ottsym{(}  \tmMeta_{{\mathrm{1}}}  \ottsym{)} }{ \substmeta  \ottsym{(}  \tmMeta_{{\mathrm{2}}}  \ottsym{)} }  }{  \lrexp{  \tyJoin{ \tyAc_{{\mathrm{1}}} }{ \tyAc_{{\mathrm{2}}} }  }  } $.
    We have induction hypotheses for $\tmMeta_{{\mathrm{1}}}$ and $\tmMeta_{{\mathrm{2}}}$:
    \[
        \judg{ \Gamma }{ \tmMeta_{{\mathrm{1}}} }{ \tyAc_{{\mathrm{1}}} }  \textand  \judg{ \Gamma }{ \tmMeta_{{\mathrm{2}}} }{ \tyAc_{{\mathrm{2}}} } 
    \]
    It follows $ \mathin{ \substmeta  \ottsym{(}  \tmMeta_{{\mathrm{1}}}  \ottsym{)} }{  \lrexp{ \tyAc_{{\mathrm{1}}} }  } $ and $ \mathin{ \substmeta  \ottsym{(}  \tmMeta_{{\mathrm{2}}}  \ottsym{)} }{  \lrexp{ \tyAc_{{\mathrm{2}}} }  } $.
    Our goal is then a consequence of \Lref{lem:apdx:logsem:join:exp}.
  \end{case}

  \begin{case}{$\tmMeta  \ottsym{=}  \ottnt{s}$ and $\tyAc  \ottsym{=}  \ottnt{s}$}
    It is immediate that $ \mathin{ \ottnt{s} }{  \lrexp{ \ottnt{s} }  }  \supseteq  \lrval{ \ottnt{s} } $.
  \end{case}

  \begin{case}{$\tmMeta  \ottsym{=}   \tmPair{ \tmMeta_{{\mathrm{1}}} }{ \tmMeta_{{\mathrm{2}}} } $ and $\tyAc =  \tyPairOp{ \tyAc_{{\mathrm{1}}} }{ \tyAc_{{\mathrm{2}}} } $ where $ \judg{ \Gamma }{ \tmMeta_{{\mathrm{1}}} }{ \tyAc_{{\mathrm{1}}} } $ and $ \judg{ \Gamma }{ \tmMeta_{{\mathrm{2}}} }{ \tyAc_{{\mathrm{2}}} } $}
    We must show $ \mathin{  \tmPair{ \substmeta  \ottsym{(}  \tmMeta_{{\mathrm{1}}}  \ottsym{)} }{ \substmeta  \ottsym{(}  \tmMeta_{{\mathrm{2}}}  \ottsym{)} }  }{  \lrexp{  \tyPairOp{ \tyAc_{{\mathrm{1}}} }{ \tyAc_{{\mathrm{2}}} }  }  } $.
    We have induction hypotheses for $\tmMeta_{{\mathrm{1}}}$ and $\tmMeta_{{\mathrm{2}}}$:
    \[
       \judg{ \Gamma }{ \tmMeta_{{\mathrm{1}}} }{ \tyAc_{{\mathrm{1}}} }  \textand  \judg{ \Gamma }{ \tmMeta_{{\mathrm{2}}} }{ \tyAc_{{\mathrm{2}}} } 
    \]
    It follows $ \mathin{ \substmeta  \ottsym{(}  \tmMeta_{{\mathrm{1}}}  \ottsym{)} }{  \lrexp{ \tyAc_{{\mathrm{1}}} }  } $ and $ \mathin{ \substmeta  \ottsym{(}  \tmMeta_{{\mathrm{2}}}  \ottsym{)} }{  \lrexp{ \tyAc_{{\mathrm{2}}} }  } $.
    Our goal is then a consequence of \Lref{lem:apdx:logsem:pair}.
  \end{case}

  \begin{case}{$\tmMeta  \ottsym{=}  \ottsym{\{}  \tmMeta_{\ottmv{i}}  \ottsym{\mbox{$\mid$}}   \mathin{ \mathit{i} }{ \mathit{I} }   \ottsym{\}}$, $\tyAc  \ottsym{=}   \tyBigJoin{  \mathin{ \mathit{i} }{ \mathit{I} }  }{  \tySng{ \tyAc_{\ottmv{i}} }  } $ where
    $ \mathforall{  \mathin{ \mathit{i} }{ \mathit{I} }  }{  \judg{ \Gamma }{ \tmMeta_{\ottmv{i}} }{ \tyAc_{\ottmv{i}} }  } $}
    We must show $ \mathin{ \ottsym{\{}  \substmeta  \ottsym{(}  \tmMeta_{\ottmv{i}}  \ottsym{)}  \ottsym{\mbox{$\mid$}}   \mathin{ \mathit{i} }{ \mathit{I} }   \ottsym{\}} }{  \lrexp{  \tyBigJoin{  \mathin{ \mathit{i} }{ \mathit{I} }  }{  \tySng{ \tyAc_{\ottmv{i}} }  }  }  } $.
    Let $ \mathin{ \mathit{i} }{ \mathit{I} } $.
    By \Lref{lem:apdx:logsem:set-mono}, it suffices to show
    $ \mathin{ \substmeta  \ottsym{(}  \tmMeta_{\ottmv{i}}  \ottsym{)} }{  \lrexp{ \tyAc_{\ottmv{i}} }  } $.
    This is an immediate consequence of the induction hypothesis, $ \judglr{ \Gamma }{ \tmMeta_{\ottmv{i}} }{ \tyAc_{\ottmv{i}} } $.
  \end{case}

  \begin{case}{$\tmMeta  \ottsym{=}   \valLam{ \mathit{x} }{ \tmMeta' } $, $\tyAc  \ottsym{=}   \tyFun{  \mathin{ \mathit{i} }{ \mathit{I} }  }{ \tyAv_{\ottmv{i}} }{ \tyAc_{\ottmv{i}} } $ where
    $ \mathforall{  \mathin{ \mathit{i} }{ \mathit{I} }  }{  \judg{ \Gamma  \ottsym{,}  \mathit{x}  \ottsym{:}  \tyAv_{\ottmv{i}} }{ \tmMeta }{ \tyAc_{\ottmv{i}} }  } $}
    We must show $ \mathin{  \valLam{ \mathit{x} }{ \substmeta  \ottsym{(}  \tmMeta'  \ottsym{)} }  }{  \lrexp{  \tyFun{  \mathin{ \mathit{i} }{ \mathit{I} }  }{ \tyAv_{\ottmv{i}} }{ \tyAc_{\ottmv{i}} }  }  }  \supseteq  \lrval{  \tyFun{  \mathin{ \mathit{i} }{ \mathit{I} }  }{ \tyAv_{\ottmv{i}} }{ \tyAc_{\ottmv{i}} }  } $.
    Consider arbitrary $ \mathit{J}  \subseteq  \mathit{I} $ and $ \mathin{ \valMeta }{  \lrval{  \tyBigJoin{  \mathin{ \mathit{j} }{ \mathit{J} }  }{ \tyAv_{\ottmv{j}} }  }  } $.
    The value predicate requires us to show
    $ \mathin{  \subst{ \substmeta  \ottsym{(}  \tmMeta'  \ottsym{)} }{ \valMeta }{ \mathit{x} }  }{  \lrexp{  \tyBigJoin{  \mathin{ \mathit{j} }{ \mathit{J} }  }{ \tyAc_{\ottmv{j}} }  }  } $.
    From Directedness we have $ \judg{ \Gamma  \ottsym{,}  \mathit{x}  \ottsym{:}   \tyBigJoin{  \mathin{ \mathit{j} }{ \mathit{J} }  }{ \tyAv_{\ottmv{j}} }  }{ \tmMeta' }{  \tyBigJoin{  \mathin{ \mathit{j} }{ \mathit{J} }  }{ \tyAc_{\ottmv{j}} }  } $.
    This gives us an induction hypothesis for $\tmMeta'$:
    \[  \judglr{ \Gamma  \ottsym{,}  \mathit{x}  \ottsym{:}   \tyBigJoin{  \mathin{ \mathit{j} }{ \mathit{J} }  }{ \tyAv_{\ottmv{j}} }  }{ \tmMeta' }{  \tyBigJoin{  \mathin{ \mathit{j} }{ \mathit{J} }  }{ \tyAc_{\ottmv{j}} }  }  \]
    It follows that $ \mathin{  \subst{ \substmeta  \ottsym{(}  \tmMeta'  \ottsym{)} }{ \valMeta }{ \mathit{x} }  }{  \lrexp{  \tyBigJoin{  \mathin{ \mathit{j} }{ \mathit{J} }  }{ \tyAc_{\ottmv{j}} }  }  } $.
  \end{case}

  \begin{case}{$\tmMeta  \ottsym{=}  \ottkw{let} \, \ottnt{s}  \ottsym{=}  \tmMeta_{{\mathrm{1}}} \, \ottkw{in} \, \tmMeta_{{\mathrm{2}}}$ where
      $ \judg{ \Gamma }{ \tmMeta_{{\mathrm{1}}} }{ \ottnt{s} } $ and $ \judg{ \Gamma }{ \tmMeta_{{\mathrm{2}}} }{ \tyAc } $}
    We must show $ \mathin{ \ottkw{let} \, \ottnt{s}  \ottsym{=}  \substmeta  \ottsym{(}  \tmMeta_{{\mathrm{1}}}  \ottsym{)} \, \ottkw{in} \, \substmeta  \ottsym{(}  \tmMeta_{{\mathrm{2}}}  \ottsym{)} }{  \lrexp{ \tyAc }  } $.
    We have induction hypotheses for $\tmMeta_{{\mathrm{1}}}$ and $\tmMeta_{{\mathrm{2}}}$:
    \[
       \judg{ \Gamma }{ \tmMeta_{{\mathrm{1}}} }{ \ottnt{s} }  \textand  \judg{ \Gamma }{ \tmMeta_{{\mathrm{2}}} }{ \tyAc } 
    \]
    It follows $ \mathin{ \substmeta  \ottsym{(}  \tmMeta_{{\mathrm{1}}}  \ottsym{)} }{  \lrexp{ \ottnt{s} }  } $ and $ \mathin{ \substmeta  \ottsym{(}  \tmMeta_{{\mathrm{2}}}  \ottsym{)} }{  \lrexp{ \tyAc }  } $.
    Let $ \mathin{ \ottnt{s'} }{  \lrval{ \ottnt{s} }  } $.
    By \Lref{lem:apdx:logsem:mon-bind}, it suffices to show:
    \[
       \mathin{ \ottkw{let} \, \ottnt{s}  \ottsym{=}  \ottnt{s'} \, \ottkw{in} \, \substmeta  \ottsym{(}  \tmMeta_{{\mathrm{2}}}  \ottsym{)} }{  \lrexp{ \tyAc }  } 
    \]
    From the definition of the value predicate we have $\ottnt{s}  \leq  \ottnt{s'}$.
    Thus, $\ottkw{let} \, \ottnt{s}  \ottsym{=}  \ottnt{s'} \, \ottkw{in} \, \substmeta  \ottsym{(}  \tmMeta_{{\mathrm{2}}}  \ottsym{)}$ so our goal follows from
    \Lref{lem:apdx:logsem:antired}.
  \end{case}

  \begin{case}{$\tmMeta  \ottsym{=}  \ottkw{let} \, \ottsym{(}  \mathit{x}  \ottsym{,}  \mathit{y}  \ottsym{)}  \ottsym{=}  \tmMeta_{{\mathrm{1}}} \, \ottkw{in} \, \tmMeta_{{\mathrm{2}}}$ where
    $ \judg{ \Gamma }{ \tmMeta_{{\mathrm{1}}} }{ \ottsym{(}  \tyAv  \ottsym{,}  \tyBv  \ottsym{)} } $ and
    $ \judg{ \Gamma  \ottsym{,}  \mathit{x}  \ottsym{:}  \tyAv  \ottsym{,}  \mathit{y}  \ottsym{:}  \tyBv }{ \tmMeta_{{\mathrm{2}}} }{ \tyAc } $}
    We must show $ \mathin{ \ottkw{let} \, \ottsym{(}  \mathit{x}  \ottsym{,}  \mathit{y}  \ottsym{)}  \ottsym{=}  \substmeta  \ottsym{(}  \tmMeta_{{\mathrm{1}}}  \ottsym{)} \, \ottkw{in} \, \substmeta  \ottsym{(}  \tmMeta_{{\mathrm{2}}}  \ottsym{)} }{  \lrexp{ \tyAc }  } $.
    We have induction hypotheses for $\tmMeta_{{\mathrm{1}}}$ and $\tmMeta_{{\mathrm{2}}}$:
    \[
       \judg{ \Gamma }{ \tmMeta_{{\mathrm{1}}} }{ \tyAc_{{\mathrm{1}}} }  \textand  \judg{ \Gamma  \ottsym{,}  \mathit{x}  \ottsym{:}  \tyAv  \ottsym{,}  \mathit{y}  \ottsym{:}  \tyBv }{ \tmMeta_{{\mathrm{2}}} }{ \tyAc } 
    \]
    It follows $ \mathin{ \substmeta  \ottsym{(}  \tmMeta_{{\mathrm{1}}}  \ottsym{)} }{  \lrexp{ \ottsym{(}  \tyAv  \ottsym{,}  \tyBv  \ottsym{)} }  } $.
    Let $ \mathin{  \tmPair{ \valMeta_{{\mathrm{1}}} }{ \valMeta'_{{\mathrm{1}}} }  }{  \lrval{ \ottsym{(}  \tyAv  \ottsym{,}  \tyBv  \ottsym{)} }  } $.
    By \Lref{lem:apdx:logsem:mon-bind}, it suffices to show:
    \[
       \mathin{ \ottkw{let} \, \ottsym{(}  \mathit{x}  \ottsym{,}  \mathit{y}  \ottsym{)}  \ottsym{=}   \tmPair{ \valMeta_{{\mathrm{1}}} }{ \valMeta'_{{\mathrm{1}}} }  \, \ottkw{in} \, \substmeta  \ottsym{(}  \tmMeta_{{\mathrm{2}}}  \ottsym{)} }{  \lrexp{ \tyAc }  } 
    \]
    From the definition of the value predicate, we have
    $ \mathin{ \valMeta_{{\mathrm{1}}} }{  \lrval{ \tyAv }  } $ and $ \mathin{ \valMeta'_{{\mathrm{1}}} }{  \lrval{ \tyBv }  } $.
    From our induction hypothesis for $\tmMeta_{{\mathrm{2}}}$ it follows
    $ \mathin{  \subst{  \subst{ \substmeta  \ottsym{(}  \tmMeta_{{\mathrm{2}}}  \ottsym{)} }{ \valMeta_{{\mathrm{1}}} }{ \mathit{x} }  }{ \valMeta'_{{\mathrm{1}}} }{ \mathit{y} }  }{  \lrexp{ \tyAc }  } $.
    Applying \Lref{lem:apdx:logsem:antired} completes this case.
  \end{case}

  \begin{case}{$\tmMeta  \ottsym{=}   \tmBigJoin{ \mathit{x} }{ \tmMeta' }{ \tmMeta'' } $, $\tyAc  \ottsym{=}   \tyBigJoin{  \mathin{ \mathit{i} }{ \mathit{I} }  }{ \tyAc_{\ottmv{i}} } $ where
    $ \judg{ \Gamma }{ \tmMeta' }{ \ottsym{\{}  \tyAv_{\ottmv{i}}  \ottsym{\mbox{$\mid$}}   \mathin{ \mathit{i} }{ \mathit{I} }   \ottsym{\}} } $ and $ \mathforall{  \mathin{ \mathit{i} }{ \mathit{I} }  }{  \judg{ \Gamma  \ottsym{,}  \mathit{x}  \ottsym{:}  \tyAv_{\ottmv{i}} }{ \tmMeta'' }{ \tyAc_{\ottmv{i}} }  } $}
    We must show $ \mathin{  \tmBigJoin{ \mathit{x} }{ \substmeta  \ottsym{(}  \tmMeta'  \ottsym{)} }{ \substmeta  \ottsym{(}  \tmMeta''  \ottsym{)} }  }{  \lrexp{  \tyBigJoin{  \mathin{ \mathit{i} }{ \mathit{I} }  }{ \tyAc_{\ottmv{i}} }  }  } $.
    By the induction hypothesis, $ \judglr{ \Gamma }{ \tmMeta' }{ \ottsym{\{}  \tyAv_{\ottmv{i}}  \ottsym{\mbox{$\mid$}}   \mathin{ \mathit{i} }{ \mathit{I} }   \ottsym{\}} } $.
    It follows $ \mathin{ \substmeta  \ottsym{(}  \tmMeta'  \ottsym{)} }{  \lrexp{ \ottsym{\{}  \tyAv_{\ottmv{i}}  \ottsym{\mbox{$\mid$}}   \mathin{ \mathit{i} }{ \mathit{I} }   \ottsym{\}} }  } $.
    Fix arbitrary $ \mathin{ \valMeta }{  \lrval{ \ottsym{\{}  \tyAv_{\ottmv{i}}  \ottsym{\mbox{$\mid$}}   \mathin{ \mathit{i} }{ \mathit{I} }   \ottsym{\}} }  } $.
    Applying Monadic Bind, it suffices to show
    $ \mathin{  \tmBigJoin{ \mathit{x} }{ \valMeta }{ \substmeta  \ottsym{(}  \tmMeta''  \ottsym{)} }  }{  \lrexp{  \tyBigJoin{  \mathin{ \mathit{i} }{ \mathit{I} }  }{ \tyAc_{\ottmv{i}} }  }  } $.

    From the definition of the value predicate, we know
    $\valMeta  \ottsym{=}  \ottsym{\{}  \valMeta_{\ottmv{j}}  \ottsym{\mbox{$\mid$}}   \mathin{ \mathit{j} }{ \mathit{J} }   \ottsym{\}}$ and have $ \mathin{ \mathit{f} }{  \mathit{I}  \to  \mathit{J}  } $ where:
    \begin{equation}
      \label{eqn:logsem:fund-prop:tmBigJoin}
       \mathforall{  \mathin{ \mathit{j} }{ \mathit{J} }  }{  \mathin{ \valMeta_{\ottmv{j}} }{  \lrval{  \tyBigJoin{  \mathin{ \mathit{i} }{  \mathinv{ \mathit{f} }   \ottsym{(}  \mathit{j}  \ottsym{)} }  }{ \tyAv_{\ottmv{i}} }  }  }  } 
    \end{equation}
    By \Lref{lem:apdx:logsem:antired} it is enough to show
    $ \mathin{  \subst{  \tmJoinManySet{  \mathin{ \mathit{j} }{ \mathit{J} }  }{ \substmeta  \ottsym{(}  \tmMeta''  \ottsym{)} }  }{ \valMeta_{\ottmv{j}} }{ \mathit{x} }  }{  \lrexp{  \tyBigJoin{  \mathin{ \mathit{i} }{ \mathit{I} }  }{ \tyAc_{\ottmv{i}} }  }  } $.
    Let $ \mathin{ \mathit{j} }{ \mathit{J} } $.
    \Lref{lem:apdx:logsem:join} further reduces our obligation to proving
    $ \mathin{  \subst{ \substmeta  \ottsym{(}  \tmMeta''  \ottsym{)} }{ \valMeta_{\ottmv{j}} }{ \mathit{x} }  }{  \lrexp{  \tyBigJoin{  \mathin{ \mathit{i} }{  \mathinv{ \mathit{f} }   \ottsym{(}  \mathit{j}  \ottsym{)} }  }{ \tyAc_{\ottmv{i}} }  }  } $.
    By Directedness,
    \[
      { \judg{ \Gamma  \ottsym{,}  \mathit{x}  \ottsym{:}   \tyBigJoin{  \mathin{ \mathit{i} }{  \mathinv{ \mathit{f} }   \ottsym{(}  \mathit{j}  \ottsym{)} }  }{ \tyAv_{\ottmv{i}} }  }{ \tmMeta'' }{  \tyBigJoin{  \mathin{ \mathit{i} }{  \mathinv{ \mathit{f} }   \ottsym{(}  \mathit{j}  \ottsym{)} }  }{ \tyAc_{\ottmv{j}} }  } }
    \]
    We then have an induction hypothesis for $\tmMeta''$:
    \[
       \judglr{ \Gamma  \ottsym{,}  \mathit{x}  \ottsym{:}   \tyBigJoin{  \mathin{ \mathit{i} }{  \mathinv{ \mathit{f} }   \ottsym{(}  \mathit{j}  \ottsym{)} }  }{ \tyAv_{\ottmv{i}} }  }{ \tmMeta'' }{  \tyBigJoin{  \mathin{ \mathit{i} }{  \mathinv{ \mathit{f} }   \ottsym{(}  \mathit{j}  \ottsym{)} }  }{ \tyAc_{\ottmv{j}} }  } 
    \]
    Note that from \eqref{eqn:logsem:fund-prop:tmBigJoin} we have
    $ \mathin{  \extmap{ \substmeta }{ \mathit{x} }{ \valMeta_{\ottmv{j}} }  }{  \lrenv{ \Gamma  \ottsym{,}  \mathit{x}  \ottsym{:}   \tyBigJoin{  \mathin{ \mathit{i} }{  \mathinv{ \mathit{f} }   \ottsym{(}  \mathit{j}  \ottsym{)} }  }{ \tyAv_{\ottmv{i}} }  }  } $.
    It follows
    $ \mathin{  \subst{ \substmeta  \ottsym{(}  \tmMeta''  \ottsym{)} }{ \valMeta_{\ottmv{j}} }{ \mathit{x} }  }{  \lrexp{  \tyBigJoin{  \mathin{ \mathit{i} }{  \mathinv{ \mathit{f} }   \ottsym{(}  \mathit{j}  \ottsym{)} }  }{ \tyAc_{\ottmv{i}} }  }  } $.
  \end{case}

  \begin{case}{$\tmMeta  \ottsym{=}   \tmApp{ \tmMeta' }{ \tmMeta'' } $ where $ \judg{ \Gamma }{ \tmMeta' }{  \tyFunOne{ \tyAv }{ \tyAc }  } $ and $ \judg{ \Gamma }{ \tmMeta'' }{ \tyAv } $}
    We must show $ \mathin{  \tmApp{ \substmeta  \ottsym{(}  \tmMeta'  \ottsym{)} }{ \substmeta  \ottsym{(}  \tmMeta''  \ottsym{)} }  }{  \lrexp{ \tyAc }  } $.
    We have induction hypotheses for $\tmMeta'$ and $\tmMeta''$:
    \[
       \judglr{ \Gamma }{ \tmMeta' }{  \tyFunOne{ \tyAv }{ \tyAc }  }  \ \textand\  \judglr{ \Gamma }{ \tmMeta'' }{ \tyAv } 
    \]
    It follows $ \mathin{ \substmeta  \ottsym{(}  \tmMeta'  \ottsym{)} }{  \lrexp{  \tyFunOne{ \tyAv }{ \tyAc }  }  } $ and $ \mathin{ \substmeta  \ottsym{(}  \tmMeta''  \ottsym{)} }{  \lrexp{ \tyAv }  } $.
    Fix arbitrary $ \mathin{ \valMeta' }{  \lrval{  \tyFunOne{ \tyAv }{ \tyAc }  }  } $ and $ \mathin{ \valMeta'' }{  \lrval{ \tyAv }  } $.
    By Monadic Bind, it suffices to show
    $ \mathin{  \tmApp{ \valMeta' }{ \valMeta'' }  }{  \lrexp{ \tyAc }  } $.
    From the definition of the value predicate, we can see that
    $\valMeta'  \ottsym{=}   \valLam{ \mathit{x} }{ \tmMetaAlt } $ and $ \mathin{  \subst{ \tmMetaAlt }{ \valMeta'' }{ \mathit{x} }  }{  \lrexp{ \tyAc }  } $.
    Since $ \step{  \tmApp{ \valMeta' }{ \valMeta'' }  }{  \subst{ \tmMetaAlt }{ \valMeta'' }{ \mathit{x} }  } $, applying \Lref{lem:apdx:logsem:antired}
    completes the proof.
  \end{case}

  \begin{case}{$\tmMeta  \ottsym{=}  \ottkw{let} \, \ottsym{(}  \mathit{x}  \ottsym{,}  \mathit{y}  \ottsym{)}  \ottsym{=}  \tmMeta_{{\mathrm{1}}} \, \ottkw{in} \, \tmMeta_{{\mathrm{2}}}$ and $\tyAc  \ottsym{=}   \tmTopC $ where $ \judg{ \Gamma }{ \tmMeta_{{\mathrm{1}}} }{  \tmTopC  } $}
    We must show $ \mathin{ \ottkw{let} \, \ottsym{(}  \mathit{x}  \ottsym{,}  \mathit{y}  \ottsym{)}  \ottsym{=}  \substmeta  \ottsym{(}  \tmMeta_{{\mathrm{1}}}  \ottsym{)} \, \ottkw{in} \, \substmeta  \ottsym{(}  \tmMeta_{{\mathrm{2}}}  \ottsym{)} }{  \lrexp{  \tmTopC  }  } $.
    By the induction hypothesis, $ \judglr{ \Gamma }{ \tmMeta_{{\mathrm{1}}} }{  \tmTopC  } $.
    It follows $ \mathin{ \substmeta  \ottsym{(}  \tmMeta_{{\mathrm{1}}}  \ottsym{)} }{  \lrexp{  \tmTopC  }  } $ so $\substmeta  \ottsym{(}  \tmMeta_{{\mathrm{1}}}  \ottsym{)}  \stepsym^{*}   \tmTopC $.
    Then by the operational semantics $\ottkw{let} \, \ottsym{(}  \mathit{x}  \ottsym{,}  \mathit{y}  \ottsym{)}  \ottsym{=}  \substmeta  \ottsym{(}  \tmMeta_{{\mathrm{1}}}  \ottsym{)} \, \ottkw{in} \, \substmeta  \ottsym{(}  \tmMeta_{{\mathrm{2}}}  \ottsym{)}  \stepsym^{*}   \tmTopC $.
  \end{case}

  \begin{case}{$\tmMeta  \ottsym{=}  \ottkw{let} \, \ottnt{s}  \ottsym{=}  \tmMeta_{{\mathrm{1}}} \, \ottkw{in} \, \tmMeta_{{\mathrm{2}}}$ and $\tyAc  \ottsym{=}   \tmTopC $ where $ \judg{ \Gamma }{ \tmMeta_{{\mathrm{1}}} }{  \tmTopC  } $}
    We must show $ \mathin{ \ottkw{let} \, \ottnt{s}  \ottsym{=}  \substmeta  \ottsym{(}  \tmMeta_{{\mathrm{1}}}  \ottsym{)} \, \ottkw{in} \, \substmeta  \ottsym{(}  \tmMeta_{{\mathrm{2}}}  \ottsym{)} }{  \lrexp{  \tmTopC  }  } $.
    By the induction hypothesis, $ \judglr{ \Gamma }{ \tmMeta_{{\mathrm{1}}} }{  \tmTopC  } $.
    It follows $ \mathin{ \substmeta  \ottsym{(}  \tmMeta_{{\mathrm{1}}}  \ottsym{)} }{  \lrexp{  \tmTopC  }  } $ so $\substmeta  \ottsym{(}  \tmMeta_{{\mathrm{1}}}  \ottsym{)}  \stepsym^{*}   \tmTopC $.
    Then by the operational semantics $\ottkw{let} \, \ottnt{s}  \ottsym{=}  \substmeta  \ottsym{(}  \tmMeta_{{\mathrm{1}}}  \ottsym{)} \, \ottkw{in} \, \substmeta  \ottsym{(}  \tmMeta_{{\mathrm{2}}}  \ottsym{)}  \stepsym^{*}   \tmTopC $.
  \end{case}

  \begin{case}{$\tmMeta  \ottsym{=}   \tmApp{ \tmMeta_{{\mathrm{1}}} }{ \tmMeta_{{\mathrm{2}}} } $ and $\tyAc  \ottsym{=}   \tmTopC $ where $ \judg{ \Gamma }{ \tmMeta_{{\mathrm{1}}} }{  \tmTopC  } $}
    We must show $ \mathin{  \tmApp{ \substmeta  \ottsym{(}  \tmMeta_{{\mathrm{1}}}  \ottsym{)} }{ \substmeta  \ottsym{(}  \tmMeta_{{\mathrm{2}}}  \ottsym{)} }  }{  \lrexp{  \tmTopC  }  } $.
    By the induction hypothesis, $ \judglr{ \Gamma }{ \tmMeta_{{\mathrm{1}}} }{  \tmTopC  } $.
    It follows $ \mathin{ \substmeta  \ottsym{(}  \tmMeta_{{\mathrm{1}}}  \ottsym{)} }{  \lrexp{  \tmTopC  }  } $ so $\substmeta  \ottsym{(}  \tmMeta_{{\mathrm{1}}}  \ottsym{)}  \stepsym^{*}   \tmTopC $.
    Then by the operational semantics $ \tmApp{ \substmeta  \ottsym{(}  \tmMeta_{{\mathrm{1}}}  \ottsym{)} }{ \substmeta  \ottsym{(}  \tmMeta_{{\mathrm{2}}}  \ottsym{)} }   \stepsym^{*}   \tmTopC $.
  \end{case}

  \begin{case}{$\tmMeta  \ottsym{=}   \tmApp{ \tmMeta_{{\mathrm{1}}} }{ \tmMeta_{{\mathrm{2}}} } $ and $\tyAc  \ottsym{=}   \tmTopC $ where $ \judg{ \Gamma }{ \tmMeta_{{\mathrm{1}}} }{ \tyAv } $ and $ \judg{ \Gamma }{ \tmMeta_{{\mathrm{2}}} }{  \tmTopC  } $}
    We must show $ \mathin{  \tmApp{ \substmeta  \ottsym{(}  \tmMeta_{{\mathrm{1}}}  \ottsym{)} }{ \substmeta  \ottsym{(}  \tmMeta_{{\mathrm{2}}}  \ottsym{)} }  }{  \lrexp{  \tmTopC  }  } $.
    We have induction hypotheses for $\tmMeta_{{\mathrm{1}}}$ and $\tmMeta_{{\mathrm{2}}}$:
    \[
       \judglr{ \Gamma }{ \tmMeta_{{\mathrm{1}}} }{ \tyAv }  \ \textand\  \judglr{ \Gamma }{ \tmMeta_{{\mathrm{2}}} }{  \tmTopC  } 
    \]
    It follows $ \mathin{ \substmeta  \ottsym{(}  \tmMeta_{{\mathrm{1}}}  \ottsym{)} }{  \lrexp{ \tyAv }  } $ and $ \mathin{ \substmeta  \ottsym{(}  \tmMeta_{{\mathrm{2}}}  \ottsym{)} }{  \lrexp{  \tmTopC  }  } $.
    Thus, $\substmeta  \ottsym{(}  \tmMeta_{{\mathrm{2}}}  \ottsym{)}  \stepsym^{*}   \tmTopC $.
    Let $ \mathin{ \valMeta_{{\mathrm{1}}} }{  \lrval{ \tyAv }  } $.
    By \Lref{lem:apdx:logsem:mon-bind}, it suffices to show:
    \[  \mathin{  \tmApp{ \valMeta_{{\mathrm{1}}} }{ \substmeta  \ottsym{(}  \tmMeta_{{\mathrm{2}}}  \ottsym{)} }  }{  \lrexp{  \tmTopC  }  }  \]
    By the operational semantics, $ \tmApp{ \valMeta }{ \substmeta  \ottsym{(}  \tmMeta_{{\mathrm{2}}}  \ottsym{)} }   \stepsym^{*}   \tmTopC $.
  \end{case}

  \begin{case}{$\tmMeta  \ottsym{=}   \tmBigJoin{ \mathit{x} }{ \tmMeta_{{\mathrm{1}}} }{ \tmMeta_{{\mathrm{2}}} } $ where $ \judg{ \Gamma }{ \tmMeta_{{\mathrm{1}}} }{  \tmTopC  } $}
    We must show $ \mathin{  \tmBigJoin{ \mathit{x} }{ \substmeta  \ottsym{(}  \tmMeta_{{\mathrm{1}}}  \ottsym{)} }{ \substmeta  \ottsym{(}  \tmMeta_{{\mathrm{2}}}  \ottsym{)} }  }{  \lrexp{  \tmTopC  }  } $.
    By the induction hypothesis, $ \judglr{ \Gamma }{ \tmMeta_{{\mathrm{1}}} }{  \tmTopC  } $.
    It follows $ \mathin{ \substmeta  \ottsym{(}  \tmMeta_{{\mathrm{1}}}  \ottsym{)} }{  \lrexp{  \tmTopC  }  } $ so $\substmeta  \ottsym{(}  \tmMeta_{{\mathrm{1}}}  \ottsym{)}  \stepsym^{*}   \tmTopC $.
    Then by the operational semantics $ \tmBigJoin{ \mathit{x} }{ \substmeta  \ottsym{(}  \tmMeta_{{\mathrm{1}}}  \ottsym{)} }{ \substmeta  \ottsym{(}  \tmMeta_{{\mathrm{2}}}  \ottsym{)} }   \stepsym^{*}   \tmTopC $.
  \end{case}
\end{proof}

\begin{lemma}[Adequacy]
  \label{lem:apdx:logsem:adequacy}

  If $\logapx{\valMeta}{\tmMeta}$ then $ \returnsany{ \tmMeta } $.
\end{lemma}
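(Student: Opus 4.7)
The proof proposal is essentially a packaging of the Fundamental Property. I would structure it in four short steps.

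First, I would observe that the rule \ottdrulename{TBotV} lets us derive $\judg{\envEmpty}{\valMeta}{\valBotV}$: every (closed) value can be assigned the ``successfully produced something'' formula $\valBotV$. This puts $\valBotV \in \denot{\valMeta}{}$. By the hypothesis $\logapx{\valMeta}{\tmMeta}$, which is the set inclusion $\denot{\valMeta}{} \subseteq \denot{\tmMeta}{}$, we therefore obtain $\judg{\envEmpty}{\tmMeta}{\valBotV}$.

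Second, I would invoke the Fundamental Property (\Lref{lem:log-sem:fund-prop}) to promote the syntactic assignment to a semantic one: $\judglr{\envEmpty}{\tmMeta}{\valBotV}$. Specializing the definition of the open judgement at the empty substitution (which trivially lies in $\lrenv{\envEmpty}$) yields $\tmMeta \in \lrexp{\valBotV}$.

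Third, I would unfold $\lrexp{\valBotV}$ from \Fref{fig:logrel}: this means there exists a result $\resMeta$ with $\tmMeta \stepsym^{*} \resMeta$ and $\resMeta \in \lrres{\valBotV}$. Since $\lrres{\tyAv} = \lrval{\tyAv} \cup \{\tmTopC\}$ and the cases defining $\lrval{\mathemdash}$ only produce actual values (never $\tmBotC$), we conclude $\resMeta \neq \tmBotC$. Hence $\returns{\tmMeta}{\resMeta}$, which gives $\returnsany{\tmMeta}$.

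All of the real work is hidden inside the Fundamental Property, so there is no genuine obstacle here: the only subtle point is checking that $\tmBotC \notin \lrres{\valBotV}$, which follows by inspection of the logical predicate since $\tmBotC$ is neither a value nor equal to $\tmTopC$. If anything, the main design choice that makes this proof so short is the decision to have \ottdrulename{TBotV} assign $\valBotV$ to \emph{every} value (not merely to the syntactic $\valBotV$), which is what allows the hypothesis $\logapx{\valMeta}{\tmMeta}$ to transfer a non-trivial termination witness to $\tmMeta$.
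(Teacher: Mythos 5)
Your proof is correct and matches the paper's own argument essentially step for step: derive $\judg{\envEmpty}{\valMeta}{\valBotV}$ via \rulename{TBotV}, transfer it to $\tmMeta$ using the hypothesis, apply the Fundamental Property, and read off a non-$\tmBotC$ result from the definitions of $\lrexp{\valBotV}$ and $\lrres{\valBotV}$. Your closing observation about why $\tmBotC \notin \lrres{\valBotV}$ is the same (implicit) check the paper makes, so there is nothing to add.
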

\begin{proof}
  Suppose $\logapx{\valMeta}{\tmMeta}$.
  We must show $\tmMeta  \stepsym^{*}   \tmTopC $ or $\mathexists{\valMeta'}{\tmMeta  \stepsym^{*}  \valMeta'}$.
  It is immediate from the definition of formula assignment (specifically the rule
  \rulename{TBotV}) that there exists some value formula $\tyAv$ such that
  $ \judg{  \envEmpty  }{ \valMeta }{ \tyAv } $. 
  By assumption, we have $ \judg{  \envEmpty  }{ \tmMeta }{ \tyAv } $ and thus $ \judglr{  \envEmpty  }{ \tmMeta }{ \tyAv } $ thanks
  to the Fundamental Property (\Lref{lem:apdx:log-sem:fund-prop}).
  The definition of $ \lrexp{ \tyAv } $ then gives us a result $\resMeta$
  such that $\tmMeta  \stepsym^{*}  \resMeta$ and $ \mathin{ \resMeta }{  \lrres{ \tyAv }  } $.
  Examining the definition of $ \lrres{ \tyAv } $ reveals that $\resMeta$ must
  either be a value or $ \tmTopC $.
\end{proof}

\subsection{Results}

\begin{theorem}[Monotonicity]
  For any context $\ottnt{C}$ and $\logapx{\tmMeta}{\tmMeta'}$,
  we have $\logapx{\ottnt{C}  \ottsym{[}  \tmMeta  \ottsym{]}}{\ottnt{C}  \ottsym{[}  \tmMeta'  \ottsym{]}}$.
\end{theorem}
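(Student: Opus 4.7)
The plan is to reduce this directly to the Compositionality lemma (\Lref{lem:log-sem:compos}), which has already been stated as the structural workhorse for this result. Unfolding the definitions, we must show that whenever $\judg{\envEmpty}{\ottnt{C}[\tmMeta]}{\tyAc'}$, we also have $\judg{\envEmpty}{\ottnt{C}[\tmMeta']}{\tyAc'}$. Compositionality gives us exactly this conclusion, provided we can establish its premise: for every $\Gamma$ and $\tyAc$ with $\judg{\Gamma}{\tmMeta}{\tyAc}$, we also have $\judg{\Gamma}{\tmMeta'}{\tyAc}$.

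The subtle point is that logical approximation $\logapxsym$ is only defined on \emph{closed} terms (via $\denotcl{\mathemdash}$ at the empty environment), whereas Compositionality's premise quantifies over arbitrary $\Gamma$. Since $\tmMeta$ and $\tmMeta'$ are closed, the environment $\Gamma$ can be ignored modulo a strengthening step. So first I would establish a small auxiliary fact: if $t$ is closed and $\judg{\Gamma}{t}{\tyAc}$, then $\judg{\envEmpty}{t}{\tyAc}$. This is a routine induction on the formula-assignment derivation, noting that the variable rule \rulename{TVar} cannot fire for a closed term and that no rule creates a dependency on variables outside the free variables of its subterms.

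With that in hand, the main argument is a short chain. Given $\judg{\Gamma}{\tmMeta}{\tyAc}$, strengthen to $\judg{\envEmpty}{\tmMeta}{\tyAc}$, so $\tyAc \in \denotcl{\tmMeta}$; by hypothesis $\logapx{\tmMeta}{\tmMeta'}$ we get $\tyAc \in \denotcl{\tmMeta'}$, i.e., $\judg{\envEmpty}{\tmMeta'}{\tyAc}$; then by Weakening (\Lref{lem:log-sem:weaken}) we recover $\judg{\Gamma}{\tmMeta'}{\tyAc}$. This verifies the premise of Compositionality, delivering $\judg{\envEmpty}{\ottnt{C}[\tmMeta']}{\tyAc'}$.

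I do not expect any serious obstacle here: the real work was already done in proving Compositionality by structural induction on $\ottnt{C}$ and in establishing Weakening. The only care needed is with the open-vs-closed mismatch between $\logapxsym$ and the universal quantifier in Compositionality's premise, which is resolved by the elementary strengthening lemma for closed terms sketched above.
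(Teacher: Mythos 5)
Your proof takes essentially the same route as the paper, whose entire argument is that the theorem is immediate from Compositionality (\Lref{lem:log-sem:compos}). The extra care you take with the open-vs-closed mismatch---the strengthening lemma for closed terms followed by Weakening to discharge Compositionality's universally quantified premise---is a legitimate detail that the paper elides, and your resolution of it is correct.
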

\begin{proof}
  Immediate from \Lref{lem:apdx:log-sem:compos}.
\end{proof}

\begin{lemma}[Soundness]
  If $\tmMeta  \stepsym^{*}  \tmMeta'$ then $\logapx{\tmMeta'}{\tmMeta}$.
\end{lemma}
\begin{proof}
  Induction on $\tmMeta  \stepsym^{*}  \tmMeta'$, applying Subject Expansion
  (\Lref{lem:apdx:log-sem:subj-exp}) at each step.
\end{proof}

\begin{theorem}
  If $\logapx{\tmMeta_{{\mathrm{1}}}}{\tmMeta_{{\mathrm{2}}}}$
  then $\ctxapx{\tmMeta_{{\mathrm{1}}}}{\tmMeta_{{\mathrm{2}}}}$.
\end{theorem}
\begin{proof}
  Consider a context $\ottnt{C}$ such that $ \returns{ \ottnt{C}  \ottsym{[}  \tmMeta_{{\mathrm{1}}}  \ottsym{]} }{ \resMeta } $ where $\resMeta \neq  \tmBotC $.
  We must show $ \returnsany{ \ottnt{C}  \ottsym{[}  \tmMeta_{{\mathrm{2}}}  \ottsym{]} } $.

  We deduce the following:
  \[
  \begin{array}{rcll}
     \valBotV  &\logapxsym & \resMeta     &\text{Straightforward from the formula assignment rules.} \\
    &\logapxsym & \ottnt{C}  \ottsym{[}  \tmMeta_{{\mathrm{1}}}  \ottsym{]} &\text{Soundness} \\
    &\logapxsym & \ottnt{C}  \ottsym{[}  \tmMeta_{{\mathrm{2}}}  \ottsym{]} &\text{Monotonicity} \\
  \end{array}
  \]
  Therefore we may apply Adequacy to complete the proof.
\end{proof}

\section{Proofs: Domain Theory}
\label{sec:apdx:domains}
\subsection{Preliminaries}
A \emph{preorder} $(X, \sqsubseteq)$ is a set $X$ together with a binary relation
$\sqsubseteq$ over $X$ that is reflexive and transitive.
Let $Y$ be a subset of $X$.
An \emph{upper bound} of $Y$ is an element of $X$ which is greater than every
element of $Y$.
The \emph{least upper bound} or \emph{join} of $Y$, if it exists, is an upper
bound of $Y$ that is less than every other upper bound.
It is denoted $\bigsqcup Y$.

Consider a preorder $(x, \sqsubseteq)$ and a subset $Y \subseteq X$.
The subset $Y$ is \emph{downward closed} iff
\[
  \mathforall{x \in X, y \in Y}{x \sqsubseteq y \Rightarrow x \in Y}
\]
Moreover, we say that $Y$ is \emph{directed} when we have:
\[
  \mathforall{y_1, y_2 \in Y}{\mathexists{y \in Y}{y_1, y_2 \sqsubseteq y}}
\]
We call $Y$ an \emph{ideal} of $X$ iff it is non-empty,
downward closed, and directed.
The set of ideals of $X$ is written $\ideals{X}$.
Given an element $x \in X$, the set
$\prinideal{x} = \mathsetcomp{y \in X}{y \sqsubseteq x}$ is the \emph{principal ideal} of $x$.

A \emph{partial order} is a preorder $(X, \sqsubseteq)$ which is antisymmetric, that is:
\[
\mathforall{x_1, x_2 \in X}{x_1 \sqsubseteq x_2 \textand x_2 \sqsubseteq x_1 \Rightarrow x_1 = x_2}
\]
The partial order is \emph{directed complete} iff every non-empty directed
subset has a least upper bound (or \emph{join}).
It is \emph{bounded complete} iff every non-empty subset with any upper bound has
a least upper bound.

An element $k \in X$ is \emph{compact} (elsewhere sometimes called \emph{finite}) iff
for all directed subsets $Y \subseteq X$ we have:
\[ k \sqsubseteq \bigsqcup Y \Rightarrow \mathexists{y \in Y}{k \sqsubseteq y} \]
The set of the compact elements of $X$ is written $\compact{X}$.
Given an element of a partial order $x \in X$, the set of compact elements below $x$ is written:
\[
  \downclosk{x} = \mathsetcomp{k \in \compact{X}}{k \sqsubseteq x}
\]
The partially ordered set $X$ is \emph{algebraic} iff every element is
the least upper bound of the set of compact elements beneath it.
That is, for all $x \in X$ we have $x = \bigsqcup (\downclosk{x})$.

A \emph{domain} (elsewhere called a \emph{Scott predomain}) is a partial order
that is directed complete, bounded complete, and algebraic.
A preorder $(B, \sqsubseteq)$ is a \emph{finitary basis} iff $B$ is countable and
every non-empty finite subset with an upper bound has a least upper bound.
The \emph{ideal completion} of a basis,  $\ideals{B}$, forms a
domain in which the compact elements are precisely the principal ideals of elements of $B$.

Given finitary bases $(A, \sqsubseteq_A)$ and $(B, \sqsubseteq_B)$,
an \emph{approximable mapping} from $A$ to $B$ is a binary relation
$R \subseteq A \times B$ such that:
\begin{enumerate}
\item \emph{Totality.} $\mathforall{a \in A}{\mathexists{b \in B}{a \R b}}$
\item \emph{Downward Closure.} If $a \R b$ and $b' \sqsubseteq_{B} b$ then $a \R b'$.
\item \emph{Weakening.} If $a \R b $ and $a \sqsubseteq_{A} a'$ then $a' \R b$.
\item \emph{Directedness.} If $a \R b_1$ and $a \R b_2 $ then $\mathexists{b \in B}{b_1, b_2 \sqsubseteq_B b \textand a \R b}$.
\end{enumerate}
We write $\apxmap{A}{B}$ to refer to the set of approximable mappings from $A$ to $B$.
It forms a partial order under the subset relation.

Given two partial orders $(X, \sqsubseteq_X)$ and $(Y, \sqsubseteq_Y)$,
a function $f : X \to Y$ is \emph{monotone} iff
\[ \mathforall{x_1, x_2 \in X}{x_1 \sqsubseteq_X x_2 \Rightarrow f(x_1) \sqsubseteq_Y f(x_2)} \]
Two partial orders are \emph{isomorphic} if and only if there exists a monotone
bijection between them.

Given two domains $(D, \sqsubseteq_D)$ and $(E, \sqsubseteq_E)$,
a function $f : D \to E$ is \emph{continuous} iff
for all directed subsets $X \subseteq D$, we have
$\bigsqcup f(X) = f(\bigsqcup X)$.
Every continuous function is monotone.
The space of continuous functions is written $\contfun{D}{E}$.

\begin{proposition}
  \label{prop:apdx:densem:apxmap-iso}
  Let $A$ and $B$ be finitary bases.
  Then we have an isomorphism of partial orders:
  \[ \apxmap{A}{B} \cong \contfun{\ideals{A}}{\ideals{B}} \]
\end{proposition}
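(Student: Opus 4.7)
The plan is to construct explicit maps in both directions and show they are mutually inverse and order-preserving. Define
\[
\Phi : \apxmap{A}{B} \to \contfun{\ideals{A}}{\ideals{B}}, \qquad \Phi(R)(I) = \{\, b \in B \mid \exists a \in I.\ a \R b \,\}
\]
and
\[
\Psi : \contfun{\ideals{A}}{\ideals{B}} \to \apxmap{A}{B}, \qquad a\, \Psi(f)\, b \ \textiff\ b \in f(\prinideal{a}).
\]

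First I would verify that $\Phi$ is well defined. For any ideal $I \in \ideals{A}$, the set $\Phi(R)(I) \subseteq B$ is nonempty by Totality of $R$ applied to any element of $I$; it is downward closed in $B$ by Downward Closure of $R$; and it is directed by Directedness of $R$ together with the fact that $I$ is itself directed in $A$. Continuity of $\Phi(R)$ is immediate from the definition, since $\bigsqcup_i I_i = \bigcup_i I_i$ for directed families of ideals, so existential quantification over $a$ commutes with directed unions.

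Next I would check that $\Psi(f)$ is an approximable mapping. Totality follows because $f(\prinideal{a})$ is nonempty for every $a$. Downward Closure of $\Psi(f)$ comes directly from the downward closure of $f(\prinideal{a})$ as an ideal. Weakening uses monotonicity of $f$: if $a \sqsubseteq_A a'$ then $\prinideal{a} \subseteq \prinideal{a'}$, hence $f(\prinideal{a}) \subseteq f(\prinideal{a'})$. Directedness follows from the directedness of the ideal $f(\prinideal{a})$.

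The main work is proving the two roundtrips are the identity. For $\Phi \circ \Psi = \mathrm{id}$, compute $\Phi(\Psi(f))(I) = \{ b \mid \exists a \in I.\ b \in f(\prinideal{a})\} = \bigcup_{a \in I} f(\prinideal{a})$. Since every ideal satisfies $I = \bigsqcup_{a \in I} \prinideal{a}$ as a directed join in $\ideals{A}$, continuity of $f$ gives $f(I) = \bigsqcup_{a \in I} f(\prinideal{a}) = \bigcup_{a \in I} f(\prinideal{a})$, matching $\Phi(\Psi(f))(I)$. For $\Psi \circ \Phi = \mathrm{id}$, we have $a\,\Psi(\Phi(R))\,b$ iff $b \in \Phi(R)(\prinideal{a})$, i.e.\ iff there exists $a' \sqsubseteq_A a$ with $a' \R b$. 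The forward direction to $a \R b$ uses Weakening of $R$, and the reverse direction is immediate by taking $a' = a$ (using reflexivity).

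Finally, both $\Phi$ and $\Psi$ are manifestly monotone under subset inclusion of relations on the one hand and pointwise subset inclusion of ideal-valued functions on the other, so the bijection is an order isomorphism. The step I expect to require the most care is the commutation of $f$ with the directed union in the $\Phi \circ \Psi = \mathrm{id}$ argument, since it is the one place where continuity of $f$ (rather than mere monotonicity) is essential.
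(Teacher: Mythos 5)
Your argument is correct, and it is the standard proof of this equivalence. The paper itself does not actually prove the proposition: its entire proof is the citation ``See Theorem~2.6 from \citet{cartwright16},'' so there is no in-paper argument to compare against, and what you have written is essentially the argument that reference supplies. The four conditions on approximable mappings are used exactly where they should be, identifying the directed join of a family of ideals with its union is the right mechanism for both continuity of $\Phi(R)$ and the roundtrip $\Phi \circ \Psi = \mathrm{id}$, and you correctly isolate the one step where continuity of $f$ (rather than mere monotonicity) is essential, namely $f(I) = \bigcup_{a \in I} f(\prinideal{a})$. One small point to tighten: in checking that $\Phi(R)(I)$ is directed you need Weakening as well as Directedness of $R$ --- given witnesses $a_1, a_2 \in I$ for $b_1, b_2$, directedness of $I$ yields a common upper bound $a$, Weakening transports both facts to $a$, and only then does Directedness of $R$ produce the required upper bound of $b_1$ and $b_2$ inside $\Phi(R)(I)$. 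That is a bookkeeping omission in a sketch, not a gap; written out, your proof would stand on its own where the paper merely defers to the literature.
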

\begin{proof}
  See Theorem~2.6 from \citet{cartwright16}.
\end{proof}

The domain constructors $(\mathemdash)_\bot$ and $(\mathemdash)_\top$ insert a new
least and greatest element into a domain respectively.
The binary constructor $\mathemdash + \mathemdash$ performs disjoint union,
while $\mathemdash \times \mathemdash$ represents cartesian product.
These operations can be performed on bases or on domains.
With respect to ideal completion, they behave as follows.
\begin{proposition}
  \label{lem:apdx:densem:ideal-comp-ctors}
  \leavevmode
  \begin{enumerate}
  \item \( \ideals{B_\bot} \cong \ideals{B}_\bot \)
  \item \( \ideals{B_\top} \cong \ideals{B}_\top \)
  \item \( \ideals{A + B} \cong \ideals{A} + \ideals{B} \)
  \item \( \ideals{A \times B} \cong \ideals{A} \times \ideals{B} \)
  \end{enumerate}
\end{proposition}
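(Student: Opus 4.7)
Each part asks for an order isomorphism, so the plan for every case is the same: exhibit explicit maps in both directions, verify they land in the claimed space by checking the ideal and constructor conditions, and verify monotonicity and mutual inversion. I expect only one genuinely delicate step, in the product case; the rest is bookkeeping.

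For part~1, the key observation is that in $B_\bot$ the new element $\bot$ lies below every element of $B$. Thus any non-empty downward-closed set contains $\bot$, and an ideal $I \in \ideals{B_\bot}$ is either $\{\bot\}$ or strictly larger. The isomorphism sends $\{\bot\}$ to the adjoined bottom of $\ideals{B}_\bot$, and any larger $I$ to $I \cap B \in \ideals{B}$ (which is non-empty by assumption, downward closed because $B \subseteq B_\bot$, and directed because any two of its elements are upper bounds for $\bot$, so directedness of $I$ in $B_\bot$ produces a witness inside $B$). The inverse adjoins $\bot$. Part~2 is dual: since $\top$ is the greatest element of $B_\top$, any ideal of $B_\top$ that contains $\top$ is all of $B_\top$ by downward closure, and any other ideal is already an ideal of $B$; the isomorphism sends the former to the adjoined top of $\ideals{B}_\top$ and the latter to itself.

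For part~3, the crux is that elements of $A$ and elements of $B$ are incomparable in $A + B$, so no such pair has a common upper bound. Directedness of an ideal $I \in \ideals{A + B}$ therefore forces $I$ to lie entirely within one summand, where it is already an ideal of that summand. Sending $I$ to the corresponding element of $\ideals{A} + \ideals{B}$ and taking the obvious summand-injections as the inverse yields the isomorphism, and monotonicity in both directions is immediate since the orders on each side are computed summandwise.

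Part~4 is the one with the real subtlety. Given $I \in \ideals{A \times B}$, define $\pi_A(I)$ and $\pi_B(I)$ to be the componentwise projections; each is non-empty, downward closed (since the componentwise order makes $\pi_A$ and $\pi_B$ monotone), and directed (using the componentwise join witnesses from $I$). Conversely, given $(I_A, I_B)$, form $I_A \times I_B = \{(a,b) \mid a \in I_A, b \in I_B\}$, which is visibly an ideal under the componentwise order. The two constructions are monotone, and $\pi_A(I_A \times I_B) = I_A$ and $\pi_B(I_A \times I_B) = I_B$ are trivial. The hard step, which I expect to be the main obstacle, is the other inverse law $I = \pi_A(I) \times \pi_B(I)$: given $a \in \pi_A(I)$ and $b \in \pi_B(I)$ witnessed by $(a, b_1), (a_2, b) \in I$, I will use directedness of $I$ to obtain some $(a', b') \in I$ above both, and then downward closure of $I$ under the componentwise order to conclude $(a, b) \in I$. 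Once this identification is in hand, the isomorphism claim follows.
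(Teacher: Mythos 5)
The paper states this proposition without proof, treating it as a standard fact about ideal completions, so there is no ``paper approach'' to compare against; your job here is simply to supply the routine verification, and you do so correctly. All four constructions are the right ones, and you correctly identify the only points where something must actually be checked: in part~1 that the directedness witness for two elements of $I \cap B$ must itself lie in $B$ (because it sits above an element of $B$ and the adjoined $\bot$ is strictly below everything); in part~3 that incomparability across summands plus directedness confines an ideal to one summand; and in part~4 the inverse law $I = \pi_A(I) \times \pi_B(I)$, where your argument---pick witnesses $(a,b_1),(a_2,b)\in I$, use directedness to get a common upper bound, then downward closure to recover $(a,b)$---is exactly the standard one and is the genuine content of the product case. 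Part~2 is also fine once one notes, as you do, that an ideal of $B_\top$ containing $\top$ must be all of $B_\top$ by downward closure. Nothing is missing.
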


For denoting sets, we also make use of the \emph{Hoare powerdomain}.
\begin{definition}[Hoare Powerdomain]
  Given a domain $D$, the Hoare powerdomain $\powdomh{D}$ is defined below and
  forms a domain ordered by subset inclusion.
  \[
  \powdomh{D} = \mathsetcomp{X \subseteq \compact{D}}{\text{$X$ is downward closed}}
  \]
\end{definition}

\subsection{Domain Equation}
We would like to show that $D = \ideals{\synvtype}$ is a solution to the domain
equation below.
\begin{equation}
  \label{eqn:apdx:densem:dom-eqn}
  \domval  \cong (\ideals{\symset} + \domval \times \domval + \powdomh{\domval} + (\contfun{\domval}{\domval_{\bot\top}}))_\botv
\end{equation}

\begin{definition}
We define the following sets of formulae, called \emph{components}.
\begin{itemize}
  \item $\synvtypepair = \mathsetcomp{\ottsym{(}  \tyAv_{{\mathrm{1}}}  \ottsym{,}  \tyAv_{{\mathrm{2}}}  \ottsym{)}}{\tyAv_{{\mathrm{1}}} \in \synvtype \textand \tyAv_{{\mathrm{2}}} \in \synvtype}$
  \item $\synvtypeset = \mathsetcomp{\ottsym{\{}  \tyAv_{\ottmv{i}}  \ottsym{\mbox{$\mid$}}   \mathin{ \mathit{i} }{ \mathit{I} }   \ottsym{\}}}{\mathforall{ \mathin{ \mathit{i} }{ \mathit{I} } }{\tyAv_{\ottmv{i}} \in \synvtype}}$
  \item $\synvtypefun = \mathsetcomp{ \tyFun{  \mathin{ \mathit{i} }{ \mathit{I} }  }{ \tyAv_{\ottmv{i}} }{ \tyAc_{\ottmv{i}} } }{\mathforall{ \mathin{ \mathit{i} }{ \mathit{I} } }{\tyAv_{\ottmv{i}} \in \synvtype \textand \tyAc_{\ottmv{i}} \in \synctype }}$
\end{itemize}
\end{definition}

\begin{lemma}
  \label{lem:apdx:densem:synvtype-disj}
  \( \synvtype \cong (\symset + \synvtypepair + \synvtypeset + \synvtypefun)_\botv \)
\end{lemma}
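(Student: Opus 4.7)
The plan is to exhibit the canonical bijection induced by the grammar of value formulae and then verify that it is an order isomorphism in both directions. Because the syntax of $\synvtype$ partitions cleanly into $\valBotV$, symbols, pairs, sets, and functions, there is an obvious map $\phi : \synvtype \to (\symset + \synvtypepair + \synvtypeset + \synvtypefun)_\botv$ defined by case analysis on the head constructor: $\phi(\valBotV)$ is the added bottom, $\phi(\ottnt{s})$ is the symbol injection, and similarly for pairs, sets, and functions. The inverse is defined pointwise in the obvious way, and bijectivity on underlying sets is immediate since each non-bottom value formula has a unique head constructor.

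For the order-theoretic part, I would equip each component $\synvtypepair$, $\synvtypeset$, $\synvtypefun$ with the restriction of the streaming order from $\synvtype$ (and $\symset$ with its given order $\leq$); the disjoint union then carries the ordering in which elements of distinct components are incomparable, and the $(-)_\botv$ constructor adjoins a new least element. With these orders fixed, I would show monotonicity of $\phi$ by direct inspection of the five rules \rulename{TApxBotV}, \rulename{TApxSym}, \rulename{TApxPair}, \rulename{TApxSet}, \rulename{TApxFun}: each conclusion either has $\valBotV$ on the left (which $\phi$ sends to the adjoined bottom, trivially below anything) or keeps the left and right in the same component, whose order is by definition preserved.

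For monotonicity of $\phi^{-1}$, I would use the inversion lemma \Lref{lem:apdx:log-sem:subtyping-inv}: if $\tyapx{\tyAv}{\tyAv'}$ in $\synvtype$, then inverting each possible shape of $\tyAv'$ shows that either $\tyAv = \valBotV$ or $\tyAv$ lies in the same component as $\tyAv'$, with the component-internal order derived from the same rule. Thus a comparison in the codomain (which, by definition of disjoint union with an added bottom, only occurs within a single component or with the bottom on the left) pulls back to exactly the cases enumerated by the rules for $\tyapx{\cdot}{\cdot}$.

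This reduces the proof to bookkeeping, with no real obstacle: the only subtle point is making precise the order on each component of the disjoint union, and once that is fixed as the restriction of the streaming order, the isomorphism follows directly from the syntactic partition together with the inversion lemma. I would therefore state the proof as a short case analysis, relying on \Lref{lem:apdx:log-sem:subtyping-inv} for reflection and on the order-defining rules of \Fref{fig:formulae} for preservation.
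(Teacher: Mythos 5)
Your proof is correct and is essentially an expanded version of the paper's own argument, which simply observes that the isomorphism is a direct consequence of the definition of formulae: the syntactic partition by head constructor gives the bijection, and the order rules of \Fref{fig:formulae} (with \Lref{lem:apdx:log-sem:subtyping-inv} for reflection) show it is an order isomorphism, exactly as you spell out. Your choice to order each component by the restriction of the streaming order is also consistent with how the paper uses this lemma, since the subsequent component isomorphisms carry the remaining work.
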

\begin{proof}
  A direct consequence of the definition of formulae.
\end{proof}

\begin{lemma}
  \label{lem:apdx:densem:iso-pair}
  \(
  \ideals{\synvtypepair} \cong \ideals{\synvtype} \times \ideals{\synvtype}
  \)
\end{lemma}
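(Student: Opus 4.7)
The plan is to reduce the claim to the product case of Proposition~\ref{lem:apdx:densem:ideal-comp-ctors}, which already gives $\ideals{A \times B} \cong \ideals{A} \times \ideals{B}$ for finitary bases $A$ and $B$. Thus, if we can show that $\synvtypepair$, equipped with the streaming order inherited from $\synvtype$, is isomorphic as a preorder to the product preorder $\synvtype \times \synvtype$, then the result follows by functoriality of ideal completion.

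First, I would define the obvious map $\phi : \synvtypepair \to \synvtype \times \synvtype$ sending $(\tyAv_1, \tyAv_2)$ to the pair $(\tyAv_1, \tyAv_2)$ (viewed now in the cartesian product). This is a bijection by construction, since the syntactic formula $(\tyAv_1, \tyAv_2) \in \synvtypepair$ is determined by its two components. The key step is to verify that $\phi$ is an order isomorphism. This reduces to examining the streaming order in Figure~\ref{fig:formulae}: inspection of \rulename{TApxPair} (and the Inversion lemma, \Lref{lem:apdx:log-sem:subtyping-inv}, part 8) shows that $\tyapx{(\tyAv_1, \tyAv_2)}{(\tyAv'_1, \tyAv'_2)}$ holds in $\synvtype$ if and only if both $\tyapx{\tyAv_1}{\tyAv'_1}$ and $\tyapx{\tyAv_2}{\tyAv'_2}$ hold, which is precisely the product order on $\synvtype \times \synvtype$.

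Given this preorder isomorphism, ideal completion is itself functorial with respect to order isomorphisms, so $\ideals{\synvtypepair} \cong \ideals{\synvtype \times \synvtype}$. Applying item~4 of Proposition~\ref{lem:apdx:densem:ideal-comp-ctors} then yields $\ideals{\synvtype \times \synvtype} \cong \ideals{\synvtype} \times \ideals{\synvtype}$, completing the argument by transitivity of $\cong$.

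There is no real obstacle here: the proof is essentially a two-line chain of isomorphisms once the inversion principle for $\tyapxsym$ on pairs is in hand. The only subtlety worth noting is that Proposition~\ref{lem:apdx:densem:ideal-comp-ctors} presupposes that its inputs are finitary bases, so I would briefly remark that $\synvtype$ (and hence $\synvtypepair$) is countable, and that finite bounded subsets of pairs have least upper bounds computed coordinatewise using the join operation on value formulae from Figure~\ref{fig:formula-ops} — a fact that follows from \Lref{lem:apdx:log-sem:join-lub} applied componentwise.
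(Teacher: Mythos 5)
Your proposal is correct and follows essentially the same route as the paper: establish the preorder isomorphism $\synvtypepair \cong \synvtype \times \synvtype$ and then invoke the product case of Proposition~\ref{lem:apdx:densem:ideal-comp-ctors}. The paper's own proof is a two-line version of exactly this argument; your additional checks (inversion of \rulename{TApxPair} and the finitary-basis precondition) are details the paper leaves implicit.
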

\begin{proof}
  It is not hard to see $\synvtypepair \cong \synvtype \times \synvtype$.
  By \Pref{lem:apdx:densem:ideal-comp-ctors} it follows
  $\ideals{\synvtypepair} \cong \ideals{\synvtype} \times \ideals{\synvtype}$.
\end{proof}
\begin{lemma}
  \label{lem:apdx:densem:iso-set}
  \(
  \ideals{\synvtypeset} \cong \powdomh{\ideals{\synvtype}}
  \)
\end{lemma}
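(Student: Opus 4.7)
The plan is to exhibit an explicit order-isomorphism in both directions and verify it preserves joins. Recall from the preceding discussion that the compact elements of $\ideals{\synvtype}$ are (up to isomorphism) the principal ideals of value formulae, so $\powdomh{\ideals{\synvtype}}$ may be identified with the poset of downward-closed subsets of $\synvtype$ (ordered by inclusion). Under this identification I would define
\[ \phi(\mathcal{I}) \;=\; \{\, \tyAv \in \synvtype \mid \mathexists{\ottsym{\{}  \tyAv_{\ottmv{i}}  \ottsym{\mbox{$\mid$}}   \mathin{ \mathit{i} }{ \mathit{I} }   \ottsym{\}} \in \mathcal{I}}{\mathexists{ \mathin{ \mathit{i} }{ \mathit{I} } }{\tyapx{\tyAv}{\tyAv_{\ottmv{i}}}}} \,\} \]
and, in the other direction,
\[ \psi(X) \;=\; \{\, \ottsym{\{}  \tyAv_{\ottmv{i}}  \ottsym{\mbox{$\mid$}}   \mathin{ \mathit{i} }{ \mathit{I} }   \ottsym{\}} \in \synvtypeset \mid \mathforall{ \mathin{ \mathit{i} }{ \mathit{I} } }{\tyAv_{\ottmv{i}} \in X} \,\}. \]

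First I would verify that $\phi$ and $\psi$ are well-typed. For $\phi$: downward closure of $\phi(\mathcal{I})$ is immediate from transitivity of $\tyapxsym$. For $\psi$: nonemptiness is witnessed by the empty set formula $\ottsym{\{\}} \in \synvtypeset$; downward closure under the order on $\synvtypeset$ uses \rulename{TApxSet}, since if $S' \sqsubseteq S$ and every element of $S$ lies in $X$, then each element of $S'$ is bounded above by an element of $X$ and so lies in $X$ by downward closure of $X$; directedness follows because given $S_1, S_2 \in \psi(X)$ the union $S_1 \cup S_2 \in \synvtypeset$ is in $\psi(X)$ and is an upper bound of both. Monotonicity of both $\phi$ and $\psi$ with respect to set inclusion is routine.

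Next I would establish that $\phi$ and $\psi$ are mutual inverses. The easy direction is $\phi(\psi(X)) = X$: one inclusion uses that the singleton $\ottsym{\{}  \tyAv  \ottsym{\}}$ lies in $\psi(X)$ whenever $\tyAv \in X$; the other is by downward closure of $X$. The main obstacle is the direction $\psi(\phi(\mathcal{I})) = \mathcal{I}$. The inclusion $\mathcal{I} \subseteq \psi(\phi(\mathcal{I}))$ is immediate from the definitions. For the reverse inclusion, suppose $S = \ottsym{\{}  \tyAv_{\ottmv{j}}  \ottsym{\mbox{$\mid$}}   \mathin{ \mathit{j} }{ \mathit{J} }   \ottsym{\}} \in \psi(\phi(\mathcal{I}))$; for each $ \mathin{ \mathit{j} }{ \mathit{J} } $ there is some $S_j \in \mathcal{I}$ containing an element above $\tyAv_{\ottmv{j}}$. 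Since $\mathit{J}$ is finite and $\mathcal{I}$ is directed, there is a common upper bound $S^* \in \mathcal{I}$ of $\{S_j\}_{ \mathin{ \mathit{j} }{ \mathit{J} } }$. By \rulename{TApxSet} we then have $\tyapx{S}{S^*}$, and downward closure of $\mathcal{I}$ gives $S \in \mathcal{I}$. This finite-choice argument leveraging directedness is the crux of the proof and the reason the finiteness of set-formulae is essential.

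Finally I would check that the bijection preserves the order: since inclusion on ideals corresponds via $\phi$ to inclusion on downward-closed subsets (both directions follow directly from the definitions already used), $\phi$ is an order-isomorphism, completing the proof.
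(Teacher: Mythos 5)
Your proof is correct and takes essentially the same route as the paper's: your $\phi$ coincides (under the identification of compact elements with principal ideals) with the paper's map $f(X) = \mathsetcomp{\prinideal{\tyAv}}{\ottsym{\{}  \tyAv  \ottsym{\}} \in X}$, and your $\psi$ is exactly the ideal the paper constructs to witness surjectivity. The only difference is presentational---you exhibit an explicit two-sided inverse and verify $\psi(\phi(\mathcal{I})) \subseteq \mathcal{I}$ by taking a common upper bound of finitely many witnesses, whereas the paper checks injectivity and surjectivity separately using the same directedness and downward-closure properties.
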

\begin{proof}
  Let $X \in \ideals{\synvtypeset}$.
  Define $f(X) = \mathsetcomp{\prinideal{\tyAv}}{\ottsym{\{}  \tyAv  \ottsym{\}} \in X}$.
  We can see that $f(X)$ is a downward closed set of compact elements since $X$ is
  downward closed and $f(X)$ is defined as a set of principal ideals.

  \paragraph{Monotonicity}
  Consider the ideals $X$ and $Y$ such that $X \subseteq Y$.
  We must show $f(X) \subseteq f(Y)$.
  Let $\prinideal{\tyAv}$ be an element of $f(X)$ were $\ottsym{\{}  \tyAv  \ottsym{\}} \in X$.
  To prove $\prinideal{\tyAv} \in f(Y)$, it suffices to show $\ottsym{\{}  \tyAv  \ottsym{\}} \in Y$.
  This follows from the assumption that $X \subseteq Y$.

  \paragraph{Injectivity}
  Consider the ideals $X, Y \in \ideals{\synvtypeset}$ such that $f(X) = f(Y)$.
  For contradiction, suppose $X \neq Y$.
  Without loss of generality, we assume there exists some $\tyAv \in X$ such
  that $\tyAv \notin Y$.
  From the definition of $\synvtypeset$, the formula $\tyAv$ has the form
  $\ottsym{\{}  \tyAv_{\ottmv{i}}  \ottsym{\mbox{$\mid$}}   \mathin{ \mathit{i} }{ \mathit{I} }   \ottsym{\}}$.
  From the fact that $Y$ does not contain $\tyAv$ and the properties of
  ideals, we can deduce that there exists some $ \mathin{ \mathit{i} }{ \mathit{I} } $ such that
  $\ottsym{\{}  \tyAv_{\ottmv{i}}  \ottsym{\}} \in X$ but $\ottsym{\{}  \tyAv_{\ottmv{i}}  \ottsym{\}} \notin Y$.
  From the definition of $f$, it follows $\prinideal{\tyAv_{\ottmv{i}}} \in f(X)$ but
  $\prinideal{\tyAv_{\ottmv{i}}} \notin f(Y)$.
  This contradicts the assumption that $f(X) = f(Y)$.

  \paragraph{Surjectivity}
  Consider arbitrary $Y \in \powdomh{\ideals{\synvtype}}$.
  Let $X = \mathsetcomp{\ottsym{\{}  \tyAv_{\ottmv{i}}  \ottsym{\mbox{$\mid$}}   \mathin{ \mathit{i} }{ \mathit{I} }   \ottsym{\}}}{\mathforall{ \mathin{ \mathit{i} }{ \mathit{I} } }{\prinideal{\tyAv_{\ottmv{i}}} \in Y}}$.
  Using the fact that $Y$ is downward closed, it is straightforward to check that
  $X$ is an ideal.
  Note that we have $\ottsym{\{}  \tyAv  \ottsym{\}} \in X$ iff $\prinideal{\tyAv} \in Y$;
  it follows $f(X) = Y$.
\end{proof}
\begin{lemma}
  \label{lem:apdx:densem:iso-fun}
  \(
  \ideals{\synvtypefun} \cong \contfun{\ideals{\synvtype}}{\ideals{\synvtype}_{\bot\top}}
  \)
\end{lemma}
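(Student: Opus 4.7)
The plan is to decompose this isomorphism into two steps, following the blueprint established earlier in the appendix: first transport ideals of function formulae into the space of approximable mappings, then invoke \Pref{prop:apdx:densem:apxmap-iso} to convert those into continuous functions.

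First I would observe that the codomain component $\synctype$ is essentially $\synvtype$ with the two bounds $\tmBotC$ and $\tmTopC$ adjoined, so $\synctype \cong \synvtype_{\bot\top}$ as preorders. By \Pref{lem:apdx:densem:ideal-comp-ctors} this lifts to an isomorphism $\ideals{\synctype} \cong \ideals{\synvtype}_{\bot\top}$. Hence \Pref{prop:apdx:densem:apxmap-iso} yields
\[
  \apxmap{\synvtype}{\synctype} \;\cong\; \contfun{\ideals{\synvtype}}{\ideals{\synvtype}_{\bot\top}},
\]
so it remains to establish $\ideals{\synvtypefun} \cong \apxmap{\synvtype}{\synctype}$.

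To build this bijection, I would send an ideal $X \in \ideals{\synvtypefun}$ to the relation
\[
  R_X \;=\; \{\,(\tyAv,\tyAc)\mid \text{$\tyFunOne{\tyAv}{\tyAc} \in X$}\,\},
\]
where single-clause formulae are identified with the unary join case. Conversely, an approximable mapping $R$ would be sent to the ideal generated (under downward closure in the streaming order) by the finite joins $\tyBigJoin{i \in I}{\tyFunOne{\tyAv_{\ottmv{i}}}{\tyAc_{\ottmv{i}}}}$ of tuples $(\tyAv_{\ottmv{i}}, \tyAc_{\ottmv{i}}) \in R$. The four axioms of approximable mappings should fall out cleanly from the matching properties of formula assignment previously proved: totality from \Lref{lem:log-sem:non-empty} together with the base rule, downward closure in the codomain from Semantic Downward Closure (\Lref{lem:apdx:logsem:sem-downclos})---or rather, from the formula-level \Lref{lem:apdx:log-sem:down-clos}---weakening in the domain from inverting \rulename{TApxFun}, and directedness from \Lref{lem:apdx:log-sem:directed} combined with \Lref{lem:apdx:log-sem:fun-join-dist}. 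Monotonicity of both maps with respect to subset inclusion on either side is immediate, and injectivity of the first direction follows because any $\tyFun{i \in I}{\tyAv_{\ottmv{i}}}{\tyAc_{\ottmv{i}}} \in X$ is recoverable as the join of its clauses, each of which lies below the original via \Lref{lem:apdx:log-sem:join-lub} and so must appear in $X$ by downward closure.

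The main obstacle will be verifying that the round trip $R \mapsto X_R \mapsto R_{X_R}$ is the identity, which is where the intricate definition of \rulename{TApxFun} really bites. The issue is that a single clause $\tyFunOne{\tyAv}{\tyAc}$ may enter $X_R$ as the downward closure of some multi-clause join $\tyBigJoin{i \in I}{\tyFunOne{\tyAv_{\ottmv{i}}}{\tyAc_{\ottmv{i}}}}$ coming from $R$, even though no individual pair $(\tyAv, \tyAc)$ lies in $R$; I need to check using the inversion lemma for the subtype relation on functions (the function case of \Lref{lem:apdx:log-sem:subtyping-inv}) that whenever $\tyapx{\tyFunOne{\tyAv}{\tyAc}}{\tyBigJoin{i \in I}{\tyFunOne{\tyAv_{\ottmv{i}}}{\tyAc_{\ottmv{i}}}}}$ holds, the approximable-mapping axioms applied to $R$ already force $(\tyAv, \tyAc) \in R$. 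This is exactly the content of the triggering condition in \rulename{TApxFun}---weakening handles the $\tyapx{\tyBigJoin{i \in I'}{\tyAv_{\ottmv{i}}}}{\tyAv}$ part and directedness together with codomain downward closure handles the $\tyapx{\tyAc}{\tyBigJoin{i \in I'}{\tyAc_{\ottmv{i}}}}$ part. Once this alignment is nailed down, composing with the two isomorphisms above delivers the desired result.
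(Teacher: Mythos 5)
Your proposal matches the paper's proof in all essentials: the same reduction via $\ideals{\synctype}\cong\ideals{\synvtype}_{\bot\top}$ and \Pref{prop:apdx:densem:apxmap-iso} to $\ideals{\synvtypefun}\cong\apxmap{\synvtype}{\synctype}$, the same forward map $X\mapsto\{(\tyAv,\tyAc)\mid\tyFunOne{\tyAv}{\tyAc}\in X\}$, and the same crux (the function case of \Lref{lem:apdx:log-sem:subtyping-inv} plus the approximable-mapping axioms), which the paper packages as downward closure of the inverse image rather than as your round-trip check. The only quibble is that a few of your citations point at formula-assignment lemmas where the properties actually needed are those of ideals and of the formula ordering, but the underlying reasoning is the paper's.
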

\begin{proof}
  First, note $\ideals{\synctype} \cong \ideals{\synvtype_{\bot\top}} \cong \ideals{\synvtype}_{\bot\top}$.
  Using this fact and \Pref{prop:apdx:densem:apxmap-iso}, it suffices to show
  \(
  \ideals{\synvtypefun} \cong \apxmap{\synvtype}{\synctype}
  \).
  Let $X \in \ideals{\synvtypefun}$.
  Define $f(X) = \mathsetcomp{(\tyAv, \tyAc)}{ \tyFunOne{ \tyAv }{ \tyAc }  \in X}$.
  It is easy to verify that most of the properties of
  approximable mappings for $f(X)$ follow from the properties of ideals that $X$
  has.
  Totality is less obvious: it requires recognizing that the empty function
  formula (which is included in $X$) is equivalent to $ \tyFunOne{ \tyAv }{  \tmBotC  } $ for all
  $\tyAv$. \todo{That could be a separate lemma}
  As a consequence we have
  $\mathforall{\tyAv}{(\tyAv,  \tmBotC ) \in f(X)}$.
  It remains to show that $f$ is monotone, injective, and surjective.

  \paragraph{Monotonicity}
  Consider the ideals $X$ and $Y$ such that $X \subseteq Y$.
  It is easy to see that the set $f(X)$ is included within $f(Y)$.

  \paragraph{Injectivity}
  Consider the ideals $X, Y \in \ideals{\synvtypefun}$ such that $f(X) = f(Y)$.
  For contradiction, suppose $X \neq Y$.
  Without loss of generality, we assume there exists some $\tyAv \in X$ such
  that $\tyAv \notin Y$.
  From the definition of $\synvtypefun$, the formula $\tyAv$ has the form
  $ \tyFun{  \mathin{ \mathit{i} }{ \mathit{I} }  }{ \tyAv_{\ottmv{i}} }{ \tyAc_{\ottmv{i}} } $.
  From the fact that $Y$ does not contain $\tyAv$ and the properties of ideals, we can deduce that there exists some $ \mathin{ \mathit{i} }{ \mathit{I} } $ such that $ \tyFunOne{ \tyAv_{\ottmv{i}} }{ \tyAc_{\ottmv{i}} }  \notin Y$.
  It follows that $(\tyAv_{\ottmv{i}}, \tyAc_{\ottmv{i}})$ is in $f(X)$ but not in $f(Y)$.
  This contradicts our assumption $f(X) = f(Y)$.

  \paragraph{Surjectivity}
  Let $R \in \apxmap{\synvtype}{\synctype}$.
  Let $X = \mathsetcomp{ \tyFun{  \mathin{ \mathit{i} }{ \mathit{I} }  }{ \tyAv_{\ottmv{i}} }{ \tyAc_{\ottmv{i}} } }{\mathforall{ \mathin{ \mathit{i} }{ \mathit{I} } }{\tyAv_{\ottmv{i}} \R \tyAc_{\ottmv{i}}}}$.
  We first check that $X$ is an ideal.
  \begin{itemize}
  \item \emph{Non-empty.} The trivial 0-case function formula is included in $X$ by definition.
  \item \emph{Downward closed.}
    Suppose $\tyAv  \ottsym{=}   \tyFun{  \mathin{ \mathit{i} }{ \mathit{I} }  }{ \tyAv_{\ottmv{i}} }{ \tyAc_{\ottmv{i}} }  \in X$ and $\tyAv'  \ottsym{=}   \tyFun{  \mathin{ \mathit{j} }{ \mathit{J} }  }{ \tyAv'_{\ottmv{i}} }{ \tyAc'_{\ottmv{i}} } $
    and $ \tyapx{ \tyAv' }{ \tyAv } $.
    Fix $ \mathin{ \mathit{j} }{ \mathit{J} } $.
    To show $\tyAv' \in X$, it is enough to prove $\tyAv'_{\ottmv{j}} \R \tyAc'_{\ottmv{j}}$.

    Note that by \Lref{lem:apdx:log-sem:subtyping-inv}, there exists $ \mathit{I'}  \subseteq  \mathit{I} $ such that
    $ \tyapx{  \tyBigJoin{  \mathin{ \mathit{i} }{ \mathit{I'} }  }{ \tyAv_{\ottmv{i}} }  }{ \tyAv'_{\ottmv{j}} } $ and $ \tyapx{ \tyAc'_{\ottmv{j}} }{  \tyBigJoin{  \mathin{ \mathit{i} }{ \mathit{I'} }  }{ \tyAc_{\ottmv{i}} }  } $.
    Since $\tyAv \in X$ we also have $ \tyFun{  \mathin{ \mathit{i} }{ \mathit{I'} }  }{ \tyAv_{\ottmv{i}} }{ \tyAc_{\ottmv{i}} }  \in X$.
    By the properties of ideals, $\ottsym{(}   \tyBigJoin{  \mathin{ \mathit{i} }{ \mathit{I'} }  }{ \tyAv_{\ottmv{i}} }   \ottsym{)} \R \ottsym{(}   \tyBigJoin{  \mathin{ \mathit{i} }{ \mathit{I'} }  }{ \tyAc_{\ottmv{i}} }   \ottsym{)}$.
    Again using the properties of ideals, it follows $\tyAv'_{\ottmv{j}} \R \tyAc'_{\ottmv{j}}$.
  \item \emph{Directed.}
    Suppose $ \tyFun{  \mathin{ \mathit{i} }{ \mathit{I} }  }{ \tyAv_{\ottmv{i}} }{ \tyAc_{\ottmv{i}} }  \in X$ and $ \tyFun{  \mathin{ \mathit{i} }{ \mathit{I'} }  }{ \tyAv_{\ottmv{i}} }{ \tyAc_{\ottmv{i}} }  \in X$.
    We must show $ \tyFun{  \mathin{ \mathit{i} }{  \mathit{I}  \cup  \mathit{I'}  }  }{ \tyAv_{\ottmv{i}} }{ \tyAc_{\ottmv{i}} }  \in X$.
    This is clear from the definition of $X$.
  \end{itemize}
  Now note we have $ \tyFunOne{ \tyAv }{ \tyAc }  \in X$ iff $\tyAv \R \tyAc$;
  it follows $f(X) = R$.
\end{proof}

\begin{theorem}
  \label{lem:apdx:densem:iso}
  $D = \ideals{\synvtype}$ is a solution to the domain equation
  \eqref{eqn:apdx:densem:dom-eqn}.
  That is,
\[
\ideals{\synvtype}  \cong (\ideals{\symset} + \ideals{\synvtype} \times \ideals{\synvtype} + \powdomh{\ideals{\synvtype}} + (\contfun{\ideals{\synvtype}}{\ideals{\synvtype}_{\bot\top}}))_\botv
\]
\end{theorem}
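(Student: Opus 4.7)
The plan is to chain together the isomorphisms already established in Lemmas \ref{lem:apdx:densem:synvtype-disj}--\ref{lem:apdx:densem:iso-fun} together with Proposition \ref{lem:apdx:densem:ideal-comp-ctors}, so that the proof is essentially algebraic manipulation of constructors under $\ideals{\mathemdash}$.

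First I would start from Lemma \ref{lem:apdx:densem:synvtype-disj}, which exhibits $\synvtype$ itself as a disjoint union with a distinguished bottom element: $\synvtype \cong (\symset + \synvtypepair + \synvtypeset + \synvtypefun)_\botv$. Since ideal completion is functorial with respect to isomorphism of preorders, applying $\ideals{\mathemdash}$ to both sides yields
\[
\ideals{\synvtype} \cong \ideals{(\symset + \synvtypepair + \synvtypeset + \synvtypefun)_\botv}.
\]
Next I would repeatedly invoke Proposition \ref{lem:apdx:densem:ideal-comp-ctors} to commute $\ideals{\mathemdash}$ past the $(\mathemdash)_\botv$ wrapping and past each binary sum, reducing the right-hand side to
\[
(\ideals{\symset} + \ideals{\synvtypepair} + \ideals{\synvtypeset} + \ideals{\synvtypefun})_\botv.
\]
Then I would plug in Lemmas \ref{lem:apdx:densem:iso-pair}, \ref{lem:apdx:densem:iso-set}, and \ref{lem:apdx:densem:iso-fun} to rewrite $\ideals{\synvtypepair}$, $\ideals{\synvtypeset}$, and $\ideals{\synvtypefun}$ respectively as $\ideals{\synvtype} \times \ideals{\synvtype}$, $\powdomh{\ideals{\synvtype}}$, and $\contfun{\ideals{\synvtype}}{\ideals{\synvtype}_{\bot\top}}$, giving exactly the right-hand side of \eqref{eqn:apdx:densem:dom-eqn}. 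Transitivity of the isomorphism relation on partial orders then closes the proof.

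The main subtlety, and probably the step worth expanding carefully, is justifying the commutation of $\ideals{\mathemdash}$ with the adjoined-bottom constructor $(\mathemdash)_\botv$: Proposition \ref{lem:apdx:densem:ideal-comp-ctors} is stated for $(\mathemdash)_\bot$ and $(\mathemdash)_\top$, but $\botv$ denotes the bottom \emph{value} element rather than the divergence bottom, so strictly speaking one must observe that the two constructors behave the same way on preorders (both freely adjoin a new least element) and hence the same isomorphism argument applies. A secondary concern is that the chained isomorphisms are genuinely monotone bijections rather than mere set-theoretic bijections; this follows because each lemma in the chain is already stated as an isomorphism of partial orders, and both $(\mathemdash)_\botv$ and binary sum preserve isomorphism, but it is worth stating explicitly that these constructions are functorial in order to license the substitution steps above.
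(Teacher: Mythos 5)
Your proposal matches the paper's own proof essentially step for step: it chains Lemma~\ref{lem:apdx:densem:synvtype-disj} with Proposition~\ref{lem:apdx:densem:ideal-comp-ctors} to distribute $\ideals{\mathemdash}$ over the sum and the adjoined bottom, then substitutes the component isomorphisms of Lemmas~\ref{lem:apdx:densem:iso-pair}--\ref{lem:apdx:densem:iso-fun}. Your added remark that the proposition is stated for $(\mathemdash)_\bot$ while the theorem uses $(\mathemdash)_\botv$ (both freely adjoin a least element, so the same argument applies) is a point the paper's proof passes over silently, and is worth making explicit.
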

\begin{proof}
  We first derive the following.
  \[
  \begin{array}{lcll}
    \ideals{\synvtype} &\cong & \ideals{(\symset + \synvtypepair + \synvtypeset + \synvtypefun)_\botv} & \text{\Lref{lem:apdx:densem:synvtype-disj}} \\
  &\cong & (\ideals{\symset} + \ideals{\synvtypepair} + \ideals{\synvtypeset} + \ideals{\synvtypefun})_\botv
    & \text{\Pref{lem:apdx:densem:ideal-comp-ctors}}
  \end{array}
  \]
  Lemmas~\ref{lem:apdx:densem:iso-pair}-\ref{lem:apdx:densem:iso-fun} then complete the proof.
\end{proof}

\fi

\end{document}
\endinput